\newtheorem{thm}{Theorem}[section]
\newtheorem{lem}[thm]{Lemma}
\newtheorem{cor}[thm]{Corollary}
\theoremstyle{definition}
\newtheorem{defn}[thm]{Definition}
\newtheorem{eg}[thm]{Example}
\theoremstyle{remark}
\newtheorem{rmk}[thm]{Remark}
\numberwithin{equation}{section}
\newcommand{\eps}{\varepsilon}
\newcommand{\DEF}{{:=}}
\newcommand{\FED}{{=:}}
\newcommand{\PT}[1]{\mathbf{#1}}
\newcommand{\re}{\mathop{\mathrm{Re}}}
\DeclareMathOperator{\dd}{\mathrm{d}}
\DeclareMathOperator{\betafcn}{B}
\DeclareMathOperator{\bal}{Bal}
\DeclareMathOperator{\CAP}{cap}
\DeclareMathOperator{\gammafcn}{\Gamma}
\DeclareMathOperator{\kelvin}{K}
\DeclareMathOperator{\kelvinMEAS}{\mathcal{K}}
\DeclareMathOperator{\digammafcn}{\psi}
\DeclareMathOperator{\supp}{supp}
\DeclareMathOperator{\HyperF}{F}
\DeclareMathOperator{\HyperTildeF}{\tilde{F}}
\newcommand{\Hypergeom}[5]{{\sideset{_#1}{_#2}\HyperF\!\left(\substack{\displaystyle#3\\\displaystyle#4};#5\right)}}
\newcommand{\HypergeomReg}[5]{{\sideset{_#1}{_#2}\HyperTildeF\!\left(\substack{\displaystyle#3\\\displaystyle#4};#5\right)}}
\newcommand{\Pochhsymb}[2]{{\left(#1\right)_{#2}}}
\title[Minimal Riesz energy on the sphere]{Minimal Riesz energy on the sphere for axis-supported external fields}
\author{ J. S. Brauchart, P. D. Dragnev\textdagger, and E. B. Saff\textdaggerdbl } %\textasteriskcentered
\thanks{\noindent The research of this author was supported, in part, by the ``Scholar-in-Residence'' program at IPFW, by the Austrian Science Foundation (FWF) under grant S9603-N13, and the Oberwolfach-Leibniz Fellow Programme. \\
\textdaggerdbl The research of this author  was supported, in part,
by the U. S. National Science Foundation under grants DMS-0603828 and DMS-0808093. }
\begin{document}

\address{J. S. Brauchart and E. B. Saff:
Center for Constructive Approximation,
Department of Mathematics,
Vanderbilt University,
Nashville, TN 37240,
USA }
\email{Johann.Brauchart@Vanderbilt.Edu}
\email{Edward.B.Saff@Vanderbilt.Edu}

\address{P. D. Dragnev:
Department of Mathematical Sciences,
Indiana-Purdue University,
Fort Wayne, IN 46805,
USA}
\email{dragnevp@ipfw.edu}

\begin{abstract}
We investigate the minimal Riesz $s$-energy problem for positive
measures on the $d$-dimensional unit sphere $\mathbb{S}^d$ in the
presence of an external field induced by a point charge, and more
generally by a line charge. The model interaction is that of Riesz
potentials $|\PT{x}-\PT{y}|^{-s}$ with $d-2\leq s<d$. For a given axis-supported external field, 
the support and the density of the corresponding extremal measure on $\mathbb{S}^d$ is determined.
The special case $s=d-2$ yields interesting phenomena, which we investigate in detail. A weak$^*$ asymptotic analysis is provided as $s\to(d-2)^+$.
\end{abstract}

\keywords{Balayage, Equilibrium Measures, Extremal Measures, Minimum Energy, Riesz kernel, Weighted Energy} \subjclass[2000]{Primary 31B05 ; Secondary 74G65}

\maketitle

\section{Introduction and results}

\subsection{Potential-theoretical preliminaries.}
Let $\mathbb{S}^d \DEF \{ \PT{x} \in \mathbb{R}^{d+1} : |\PT{x}|=1 \}$ 
be the unit sphere in $\mathbb{R}^{d+1}$, where $|\PT{\cdot}|$
denotes the Euclidean norm, and let $\sigma=\sigma_d$ be the unit
Lebesgue surface measure on $\mathbb{S}^d$. Recall that, using
cylindrical coordinates
\begin{equation}
\PT{x} = \left( \sqrt{1-u^2} \; \overline{\PT{x}}, u \right), \qquad -1 \leq u \leq 1, \;  \overline{\PT{x}} \in \mathbb{S}^{d-1}, \label{Cylindrical}
\end{equation}
we can write the decomposition
\begin{equation}
\dd \sigma_{d}(\PT{x}) = \frac{\omega_{d-1}}{\omega_{d}} \left( 1 - u^{2} \right)^{d/2-1} \, \dd u \, \dd \sigma_{d-1}(\overline{\PT{x}}).
\label{Measure.Decomposition}
\end{equation}
Here $\omega_{d}$ is the surface area of $\mathbb{S}^{d}$, and the
ratio of these areas can be evaluated as
\begin{equation}
\frac{\omega_{d}}{\omega_{d-1}} = \int_{-1}^{1} \left( 1 - u^2 \right)^{d/2-1} \dd u =
\frac{\sqrt{\pi}\gammafcn(d/2)}{\gammafcn((d+1)/2)} = 2^{d-1} \frac{\left[\gammafcn(d/2)\right]^{2}}{\gammafcn(d)}. \label{omega.ratio}
\end{equation}

Given a compact set $E\subset \mathbb{S}^d$, consider the class
$\mathcal{M}(E)$ of unit positive Borel measures supported on $E$.
For $0<s<d$, the {\em Riesz $s$-potential} and {\em Riesz
$s$-energy} of a measure $\mu\in \mathcal{M}(E)$ are given,
respectively, by
\begin{equation*}
U_s^\mu (\PT{x}) \DEF \int k_s(\PT{x},\PT{y}) \dd\mu(\PT{y}), \qquad
\mathcal{I}_s (\mu) \DEF \int\int k_s(\PT{x},\PT{y}) \dd\mu(\PT{x})
\dd\mu(\PT{y}) ,
\end{equation*}
where $k_s(\PT{x},\PT{y}) \DEF |\PT{x}-\PT{y}|^{-s}$ is the
so-called {\em Riesz kernel} (for $s=0$ we use the logarithmic
kernel $k_0(\PT{x},\PT{y})\DEF\log(1/|\PT{x}-\PT{y}|)$ instead). The
{\em $s$-energy} of $E$ is $W_s(E)\DEF\inf\{\mathcal{I}_s (\mu) :
\mu\in \mathcal{M}(E)\}$ and if $W_s(E)$ is finite, there is a unique measure $\mu_{E,s}$
achieving this minimal energy, which is called the {\em $s$-extremal measure} on $E$. The {\em $s$-capacity of $E$} is defined as
$\CAP_{s}(E)\DEF1/W_s(E)$ for $s>0$. 
(In the logarithmic case $s=0$ we define $\CAP_0(E) \DEF \exp\{-W_0(E)\}$, cf. \eqref{W.0.d} for $E=\mathbb{S}^d$.)
A property is said to hold {\em quasi-everywhere} (q.e.) if the exceptional set has $s$-capacity
zero. For more details see \cite[Chapter II]{L}. We remind the
reader that the $s$-energy of $\mathbb{S}^d$ is given by
\begin{equation}
W_s(\mathbb{S}^d) = \frac{\gammafcn(d)\gammafcn((d-s)/2)}{2^s\gammafcn(d/2)\gammafcn(d-s/2)}, \qquad 0 < s < d. \label{SphereEnergy}
\end{equation}

The weighted $s$-energy associated with a non-negative lower
semi-continuous {\em external field} $Q : E \to [0,\infty]$ is given
by
\begin{equation}
\mathcal{I}_Q(\mu) \DEF \mathcal{I}_s(\mu) + 2 \int Q(\PT{x}) \dd\mu(\PT{x}) .
\end{equation}
A measure $\mu_Q\in \mathcal{M}(E )$ such that $\mathcal{I}_Q(\mu_Q) = V_Q$, where
\begin{equation}
V_Q \DEF \inf \left\{ \mathcal{I}_Q (\mu)  :  \mu \in \mathcal{M}(E) \right\},
\end{equation}
is called an {\em extremal (or positive equilibrium) measure on $E$ associated with
$Q(\PT{x})$}. The measure $\mu_Q$ is characterized by the
Gauss variational inequalities
\begin{align}
U_s^{\mu_Q}(\PT{x}) + Q(\PT{x}) &\geq F_Q \qquad \text{q.e. on $E$,} \label{geqineq} \\
U_s^{\mu_Q}(\PT{x}) + Q(\PT{x}) &\leq F_Q \qquad \text{everywhere on $\supp(\mu_Q)$,} \label{leqineq}
\end{align}
where
\begin{equation}
F_Q \DEF V_Q - \int Q(\PT{x}) \, \dd \mu_Q(\PT{x}).
\end{equation}
For simplicity, we suppressed in some of the above notation the
dependency on $s$; that is, $\mathcal{I}_Q=\mathcal{I}_{Q,s}$,
$\mu_Q=\mu_{Q,s}$, etc. We note that for suitable external fields
(e.g. continuous on $E=\mathbb{S}^d$), the inequality in
\eqref{geqineq} holds everywhere, which implies that equality holds
in \eqref{leqineq}.

The existence, uniqueness, and characterization-related questions
concerning extremal potentials with external fields in the most
general setting can be found in \cite{Z1}--\cite{Z3}. We remark that the
logarithmic potential with external fields is treated in depth in \cite{ST}.

When $Q \equiv 0$ and ${\rm cap}_s (E)>0$, the extremal measure
$\mu_Q$ is the same as the measure $\mu_E=\mu_{E,s}$.

In \cite{DS} {\em Riesz external fields}
\begin{equation}
Q_{\PT{a},q}(\PT{x}) \DEF Q_{\PT{a},q,s}(\PT{x}) \DEF q \left| \PT{x} - \PT{a} \right|^{-s} \text{on $E=\mathbb{S}^d$, $d-2<s<d$,} \label{RieszField}
\end{equation}
were considered, where $q>0$ and $\PT{a}$ is a fixed point on
$\mathbb{S}^d$. \footnote{The case $d=1$, $s=0$, where $\PT{a}$ is a point on the unit circle was investigated in \cite{LSV}.} The motivation for that investigation was to obtain
new separation results for minimal $s$-energy points on the sphere.
In the current work we extend that investigation to Riesz external fields
$Q_{\PT{a},q}$ with $\PT{a}\not\in \mathbb{S}^d$ and develop a
technique for finding the extremal measure associated with more
general axis-supported external fields.

\subsection{Signed Equilibrium.}

We note that for $d=2$ and $s=1$ it is a standard electrostatic
problem to find the charge density (signed measure) on a charged,
insulated, conducting sphere in the presence of a point charge $q$
placed off the sphere (see \cite[Chapter 2]{J}). This motivates us
to give the following definition (see \cite{BDS}).

% \vskip 1mm

\begin{defn} Given a compact subset $E\subset \mathbb{R}^p$ ($p\geq 3$) and
an external field $Q$, we call a signed measure
$\eta_{E,Q}=\eta_{E,Q,s}$ supported on $E$ and of total charge
$\eta_{E,Q}(E)=1$ {\em a signed $s$-equilibrium on $E$ associated with
$Q$} if its weighted Riesz $s$-potential is constant on $E$, that is
\begin{equation}
U_s^{\eta_{E,Q}}(\PT{x}) + Q(\PT{x}) = F_{E,Q} \qquad \text{for all $\PT{x} \in E$.} \label{signedeq}
\end{equation}
\end{defn}

% \vskip 1mm

The choice of the normalization $\eta_{E,Q}(E)=1$ is just for
convenience in the applications here. 
% When the external field is understood from the context, we shall use the simplified notation $\eta_E$. 
Lemma 2.1 below establishes that if a signed $s$-equilibrium
$\eta_{E,Q}$ exists, then it is unique.

In \cite{FABR} Fabrikant et al give a derivation of certain signed Riesz equilibria on suitably parametrized  surfaces in $\mathbb{R}^3$, including spherical caps when $Q(\PT{x}) \equiv 0$. We remark that the determination of signed equilibria is a
substantially easier problem than that of finding non-negative
extremal measures, which is the goal of this paper. However, the solution
to the former problem is useful in solving the latter problem.

Our first result establishes existence of the signed $s$-equilibrium
associated with the Riesz external field $Q_{\PT{a},q}$,
$\PT{a}\not\in \mathbb{S}^d$, defined in \eqref{RieszField}. We
assume that $\PT{a}$ lies above the North Pole $\PT{p}:=(\PT{0},1)$,
that is $\PT{a}=(\PT{0},R)$ and $R>1$ (the case $R<1$ is handled by
inversion).

Throughout, $\Hypergeom{2}{1}{a,b}{c}{z}$ and
$\HypergeomReg{2}{1}{a,b}{c}{z}$ denote the Gauss hypergeometric
function and its regularized form \footnote{which is well-defined
even for $c$ a negative integer} with series expansions
\begin{equation}
\Hypergeom{2}{1}{a,b}{c}{z} \DEF \sum_{n=0}^\infty
\frac{\Pochhsymb{a}{n}\Pochhsymb{b}{n}}{\Pochhsymb{c}{n}}
\frac{z^n}{n!}, \quad \HypergeomReg{2}{1}{a,b}{c}{z} \DEF
\sum_{n=0}^\infty
\frac{\Pochhsymb{a}{n}\Pochhsymb{b}{n}}{\gammafcn(n+c)}
\frac{z^n}{n!}, \qquad |z|<1, \label{HypergeomSeries}
\end{equation}
where $\Pochhsymb{a}{0} \DEF 1$ and $\Pochhsymb{a}{n} \DEF a (a+1)
\cdots (a+n-1)$ for $n\geq1$ is the Pochhammer symbol. The incomplete Beta
function and the Beta function are defined as
\begin{equation} \label{betafnc}
\betafcn(x;\alpha,\beta) \DEF \int_{0}^x v^{\alpha-1} \left( 1 - v \right)^{\beta-1} \dd v, \qquad \betafcn(\alpha,\beta) \DEF \betafcn(1;\alpha,\beta),
\end{equation}
whereas the regularized incomplete Beta function is given by
\begin{equation}
\mathrm{I}(x;a,b) \DEF \betafcn(x;a,b) \big/ \betafcn(a,b). \label{regbetafnc}
\end{equation}

% \vskip 1mm

\begin{thm} Let $0<s<d$ and $R>1$. The signed $s$-equilibrium
$\eta_{\PT{a}}=\eta_{\mathbb{S}^d,Q_{\PT{a},q},s}$ on $\mathbb{S}^d$
associated with the Riesz external field $Q_{\PT{a},q}$, $
\PT{a}=R\PT{p}$, is given by
\begin{equation}
\dd \eta_{\PT{a}}(\PT{x}) = \left\{ 1 + \frac{q U_s^\sigma (\PT{a})}{W_s(\mathbb{S}^d)} -
\frac{q\left(R^2-1\right)^{d-s}}{W_s(\mathbb{S}^d)\left|\PT{x}-\PT{a}\right|^{2d-s}}
\right\} \dd \sigma(\PT{x}).\label{signedeqdens}
\end{equation}
Furthermore, $U_s^\sigma (\PT{a})=\int k_s (\PT{a},\PT{y})\, \dd \sigma
(\PT{y})$ has the following representation:
\begin{equation}
U_s^\sigma (\PT{a}) = \left( R + 1 \right)^{-s} \Hypergeom{2}{1}{s/2,d/2}{d}{\frac{4R}{\left(R+1\right)^2}}. \label{SignEqPot(b)}
\end{equation}\label{SignEq}
\end{thm}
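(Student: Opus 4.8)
The plan is to \emph{verify} that the measure defined by \eqref{signedeqdens} meets the two requirements in the definition of a signed $s$-equilibrium---constant weighted potential on $\mathbb{S}^d$ and total charge $1$---and then to invoke the uniqueness statement (Lemma~2.1). The structural observation that drives everything is that \eqref{signedeqdens} decomposes as $\dd\eta_{\PT{a}} = A\,\dd\sigma - (q/W_s(\mathbb{S}^d))\,\dd\tilde\nu$, where $A\DEF 1 + q\,U_s^\sigma(\PT{a})/W_s(\mathbb{S}^d)$ and $\tilde\nu$ is the measure with $\sigma$-density $(R^2-1)^{d-s}\,|\PT{x}-\PT{a}|^{-(2d-s)}$. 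The factor $(R^2-1)^{d-s}/|\PT{x}-\PT{a}|^{2d-s}$ is the Riesz counterpart of the classical exterior Poisson kernel (for $d=2$, $s=1$ it is exactly $(R^2-1)/|\PT{x}-\PT{a}|^{3}$), which signals that $\tilde\nu$ should be a scalar multiple of the balayage $\bal(\delta_{\PT{a}},\mathbb{S}^d)$ of the unit point mass at $\PT{a}$.

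The key step, and the main obstacle, is to prove the balayage identity
\[
U_s^{\tilde\nu}(\PT{x}) = W_s(\mathbb{S}^d)\,|\PT{x}-\PT{a}|^{-s}, \qquad \PT{x}\in\mathbb{S}^d,
\]
equivalently that $\tilde\nu = W_s(\mathbb{S}^d)\,\bal(\delta_{\PT{a}},\mathbb{S}^d)$, since balayage is characterized by reproducing $U_s^{\delta_{\PT{a}}}(\PT{x}) = |\PT{x}-\PT{a}|^{-s}$ on $\mathbb{S}^d$. I would obtain this from the explicit Riesz balayage density for a ball, valid for $0<s<d$ (see \cite[Ch.~IV]{L}); a self-contained derivation proceeds through the Kelvin transform: the inversion $\PT{x}\mapsto\PT{x}^*=\PT{x}/|\PT{x}|^2$ fixes $\mathbb{S}^d$ pointwise, sends $\PT{a}=R\PT{p}$ to the interior point $\PT{a}/R^{2}$, and obeys $|\PT{x}-\PT{y}|=|\PT{x}|\,|\PT{y}|\,|\PT{x}^*-\PT{y}^*|$, so that on $\mathbb{S}^d$ one has $|\PT{x}-\PT{a}|^{-s}=R^{-s}|\PT{x}-\PT{a}/R^{2}|^{-s}$ and the exterior problem is converted into the symmetric balayage of an interior point mass. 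Getting the exponents $d-s$ and $2d-s$ together with the scalar $W_s(\mathbb{S}^d)$ exactly right is the delicate part; everything else reduces to the Frostman relation and Fubini.

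Granting the identity, both requirements follow at once. On $\mathbb{S}^d$ the equilibrium property of $\sigma$ gives $U_s^\sigma(\PT{x})=W_s(\mathbb{S}^d)$, whence
\[
U_s^{\eta_{\PT{a}}}(\PT{x}) + q\,|\PT{x}-\PT{a}|^{-s} = A\,W_s(\mathbb{S}^d) - \frac{q}{W_s(\mathbb{S}^d)}\,U_s^{\tilde\nu}(\PT{x}) + q\,|\PT{x}-\PT{a}|^{-s} = W_s(\mathbb{S}^d) + q\,U_s^\sigma(\PT{a}),
\]
the two $|\PT{x}-\PT{a}|^{-s}$ terms cancelling exactly after inserting the balayage identity; thus the weighted potential is the constant $F_{\PT{a}}=W_s(\mathbb{S}^d)+q\,U_s^\sigma(\PT{a})$. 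For the total charge, symmetry of the kernel (Fubini) gives $\int U_s^\sigma\,\dd\tilde\nu = \int U_s^{\tilde\nu}\,\dd\sigma = W_s(\mathbb{S}^d)\int_{\mathbb{S}^d}|\PT{x}-\PT{a}|^{-s}\,\dd\sigma(\PT{x}) = W_s(\mathbb{S}^d)\,U_s^\sigma(\PT{a})$, while the left-hand side equals $W_s(\mathbb{S}^d)\,\|\tilde\nu\|$ because $\tilde\nu$ lives where $U_s^\sigma\equiv W_s(\mathbb{S}^d)$; hence $\|\tilde\nu\|=U_s^\sigma(\PT{a})$ and $\eta_{\PT{a}}(\mathbb{S}^d)=A-(q/W_s(\mathbb{S}^d))\,U_s^\sigma(\PT{a})=1$. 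Uniqueness (Lemma~2.1) then identifies this measure as the signed $s$-equilibrium.

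It remains to prove the representation \eqref{SignEqPot(b)}, which is a routine reduction. Writing $U_s^\sigma(\PT{a})=\int_{\mathbb{S}^d}|\PT{a}-\PT{y}|^{-s}\,\dd\sigma(\PT{y})$ with $\PT{a}=R\PT{p}$ and the height coordinate $u=\PT{y}\cdot\PT{p}$, so that $|\PT{a}-\PT{y}|^{2}=R^{2}+1-2Ru$, the decomposition \eqref{Measure.Decomposition} collapses the surface integral to
\[
U_s^\sigma(\PT{a}) = \frac{\omega_{d-1}}{\omega_{d}}\int_{-1}^{1}\bigl(R^{2}+1-2Ru\bigr)^{-s/2}\bigl(1-u^{2}\bigr)^{d/2-1}\,\dd u.
\]
Factoring $R^{2}+1-2Ru=(R+1)^{2}\bigl(1-\tfrac{4R}{(R+1)^{2}}\cdot\tfrac{1+u}{2}\bigr)$ and substituting $t=(1+u)/2$ turns this into Euler's integral $\int_0^1 t^{d/2-1}(1-t)^{d/2-1}(1-zt)^{-s/2}\,\dd t=\betafcn(d/2,d/2)\,\Hypergeom{2}{1}{s/2,d/2}{d}{z}$ with $z=4R/(R+1)^{2}$; the remaining prefactor collapses to $1$ upon using \eqref{omega.ratio} and $\betafcn(d/2,d/2)=\gammafcn(d/2)^{2}/\gammafcn(d)$, which yields \eqref{SignEqPot(b)}.
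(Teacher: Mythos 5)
Your architecture is the right one and is in fact the paper's: verify that the candidate measure has constant weighted $s$-potential on $\mathbb{S}^d$ and total mass $1$, then invoke uniqueness (Lemma \ref{lem:uniqueness}); your Fubini argument giving $\|\tilde\nu\|=U_s^\sigma(\PT{a})$ is a correct (and slightly slicker) variant of the paper's mass computation, and your reduction of $U_s^\sigma(\PT{a})$ to Euler's integral is exactly the paper's derivation of \eqref{SignEqPot(b)}. The gap is that the entire content of the theorem sits in the one identity you defer, $U_s^{\tilde\nu}(\PT{x})=W_s(\mathbb{S}^d)\,|\PT{x}-\PT{a}|^{-s}$ on $\mathbb{S}^d$, and neither route you sketch for it closes. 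The explicit balayage density in \cite[Ch.~IV]{L} is the generalized Poisson kernel for sweeping onto a \emph{solid ball} in $\mathbb{R}^p$ (and is tied to the range $0<\alpha\le 2$ of the Riesz parameter); what you need is the sweep of $\delta_{\PT{a}}$ onto the codimension-one sphere $\mathbb{S}^d\subset\mathbb{R}^{d+1}$ for all $0<s<d$, which is precisely the assertion at issue, so the citation is either inapplicable or circular. The origin-centered inversion $\PT{x}\mapsto\PT{x}/|\PT{x}|^2$ fixes $\mathbb{S}^d$ pointwise and yields only $\bal_s(\delta_{R\PT{p}},\mathbb{S}^d)=R^{-s}\bal_s(\delta_{(1/R)\PT{p}},\mathbb{S}^d)$ (the paper's \eqref{BalInversion}): it trades the exterior charge for an interior one but leaves the balayage itself uncomputed, and the interior problem is exactly as hard as the exterior one.

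The missing step is a \emph{different} Kelvin transform, the one the paper sets up in Section \ref{sec:kelvin.transf.}: invert with center $\PT{a}$ itself and radius $\sqrt{R^2-1}$. This map carries $\mathbb{S}^d$ onto $\mathbb{S}^d$ (not pointwise), and by \eqref{KelTr}, \eqref{KelTrDist}, and \eqref{KelTrMeas} one has $|\PT{z}-\PT{x}|^{-s}=(R^2-1)^{s}\,|\PT{z}-\PT{a}|^{-s}|\PT{x}-\PT{a}|^{-s}|\PT{z}^*-\PT{x}^*|^{-s}$ and $\dd\sigma(\PT{x})=(R^2-1)^{-d}|\PT{x}-\PT{a}|^{2d}\,\dd\sigma(\PT{x}^*)$, so the seemingly awkward exponent $2d-s$ in your $\tilde\nu$ cancels identically and
\begin{equation*}
U_s^{\tilde\nu}(\PT{z})=\int_{\mathbb{S}^d}\frac{\left(R^2-1\right)^{d-s}\dd\sigma(\PT{x})}{\left|\PT{z}-\PT{x}\right|^{s}\left|\PT{x}-\PT{a}\right|^{2d-s}}
=\frac{1}{\left|\PT{z}-\PT{a}\right|^{s}}\int_{\mathbb{S}^d}\frac{\dd\sigma(\PT{x}^*)}{\left|\PT{z}^*-\PT{x}^*\right|^{s}}
=\frac{W_s(\mathbb{S}^d)}{\left|\PT{z}-\PT{a}\right|^{s}},
\end{equation*}
using only $U_s^\sigma\equiv W_s(\mathbb{S}^d)$ on $\mathbb{S}^d$. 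The same substitution evaluates $\|\tilde\nu\|$ directly if you prefer it to the Fubini route. With this two-line computation inserted your argument is complete and coincides with the paper's proof.
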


% \vskip 1mm 

We remark that in the Coulomb case $d=2$ and $s=1$, the
representation \eqref{signedeqdens} is well-known from elementary
physics (cf. \cite[p.~61]{J}).

% \vskip 1mm 

The next result explicitly shows the relationship between
$q$ and $R$ so that $\mu_{Q_{\PT{a},q}}$ coincides with the signed
equilibrium and has as support the entire sphere.

% \vskip 1mm

\begin{cor} \label{Cor}
Let $0<s<d$ and $R=|\PT{a}|>1$. Then $\supp(\mu_{Q_{\PT{a},q}})=\mathbb{S}^d$ if and only if
\begin{align}
\frac{W_s(\mathbb{S}^d)}{q} 
&\geq \frac{\left(R+1\right)^{d-s}}{\left(R-1\right)^d} - U_s^\sigma(\PT{a})  \label{eqsignedA}\\
&= \frac{1}{\left(R+1\right)^s}\sum_{k=0}^\infty \left[ 1 - \frac{\Pochhsymb{s/2}{k}}{\Pochhsymb{d}{k}} \right] \frac{\Pochhsymb{d/2}{k}}{k!} \left[ \frac{4R}{\left(R+1\right)^2} \right]^k. \label{eqsigned}
\end{align}
In such a case, $\mu_{Q_{\PT{a},q}}=\eta_{\PT{a}}$.
\end{cor}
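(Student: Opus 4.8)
The plan is to reduce the support condition to the pointwise nonnegativity of the explicit signed equilibrium $\eta_{\PT{a}}$ from Theorem~\ref{SignEq}, and then to verify the closed-form series identity. The first and central step is to establish the equivalence $\supp(\mu_{Q_{\PT{a},q}}) = \mathbb{S}^d \Leftrightarrow \eta_{\PT{a}} \geq 0$, together with the coincidence $\mu_{Q_{\PT{a},q}} = \eta_{\PT{a}}$ in that case. For the direction assuming $\eta_{\PT{a}} \geq 0$: by \eqref{signedeqdens} and \eqref{signedeq}, $\eta_{\PT{a}}$ is then a unit positive measure whose weighted $s$-potential equals a constant on all of $\mathbb{S}^d$, so it satisfies the Gauss variational conditions \eqref{geqineq}--\eqref{leqineq} with $F_Q$ equal to that constant; since these conditions characterize the extremal measure, uniqueness yields $\mu_{Q_{\PT{a},q}} = \eta_{\PT{a}}$, and its continuous density is positive except at most at one point, whence $\supp(\mu_{Q_{\PT{a},q}}) = \mathbb{S}^d$. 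Conversely, since $\PT{a} \notin \mathbb{S}^d$ the field $Q_{\PT{a},q}$ is continuous on $\mathbb{S}^d$, so (by the remark following \eqref{leqineq}) equality holds in \eqref{leqineq} throughout $\supp(\mu_{Q_{\PT{a},q}})$; if this support is all of $\mathbb{S}^d$, then $\mu_{Q_{\PT{a},q}}$ is a unit measure with constant weighted potential on $\mathbb{S}^d$, i.e.\ a signed $s$-equilibrium, and the uniqueness in Lemma 2.1 forces $\mu_{Q_{\PT{a},q}} = \eta_{\PT{a}}$, so that $\eta_{\PT{a}} \geq 0$.

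The second step locates where the density can fail to be nonnegative. The bracketed density in \eqref{signedeqdens} is an affine, strictly decreasing function of $|\PT{x}-\PT{a}|^{-(2d-s)}$, hence is smallest at the point of $\mathbb{S}^d$ nearest $\PT{a} = R\PT{p}$, namely the North Pole $\PT{p}$, where $|\PT{p}-\PT{a}| = R-1$. Thus $\eta_{\PT{a}} \geq 0$ iff the bracket is nonnegative at $\PT{x}=\PT{p}$. Using $R^2-1 = (R-1)(R+1)$ to simplify $(R^2-1)^{d-s}/(R-1)^{2d-s} = (R+1)^{d-s}/(R-1)^d$ and multiplying through by $W_s(\mathbb{S}^d)/q>0$ turns this nonnegativity into exactly the inequality \eqref{eqsignedA}.

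The third step is the identity \eqref{eqsignedA}$\,=\,$\eqref{eqsigned}. Setting $t \DEF 4R/(R+1)^2$, the key observation is $1-t = ((R-1)/(R+1))^2$, which gives $(R+1)^{d-s}/(R-1)^d = (R+1)^{-s}(1-t)^{-d/2}$. Expanding $(1-t)^{-d/2} = \sum_{k} \Pochhsymb{d/2}{k}\, t^k/k!$ via the binomial series (valid since $0<t<1$ for $R>1$) and subtracting, term by term, the hypergeometric series \eqref{SignEqPot(b)} for $U_s^\sigma(\PT{a})$ produces precisely the coefficients $[\,1 - \Pochhsymb{s/2}{k}/\Pochhsymb{d}{k}\,]\Pochhsymb{d/2}{k}/k!$ appearing in \eqref{eqsigned}.

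I expect the only genuinely delicate point to be the converse in the first step: it hinges on upgrading \eqref{leqineq} to an equality on the entire support via continuity of the field, and then invoking the uniqueness of the signed equilibrium (Lemma 2.1). Steps two and three are by comparison routine, amounting to a nearest-point computation and a binomial expansion.
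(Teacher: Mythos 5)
Your proof is correct and follows essentially the same route as the paper's: reduce the support condition to nonnegativity of the signed equilibrium density at its minimum point, the North Pole, via the Gauss variational inequalities and the uniqueness statements, and then obtain \eqref{eqsigned} by subtracting the hypergeometric series for $U_s^\sigma(\PT{a})$ from the binomial expansion $\left(R+1\right)^{d}/\left(R-1\right)^{d}=\left[1-4R/\left(R+1\right)^2\right]^{-d/2}$. Your write-up of the two directions of the equivalence is somewhat more explicit than the paper's, but the substance is identical.
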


\begin{rmk}
Observe that the function of $R$ in \eqref{eqsigned} is strictly decreasing for $R>1$. Thus, for any fixed charge $q$ there is
a critical $R_q$ given by equality in \eqref{eqsignedA}, such that
for $R\geq R_q$ the extremal support is the entire sphere.
\end{rmk}

\subsection{The Newtonian case $s=d-1$.}

% \vskip 1mm 

The following example deals with the classical case of a
Newtonian potential (relative to the manifold dimension). The
example answers a question of A. A. Gonchar; namely, how far from
the unit sphere should a unit point charge be placed so that the
support of the extremal measure associated with the external
field exerted by the charge be the entire sphere?

\begin{eg}
Let $d\geq2$, $s=d-1$, $q=1$ and $\PT{a}=(\PT{0},R)$. Then $W_s(\mathbb{S}^d)=1$ (cf. \eqref{SphereEnergy}) and from the
mean-value property for harmonic functions we can write
\begin{equation*}
U_{s}^\sigma(\mathbf{a}) = \frac{1}{R^{d-1}} \qquad \text{for $R\geq1$.}
\end{equation*}
% Then $W_s(\mathbb{S}^d)=1$ (cf. \eqref{SphereEnergy}) and, using \eqref{SignEq} and the quadratic
% transformation \cite[Eq.~15.3.17]{ABR} we can write
% \begin{align*}
% U_{s}^\sigma(\mathbf{a})
% &= \left( R + 1 \right)^{1-d} \Hypergeom{2}{1}{(d-1)/2,d/2}{d}{\frac{4R}{\left( R + 1 \right)^2}} \\
% &= \left( R + 1 \right)^{1-d} \left( \frac{1}{2} + \frac{1}{2} \sqrt{1 - \frac{4 R}{\left( R + 1 \right)^2}} \right)^{1-d} = \frac{1}{R^{d-1}} \qquad \text{for $R\geq1$.}
% \end{align*}
Thus \eqref{eqsigned} in this case is equivalent to the inequality
\begin{equation}
1 \geq \frac{R + 1}{\left( R - 1 \right)^d} - \frac{1}{R^{d-1}} \qquad \text{or} \qquad 1 \geq \frac{\rho + 2}{\rho^d} - \frac{1}{\left(\rho+1\right)^{d-1}},
\end{equation}
where $\rho$ measures the distance between the unit charge and the
surface of the sphere. Equality holds, if $\rho$ is an algebraic
number satisfying
\begin{equation} \label{P}
P(d;\rho) \DEF \left( \rho^d - 2 - \rho \right) \left( \rho + 1 \right)^{d-1} + \rho^d = 0,
\end{equation}
or on expanding the polynomial $P(d;\rho)$,
\begin{equation} \label{algebr.equ}
\sum_{m=0}^{d-1} \binom{d-1}{m} \rho^{m+d} - \sum_{m=0}^{d-1} \left[ \binom{d}{m} + \binom{d-1}{m} \right] \rho^m = 0.
\end{equation}
The monic polynomial\footnote{Properties of these polynomials will be investigated in a future publication.} $P(d;\rho)$ with integer coefficients has odd degree $2d-1$. Furthermore, $P(d;1)<0$ and hence $P(d;\rho)$ has at least one positive root; but, by Descartes' Sign Rule, this is the only positive root. This simple root $\rho_+$ must be in the interval $(1,2]$, since $P(d;\rho)>0$ for $\rho>2$. Asymptotic analysis shows that
\begin{equation}
\rho_+ = 1 + \left( \log 3 \right) / d  + \mathcal{O}(1/d^2) \qquad \text{as $d\to\infty$.}
\end{equation}
% 
% 
% The polynomial $P(d;\rho)$ has odd degree $2d-1$. It is monic and
% its coefficients are all integers. The $d$ leading coefficients are
% all positive and the remaining coefficients are all negative. One
% has the following special values:
% \begin{align*}
% P(d;-2) &= (-2)^d \left[ 1 + (-1)^{d-1} \right]
% = \begin{cases} 0 & \text{for $d=2,4,6,\dots$,} \\ - 2^{d+1} & \text{for $d=1,3,5,\dots$,} \end{cases} \\
% P(d;-1) &= (-1)^d, \qquad P(d;0) = -2, \qquad P(d;1) = 1 - 2^d.
% \end{align*}
% Moreover,
% \begin{equation}
% P(d;\rho) < 0, \qquad 0\leq\rho\leq1.
% \end{equation}
% 
% It is easy to see that $P(d;\rho)$ has exactly one positive root ($>1$). (There is at least one
% positive root because of $P(d;1)=1-2^d<0$ and $P(d;\rho)\to\infty$
% as $\rho\to\infty$. There is at most one positive root by Descartes'
% Sign Rule.) Equation \eqref{P} is equivalent to
% \begin{equation} \label{equ1}
% \frac{\rho^d-1}{\rho+1} = 1 - \left( \frac{\rho}{1+\rho} \right)^d,
% \qquad \rho\neq-1.
% \end{equation}
% from which we see that the positive solution $\rho_+$ satisfies
% $1<\rho_+\leq 2$. Moreover, since the left-hand side of \eqref{equ1}
% must be strictly less than one, we have $\rho_+^d<2+\rho_+<4$;
% that is, $1<\rho_+<2^{2/d}\to1$ as $d\to\infty$.

Of particular interest is the case when $d=2$. Then one easily
computes that the distance between the point charge and the surface
of the sphere is given precisely by the golden ratio
\begin{equation*}
\rho_+ = ( 1 + \sqrt{5} ) / 2.
\end{equation*}
We note that the fact that the inequality $R-1\geq \rho_+$ % (1+\sqrt{5})/2
implies $\supp(\mu_{Q_{\PT{a},1}})=\mathbb{S}^2$ follows
from an elementary physics argument.
\end{eg}

\subsection{The Mhaskar-Saff $\mathcal{F}_s$-functional and the extremal support.}

An important tool in our analysis is the Riesz analog of the {\it
Mhaskar-Saff $F$-functional} from classical logarithmic potential in
the plane (see \cite{MS} and \cite[Chapter IV, p. 194]{ST}).

\begin{defn} Given a compact subset $K\subset \mathbb{S}^d$ of
positive $s$-capacity, we define the {\it
$\mathcal{F}_s$-functional} of the set $K$ as
\begin{equation} \label{Functional}
\mathcal{F}_s(K) \DEF W_s(K) + \int Q(\PT{x}) \, \dd \mu_K(\PT{x}),
\end{equation}
where $W_s(K)$ is the $s$-energy of $K$ and $\mu_K$ is the $s$-extremal
measure (without external field) on $K$.
\end{defn}

\begin{rmk} We caution the reader that \eqref{Functional} is the negative of the
$F$-functional defined in \cite{MS} and \cite{ST}.
\end{rmk}

\begin{rmk} \label{FuncSignedEqRel}
When $d-2<s<d$, there is a remarkable relationship between the signed
equilibrium and the $\mathcal{F}_s$-functional. Namely, if the
signed $s$-equilibrium on a compact set $K$ associated with $Q$ exists,
then $\mathcal{F}_s (K)=F_{K,Q}$, where  $F_{K,Q}$ is the constant
from \eqref{signedeq}. Indeed, if $\eta_{K,Q}$ exists, we integrate
\eqref{signedeq} with respect to $\mu_K$ and interchange the order
of integration to obtain the asserted equality.
\end{rmk}

\begin{rmk}
With the notion of the functional $\mathcal{F}_s$ at hand we can restate the results of Theorem \ref{SignEq} and Corollary \ref{Cor} as follows: For $0<s<d$ and $R>1$ the signed $s$-equilibrium $\eta_{\PT{a}}=\eta_{\mathbb{S}^d,Q_{\PT{a},q},s}$ on $\mathbb{S}^d$ associated with  $Q_{\PT{a},q}$, $\PT{a}=R\PT{p}$, is given by
\begin{equation}
\dd \eta_{\PT{a}}(\PT{x}) = \frac{1}{W_s(\mathbb{S}^d)} \left\{ \mathcal{F}_s(\mathbb{S}^d) - q \left(R^2-1\right)^{d-s} \big/ \left|\PT{x}-\PT{a}\right|^{2d-s}
\right\} \dd \sigma(\PT{x}). \label{signedeqdens.2}
\end{equation}
Moreover, $\supp(\mu_{Q,s})=\mathbb{S}^d$ (that is $\mu_{Q,s}=\eta_{\PT{a}}$) if and only if 
\begin{equation}
\mathcal{F}_s(\mathbb{S}^d) \geq q \left( R + 1 \right)^{d-s} \big/ \left( R - 1 \right)^{d}.
\end{equation}
% The corresponding statement for axis-supported external fields is given in Remark \ref{}.
\end{rmk}

The following optimization property is the main motivation for
introducing the $\mathcal{F}_s$-functional.
\begin{thm} \label{FuncOptThm}
Let $d-2\leq s<d$ with $s>0$ and $Q$ be an external field on
$\mathbb{S}^d$. Then the $\mathcal{F}_s$-functional is minimized for
$S_Q \DEF \supp (\mu_Q)$.
\end{thm}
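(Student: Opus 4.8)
The plan is to establish the two-sided estimate
\[
\mathcal{F}_s(K) \geq F_Q = \mathcal{F}_s(S_Q) \qquad \text{for every compact $K \subset \mathbb{S}^d$ of positive $s$-capacity,}
\]
where $F_Q$ is the Frostman constant from the Gauss variational conditions \eqref{geqineq}--\eqref{leqineq}. This immediately gives that $S_Q=\supp(\mu_Q)$ is a minimizer. Both halves will come from integrating \eqref{geqineq}--\eqref{leqineq} against unweighted extremal measures, applying Fubini's theorem, and invoking the classical Frostman relations for the unweighted problem: $U_s^{\mu_K}=W_s(K)$ q.e.\ on $K$, and (by the maximum principle) $U_s^{\mu_K}\leq W_s(K)$ everywhere. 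Throughout I use that $\mu_Q$ and each $\mu_K$ have finite $s$-energy and therefore do not charge sets of $s$-capacity zero, so that q.e.\ identities may be integrated freely.

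First I would prove the equality $\mathcal{F}_s(S_Q)=F_Q$. On $S_Q=\supp(\mu_Q)$ the conditions \eqref{geqineq}--\eqref{leqineq} yield $U_s^{\mu_Q}(\PT{x})+Q(\PT{x})=F_Q$ for q.e.\ $\PT{x}$; integrating this against the unweighted $s$-extremal measure $\mu_{S_Q}$ and interchanging the order of integration gives
\[
\int U_s^{\mu_{S_Q}} \, \dd\mu_Q + \int Q \, \dd\mu_{S_Q} = F_Q.
\]
Since $U_s^{\mu_{S_Q}}=W_s(S_Q)$ q.e.\ on $S_Q$ and $\mu_Q$ is a unit measure supported there (not charging the polar exceptional set), the first integral equals $W_s(S_Q)$, so $\mathcal{F}_s(S_Q)=W_s(S_Q)+\int Q\,\dd\mu_{S_Q}=F_Q$. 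Note that this computation, unlike Remark~\ref{FuncSignedEqRel}, does not require the existence of a signed equilibrium on $S_Q$.

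Next I would show $\mathcal{F}_s(K)\geq F_Q$ for arbitrary admissible $K$. Integrating the global inequality $U_s^{\mu_Q}+Q\geq F_Q$ (valid q.e.\ on $\mathbb{S}^d$) against $\mu_K$ and again interchanging integrals gives
\[
F_Q \leq \int U_s^{\mu_K} \, \dd\mu_Q + \int Q \, \dd\mu_K.
\]
The second term is precisely the external-field term in $\mathcal{F}_s(K)$; the point is to bound the first by $W_s(K)$. Here the maximum principle enters: the equilibrium potential obeys $U_s^{\mu_K}(\PT{x})\leq W_s(K)$ for all $\PT{x}\in\mathbb{S}^d$, whence $\int U_s^{\mu_K}\,\dd\mu_Q\leq W_s(K)$. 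Combining, $F_Q\leq W_s(K)+\int Q\,\dd\mu_K=\mathcal{F}_s(K)$, which together with the first step completes the proof.

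The main obstacle is the justification of the bound $U_s^{\mu_K}\leq W_s(K)$ across the full range $d-2\leq s<d$. Regarded as an $\alpha$-Riesz kernel in the ambient space $\mathbb{R}^{d+1}$, our kernel has order $\alpha=d+1-s$, so the classical Frostman maximum principle, valid only for $\alpha\leq 2$, covers merely $d-1\leq s<d$. For $d-2\leq s<d-1$ one cannot invoke the unrestricted principle and must instead use the form of the maximum (or domination) principle appropriate to potentials of measures carried by $\mathbb{S}^d$ and evaluated on $\mathbb{S}^d$ — the natural setting here, since every measure in play lives on the sphere. I expect this restricted principle, together with the verification that the relevant potentials are finite and lower semicontinuous so that the q.e.\ inequalities propagate under integration, to be the technical heart of the argument.
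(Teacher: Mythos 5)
Your proposal follows essentially the same route as the paper: both halves of the argument (the equality $\mathcal{F}_s(S_Q)=F_Q$ via the Frostman conditions, and the inequality $\mathcal{F}_s(K)\geq F_Q$ via integrating \eqref{geqineq} against $\mu_K$, Fubini, and the bound $U_s^{\mu_K}\leq W_s(K)$) are exactly the paper's steps. The one place you stop short --- justifying $U_s^{\mu_K}\leq W_s(K)$ on $\mathbb{S}^d$ for $d-2\leq s<d-1$, where the unrestricted Frostman maximum principle fails --- is indeed the technical heart, and you correctly identify the remedy; the paper executes it by enclosing $K$ in a spherical cap $\Sigma$, comparing $U_s^{\mu_K}$ with the potential of the rescaled cap equilibrium measure $[W_s(K)/W_s(\Sigma)]\,\mu_{\Sigma}$, and applying the restricted Principle of Domination for measures supported on the sphere (Lemma 5.1 of \cite{DS}, adapted via \cite[Theorem~1.27]{L} when $s=d-2$). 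Supplying that comparison argument would complete your proof.
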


The next theorem provides sufficient conditions on a general
external field $Q$ that guarantee that the extremal support
$S_Q$ is a spherical zone or a spherical cap. 
% Observe that the range
% of the parameter $s$ in this case is $0<s<d$.

\begin{thm} Let $d-2\leq s<d$ with $s>0$ and the external field $Q:\mathbb{S}^d\to [0,\infty]$ be
rotationally invariant about the polar axis; that is,
$Q(\PT{z})=f(\xi)$, where $\xi$ is the altitude of 
$\PT{z}=(\sqrt{1-\xi^2}\; \overline{\PT{z}},\xi)$ (see
\eqref{Cylindrical}). Suppose that $f$ is a convex function on
$[-1,1]$. Then the support of the $s$-extremal measure $\mu_Q$ on $\mathbb{S}^d$ is a
spherical zone; namely, there are numbers $-1\leq t_1\leq t_2\leq 1$
such that
\begin{equation}
\supp(\mu_Q)=\Sigma_{t_1,t_2}:=\{ (\sqrt{1-u^2}\,
\overline{\PT{x}},u)\ :\ t_1\leq u \leq t_2, \, \overline{\PT{x}}\in
\mathbb{S}^{d-1} \}. \label{eqsupp}
\end{equation}
Moreover, if additionally $f$ is increasing, then $t_1=-1$ and the
support of $\mu_Q$ is a spherical cap centered at the South Pole.
\label{ConnThm}
\end{thm}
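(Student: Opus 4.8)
The plan is to exploit the optimization property in Theorem \ref{FuncOptThm} together with the convexity of $f$. By that theorem, the extremal support $S_Q = \supp(\mu_Q)$ minimizes the $\mathcal{F}_s$-functional among all compact subsets $K \subset \mathbb{S}^d$ of positive $s$-capacity. So the strategy is to show that if a candidate support is \emph{not} a spherical zone of the form $\Sigma_{t_1,t_2}$, then one can construct a competitor set that strictly decreases $\mathcal{F}_s$, forcing the true minimizer to be a zone. First I would reduce the problem to one dimension: since $Q$ is rotationally invariant and the kernel $k_s$ is symmetric under rotations fixing the polar axis, the extremal measure $\mu_Q$ must itself be rotationally invariant (by uniqueness of the extremal measure, it is invariant under any symmetry of the problem). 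Hence $S_Q$ is a rotationally invariant compact set, completely determined by its ``altitude profile'' $A := \{u \in [-1,1] : \text{the parallel at height } u \text{ meets } S_Q\}$.

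The heart of the argument is to show that convexity of $f$ forces $A$ to be an interval $[t_1,t_2]$. First I would record the Euler--Lagrange / variational characterization: on $S_Q$ the weighted potential $U_s^{\mu_Q}(\PT{x}) + Q(\PT{x})$ equals the constant $F_Q$, while on $\mathbb{S}^d \setminus S_Q$ it is $\geq F_Q$ (this is \eqref{geqineq}--\eqref{leqineq}, with equality q.e.\ on the support). After averaging over the $\mathbb{S}^{d-1}$-fibers, one obtains a function $g(u) := U_s^{\mu_Q}(\PT{x}_u) + f(u)$ of the single altitude variable $u$, which equals $F_Q$ on $A$ and is $\geq F_Q$ off $A$. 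I would then analyze the convexity structure of the fiber-averaged potential of a rotationally invariant measure. The crucial analytic input is that the map $u \mapsto \int k_s(\PT{x}_u, \PT{y})\,d\mu(\PT{y})$, for $\mu$ rotationally invariant, is a \emph{concave} function of $u$ on the complement of $\supp\mu$ (more precisely, a superposition of concave kernel-slices). Combined with the convexity of $f$, the function $g(u) - F_Q$ is then convex on each component of $[-1,1]\setminus A$; since it vanishes at the endpoints of each such component (where the component abuts $A$) and is $\geq 0$ there by the variational inequality, convexity together with the boundary behavior forces any ``gap'' inside the support to be impossible---a bounded convex function that is nonnegative at both ends of a gap but would have to dip to equal $F_Q$ (value $0$) in the interior contradicts either convexity or the strict inequality. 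This shows $A$ has no interior gaps, i.e.\ $A = [t_1,t_2]$.

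The main obstacle, and the step requiring the most care, is establishing the precise convexity/concavity behavior of the fiber-averaged Riesz potential and handling the boundary terms at $u = \pm 1$. One must verify that the slice function $u \mapsto k_s$ averaged over a parallel is genuinely convex (or concave, in the right direction) as a function of altitude in the regime $d-2 \le s < d$; this is where the restriction on $s$ enters, and it likely relies on a representation of the averaged kernel in terms of the hypergeometric function $\Hypergeom{2}{1}{}{}{}$ from \eqref{HypergeomSeries}, whose monotonicity/convexity in $u$ can be read off from its series or integral representation. A subtle point is that one cannot simply differentiate under the potential; instead I would argue via the maximum-principle characterization and a balayage/sweeping comparison, noting that Theorem \ref{FuncOptThm} already packages the delicate variational facts into the clean optimization statement, so the task reduces to the geometric claim that optimal sets are zones.

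For the final assertion, suppose additionally that $f$ is increasing. Then I would show the lower endpoint satisfies $t_1 = -1$ by a competitor argument: if $t_1 > -1$, enlarge the zone downward by replacing $[t_1,t_2]$ with $[t_1 - \eps, t_2]$ and estimate the change in $\mathcal{F}_s$. Because $f$ is increasing, the external-field contribution near the South Pole (small altitude) is smaller than near the upper cut, so extending the support toward the South Pole can only decrease---or at worst not increase in a way that is overcome by the energy gain---the functional $\mathcal{F}_s$; equivalently, the variational inequality $g(u) \ge F_Q$ near $u = -1$ cannot hold strictly when $f$ is increasing, since the averaged potential is largest near the poles while $f$ is smallest at the South Pole. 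This contradiction with the minimality from Theorem \ref{FuncOptThm} forces $t_1 = -1$, so $\supp(\mu_Q) = \Sigma_{-1,t_2}$ is a spherical cap centered at the South Pole, completing the proof.
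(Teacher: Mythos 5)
Your overall architecture matches the paper's: reduce by rotational invariance to the altitude variable, invoke the Gauss variational inequalities, and rule out a gap $(\alpha,\beta)$ in the altitude set by a convexity argument on the weighted potential. (The appeal to Theorem \ref{FuncOptThm} is actually unnecessary for this theorem; the paper argues directly from \eqref{geqineq}--\eqref{leqineq}.) However, there is a genuine error at the central analytic step. You assert that the fiber-averaged potential $u \mapsto \int k_s(\PT{x}_u,\PT{y})\,d\mu(\PT{y})$ is a superposition of \emph{concave} kernel slices, and then conclude that $g(u)-F_Q$ is \emph{convex} because $f$ is convex. A concave function plus a convex function is neither convex nor concave in general, so this inference fails; and the concavity claim itself has the wrong sign. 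What is actually needed, and what the paper proves, is that the averaged kernel $\kappa(u,\xi)$ is \emph{convex} in $\xi$ on the gap for each fixed $u$ in the support: from the series \eqref{KappaKernel1} and \eqref{KappaKernel2}, each term $\left(1-\xi\right)^k/\left(1+\xi\right)^{k+s/2}$ (resp.\ $\left(1+\xi\right)^k/\left(1-\xi\right)^{k+s/2}$) is strictly convex on $(-1,1)$, and the coefficients $\Pochhsymb{s/2}{k}\Pochhsymb{1-(d-s)/2}{k}/(\Pochhsymb{d/2}{k}k!)$ are nonnegative precisely because $d-2\le s<d$ --- this is where the restriction on $s$ enters, which you flagged but did not resolve. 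With convexity of the potential in hand, $U_s^{\mu_Q}+Q$ is strictly convex on $[\alpha,\beta]$, equals $F_Q$ at $\alpha$ and $\beta$, hence is strictly below $F_Q$ inside the gap, contradicting \eqref{geqineq}. Your description of the contradiction (``dip to equal $F_Q$ in the interior'') has the inequality structure muddled, though the intended conclusion is the right one.

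For the cap statement your argument is also only heuristic. The claim that ``the averaged potential is largest near the poles'' is neither justified nor what is required. The paper's argument is sharper: for $\xi<u$ one reads off from \eqref{KappaKernel2} that $\partial\kappa(u,\xi)/\partial\xi>0$, so if $t_1>-1$ the weighted potential is strictly increasing on $[-1,t_1]$; since it equals $F_Q$ at $t_1$ it would have to be strictly less than $F_Q$ just below $t_1$, contradicting \eqref{geqineq}. A competitor-set perturbation of $\mathcal{F}_s$ could perhaps be made to work, but as written it does not control the change in the internal energy term when the zone is enlarged, so it is not a complete argument.
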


It is easy to see that the external field $Q_{\PT{a},q}(\PT{z}) = q | 1 - 2 R \xi + R^2 |^{-s/2}$ is rotationally invariant about the
polar axis and is an increasing and convex function of the altitude
$\xi$ of $\PT{z}$. Therefore, from Theorem \ref{ConnThm} we conclude that
the support of the extremal measure $\mu_{Q_{\PT{a},q}}$ on $\mathbb{S}^d$ is a spherical
cap. In view of Theorem \ref{FuncOptThm} we thus need only to minimize
the $\mathcal{F}_s$-functional over the collection of spherical caps
centered at the South Pole in order to determine $S_Q$. For this purpose, in consideration of
Remark \ref{FuncSignedEqRel}, we first seek an explicit
representation for the signed equilibria for these spherical caps.

Denote by $\Sigma_t$ the spherical cap centered at the South Pole
\begin{equation}
\Sigma_t \DEF \Sigma_{-1,t}, \label{Sigma}
\end{equation}
(cf. \eqref{eqsupp}), and let $\eta_t$ be the signed $s$-equilibrium on
$\Sigma_t$ associated with $Q_{\PT{a},q}$. Using M. Riesz's approach
to $s$-balayage as presented in \cite[Chapter IV]{L}, we introduce
the following $s$-balayage measures onto $\Sigma_t$:
\begin{equation}
\epsilon_t = \epsilon_{t,s} \DEF \bal_s(\delta_{\PT{a}},\Sigma_t),
\qquad \nu_t = \nu_{t,s} \DEF \bal_s(\sigma,\Sigma_t), \label{bal}
\end{equation}
where $\delta_{\PT{a}}$ is the unit Dirac-delta measure at ${\bf
a}$. Recall that given a measure $\nu$ and a compact set $K$ (of the
sphere $\mathbb{S}^d$), the balayage measure $\hat{\nu}:=\bal_s(\nu,K)$ preserves the Riesz $s$-potential of $\nu$ onto the
set $K$ and diminishes it elsewhere (on the sphere $\mathbb{S}^d$).
We remark that in what follows an important role is played by the
function
% \begin{equation}
% \Phi_s(t) \DEF W_s(\mathbb{S}^d)
% \frac{1+q\left\|\epsilon_t\right\|}{\left\|\nu_t\right\|}, \qquad d-2<s<d.
% \label{FuncPhi}
% \end{equation}
\begin{equation}
\Phi_s(t) \DEF W_s(\mathbb{S}^d) \left( 1 + q \left\|\epsilon_t\right\| \right) \big/ \left\|\nu_t\right\|, \qquad d-2<s<d.
\label{FuncPhi}
\end{equation}

The next assertion is an immediate consequence of the definition of
the balayage measures in \eqref{bal}. In Lemmas \ref{NuLem} and
\ref{EpsilonLem} below we present explicit formulas for their densities.
Their norms are calculated in Lemmas \ref{LemNuNorm} and
\ref{norm.eps.r}, respectively. Below we combine these formulas to
give an explicit form for the density of the signed $s$-equilibrium. The
only statements requiring further proof is the formula for the
weighted $s$-potential \eqref{eq:weighted.outside} when
$\xi>t$. We shall do this in Section \ref{sec:6}.

\begin{thm} Let $d-2<s<d$. The signed $s$-equilibrium $\eta_t$
on the spherical cap $\Sigma_t \subset \mathbb{S}^d$ associated with $Q_{\PT{a},q}$ is
given by
\begin{equation} 
\eta_t = \displaystyle{\frac{1+ q \| \epsilon_t \|}{\| \nu_t \|}\nu_t - q \epsilon_t}. \label{eta}
\end{equation}
It is absolutely continuous in the sense that for $\PT{x} = ( \sqrt{1-u^2} \overline{\PT{x}}, u) \in \Sigma_t$,
\begin{equation*}
\dd \eta_{t}(\PT{x}) = \eta_{t}^{\prime}(u)
\frac{\omega_{d-1}}{\omega_{d}} \left( 1 - u^2 \right)^{d/2-1} \dd u
\dd\sigma_{d-1}(\overline{\PT{x}}),
\end{equation*}
where (with $R=|\PT{a}|$ and $r = \sqrt{R^2 - 2 R t + 1}$)
\begin{equation*}
\begin{split} %\label{eta.t.prime.1st.result}
&\eta_{t}^{\prime}(u) = \frac{1}{W_s(\mathbb{S}^d)} \frac{\gammafcn(d/2)}{\gammafcn(d-s/2)}
\left( \frac{1-t}{1-u} \right)^{d/2} \left( \frac{t-u}{1-t} \right)^{(s-d)/2} \\
&\phantom{=\times}\times \Bigg\{ \Phi_s (t)
\HypergeomReg{2}{1}{1,d/2}{1-(d-s)/2}{\frac{t-u}{1-u}}  \\
&\phantom{=\times\pm}-  \frac{q\left( R + 1 \right)^{d-s}}{r^{d}}
\HypergeomReg{2}{1}{1,d/2}{1-(d-s)/2}{\frac{\left(R-1\right)^{2}}{r^{2}}
\, \frac{t-u}{1-u}}  \Bigg\}.
\end{split}
\end{equation*}

Furthermore, if $\PT{z} = ( \sqrt{1-\xi^2}\; \overline{\PT{z}},
\xi)\in \mathbb{S}^d$, the weighted $s$-potential is given by
\begin{align}
U_s^{\eta_t}(\PT{z})+Q_{\PT{a},s}(\PT{z}) &= \Phi_s(t), \qquad \PT{z} \in \Sigma_t, \\
\begin{split}
U_s^{\eta_t}(\PT{z})+Q_{\PT{a},s}(\PT{z}) &= \Phi_s(t) + q \frac{1}{\rho^s} \mathrm{I}\left(\frac{(R+1)^2}{r^2} \frac{\xi-t}{1+\xi};
\frac{d-s}{2}, \frac{s}{2} \right) \\
&\phantom{=\pm}- \Phi_s(t) \mathrm{I}\left(\frac{\xi-t}{1+\xi};
\frac{d-s}{2}, \frac{s}{2}\right), \qquad \PT{z} \in \mathbb{S}^d \setminus \Sigma_t, \label{eq:weighted.outside}
\end{split}
\end{align}
where $\rho=\sqrt{R^2-2R\xi+1}$ and $\mathrm{I}(x;a,b)$ is the regularized incomplete Beta function (see
\eqref{regbetafnc}). 
\label{SignEqThm}
\end{thm}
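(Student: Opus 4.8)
The plan is to assemble $\eta_t$ directly from the two balayage measures in \eqref{bal} and to verify the three defining features of a signed equilibrium: support on $\Sigma_t$, total charge $1$, and constant weighted potential. Since the external field is itself a Riesz potential, $Q_{\PT{a},q} = q\,U_s^{\delta_{\PT{a}}}$, the defining property of balayage yields $U_s^{\epsilon_t}(\PT{x}) = U_s^{\delta_{\PT{a}}}(\PT{x}) = Q_{\PT{a},q}(\PT{x})/q$ and $U_s^{\nu_t}(\PT{x}) = U_s^{\sigma}(\PT{x}) = W_s(\mathbb{S}^d)$ for $\PT{x}\in\Sigma_t$ (q.e.). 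Setting $\eta_t \DEF c\,\nu_t - q\,\epsilon_t$, the weighted potential on $\Sigma_t$ collapses,
\begin{equation*}
U_s^{\eta_t}(\PT{x}) + Q_{\PT{a},q}(\PT{x}) = c\,W_s(\mathbb{S}^d) - q\,U_s^{\epsilon_t}(\PT{x}) + Q_{\PT{a},q}(\PT{x}) = c\,W_s(\mathbb{S}^d),
\end{equation*}
a constant irrespective of $c$. Because both balayage measures are positive, their norms equal their total masses, so the normalization $\eta_t(\Sigma_t) = c\,\|\nu_t\| - q\,\|\epsilon_t\| = 1$ forces $c = (1 + q\|\epsilon_t\|)/\|\nu_t\|$, matching the coefficient in \eqref{eta}, and the constant value is $c\,W_s(\mathbb{S}^d) = \Phi_s(t)$ by \eqref{FuncPhi}. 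The q.e.\ identity upgrades to everywhere on $\Sigma_t$ by continuity of the potentials, giving the first displayed potential formula; uniqueness of the signed $s$-equilibrium (established earlier in Section~2) then identifies this candidate with $\eta_t$.

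For the explicit density I would substitute the closed forms for $\nu_t'$ and $\epsilon_t'$ from Lemmas \ref{NuLem} and \ref{EpsilonLem}, together with the norm evaluations of Lemmas \ref{LemNuNorm} and \ref{norm.eps.r}, into $\eta_t' = c\,\nu_t' - q\,\epsilon_t'$. Both densities carry the common boundary factors $(1-t)^{d/2}(1-u)^{-d/2}$ and $(t-u)^{(s-d)/2}(1-t)^{(d-s)/2}$ and differ only in the argument of a regularized ${}_2F_1$, so collecting terms produces the stated bracketed expression with $\Phi_s(t)$ multiplying the first hypergeometric and $q(R+1)^{d-s}r^{-d}$ the second. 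This step is purely mechanical once the lemmas are in hand.

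The substantive part, which I expect to be the main obstacle, is the weighted potential off the cap, i.e.\ \eqref{eq:weighted.outside} for $\xi > t$: balayage guarantees only $U_s^{\nu_t}\le U_s^\sigma$ and $U_s^{\epsilon_t}\le U_s^{\delta_{\PT{a}}}$ there, not exact values. Writing $\rho = |\PT{z}-\PT{a}|$ and $Q_{\PT{a},q}(\PT{z}) = q\rho^{-s}$, the target decomposition reduces to the two pointwise identities
\begin{align*}
U_s^{\nu_t}(\PT{z}) &= W_s(\mathbb{S}^d)\left[1 - \mathrm{I}\!\left(\tfrac{\xi-t}{1+\xi};\tfrac{d-s}{2},\tfrac{s}{2}\right)\right], \\
U_s^{\epsilon_t}(\PT{z}) &= \rho^{-s}\left[1 - \mathrm{I}\!\left(\tfrac{(R+1)^2}{r^2}\,\tfrac{\xi-t}{1+\xi};\tfrac{d-s}{2},\tfrac{s}{2}\right)\right].
\end{align*}
Each is obtained by integrating the Riesz kernel against the explicit cap densities; rotational symmetry about the polar axis collapses the $\sigma_{d-1}$-integration and leaves a one-dimensional integral in the altitude variable. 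The crux is to recognize this integral, after a substitution carrying the upper limit to $\tfrac{\xi-t}{1+\xi}$ (respectively $\tfrac{(R+1)^2}{r^2}\tfrac{\xi-t}{1+\xi}$), as the incomplete Beta integral $\betafcn(x;\tfrac{d-s}{2},\tfrac{s}{2}) = \int_0^x v^{(d-s)/2-1}(1-v)^{s/2-1}\,\dd v$, whose normalization $\betafcn(\tfrac{d-s}{2},\tfrac{s}{2})$ combines with the density prefactors to yield the clean regularized form. A useful check is $\xi\to t^+$, where both arguments vanish, $\mathrm{I}(0;\cdot,\cdot)=0$, and the potentials recover $W_s(\mathbb{S}^d)$ and $r^{-s}$, consistent with the values on $\Sigma_t$; one also verifies that the second argument stays in $[0,1]$ since $2r^2 - (R+1)^2(1-t) = (R-1)^2(1+t)\ge 0$. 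The delicate points will be tracking the exponents $(d-s)/2$ and $s/2$ through the change of variables and justifying the kernel integration for $d-2 < s < d$ via Riesz's composition (or Kelvin-transform) identities rather than naive term-by-term integration; this is the content deferred to Section \ref{sec:6}.
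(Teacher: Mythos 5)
Your proposal follows the paper's proof essentially step for step: formula \eqref{eta} and the constancy of the weighted potential on $\Sigma_t$ are read off from the balayage properties together with the normalization and the uniqueness Lemma \ref{lem:uniqueness}; the density is the mechanical combination of Lemmas \ref{NuLem}, \ref{EpsilonLem}, \ref{LemNuNorm}, and \ref{norm.eps.r}; and the only substantive work is \eqref{eq:weighted.outside}, for which you correctly isolate the two closed forms for $U_s^{\nu_t}$ and $U_s^{\epsilon_t}$ off the cap (equivalent, via $\mathrm{I}(x;a,b)=1-\mathrm{I}(1-x;b,a)$ and the identities relating $\tfrac{1+t}{1+\xi}$ and $\tfrac{\xi-t}{1+\xi}$ to $r$ and $\rho$, to the ones derived in Section \ref{sec:6}). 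The one place your sketch understates the work is the claim that the surviving one-dimensional integral becomes an incomplete Beta integral ``after a substitution'': the integrand is a product of two hypergeometric functions (the kernel \eqref{KappaKernel1} and the density of Lemma \ref{NuLem} or \ref{EpsilonLem}), and the paper expands both as series, integrates term by term (the ``naive'' route you propose to avoid, justified here by absolute convergence and a tabulated Beta-type integral), recognizes the resulting double sum as an Appell $\HyperF_3$ function, and only after the reduction $\Hypergeom{{}}{3}{a,c-a,b,c-b}{c}{w,z}=(1-z)^{a+b-c}\Hypergeom{2}{1}{a,b}{c}{w+z-wz}$ does the single Gauss function representing the incomplete Beta function appear. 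So the route is the same but the final evaluation is an Appell-function computation rather than a change of variables; your boundary checks ($\xi\to t^+$ giving $\mathrm{I}(0;\cdot,\cdot)=0$, and $2r^2-(R+1)^2(1-t)=(1+t)(R-1)^2\ge0$ keeping the second argument in $[0,1]$) are correct and consistent with that computation.
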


The corresponding statement for the case $s=d-2$ is given in Theorem \ref{ExcepThm}.

\begin{rmk} \label{rmk:F.s.EQ.Phi.s}
According to Remark \ref{FuncSignedEqRel} we have from Theorem
\ref{SignEqThm} that $\mathcal{F}_s(\Sigma_t) = \Phi_s(t) $.
Concerning the minimization of this function, we derive the following result.
\end{rmk}

% \vskip 1mm

\begin{thm}
Let $d-2<s<d$. For the external field $Q_{\PT{a},q}(\PT{x})$,
$\PT{a}=(\PT{0},R)$, $R>1$, the function $\Phi_s (t)$ has precisely
one global minimum $t_0\in (-1,1]$. This minimum is either the
unique solution $ t_0\in (-1,1)$ of the equation
\begin{equation}
\Phi_s (t) = \frac{q\left( R + 1 \right)^{d-s}}{\left( R^2 - 2 R t + 1 \right)^{d/2}},
\label{FuncEqCond}
\end{equation}
or $t_0=1$ when such a solution does not exist. Moreover, $t_0=\max
\{ t : \eta_t \geq 0 \}$.

The extremal measure $\mu_{Q_{\PT{a},q}}$ on $\mathbb{S}^d$ is given
by $\eta_{t_0}$ (see \eqref{eta}), and has as support the spherical
cap $\Sigma_{t_0}$.  \label{MainThm}
\end{thm}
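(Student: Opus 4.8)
The plan is to reduce the entire statement to a sign analysis of the single scalar function $g(t) \DEF \Phi_s(t) - q(R+1)^{d-s}/r^d$, where $r \DEF \sqrt{R^2-2Rt+1}$, whose zeros are exactly the solutions of \eqref{FuncEqCond}. The first and most self-contained step is the \emph{positivity criterion} $\eta_t \geq 0 \Longleftrightarrow g(t) \geq 0$, which I would read off directly from the density in Theorem \ref{SignEqThm}. Writing $A \DEF \Phi_s(t)>0$, $B \DEF q(R+1)^{d-s}/r^d>0$ and $F(w) \DEF \HypergeomReg{2}{1}{1,d/2}{1-(d-s)/2}{w}$, the bracket controlling the sign of $\eta_t'(u)$ is $A\,F(w_1)-B\,F(w_2)$ with $w_1=(t-u)/(1-u)$ and $w_2=\lambda w_1$, where $\lambda=(R-1)^2/r^2\in(0,1)$ since $r^2-(R-1)^2=2R(1-t)>0$. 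For $d-2<s<d$ the parameter $c \DEF 1-(d-s)/2$ lies in $(0,1)$, so $F$ has strictly positive series coefficients and is positive and increasing on $[0,1)$ with $F(0)=1/\gammafcn(c)$. Hence if $A\geq B$ then $A\,F(w_1)\geq B\,F(w_1)\geq B\,F(w_2)$ and $\eta_t'\geq 0$ throughout, whereas if $A<B$ the bracket tends to $(A-B)F(0)<0$ as $u\to t^-$, so $\eta_t$ is negative near the rim. This proves the criterion, and by continuity it shows that $t_0 \DEF \max\{t:\eta_t\geq 0\}$ satisfies $g(t_0)=0$ when $t_0<1$ (which is \eqref{FuncEqCond}), while $t_0=1$ occurs exactly when $g>0$ on all of $(-1,1]$.

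Next I would pin down the global minimum of $\Phi_s$ via the identity $\Phi_s'(t)=-c(t)\,g(t)$ with $c(t)>0$, obtained by differentiating the explicit expression $\Phi_s(t)=W_s(\mathbb{S}^d)(1+q\|\epsilon_t\|)/\|\nu_t\|$ of \eqref{FuncPhi} using the closed forms for $\|\nu_t\|$ and $\|\epsilon_t\|$ from Lemmas \ref{LemNuNorm} and \ref{norm.eps.r}, then simplifying with Beta/hypergeometric identities. (Equivalently, this is the Riesz analogue of the Mhaskar--Saff derivative formula, identifying $\Phi_s'(t)$ up to a positive factor with the rim coefficient of $\eta_t'$, which the first step already shows is proportional to $g(t)$.) Granting this, the critical points of $\Phi_s$ are exactly the zeros of $g$; at any such zero, $g'(t)=\Phi_s'(t)-\tfrac{d}{dt}\big[q(R+1)^{d-s}/r^d\big]=-\tfrac{d}{dt}\big[q(R+1)^{d-s}/r^d\big]<0$ (the bracket being increasing in $t$ because $r$ decreases), so $g$ is strictly decreasing through each zero and hence has at most one. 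Since $\Phi_s(t)\to+\infty$ as $t\to-1^+$ (there $\|\nu_t\|\to 0$ while the numerator stays bounded below by $W_s(\mathbb{S}^d)$), the function $\Phi_s$ first strictly decreases and then increases, giving a unique global minimum at $t_0$, equal to the solution of \eqref{FuncEqCond} in $(-1,1)$ or to $1$, and $t_0=\max\{t:\eta_t\geq 0\}$ as claimed.

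With $t_0$ in hand, the extremal measure is identified through the structural theorems. Because $Q_{\PT{a},q}$ is rotationally invariant about the polar axis and is increasing and convex in the altitude, Theorem \ref{ConnThm} gives $\supp(\mu_{Q_{\PT{a},q}})=\Sigma_{t^*}$ for some $t^*\in(-1,1]$; by Theorem \ref{FuncOptThm} together with Remark \ref{rmk:F.s.EQ.Phi.s} (which yields $\mathcal{F}_s(\Sigma_t)=\Phi_s(t)$) this $t^*$ minimizes $\Phi_s$, so $t^*=t_0$ by the uniqueness just established. Finally, on its support $\Sigma_{t_0}$ the measure $\mu_{Q_{\PT{a},q}}$ is the positive equilibrium and thus has constant weighted $s$-potential there; since $\eta_{t_0}\geq 0$ (as $g(t_0)\geq 0$), has unit mass, and has constant weighted potential $\Phi_s(t_0)$ on $\Sigma_{t_0}$ by Theorem \ref{SignEqThm}, the uniqueness of the signed $s$-equilibrium forces $\mu_{Q_{\PT{a},q}}=\eta_{t_0}$ (see \eqref{eta}), with support $\Sigma_{t_0}$.

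The main obstacle is the derivative identity $\Phi_s'(t)=-c(t)g(t)$ with $c(t)>0$; everything else is either the clean hypergeometric comparison of the first step or a direct appeal to Theorems \ref{ConnThm} and \ref{FuncOptThm}. Equivalently, one may avoid differentiating $\Phi_s$ and instead verify the remaining Gauss inequality $U_s^{\eta_{t_0}}+Q_{\PT{a},q}\geq\Phi_s(t_0)$ on $\mathbb{S}^d\setminus\Sigma_{t_0}$ directly from \eqref{eq:weighted.outside}, i.e.\ establish $q\rho^{-s}\,\mathrm{I}(\alpha_1;\tfrac{d-s}{2},\tfrac{s}{2})\geq\Phi_s(t_0)\,\mathrm{I}(\alpha_2;\tfrac{d-s}{2},\tfrac{s}{2})$ for $\xi>t_0$, where $\alpha_1=\tfrac{(R+1)^2}{r^2}\tfrac{\xi-t_0}{1+\xi}$, $\alpha_2=\tfrac{\xi-t_0}{1+\xi}$ (with $r$ evaluated at $t_0$ and $\rho$ as in \eqref{eq:weighted.outside}); here the leading coefficient as $\xi\to t_0^+$ is $-g(t_0)=0$, so the claim reduces to a monotonicity estimate for the regularized incomplete Beta function. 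Either route hinges on the same fact: the rim value of the signed-equilibrium density, proportional to $g(t)$, simultaneously controls the sign of $\eta_t$ and the variation of the $\mathcal{F}_s$-functional.
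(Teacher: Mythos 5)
Your proposal is correct and follows essentially the same route as the paper: the rim-sign criterion for $\eta_t$ via the monotone hypergeometric comparison is the paper's Lemma \ref{lem4.1}, the identity $\Phi_s'(t)=-\bigl(\|\nu_t\|'/\|\nu_t\|\bigr)\,g(t)$ is the paper's key formula (obtained by the quotient rule plus the observation that the integrands in \eqref{NormEpsB} and \eqref{NuNormB} differ exactly by the factor $(R+1)^{d-s}/[W_s(\mathbb{S}^d)r^d]$, so no further Beta/hypergeometric identities are needed --- your ``main obstacle'' is a one-line computation), and the identification of $\mu_Q$ via Theorems \ref{ConnThm}, \ref{FuncOptThm} and uniqueness is verbatim the paper's argument. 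Your ``$g$ decreases strictly through each zero'' step is an equivalent repackaging of the paper's computation that $\Phi_s''>0$ at every critical point, and your suggested alternative via the Gauss inequality on $\mathbb{S}^d\setminus\Sigma_{t_0}$ is exactly the paper's second proof in Section \ref{sec:6}.
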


Note that, in view of formulas \eqref{NormEpsB} and \eqref{NuNormA} for $\| \epsilon_t\|$ and $\| \nu_t \|$ given
below, equation \eqref{FuncEqCond} can be written in terms of
hypergeometric functions.

\begin{rmk} The restriction on the parameter $s$ arises in the process of applying the balayage method and the principle of domination. It is a
topic for further investigation to extend the range of $s$ for which
the conclusion of Theorem \ref{MainThm} remains true.
\end{rmk}

Figure \ref{fig1} gives an overview of the qualitative behavior of the weighted $s$-potential of the signed $s$-equilibrium measure $\eta_t$ on $\mathbb{S}^d$ associated with the external field $Q$ and its density with respect to $\sigma_d|_{\Sigma_t}$ for $s$ in the range $d-2<s<d$ and the choices $t<t_0$, $t=t_0$ and $t>t_0$. We remark that the derivative with respect to $\xi$ of the weighted $s$-potential becomes $\pm\infty$ as $\xi\to t^+$ for $t\neq t_0$ and vanishes for $t=t_0<1$ (cf. Remark \ref{rmk:derivative.weighted.pot}).
\begin{figure}[ht]
\begin{center}
\begin{minipage}{0.5\linewidth}
\centerline{\includegraphics[scale=1]{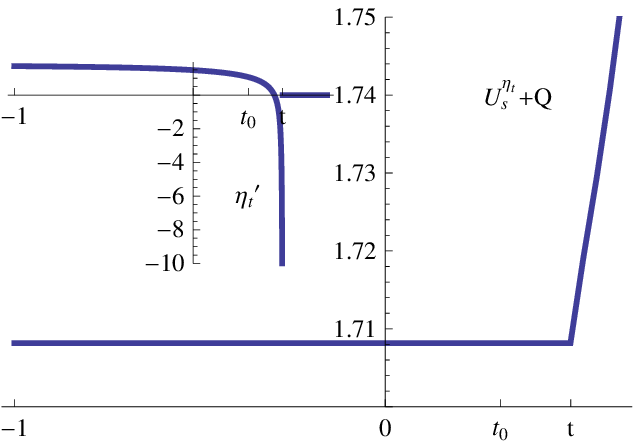}}
\end{minipage}
\begin{minipage}{0.4\linewidth}
\small
\begin{align*}
t &> t_0, \\
U_s^{\eta_t}(\PT{z}) + Q(\PT{z}) &\geq \mathcal{F}_s(\Sigma_t), \quad \text{on $\mathbb{S}^d\setminus\Sigma_t$,} \\
U_s^{\eta_t}(\PT{z}) + Q(\PT{z}) &= \mathcal{F}_s(\Sigma_t), \quad \text{on $\Sigma_t$,} \\
\eta_t^\prime &\ngeq 0, \quad \text{on $\Sigma_t$.} 
\end{align*}
\end{minipage}
\begin{minipage}{0.5\linewidth}
\centerline{\includegraphics[scale=1]{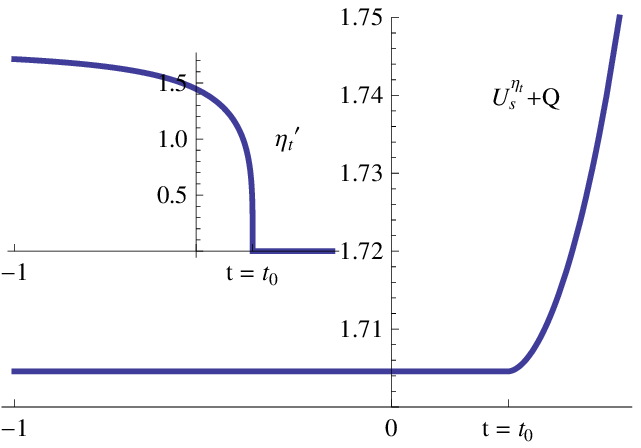}}
\end{minipage}
\begin{minipage}{0.4\linewidth}
\small
\begin{align*}
t &= t_0, \\
U_s^{\eta_t}(\PT{z}) + Q(\PT{z}) &\geq \mathcal{F}_s(\Sigma_t), \quad \text{on $\mathbb{S}^d\setminus\Sigma_t$,} \\
U_s^{\eta_t}(\PT{z}) + Q(\PT{z}) &= \mathcal{F}_s(\Sigma_t), \quad \text{on $\Sigma_t$,} \\
\eta_t^\prime &\geq 0, \quad \text{on $\Sigma_t$.} 
\end{align*}
\end{minipage}
\begin{minipage}{0.5\linewidth}
\centerline{\includegraphics[scale=1]{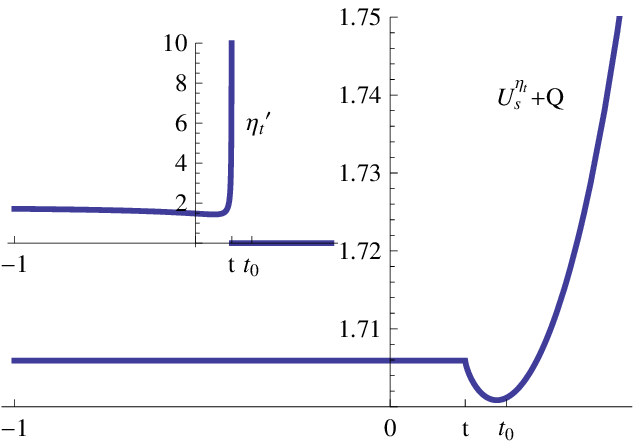}} 
\end{minipage}
\begin{minipage}{0.4\linewidth}
\small
\begin{align*}
t &< t_0, \\
U_s^{\eta_t}(\PT{z}) + Q(\PT{z}) &\ngeq \mathcal{F}_s(\Sigma_t), \quad \text{on $\mathbb{S}^d\setminus\Sigma_t$,} \\
U_s^{\eta_t}(\PT{z}) + Q(\PT{z}) &= \mathcal{F}_s(\Sigma_t), \quad \text{on $\Sigma_t$,} \\
\eta_t^\prime &\geq 0, \quad \text{on $\Sigma_t$.} 
\end{align*}
\end{minipage}
\caption{\label{fig1} The weighted $s$-potential of $\eta_t$ for different choices of $t$ ($t>t_0$, $t=t_0$, and $t<t_0$) versus altitude $\xi$ of $\PT{z}$ for $d=2$, $s=1/2$, $q=1$, and $R=3/2$, cf. Theorems \ref{SignEqThm} and \ref{MainThm}. Insets show the respective density $\eta_t^\prime$.}
\end{center}
\end{figure}

\subsection{The exceptional case $s=d-2$} 

In this case M. Riesz's approach \cite[Chapter IV]{L} has to be modified. 
Somewhat surprisingly it turns out, as shown in Lemmas \ref{NuLem:s.EQ.d-2} and \ref{EpsLem:s.EQ.d-2}, that the $s$-balayage measures from \eqref{bal}
\begin{equation}
\label{barBal}
\overline{\epsilon}_t \DEF \epsilon_{t,d-2} = \bal_{d-2}(\delta_{\PT{a}},\Sigma_t),
\qquad  \overline{\nu}_t \DEF \nu_{t,d-2} = \bal_{d-2}(\sigma,\Sigma_t)
\end{equation}
exist and both have a component that is uniformly distributed on the boundary of $\Sigma_t$. 
Moreover, unlike the case $d-2<s<d$, the density for $\mu_{Q_{\PT{a},q}}$, where $s=d-2$, does not vanish on the boundary of its support.
More precisely, on setting
\begin{equation*}
\beta_t (\PT{x}) \DEF \delta_t (u) \cdot \sigma_{d-1}(\overline{\PT{x}}), \qquad \PT{x}=(\sqrt{1-u^2}\,\overline{\PT{x}},u),
\end{equation*}
we obtain the following result.
% We first treat the case when $d\geq3$.

\begin{thm} Let $d\geq3$. The signed $s$-equilibrium $\overline{\eta}_t$ on the spherical cap $\Sigma_t$
associated with $\overline{Q}_{\PT{a},q}(\PT{x})=q \, | \PT{x} - \PT{a} |^{2-d}$ is given by
% \begin{equation*}
% \overline{\eta}_t = \frac{1+ q \left\| \overline{\epsilon}_t \right\|}{\left\| \overline{\nu}_t \right\|} \overline{\nu}_t - q \overline{\epsilon}_t
% \end{equation*}
\begin{equation*}
\overline{\eta}_t = \frac{\overline{\Phi}_{d-2}(t)}{W_s(\mathbb{S}^d)} \overline{\nu}_t - q \overline{\epsilon}_t, \qquad \overline{\Phi}_{d-2}(t) \DEF  W_s(\mathbb{S}^d) \frac{1+q\left\|\overline{\epsilon}_t\right\|}{\left\|\overline{\nu}_t\right\|},
\end{equation*}
where $\overline{\nu}_t$ and $\overline{\epsilon}_t$ are given in \eqref{barBal} and can be written as
\begin{align} \label{etabar}
\begin{split}
\dd \overline{\eta}_t(\PT{x}) &= \frac{1}{W_{d-2}(\mathbb{S}^d)} \left[ \overline{\Phi}_{d-2}(t) - \frac{q \left( R^2 - 1 \right)^2}{\left( R^2 - 2 R u + 1 \right)^{d/2+1}} \right] \dd \sigma_{d}\big|_{\Sigma_t}(\PT{x}) \\
&\phantom{=\pm}+ \frac{1-t}{2} \left( 1 - t^2 \right)^{d/2-1} \left[ \overline{\Phi}_{d-2}(t) - \frac{q \left( R + 1 \right)^2}{\left( R^2 - 2 R t + 1 \right)^{d/2}} \right] \dd \beta_t(\PT{x}).
\end{split}
\end{align}

For any fixed $t\in(-1,1)$, the following weak$^*$ convergence holds:
\begin{equation} \label{weak.star.conv}
\nu_{t,s}\stackrel{*}{\longrightarrow} \overline{\nu}_t , \qquad \epsilon_{t,s} \stackrel{*}{\longrightarrow}
\overline{\epsilon}_t, \qquad \text{as $s\to(d-2)^+$.}
\end{equation}

The function $\overline{\Phi}_{d-2}(t)$ has precisely one global minimum
$t_0\in (-1,1]$. This minimum is either the unique solution $ t_0\in (-1,1)$ of the equation
\begin{equation}
\overline{\Phi}_{d-2} (t)  = \frac{q \left( R + 1 \right)^2}{\left( R^2 - 2 R t + 1 \right)^{d/2}},
\label{FuncEqCondEx}
\end{equation}
or $t_0=1$ when such a solution does not exist. Moreover, $t_0=\max\{ t: \overline{\eta}_t \geq 0 \}$.

The extremal measure $\mu_{\overline{Q}_{\PT{a},q}}$  on $\mathbb{S}^d$
% is absolutely continuous with respect to $\sigma_d$ (restricted to $\Sigma_{t_0}$), that is
is given by
\begin{equation} \label{etabarzero}
\dd \mu_{\overline{Q}_{\PT{a},q}}(\PT{x}) = \dd \overline{\eta}_{t_0}(\PT{x}) = \frac{\overline{\Phi}_{d-2}(t_0)}{W_{d-2}(\mathbb{S}^d)} \left[ 1 - \frac{\left( R - 1 \right)^2 \left( R^2 - 2 R t_0 + 1 \right)^{d/2}}{\left( R^2 - 2 R u + 1 \right)^{d/2+1}}  \right] \dd \sigma_{d}\big|_{\Sigma_{t_0}}(\PT{x}), 
\end{equation}
and has as support the spherical
cap $\Sigma_{t_0}$.  \label{ExcepThm}
\end{thm}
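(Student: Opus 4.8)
The plan is to mirror the construction used for $d-2<s<d$ (Theorems~\ref{SignEqThm} and~\ref{MainThm}), while carefully tracking the mass that concentrates on the rim $\partial\Sigma_t$ in the Newtonian limit. First I would assemble $\overline{\eta}_t$ from the two balayage measures. Taking the explicit forms of $\overline{\nu}_t=\bal_{d-2}(\sigma,\Sigma_t)$ and $\overline{\epsilon}_t=\bal_{d-2}(\delta_{\PT{a}},\Sigma_t)$ from Lemmas~\ref{NuLem:s.EQ.d-2} and~\ref{EpsLem:s.EQ.d-2}, and using that balayage preserves the $(d-2)$-potential on $\Sigma_t$, one has $U_{d-2}^{\overline{\epsilon}_t}(\PT{x})=|\PT{x}-\PT{a}|^{2-d}$ and $U_{d-2}^{\overline{\nu}_t}(\PT{x})=W_{d-2}(\mathbb{S}^d)$ for $\PT{x}\in\Sigma_t$. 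Hence $c\,\overline{\nu}_t-q\,\overline{\epsilon}_t$ has constant weighted potential $c\,W_{d-2}(\mathbb{S}^d)$ on $\Sigma_t$, and imposing total mass one forces $c=(1+q\|\overline{\epsilon}_t\|)/\|\overline{\nu}_t\|=\overline{\Phi}_{d-2}(t)/W_{d-2}(\mathbb{S}^d)$. This gives the stated $\overline{\eta}_t$ and, by the integration argument of Remark~\ref{FuncSignedEqRel} (valid whenever the signed equilibrium exists), the identity $\mathcal{F}_{d-2}(\Sigma_t)=\overline{\Phi}_{d-2}(t)$. Substituting the densities and simplifying yields~\eqref{etabar}, the surface term coming from the absolutely continuous pieces and the $\dd\beta_t$ term from the boundary components of $\overline{\nu}_t$ and $\overline{\epsilon}_t$.

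Next I would prove the weak$^*$ convergence~\eqref{weak.star.conv} by passing to the limit in the explicit $s$-densities and norms of Theorem~\ref{SignEqThm}. The absolutely continuous densities carry the factor $((t-u)/(1-t))^{(s-d)/2}$, which as $s\to(d-2)^+$ no longer vanishes at $u=t$; the charge that for $s>d-2$ is spread in a shrinking layer below the rim collapses onto $\partial\Sigma_t$, producing the uniform boundary measure $\dd\beta_t$. Testing against a fixed continuous function and splitting the $u$-integral near $u=t$ from the remainder makes this precise, giving both convergences in~\eqref{weak.star.conv}; convergence of the norms is then read off from the explicit Beta and hypergeometric expressions for $\|\nu_{t,s}\|$ and $\|\epsilon_{t,s}\|$.

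For the minimization and extremality I would first do the sign analysis. Writing $m(t)\DEF q(R+1)^2/(R^2-2Rt+1)^{d/2}$, the $\dd\beta_t$ coefficient in~\eqref{etabar} is nonnegative exactly when $\overline{\Phi}_{d-2}(t)\geq m(t)$. The surface density is decreasing in $u$, so its minimum over $[-1,t]$ is at $u=t$, where it equals $\overline{\Phi}_{d-2}(t)-m(t)(R-1)^2/(R^2-2Rt+1)$; since $(R-1)^2\leq R^2-2Rt+1$ for $t\leq1$, this is at least the boundary bracket, so the surface part is nonnegative whenever the boundary part is. Thus $\overline{\eta}_t\geq0$ iff $\overline{\Phi}_{d-2}(t)\geq m(t)$, and $t_0\DEF\max\{t:\overline{\eta}_t\geq0\}$ is either the unique root of~\eqref{FuncEqCondEx} or $t_0=1$ when no root exists. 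By Theorem~\ref{FuncOptThm} the functional $\mathcal{F}_{d-2}=\overline{\Phi}_{d-2}$ is minimized at $\supp(\mu_{\overline{Q}_{\PT{a},q}})$, which by Theorem~\ref{ConnThm} is a cap, and monotonicity near the threshold shows the minimizer is unique and equals $t_0$. Finally $\overline{\eta}_{t_0}\geq0$ has weighted potential $\overline{\Phi}_{d-2}(t_0)$ on $\Sigma_{t_0}$, so~\eqref{leqineq} holds with equality; the inequality~\eqref{geqineq} off the cap I obtain by taking the $s\to(d-2)^+$ limit of~\eqref{eq:weighted.outside} together with~\eqref{weak.star.conv}. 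Uniqueness of the extremal measure then gives $\mu_{\overline{Q}_{\PT{a},q}}=\overline{\eta}_{t_0}$, and specializing~\eqref{etabar} at $t=t_0$, where~\eqref{FuncEqCondEx} makes the boundary term vanish, produces~\eqref{etabarzero}.

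The hard part will be the rim concentration: rigorously producing the singular $\dd\beta_t$ component and controlling it uniformly, both in the balayage computation and as the weak$^*$ limit, and relatedly establishing the off-cap inequality~\eqref{geqineq} at $s=d-2$, where no explicit potential formula outside $\Sigma_t$ is directly available and the limiting procedure must be justified.
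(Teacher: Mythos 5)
Your plan is essentially the paper's proof: the same decomposition $\overline{\eta}_t=\frac{\overline{\Phi}_{d-2}(t)}{W_{d-2}(\mathbb{S}^d)}\overline{\nu}_t-q\overline{\epsilon}_t$ built from Lemmas~\ref{NuLem:s.EQ.d-2} and~\ref{EpsLem:s.EQ.d-2}, the same pointwise sign analysis (boundary bracket controls the surface density because $(R-1)^2\leq R^2-2Rt+1$), and the same derivative identity for $\overline{\Phi}_{d-2}$ carried over from the proof of Theorem~\ref{MainThm}. Two of your choices differ in execution, and in both cases the paper's route is lighter. First, for \eqref{weak.star.conv} you propose to pass to the limit in the full hypergeometric densities of $\nu_{t,s}$ and $\epsilon_{t,s}$; this works (the $n=0$ term of the regularized series is exactly what collapses onto the rim), but the paper instead proves in Lemma~\ref{WeakConv} that the single-point balayage kernels satisfy $\epsilon_{\PT{y},s}\stackrel{*}{\to}\epsilon_{\PT{y}}$ — a one-variable computation with the elementary kernel $\sin(\pi(d-s)/2)\,(t-u)^{-(d-s)/2}/\pi$ — and then gets both convergences at once from the superposition formula $\bal_s(\mu,\Sigma_t)=\mu|_{\Sigma_t}+\int\epsilon_{\PT{y},s}\,\dd\mu(\PT{y})$. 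Second, the step you flag as the hard part — the off-cap inequality \eqref{geqineq} at $s=d-2$ obtained as a limit of \eqref{eq:weighted.outside} — is delicate as you set it up (the sign of the correction term in \eqref{eq:weighted.outside} for fixed $t$ and varying $s$ depends on the side of $t_0(s)$ you are on) and is in fact not needed: Theorems~\ref{FuncOptThm} and~\ref{ConnThm} are stated and proved for the whole range $d-2\leq s<d$, so they already identify $\supp(\mu_{\overline{Q}_{\PT{a},q}})$ as the cap minimizing $\mathcal{F}_{d-2}=\overline{\Phi}_{d-2}$, and uniqueness of the signed equilibrium on $\Sigma_{t_0}$ then forces $\mu_{\overline{Q}_{\PT{a},q}}=\overline{\eta}_{t_0}$. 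If you do want the off-cap inequality explicitly, no limit is required either, since Lemmas~\ref{NuLem:s.EQ.d-2} and~\ref{EpsLem:s.EQ.d-2} give closed-form potentials of $\overline{\nu}_t$ and $\overline{\epsilon}_t$ on $\mathbb{S}^d\setminus\Sigma_t$.
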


In Lemmas \ref{NuLem:s.EQ.d-2} and \ref{EpsLem:s.EQ.d-2} we give the $s$-potentials of the balayage measures $\overline{\nu}_t$ and $\overline{\epsilon}_t$ from which the weighted $s$-potential of $\overline{\eta}_t$ at every $\PT{z}\in\mathbb{S}^d$ can be easily obtained.

\begin{rmk}
As can be seen from \eqref{etabar}, depending on the sign of the coefficient of $\beta_t$, the signed $s$-equilibrium $\overline{\eta}_t$ has positive or negative charge on $\partial\Sigma_t$ unless $t=t_0$, in which case the charge on the boundary disappears (see Figure \ref{fig2}). 
\end{rmk}
% \[
% \lim_{u\to {t_0}^-} \overline{\eta}_{t_0}^\prime (u) =
% \frac{dR(1-t_0)}{2}
% \frac{q(R+1)^2}{(R^2-2Rt_0+1)^{d/2+1}}(1-t_0^2)^{d/2-1}
% \]

Next, we describe the results when $d=2$ and $s=0$. The external field in this case is $\overline{Q}(\PT{x}) = \overline{Q}_{\PT{a},q}(\PT{x}) = q \log ( 1 / | \PT{x} - \PT{a} | )$. The total mass of the balayage measures $\overline{\nu}_{t,0}$ and $\overline{\epsilon}_{t,0}$ is preserved, so $\|\overline{\nu}_{t,0}\|=\|\overline{\epsilon}_{t,0}\|=1$. Thus, the function $\overline{\Phi}_{d-2}(t)$ reduces to $\overline{\Phi}_{d-2}(t) = 1 + q$. The Mhaskar-Saff functional $\mathcal{F}_0(K)$ from \eqref{Functional}, now defined for compact sets $K\subset\mathbb{S}^2$ with positive logarithmic capacity $\CAP_0(K) = \exp\{-W_0(K)\}$, uses the logarithmic energy
\begin{equation*} %\label{W.0.2}
W_0(K) = \lim_{s\to0^+} \frac{\dd W_s(K)}{\dd s} \Big|_{s=0}. %= \frac{1}{2} - \log 2 \, < 0, 
\end{equation*}
However, $\mathcal{F}_0(\Sigma_t)$ is no longer equal to $\overline{\Phi}_{d-2}(t)$ (cf. Remark \ref{rmk:F.s.EQ.Phi.s} and Lemma \ref{lem:Mhaskar-Saff.log}). 
For $K=\mathbb{S}^2$ we have $W_0(\mathbb{S}^2) = 1 / 2 - \log 2 < 0$.
Since Theorem \ref{ConnThm} can be extended to $s=0$ if $d=2$, we deduce that $S_{\overline{Q}}\DEF \supp(\mu_{\overline{Q}})$ will be a spherical cap $\Sigma_{t_0}$. Direct calculations show that the Mhaskar-Saff functional $\mathcal{F}_0$ for spherical caps is still minimized for $S_{\overline{Q}}$. 
% The following result can be obtained using direct computations.
Figure \ref{fig2} shows the qualitative behavior for the weighted potential in the logarithmic case. (Note, that for $t\neq t_0$ the tangent line to the graph of the weighted logarithmic potential at $\xi \to t^+$ is {\bf not} vertical like in the case $d-2<s<d$ (cf. Figure \ref{fig1}), but it becomes horizontal if $t=t_0<1$.) 

\begin{figure}[ht]
\begin{center}
\begin{minipage}{0.5\linewidth}
\centerline{\includegraphics[scale=1]{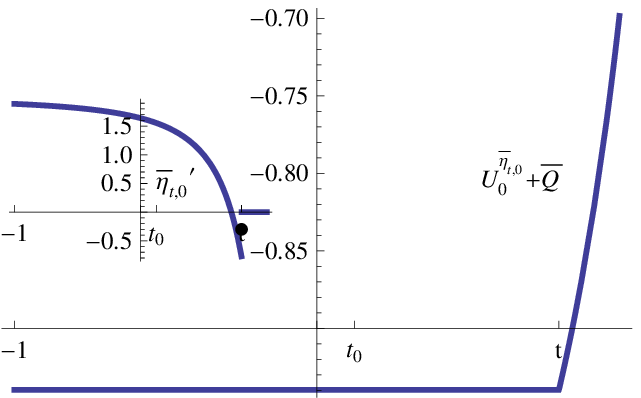}}
\end{minipage}
\begin{minipage}{0.4\linewidth}
\small
\begin{align*}
t &> t_0, \\
U_0^{\overline{\eta}_{t,0}}(\PT{z}) + \overline{Q}(\PT{z}) &\geq \mathcal{F}_0(\Sigma_t) \quad \text{on $\mathbb{S}^d\setminus\Sigma_t$,} \\
U_0^{\overline{\eta}_{t,0}}(\PT{z}) + \overline{Q}(\PT{z}) &= \mathcal{F}_0(\Sigma_t) \quad \text{on $\Sigma_t$,} \\
\overline{\eta}_{t,0}^\prime &\ngeq 0 \quad \text{on $\Sigma_t$.} 
\end{align*}
\end{minipage}
\begin{minipage}{0.5\linewidth}
\centerline{\includegraphics[scale=1]{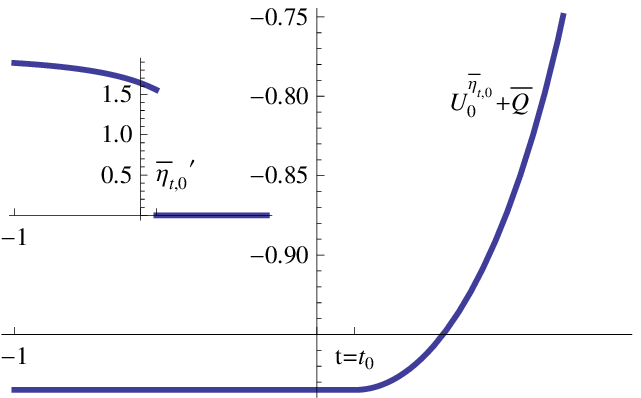}}
\end{minipage}
\begin{minipage}{0.4\linewidth}
\small
\begin{align*}
t &= t_0, \\
U_0^{\overline{\eta}_{t,0}}(\PT{z}) + \overline{Q}(\PT{z}) &\geq \mathcal{F}_0(\Sigma_t) \quad \text{on $\mathbb{S}^d\setminus\Sigma_t$,} \\
U_0^{\overline{\eta}_{t,0}}(\PT{z}) + \overline{Q}(\PT{z}) &= \mathcal{F}_0(\Sigma_t) \quad \text{on $\Sigma_t$,} \\
\overline{\eta}_{t,0}^\prime &\geq 0 \quad \text{on $\Sigma_t$.} 
\end{align*}
\end{minipage}
\begin{minipage}{0.5\linewidth}
\centerline{\includegraphics[scale=1]{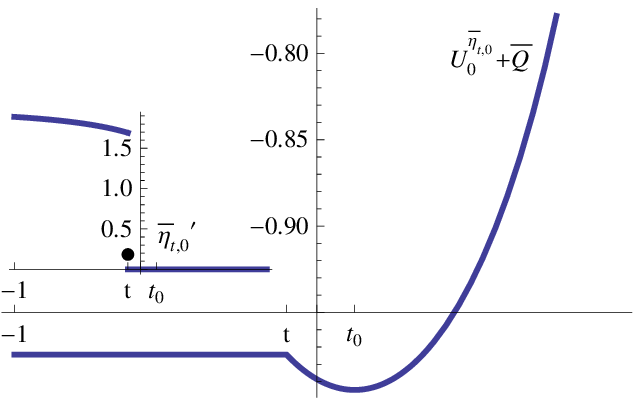}} 
\end{minipage}
\begin{minipage}{0.4\linewidth}
\small
\begin{align*}
t &< t_0, \\
U_0^{\overline{\eta}_{t,0}}(\PT{z}) + \overline{Q}(\PT{z}) &\ngeq \mathcal{F}_0(\Sigma_t), \quad \text{on $\mathbb{S}^d\setminus\Sigma_t$,} \\
U_0^{\overline{\eta}_{t,0}}(\PT{z}) + \overline{Q}(\PT{z}) &= \mathcal{F}_0(\Sigma_t), \quad \text{on $\Sigma_t$,} \\
\overline{\eta}_{t,0}^\prime &\geq 0, \quad \text{on $\Sigma_t$.} 
\end{align*}
\end{minipage}
\caption{\label{fig2} The weighted logarithmic potential of $\overline{\eta}_{t,0}$ for different choices of $t$ ($t>t_0$, $t=t_0$, and $t<t_0$) versus altitude $\xi$ of $\PT{z}$ for $d=2$, $s=0$, $q=1$, and $R=2$, cf. Theorem \ref{ExcepThm.log}. Insets show the respective density $\overline{\eta}_{t,0}^\prime$. The black dot indicates the component on the boundary.}
\end{center}
\end{figure}

\begin{thm} \label{ExcepThm.log} Let $d=2$ and $s=0$. The signed $s$-equilibrium $\overline{\eta}_{t,0}$ on the spherical cap $\Sigma_t$
associated with $\overline{Q}_{\PT{a},q}(\PT{x})=q \log ( 1 / | \PT{x} - \PT{a} | )$ is given by
\begin{equation*}
\overline{\eta}_{t,0} = \left( 1 + q \right) \overline{\nu}_{t,0} - q \overline{\epsilon}_{t,0},
\end{equation*}
where $\overline{\nu}_{t,0} = \bal_0(\sigma_2, \Sigma_t)$ and $\overline{\epsilon}_{t,0} = \bal_0(\delta_{\PT{a}},\Sigma_t)$. It can be written as
\begin{align} \label{etabar.log}
\begin{split}
\dd \overline{\eta}_{t,0}(\PT{x}) &= \left[ 1 + q - \frac{q \left( R^2 - 1 \right)^2}{\left( R^2 - 2 R u + 1 \right)^{2}} \right] \dd \sigma_{2}\big|_{\Sigma_t}(\PT{x}) \\
&\phantom{=\pm}+ \frac{1-t}{2} \left[ 1 + q - \frac{q \left( R + 1 \right)^2}{R^2 - 2 R t + 1 } \right] \dd \beta_t(\PT{x}),
\end{split}
\end{align}
where $\PT{x} = ( \sqrt{1 - u^2} \, \overline{\PT{x}}, u )$, $\overline{\PT{x}} \in \mathbb{S}^{1}$.

The weighted logarithmic potential of $\overline{\eta}_{t,0}$ satisfies
\begin{align*} 
U_0^{\overline{\eta}_{t,0}}(\PT{z}) + \overline{Q}_{\PT{a},q}(\PT{z}) &= \mathcal{F}_0(\Sigma_t), \qquad \PT{z} \in \Sigma_t, \\ %\label{w123}\\
U_0^{\overline{\eta}_{t,0}}(\PT{z}) + \overline{Q}_{\PT{a},q}(\PT{z}) &= \mathcal{F}_0(\Sigma_t) + \frac{1}{2} \log\frac{1+t}{1+\xi} + \frac{q}{2} \log \frac{R^2 - 2 R t + 1}{R^2 - 2 R \xi + 1}, \qquad \PT{z} \in \mathbb{S}^2 \setminus \Sigma_t, %\label{w1234}
\end{align*}
where $\mathcal{F}_0(\Sigma_t)$ is given below in Lemma {\rm \ref{lem:Mhaskar-Saff.log}}. 

The Mhaskar-Saff functional $\mathcal{F}_0$ is minimized for $\Sigma_{t_0}$, where either $t_0\in(-1,1]$ is the unique solution of the equation
\begin{equation} \label{log.t.rel}
1 + q = q \left( R + 1 \right)^2 \big/ \left( R^2 - 2 R t + 1 \right),
\end{equation}
or $t_0=1$ if such a solution does not exists. Moreover, $t_0=\max\{ t: \overline{\eta}_{t,0} \geq 0 \}$.

The logarithmic extremal measure $\mu_{\overline{Q}_{\PT{a},q}}$ on $\mathbb{S}^2$ 
% is absolutely continuous with respect to $\sigma_2$ (restricted to $\Sigma_{t_0}$), that is
is given by
\begin{equation} \label{etabarzero.log}
\dd \mu_{\overline{Q}_{\PT{a},q}}(\PT{x}) = \dd \overline{\eta}_{t_0,0}(\PT{x}) = \left[ 1 + q - \frac{q \left( R^2 - 1 \right)^2}{\left( R^2 - 2 R u + 1 \right)^{2}} \right] \dd \sigma_{2}\big|_{\Sigma_{t_0}}(\PT{x}), 
\end{equation}
and has as support the spherical cap $\Sigma_{t_0}$.
\end{thm}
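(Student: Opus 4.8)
The plan is to treat this as the logarithmic counterpart of Theorem \ref{ExcepThm}, working throughout with logarithmic balayage onto the cap $\Sigma_t$ rather than transporting the Riesz analysis directly (for $d=2$ the kernel $k_{d-2}$ degenerates, so $s=0$ genuinely means the logarithmic kernel). First I would record the explicit densities and potentials of $\overline{\nu}_{t,0}=\bal_0(\sigma_2,\Sigma_t)$ and $\overline{\epsilon}_{t,0}=\bal_0(\delta_{\PT{a}},\Sigma_t)$, each of which carries an absolutely continuous part on $\Sigma_t$ together with a component uniformly spread over $\partial\Sigma_t$ (obtained exactly as in Lemmas \ref{NuLem:s.EQ.d-2} and \ref{EpsLem:s.EQ.d-2}, now with the logarithmic kernel). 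These are the inputs from which every subsequent formula is assembled.

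With these in hand I would verify that $\overline{\eta}_{t,0}=(1+q)\overline{\nu}_{t,0}-q\overline{\epsilon}_{t,0}$ is the signed equilibrium. Its total mass is $(1+q)-q=1$ because logarithmic balayage preserves mass. For the weighted potential on $\Sigma_t$ I would use that balayage preserves the logarithmic potential up to an additive constant on the cap, together with the fact that $U_0^{\sigma_2}$ is already constant on all of $\mathbb{S}^2$; since $\overline{Q}=q\,U_0^{\delta_{\PT{a}}}$, the term $-q\,U_0^{\overline{\epsilon}_{t,0}}$ cancels $\overline{Q}$ on $\Sigma_t$ up to a constant, leaving $U_0^{\overline{\eta}_{t,0}}+\overline{Q}$ constant there. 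That constant is identified with $\mathcal{F}_0(\Sigma_t)$ by the integration argument of Remark \ref{FuncSignedEqRel}, which goes through verbatim for $s=0$ since $U_0^{\mu_{\Sigma_t}}\equiv W_0(\Sigma_t)$ on $\Sigma_t$. Substituting the balayage densities then yields \eqref{etabar.log} by direct computation, and the potentials of the balayage measures off the cap produce the two logarithmic correction terms in $U_0^{\overline{\eta}_{t,0}}+\overline{Q}$ on $\mathbb{S}^2\setminus\Sigma_t$.

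The genuinely new feature, and the \emph{main obstacle}, is the minimization of $\mathcal{F}_0$. Unlike the range $d-2<s<d$, where $\mathcal{F}_s(\Sigma_t)=\Phi_s(t)$ and the critical equation is read off directly, here the coefficient $\overline{\Phi}_{d-2}(t)$ collapses to the constant $1+q$, so the entire $t$-dependence sits in $W_0(\Sigma_t)+\int\overline{Q}\,\dd\mu_{\Sigma_t}$. I would therefore compute $\mathcal{F}_0(\Sigma_t)$ explicitly (Lemma \ref{lem:Mhaskar-Saff.log}), differentiate in $t$, and show via the general Mhaskar--Saff principle that the stationarity condition simplifies exactly to \eqref{log.t.rel}; a sign analysis of the derivative gives a single minimizer $t_0\in(-1,1]$, with $t_0=1$ when no interior solution exists. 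The crucial observation is that \eqref{log.t.rel} is precisely the condition that the boundary coefficient of $\beta_t$ in \eqref{etabar.log} vanish, which links the minimizer of $\mathcal{F}_0$ to the positivity threshold.

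Finally I would establish $t_0=\max\{t:\overline{\eta}_{t,0}\geq 0\}$ and conclude. The boundary coefficient $1+q-q(R+1)^2/(R^2-2Rt+1)$ is strictly decreasing in $t$ and vanishes at the solution of \eqref{log.t.rel}; hence for $t>t_0$ it is negative, $\overline{\eta}_{t,0}$ carries negative charge on $\partial\Sigma_t$, and positivity fails. At $t=t_0$ the boundary term disappears and, using \eqref{log.t.rel}, the absolutely continuous density stays nonnegative on $\Sigma_{t_0}$ — this reduces to $R^2-2Rt_0+1\ge(R-1)^2$, i.e. $t_0\le 1$ — giving \eqref{etabarzero.log}. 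To identify $\overline{\eta}_{t_0,0}$ with $\mu_{\overline{Q}}$ I would invoke the $s=0$, $d=2$ extensions of Theorems \ref{ConnThm} and \ref{FuncOptThm}: the extremal support is a cap minimizing $\mathcal{F}_0$, and the nonnegative signed equilibrium $\overline{\eta}_{t_0,0}$ satisfies the Gauss variational inequalities. The only inequality needing attention is the one off the cap, namely that the correction terms sum to a nonnegative quantity for $\xi>t_0$; this follows cleanly because their combined $\xi$-derivative equals $-\tfrac{1}{2(1+\xi)}+qR/(R^2-2R\xi+1)$, which is increasing in $\xi$ and vanishes at $\xi=t_0$ by \eqref{log.t.rel} (the horizontal tangent of Figure \ref{fig2}), so it is positive for $\xi>t_0$. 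As a consistency check one can also recover these formulas from the $s\to 0^+$ limit of the range $0<s<2$.
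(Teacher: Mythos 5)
Your proposal is correct and follows essentially the same route as the paper: compute the two logarithmic balayage measures $\overline{\nu}_{t,0}$ and $\overline{\epsilon}_{t,0}$ with their boundary components (the paper's Lemmas \ref{NuLem:s.EQ.0} and \ref{EpsLem:s.EQ.0}), assemble and verify the signed equilibrium, minimize the explicitly computed $\mathcal{F}_0(\Sigma_t)$ (Lemma \ref{lem:Mhaskar-Saff.log}) to obtain \eqref{log.t.rel}, and read off positivity from the vanishing of the $\beta_t$-coefficient. Your explicit derivative check of the off-cap Gauss inequality at $t=t_0$ is a slightly more self-contained finish than the paper's appeal to the cap-minimization principle, but it is a minor variation, not a different argument.
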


\begin{rmk}
Given $R$ and $q$, relation \eqref{log.t.rel} immediately enables us to find the support $\Sigma_{t_0}$ of the logarithmic extremal measure $\mu_{\overline{Q}_{\PT{a},q}}$ on $\mathbb{S}^2$: 
\begin{equation}
t_0 = \min\left\{ 1, \frac{\left( R + 1 \right)^2}{2 R \left( 1 + q \right)} - 1 \right\} = \min\left\{ 1, 1 - \frac{4 q R - \left( R - 1 \right)^2}{2 R \left( 1 + q \right)} \right\}.
\end{equation}
\end{rmk}

\begin{rmk}
In general, the density $\overline{\eta}_{t_0,0}^\prime(u)$ in \eqref{etabarzero.log} does not vanish on the boundary of $\Sigma_{t_0}$. In fact, if $t_0\in(-1,1)$, then 
\begin{equation}
\lim_{u\to t_0} \overline{\eta}_{t_0,0}^\prime(u) = 1 + q - \frac{q \left( R^2 - 1 \right)^2}{\left( R^2 - 2 R t_0 + 1 \right)^{2}} = \frac{1+q}{q} \cdot \frac{4 q R - \left( R - 1 \right)^2}{\left( R + 1 \right)^2} > 0.
\end{equation}
% The above limit may become zero if $q$ and $R$ are chosen such that $4 q R = ( R - 1 )^2$. In this case $t_0=1$.
\end{rmk}

\subsection{Axis-supported external fields}
It is well known that the balayage of a measure can be represented
as a superposition of balayages of Dirac-delta measures. Using this, 
we extend our results to external fields that are axis-supported $s$-potentials.

\begin{defn} We call an external field $Q$ {\it positive-axis supported}, if
\begin{equation}
Q(\PT{x})= \int \frac{\dd \lambda(R)}{|\PT{x}-R\PT{p}|^s}, \qquad \PT{x}\in \mathbb{S}^d \label{AxisSupp},
\end{equation}
for some finite positive measure $\lambda$ supported on a compact
subset of $(0,\infty)$.
\end{defn}

\begin{rmk} Since
\begin{equation}
\bal_s(\delta_{(1/R)\PT{p}},\mathbb{S}^d) = R^s \bal_s(\delta_{R\PT{p}},\mathbb{S}^d), \label{BalInversion}
\end{equation}
we can restrict ourselves to measures $\lambda$ with support in
$[1,\infty)$. It is possible to generalize the setting to fields
supported on both the negative and positive polar axis as well. This
generalization shall be reserved for a later occasion.
\end{rmk}

We begin with a result that establishes the existence of the signed equilibrium measure $\tilde{\eta}_{\lambda}$ on $\mathbb{S}^d$ associated with the axis-supported external field $Q$. Furthermore, a necessary and sufficient condition for coincidence of $\tilde{\eta}_{\lambda}$ and the extremal measure $\mu_Q$ on $\mathbb{S}^d$ is given.

\begin{thm} \label{SignEq.axis.supp}
Let $0<s<d$ and $Q$ be as in \eqref{AxisSupp} with $\supp(\lambda)\subset[1,\infty)$. Then
\begin{equation}
\dd \tilde{\eta}_{\lambda}(\PT{x}) = \frac{1}{W_{s}(\mathbb{S}^d)} \left\{ \mathcal{F}_s(\mathbb{S}^d) - \int \frac{\left( R^2 - 1 \right)^{d-s}}{\left( R^2 - 2 R u + 1 \right)^{d-s/2}} \dd \lambda(R) \right\} \dd \sigma(\PT{x}).
\end{equation}
Moreover, $\supp(\mu_Q)=\mathbb{S}^d$ (that is $\mu_Q=\tilde{\eta}_{\lambda}$) if and only if
\begin{equation}
\mathcal{F}_s(\mathbb{S}^d) \geq \int \frac{\left( R + 1 \right)^{d-s}}{\left( R - 1 \right)^d} \dd \lambda(R).
\end{equation}
\end{thm}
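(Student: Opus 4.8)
The plan is to realize $\tilde{\eta}_\lambda$ as a $\lambda$-superposition of the single-charge signed equilibria produced by Theorem~\ref{SignEq}, and to read off the coincidence criterion from the pointwise sign of the resulting density. I would first record that, because the unweighted $s$-extremal measure on $\mathbb{S}^d$ is $\sigma$, the definition \eqref{Functional} and Fubini's theorem give
\[
\mathcal{F}_s(\mathbb{S}^d) = W_s(\mathbb{S}^d) + \int Q\,\dd\sigma = W_s(\mathbb{S}^d) + \int U_s^\sigma(R\PT{p})\,\dd\lambda(R).
\]
Setting $g_R(\PT{x}) \DEF (R^2-1)^{d-s}\,(R^2-2Ru+1)^{-(d-s/2)} = (R^2-1)^{d-s}\,|\PT{x}-R\PT{p}|^{-(2d-s)}$, I would then extract from Theorem~\ref{SignEq} (applied with $q=1$), using that $U_s^\sigma\equiv W_s(\mathbb{S}^d)$ on $\mathbb{S}^d$, the two identities
\[
U_s^{g_R\sigma}(\PT{x}) = W_s(\mathbb{S}^d)\,|\PT{x}-R\PT{p}|^{-s}\ \ (\PT{x}\in\mathbb{S}^d), \qquad \int g_R\,\dd\sigma = U_s^\sigma(R\PT{p}).
\]
The first says that $W_s(\mathbb{S}^d)^{-1} g_R\,\sigma = \bal_s(\delta_{R\PT{p}},\mathbb{S}^d)$, the second is the mass balance guaranteeing total charge $1$ in the single-charge case.

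Granting these, the verification is immediate. For the measure with the asserted density, interchanging the $\sigma$- and $\lambda$-integrations gives
\[
U_s^{\tilde{\eta}_\lambda}(\PT{x}) = \mathcal{F}_s(\mathbb{S}^d) - \int |\PT{x}-R\PT{p}|^{-s}\,\dd\lambda(R) = \mathcal{F}_s(\mathbb{S}^d) - Q(\PT{x}),
\]
so the weighted potential is the constant $\mathcal{F}_s(\mathbb{S}^d)$ throughout $\mathbb{S}^d$, while the mass identity makes the total charge equal $1$. Thus the asserted density defines a signed $s$-equilibrium, and by uniqueness (Lemma~2.1) it must coincide with $\tilde{\eta}_\lambda$.

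For the coincidence statement I would show that $\mu_Q=\tilde{\eta}_\lambda$ exactly when $\tilde{\eta}_\lambda\geq 0$. If the density is nonnegative, then $\tilde{\eta}_\lambda\in\mathcal{M}(\mathbb{S}^d)$ has weighted potential constantly equal to $\mathcal{F}_s(\mathbb{S}^d)$, so \eqref{geqineq}--\eqref{leqineq} hold with equality everywhere and $\tilde{\eta}_\lambda=\mu_Q$ with full support; conversely, if $\supp(\mu_Q)=\mathbb{S}^d$ then the weighted potential of $\mu_Q$ is constant on $\mathbb{S}^d$, whence uniqueness forces $\mu_Q=\tilde{\eta}_\lambda\geq 0$. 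Nonnegativity of the density is precisely the requirement $\mathcal{F}_s(\mathbb{S}^d)\geq\int g_R(\PT{x})\,\dd\lambda(R)$ for all $\PT{x}\in\mathbb{S}^d$. Since $R>1$, each $g_R$ is strictly increasing in the altitude $u$, so the $\lambda$-average $\PT{x}\mapsto\int g_R(\PT{x})\,\dd\lambda(R)$ attains its maximum simultaneously at the North Pole $u=1$, where $g_R=(R+1)^{d-s}/(R-1)^d$; this reduces the requirement to $\mathcal{F}_s(\mathbb{S}^d)\geq\int (R+1)^{d-s}(R-1)^{-d}\,\dd\lambda(R)$, as claimed.

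The step needing the most care is this final reduction: establishing the two-sided equivalence between $\supp(\mu_Q)=\mathbb{S}^d$ and nonnegativity of $\tilde{\eta}_\lambda$, and then locating the maximum of the $\lambda$-averaged kernel over the sphere. The convenient feature is that all the kernels $g_R$ increase in $u$ in the same direction, so their average is maximized at a single common point, which is exactly what collapses the positivity condition to one evaluation at the North Pole. The interchanges of integration implicit throughout are justified by positivity of the kernels together with the compactness of $\supp(\lambda)$ in $[1,\infty)$.
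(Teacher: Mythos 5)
Your proposal is correct and follows essentially the same route as the paper: the paper likewise computes $U_s^{\tilde{\eta}_\lambda}$ by interchanging the $\sigma$- and $\lambda$-integrations and invoking the Kelvin-transform identity from the proof of Theorem \ref{SignEq} (which is exactly your superposition of single-charge signed equilibria), and it likewise settles the coincidence criterion by noting that the density is minimal at the North Pole, where each kernel evaluates to $(R+1)^{d-s}/(R-1)^d$. Your write-up merely makes explicit the mass-balance identity and the two-sided equivalence between $\supp(\mu_Q)=\mathbb{S}^d$ and nonnegativity of $\tilde{\eta}_\lambda$, which the paper leaves to the reader via Corollary \ref{Cor}.
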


The above result, appropriately modified, also holds for the logarithmic case. We shall use the Mhaskar-Saff functional for the logarithmic case
\begin{equation}
\mathcal{F}_0(K) \DEF W_0(K) + \int Q(\PT{x}) \dd \mu_{K,0}(\PT{x}), 
\end{equation}
where $K$ is a compact subset of $\mathbb{S}^d$ with finite logarithmic energy $W_0(K)$ and $\mu_{K,0}$ is the logarithmic extremal measure on $K$ (without external field). In particular,
\begin{equation} \label{W.0.d}
W_0(\mathbb{S}^d) = \lim_{s\to0^+} \frac{\dd W_s(\mathbb{S}^d)}{\dd s} \Big|_{s=0} = - \log 2 - \frac{1}{2} \digammafcn(d/2) - \frac{1}{2} \digammafcn(d),
\end{equation}
where $\digammafcn(z) := \gammafcn^\prime(z) / \gammafcn(z)$ denotes the digamma function.

\begin{thm} \label{SignEq.axis.supp.log}
Let $d=2$, $s=0$, and $Q$ be the positive-axis supported external field 
\begin{equation}
Q(\PT{z}) = \int \log \frac{1}{\left| \PT{z} - \PT{a} \right|} \dd \lambda(R)
\end{equation}
with $\supp(\lambda)\subset[1,\infty)$. Then the signed logarithmic equilibrium measure $\tilde{\eta}_{\lambda,0}$ on $\mathbb{S}^d$ associated with $Q$ is given by
\begin{equation}
\dd \tilde{\eta}_{\lambda,0}(\PT{x}) = \left\{ 1 + \left\| \lambda \right\| - \int \frac{\left( R^2 - 1 \right)^{d}}{\left( R^2 - 2 R u + 1 \right)^{d}} \dd \lambda(R) \right\} \dd \sigma(\PT{x}).
\end{equation}
Its weighted logarithmic potential is given by
\begin{equation}
U_0^{\tilde{\eta}_{\lambda,0}}(\PT{z}) + Q(\PT{z}) = \mathcal{F}_0(\mathbb{S}^d), \qquad \PT{z} \in \mathbb{S}^d.
\end{equation}
Moreover, $\supp(\mu_Q)=\mathbb{S}^d$ (that is the logarithmic extremal measure $\mu_Q$ on $\mathbb{S}^d$ coincides with $\tilde{\eta}_{\lambda,0}$) if and only if
\begin{equation}
1 + \left\| \lambda \right\| \geq \int \frac{\left( R + 1 \right)^{d}}{\left( R - 1 \right)^d} \dd \lambda(R).
\end{equation}
\end{thm}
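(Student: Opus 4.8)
The plan is to build $\tilde{\eta}_{\lambda,0}$ by superposition of the single point-charge signed equilibria already constructed in Theorem~\ref{ExcepThm.log}, and then to verify the three assertions (density, constant weighted potential, coincidence criterion) directly. Write $\overline{\epsilon}_R \DEF \bal_0(\delta_{R\PT{p}},\mathbb{S}^2)$ for the logarithmic balayage of the unit point mass at $R\PT{p}$ onto the sphere. The $t=1$ instance of \eqref{etabar.log} (the boundary term $\beta_t$ drops out because $\Sigma_1=\mathbb{S}^2$ has empty boundary) shows that its density with respect to $\sigma_2$ is $(R^2-1)^2/(R^2-2Ru+1)^2$, and since logarithmic balayage onto $\mathbb{S}^2$ preserves total mass we have $\|\overline{\epsilon}_R\|=1$. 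Because the balayage of a measure is the superposition of the balayages of its point masses, I set $\overline{\epsilon}_\lambda \DEF \int \overline{\epsilon}_R\,\dd\lambda(R)$ and propose
\[
\tilde{\eta}_{\lambda,0} \DEF \bigl(1+\|\lambda\|\bigr)\sigma_2 - \overline{\epsilon}_\lambda .
\]
By Fubini the total mass is $(1+\|\lambda\|)-\|\lambda\|=1$ and the density is $1+\|\lambda\|-\int (R^2-1)^2/(R^2-2Ru+1)^2\,\dd\lambda(R)$, which is the claimed formula (with $d=2$); integrability is automatic since $\supp(\lambda)\subset[1,\infty)$ is compact, keeping $R$ bounded away from the sphere.

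Next I would verify the weighted-potential identity, and here lies the one genuine subtlety. Unlike Riesz balayage with $s>0$, logarithmic balayage preserves the potential on $\mathbb{S}^2$ only up to an additive constant: $U_0^{\overline{\epsilon}_R}(\PT{z}) = U_0^{\delta_{R\PT{p}}}(\PT{z}) + c_R$ for $\PT{z}\in\mathbb{S}^2$, with $c_R$ independent of $\PT{z}$. Integrating this relation against $\sigma_2$ and using $U_0^{\sigma_2}\equiv W_0(\mathbb{S}^2)$ on $\mathbb{S}^2$ together with $\|\overline{\epsilon}_R\|=1$ identifies $c_R = W_0(\mathbb{S}^2) - U_0^{\sigma_2}(R\PT{p})$. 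Since $Q(\PT{z})=\int U_0^{\delta_{R\PT{p}}}(\PT{z})\,\dd\lambda(R)$, this gives, for $\PT{z}\in\mathbb{S}^2$,
\[
U_0^{\tilde{\eta}_{\lambda,0}}(\PT{z}) + Q(\PT{z}) = (1+\|\lambda\|)W_0(\mathbb{S}^2) - \int c_R\,\dd\lambda(R),
\]
because the term $-U_0^{\overline{\epsilon}_\lambda}$ cancels $+Q$ up to $-\int c_R\,\dd\lambda(R)$. In particular the left side is constant on $\mathbb{S}^2$, so $\tilde{\eta}_{\lambda,0}$ is a signed logarithmic equilibrium, and by the logarithmic analogue of the uniqueness of signed equilibria (Lemma~2.1) it is the only one.

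To identify this constant with $\mathcal{F}_0(\mathbb{S}^2)$ I prefer to avoid the explicit $c_R$ and instead integrate the constant-potential relation $U_0^{\tilde{\eta}_{\lambda,0}}+Q = F$ against $\sigma_2=\mu_{\mathbb{S}^2,0}$. By Fubini and symmetry of the kernel, $\int U_0^{\tilde{\eta}_{\lambda,0}}\,\dd\sigma_2 = \int U_0^{\sigma_2}\,\dd\tilde{\eta}_{\lambda,0} = W_0(\mathbb{S}^2)$, using $\tilde{\eta}_{\lambda,0}(\mathbb{S}^2)=1$ and $U_0^{\sigma_2}\equiv W_0(\mathbb{S}^2)$ on the support; hence $F = W_0(\mathbb{S}^2)+\int Q\,\dd\sigma_2 = \mathcal{F}_0(\mathbb{S}^2)$ by the definition of the logarithmic Mhaskar--Saff functional. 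This is exactly the integration argument of Remark~\ref{FuncSignedEqRel}, which survives for $K=\mathbb{S}^2$ precisely because $U_0^{\sigma_2}$ is genuinely constant there, even though it fails for proper caps.

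Finally, for the coincidence criterion: if the density of $\tilde{\eta}_{\lambda,0}$ is nonnegative, then $\tilde{\eta}_{\lambda,0}\in\mathcal{M}(\mathbb{S}^2)$ has constant weighted potential and so satisfies the Gauss variational conditions \eqref{geqineq}--\eqref{leqineq}, whence $\tilde{\eta}_{\lambda,0}=\mu_Q$ and $\supp(\mu_Q)=\mathbb{S}^2$; conversely, if $\supp(\mu_Q)=\mathbb{S}^2$ then $\mu_Q$ itself is a signed equilibrium, and uniqueness forces $\mu_Q=\tilde{\eta}_{\lambda,0}\geq 0$. It remains to reduce nonnegativity of the density to the stated inequality. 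For fixed $R>1$ the integrand $(R^2-1)^2/(R^2-2Ru+1)^2$ is increasing in $u\in[-1,1]$ (the denominator decreases), so the density is minimized at the North Pole $u=1$, where the integrand equals $(R+1)^2/(R-1)^2$; hence the density is $\geq 0$ on all of $[-1,1]$ if and only if $1+\|\lambda\|\geq \int (R+1)^2/(R-1)^2\,\dd\lambda(R)$, which is the asserted condition (with $d=2$). The main obstacle throughout is the bookkeeping of the logarithmic balayage constant $c_R$ and the verification that, for the full sphere, the equilibrium constant still coincides with $\mathcal{F}_0(\mathbb{S}^2)$.
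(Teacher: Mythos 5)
Your argument is correct, and it reaches the same three conclusions by a somewhat different mechanism than the paper. The paper verifies the constant weighted potential by an explicit Kelvin-transform computation of the integral $\int_{\mathbb{S}^2} (R^2-1)^2\,|\PT{x}-R\PT{p}|^{-4}\log(1/|\PT{z}-\PT{x}|)\,\dd\sigma(\PT{x})$, which in one stroke shows that the subtracted density is $\bal_0(\delta_{R\PT{p}},\mathbb{S}^2)$ and exhibits the additive constant explicitly; you instead take the identification of that density as already known (the $t\to1$ instance of Lemma \ref{EpsLem:s.EQ.0}, where the boundary term carries the factor $(1-t)/2$ and vanishes), and then invoke only the abstract properties of logarithmic balayage onto the full sphere --- mass preservation and preservation of the potential up to an additive constant $c_R$ --- pinning down $c_R$ by integrating against $\sigma_2$ and using the symmetry of the kernel. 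That is a cleaner, more structural route; its only cost is that the density formula for $\bal_0(\delta_{R\PT{p}},\mathbb{S}^2)$ must be imported from elsewhere (the lemma you cite is stated for caps $\Sigma_t$ with $t\in(-1,1)$, so strictly speaking a limiting argument or the classical harmonic-measure computation is needed at $t=1$ --- which is exactly what the paper's Kelvin computation supplies). Your identification of the equilibrium constant with $\mathcal{F}_0(\mathbb{S}^2)$ by integrating against $\mu_{\mathbb{S}^2,0}=\sigma_2$ is the Remark \ref{FuncSignedEqRel} argument and agrees with the paper, and your final step (monotonicity of $(R^2-1)^2/(R^2-2Ru+1)^2$ in $u$, hence nonnegativity of the density is decided at the North Pole, combined with uniqueness of the extremal measure) is precisely the paper's closing argument, only spelled out in more detail.
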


The next assertion deals with the signed equilibrium measure $\tilde{\eta}_{t}$ on a spherical cap $\Sigma_t$ for $Q$ of the form \eqref{AxisSupp}.

\begin{thm} \label{SignEqThm2}
Let $d-2<s<d$ and $Q$ be as in \eqref{AxisSupp} with
$\supp(\lambda)\subset[1,\infty)$. The
signed $s$-equilibrium $\tilde{\eta}_t$ on the spherical cap $\Sigma_t$
associated with $Q$ is given by
\begin{equation}
\tilde{\eta}_t = \frac{\tilde{\Phi}_s(t)}{W_s(\mathbb{S}^d)} \, \nu_t - \tilde{\epsilon}_t, \qquad \tilde{\Phi}_s(t) \DEF W_s(\mathbb{S}^d) \frac{1+ \left\| \tilde{\epsilon}_t \right\|}{\left\| \nu_t \right\|}, \label{etatilde}
\end{equation}
where $\nu_t$ is defined in \eqref{bal} and
\begin{equation}
\tilde{\epsilon}_t \DEF \bal_s(\lambda,\Sigma_t) =\int \bal_s(\delta_{R\PT{p}},\Sigma_t)\, \dd\lambda(R).
\end{equation}
The signed $s$-equilibrium $\tilde{\eta}_t$ can be written as 
\begin{equation*}
\dd \tilde{\eta}_t(\PT{x}) = \tilde{\eta}_{t}^{\prime}(u,R) \frac{\omega_{d-1}}{\omega_d} (1-u^2)^{d/2-1}\, \dd u \dd \sigma_{d-1} (\overline{\PT{x}}), \qquad \PT{x} \in \Sigma_t,
\end{equation*}
where
\begin{equation*}
\begin{split} %\label{AxisProof1thm}
&\tilde{\eta}_{t}^{\prime}(u,R) = \frac{1}{W_s(\mathbb{S}^d)} \frac{\gammafcn(d/2)}{\gammafcn(d-s/2)}
\left( \frac{1-t}{1-u} \right)^{d/2} \left( \frac{t-u}{1-t} \right)^{(s-d)/2} \\
&\phantom{=\times}\times \Bigg\{ \tilde{\Phi}_s(t)
\HypergeomReg{2}{1}{1,d/2}{1-(d-s)/2}{\frac{t-u}{1-u}}  \\
&\phantom{=\times\pm}- \int \frac{\left( R + 1 \right)^{d-s}}{\left( R^2 - 2 R t + 1 \right)^{d/2}}
\HypergeomReg{2}{1}{1,d/2}{1-(d-s)/2}{\frac{\left(R-1\right)^{2}}{R^2 - 2 R t + 1}
\, \frac{t-u}{1-u}} \, \dd \lambda(R)  \Bigg\}. 
\end{split}
\end{equation*}
Furthermore, the function $\tilde{\Phi}_s$
% \begin{equation}
% \tilde{\Phi}_s (t) \DEF W_s(\mathbb{S}^d) \frac{1+ \left\| \tilde{\epsilon}_t \right\|}{\left\| \nu_t \right\|} \label{FuncSigmaTilde}
% \end{equation}
has precisely one global minimum in $(-1,1]$. This minimum is either the unique solution $t_\lambda\in(-1,1)$ of the equation
\begin{equation}
\tilde{\Phi}_s(t) = \int \frac{\left( R + 1 \right)^{d-s}}{\left( R^2 - 2 R t + 1 \right)^{d/2}} \dd \lambda (R),
\end{equation}
or $t_\lambda = 1$ when such a solution does not exist. Moreover, $t_\lambda \DEF \max \{ t : \tilde{\eta}_t \geq 0 \}$, $\mu_Q=\tilde{\eta}_{t_\lambda}$, and $\supp(\mu_Q)=\Sigma_{t_\lambda}$, where $\mu_Q$ is the extremal
measure on $\mathbb{S}^d$ associated with $Q$. 
\end{thm}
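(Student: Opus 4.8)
The plan is to reduce everything to the single--charge results (Theorems \ref{SignEqThm} and \ref{MainThm}) by exploiting the linearity of balayage in its first argument, and then to carry out the minimization of $\tilde{\Phi}_s$ by a sign analysis of its derivative. First I would establish the representation \eqref{etatilde} and the density formula. Writing $Q = \int U_s^{\delta_{R\PT{p}}}\,\dd\lambda(R)$ and using $\tilde{\epsilon}_t = \bal_s(\lambda,\Sigma_t) = \int \bal_s(\delta_{R\PT{p}},\Sigma_t)\,\dd\lambda(R)$, the defining property of balayage gives $U_s^{\tilde{\epsilon}_t} = U_s^{\lambda} = Q$ and $U_s^{\nu_t} = U_s^{\sigma} = W_s(\mathbb{S}^d)$ everywhere on $\Sigma_t$. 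Hence for the candidate $\tilde{\eta}_t = c\,\nu_t - \tilde{\epsilon}_t$ one computes $U_s^{\tilde{\eta}_t} + Q \equiv c\,W_s(\mathbb{S}^d)$ on $\Sigma_t$; imposing total mass $1$ forces $c = (1+\left\|\tilde{\epsilon}_t\right\|)/\left\|\nu_t\right\|$, which is exactly $\tilde{\Phi}_s(t)/W_s(\mathbb{S}^d)$. The uniqueness of signed equilibria established earlier then identifies this with $\tilde{\eta}_t$. The explicit density is obtained by integrating the single--charge density of Theorem \ref{SignEqThm} against $\dd\lambda(R)$ (equivalently, by inserting the densities of $\nu_t$ and $\tilde{\epsilon}_t$), the only changes being that $\Phi_s(t)$ is replaced by $\tilde{\Phi}_s(t)$ and the single point--charge term is replaced by the corresponding $\lambda$--integral.

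Next I would reduce the determination of $\mu_Q$ to a minimization over caps. By Remark \ref{FuncSignedEqRel}, $\mathcal{F}_s(\Sigma_t) = \tilde{\Phi}_s(t)$. Since $Q(\PT{z}) = \int(R^2 - 2R\xi + 1)^{-s/2}\,\dd\lambda(R)$ is rotationally invariant and, being an average over $R$ of functions each increasing and convex in the altitude $\xi$, is itself increasing and convex, Theorem \ref{ConnThm} guarantees that $\supp(\mu_Q)$ is a spherical cap $\Sigma_{t_0}$; by Theorem \ref{FuncOptThm} this $t_0$ minimizes $\tilde{\Phi}_s$. To pin down $t_0$, I would examine the boundary behaviour of the density: as $u\to t^-$ the prefactor $((t-u)/(1-t))^{(s-d)/2}$ tends to $+\infty$ while the braced factor tends to a positive multiple of $\tilde{\Phi}_s(t) - G(t)$, where $G(t) \DEF \int (R+1)^{d-s}(R^2-2Rt+1)^{-d/2}\,\dd\lambda(R)$ (here one uses $0<1-(d-s)/2<1$, so the relevant Gamma factor is positive). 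Thus $\tilde{\eta}_t \ge 0$ near $\partial\Sigma_t$ forces $\tilde{\Phi}_s(t)\ge G(t)$, with equality defining the candidate $t_\lambda$, which is the analog of equation \eqref{FuncEqCond}.

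The crux is to show that this threshold is exactly the global minimizer and is unique. For this I would derive a derivative formula of the form $\tilde{\Phi}_s^{\prime}(t) = C(t)\,[\,G(t) - \tilde{\Phi}_s(t)\,]$ with $C(t) > 0$, either by differentiating $\tilde{\Phi}_s = W_s(\mathbb{S}^d)(1+\left\|\tilde{\epsilon}_t\right\|)/\left\|\nu_t\right\|$ using the explicit norms, or by reading off the boundary value of the density. Setting $h \DEF \tilde{\Phi}_s - G$, this gives $h^{\prime}(t) = -C(t)\,h(t) - G^{\prime}(t)$, so at every zero of $h$ one has $h^{\prime} = -G^{\prime} < 0$ because $G$ is strictly increasing; hence $h$ can vanish at most once, crossing from positive to negative. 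This yields the unique interior solution $t_\lambda$ of the equation $\tilde{\Phi}_s(t) = G(t)$ displayed in the statement, or else $h>0$ throughout, giving $t_\lambda = 1$. The same sign pattern shows $\tilde{\Phi}_s$ is strictly decreasing for $t<t_\lambda$ and strictly increasing for $t>t_\lambda$, so $t_\lambda$ is the unique global minimum and $t_\lambda = \max\{t:\tilde{\eta}_t\ge0\}$.

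Finally, to conclude $\mu_Q = \tilde{\eta}_{t_\lambda}$ and $\supp(\mu_Q)=\Sigma_{t_\lambda}$, I would verify the Gauss variational inequalities \eqref{geqineq}--\eqref{leqineq} with $F_Q = \tilde{\Phi}_s(t_\lambda)$: equality on $\Sigma_{t_\lambda}$ is automatic from the signed--equilibrium property, and on $\mathbb{S}^d\setminus\Sigma_{t_\lambda}$ the weighted potential, obtained by superposing \eqref{eq:weighted.outside}, must be shown to be $\ge\tilde{\Phi}_s(t_\lambda)$. I expect this last domination step to be the main obstacle: with $\rho_R \DEF \sqrt{R^2-2R\xi+1}$ and $r_R \DEF \sqrt{R^2-2Rt+1}$, it amounts to proving that $\int\rho_R^{-s}\,\mathrm{I}\!\left(\tfrac{(R+1)^2}{r_R^2}\tfrac{\xi-t}{1+\xi};\tfrac{d-s}{2},\tfrac{s}{2}\right)\dd\lambda(R) \ge \tilde{\Phi}_s(t_\lambda)\,\mathrm{I}\!\left(\tfrac{\xi-t}{1+\xi};\tfrac{d-s}{2},\tfrac{s}{2}\right)$ off the cap. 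I would attack this by showing that the weighted potential is monotone increasing in the altitude beyond $t_\lambda$, using the identity $\tilde{\Phi}_s(t_\lambda) = G(t_\lambda)$ (which makes its $\xi$--derivative vanish at the boundary) together with monotonicity properties of the regularized incomplete Beta function, so that the potential rises from the common value $\tilde{\Phi}_s(t_\lambda)$ and stays above it.
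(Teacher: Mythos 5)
Your proposal is correct in substance, but it reaches the identification $\mu_Q=\tilde{\eta}_{t_\lambda}$ by a genuinely different route than the paper. The paper does \emph{not} invoke Theorems \ref{ConnThm} and \ref{FuncOptThm} as the engine here; instead it proves a structural fact about the density of $\tilde{\eta}_t$ for \emph{every} $t$ --- that it is either everywhere positive on $[-1,t]$ or positive on $[-1,t_c)$ and negative on $(t_c,t]$ --- via a delicate argument on the power series in $x=(t-u)/(1-u)$ (the coefficients, after integration against $\dd\lambda$, increase to a positive limit, and an iterated-derivative argument then forces at most one sign change). This single-sign-change property feeds an iterated balayage construction (sweep the negative part of $\tilde{\eta}_1$ onto $\Sigma_{t_1}$, repeat, and pass to the limit $t_k\downarrow t_\lambda$) that produces $\mu_Q$ constructively; the derivative identity $\tilde{\Phi}_s'(t)=-(\|\nu_t\|'/\|\nu_t\|)\,[\tilde{\Phi}_s(t)-G(t)]$ is then used only to characterize $t_\lambda$ as the solution of the displayed equation. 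Your route instead observes that $Q$ is an average over $R$ of increasing convex functions of the altitude, hence itself increasing and convex, so Theorem \ref{ConnThm} gives that $S_Q$ is a cap, Theorem \ref{FuncOptThm} gives that this cap minimizes $\tilde{\Phi}_s(t)=\mathcal{F}_s(\Sigma_t)$, and your ODE-style sign analysis of $h=\tilde{\Phi}_s-G$ (every zero of $h$ satisfies $h'=-G'<0$, so there is at most one, a downcrossing) pins the minimizer down uniquely; this is equivalent to, and arguably cleaner than, the paper's second-derivative computation in the proof of Theorem \ref{MainThm}. What your approach buys is brevity and the avoidance of the series/iteration machinery; what the paper's approach buys is a self-contained construction and the extra structural information that the density of $\tilde{\eta}_t$ changes sign at most once for every $t$, which is what makes $\max\{t:\tilde{\eta}_t\geq0\}$ a transparent quantity.

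One correction to your own plan: the final ``domination step'' you flag as the main obstacle is not needed on your route. Once Theorems \ref{ConnThm} and \ref{FuncOptThm} together with the uniqueness of the minimizer give $S_Q=\Sigma_{t_\lambda}$, the Gauss inequalities force $U_s^{\mu_Q}+Q=F_Q$ on $\Sigma_{t_\lambda}$, so $\mu_Q$ is a signed $s$-equilibrium of mass one on $\Sigma_{t_\lambda}$ and Lemma \ref{lem:uniqueness} yields $\mu_Q=\tilde{\eta}_{t_\lambda}$ directly; positivity of $\tilde{\eta}_{t_\lambda}$ then comes for free, and $t_\lambda=\max\{t:\tilde{\eta}_t\geq0\}$ follows from your boundary analysis for $t>t_\lambda$. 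Verifying the off-cap inequality by superposing \eqref{eq:weighted.outside} is a legitimate alternative (it is the paper's Section \ref{sec:6} argument in the single-charge case), but on your path it is redundant, and you should not let its difficulty be a reason to doubt the proof.
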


Theorem \ref{SignEqThm2} can be also extended to the case $s=d-2$ and $d\geq3$. We present

\begin{thm} \label{SignEqThm2.s.EQ.d-2}
Let $s=d-2$, $d\geq3$ and 
 $Q$ be as in \eqref{AxisSupp} with
$\supp(\lambda)\subset[1,\infty)$. The
signed $s$-equilibrium $\tilde{\overline{\eta}}_t$ on the spherical cap $\Sigma_t$
associated with $Q$ is given by
\begin{equation}
\tilde{\overline{\eta}}_t = \frac{\tilde{\overline{\Phi}}_{d-2}(t)}{W_s(\mathbb{S}^d)} \overline{\nu}_t - \tilde{\overline{\epsilon}}_t, \qquad \tilde{\overline{\Phi}}_{d-2}(t) \DEF W_s(\mathbb{S}^d) \frac{ 1 + \left\| \tilde{\overline{\epsilon}}_t \right\|}{\left\| \overline{\nu}_t \right\|}, \label{etatilde.s.EQ.d-2}
\end{equation}
where $\overline{\nu}_t$ is defined in \eqref{barBal} and
\begin{equation}
\tilde{\overline{\epsilon}}_t \DEF \bal_{d-2}(\lambda,\Sigma_t) = \int \bal_{d-2}(\delta_{R\PT{p}},\Sigma_t)\, \dd\lambda(R).
\end{equation}
The signed $s$-measure $\tilde{\overline{\eta}}_t$ can be written as
\begin{equation} \label{tilde.bar.eta.thm}
\dd \tilde{\overline{\eta}}_t(\PT{x}) = g(u) \dd \sigma_d \big|_{\Sigma_t}(\PT{x}) + h(u) \dd \beta_t(\PT{x}),
\end{equation}
where, when using Lemmas \ref{NuLem:s.EQ.d-2} and \ref{EpsLem:s.EQ.d-2}, we have for $-1 \leq u \leq t$ 
\begin{align}
g(u) &= \frac{1}{W_{d-2}(\mathbb{S}^d)} \left[ \tilde{\overline{\Phi}}_{d-2}(t) - \int \frac{\left( R^2 - 1 \right)^2}{\left( R^2 - 2 R u + 1 \right)^{d/2+1}} \dd \lambda(R) \right], \label{eta.prime.prime.thm} \\
h(u) &= \frac{1-t}{2} \left[ \tilde{\overline{\Phi}}_{d-2}(t) - \int \frac{\left( R + 1 \right)^2}{\left( R^2 - 2 R t + 1 \right)^{d/2}} \dd \lambda(R) \right] \left( 1 - t^2 \right)^{d/2-1}. \label{eta.prime.prime.prime.thm}
\end{align}
% and
% \begin{equation}
% \tilde{\overline{\Phi}}_{d-2}(t) \DEF W_s(\mathbb{S}^d) \frac{ 1 + \left\| \tilde{\overline{\epsilon}}_t \right\|}{\left\| \overline{\nu}_t \right\|}. \label{FuncSigmaTilde.s.EQ.d-2}
% \end{equation}

For any fixed $t\in(-1,1)$, the following weak$^*$ convergence holds:
\begin{equation} \label{weak.star.axis-supp}
\tilde{\epsilon}_{t,s} \stackrel{*}{\longrightarrow} \tilde{\overline{\epsilon}}_t \qquad \text{as $s\to(d-2)^+$.}
\end{equation}

The function $\tilde{\overline{\Phi}}_{d-2}$ has precisely one global minimum in $(-1,1]$. This minimum is either the unique solution $t_\lambda\in(-1,1)$ of the equation
\begin{equation}
\tilde{\overline{\Phi}}_{d-2}(t) = \int \frac{\left( R + 1 \right)^{2}}{\left( R^2 - 2 R t + 1 \right)^{d/2}} \dd \lambda (R),
\end{equation}
or $t_\lambda = 1$ when such a solution does not exist. Moreover, $t_\lambda \DEF \max \{ t : \tilde{\overline{\eta}}_t \geq 0 \}$, $\mu_Q=\tilde{\overline{\eta}}_{t_\lambda}$, and $\supp(\mu_Q)=\Sigma_{t_\lambda}$, where $\mu_Q$ is the extremal
measure on $\mathbb{S}^d$. 
\end{thm}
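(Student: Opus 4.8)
The plan is to obtain every assertion by superposition over $\lambda$ of the corresponding single-charge results of Theorem \ref{ExcepThm}, exploiting the linearity of balayage recorded in the definition of $\tilde{\overline{\epsilon}}_t$. First I would establish the signed-equilibrium identity \eqref{etatilde.s.EQ.d-2}. Writing $Q(\PT{x}) = \int U_{d-2}^{\delta_{R\PT{p}}}(\PT{x})\,\dd\lambda(R)$ and using that balayage preserves the $(d-2)$-potential on $\Sigma_t$, one gets $U_{d-2}^{\tilde{\overline{\epsilon}}_t} = Q$ on $\Sigma_t$; likewise $U_{d-2}^{\overline{\nu}_t} = U_{d-2}^\sigma = W_{d-2}(\mathbb{S}^d)$ on $\Sigma_t$. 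Substituting these into $U_{d-2}^{\tilde{\overline{\eta}}_t}+Q$ with $\tilde{\overline{\eta}}_t = \tfrac{\tilde{\overline{\Phi}}_{d-2}(t)}{W_s}\overline{\nu}_t-\tilde{\overline{\epsilon}}_t$ collapses the field term and leaves the constant $\tilde{\overline{\Phi}}_{d-2}(t)$ on $\Sigma_t$; the particular normalization of $\tilde{\overline{\Phi}}_{d-2}(t)$ is exactly what forces $\tilde{\overline{\eta}}_t(\Sigma_t)=1$, since $\|\tilde{\overline{\eta}}_t\| = \tfrac{\tilde{\overline{\Phi}}_{d-2}(t)}{W_s}\|\overline{\nu}_t\|-\|\tilde{\overline{\epsilon}}_t\| = (1+\|\tilde{\overline{\epsilon}}_t\|)-\|\tilde{\overline{\epsilon}}_t\| = 1$. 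By the uniqueness of the signed equilibrium (Lemma~2.1) this $\tilde{\overline{\eta}}_t$ is the signed $s$-equilibrium. The density decomposition \eqref{tilde.bar.eta.thm}--\eqref{eta.prime.prime.prime.thm}, with its absolutely continuous part $g$ and boundary part $h\,\dd\beta_t$, then drops out by inserting the explicit densities of $\overline{\nu}_t$ and of $\bal_{d-2}(\delta_{R\PT{p}},\Sigma_t)$ from Lemmas \ref{NuLem:s.EQ.d-2} and \ref{EpsLem:s.EQ.d-2} and integrating in $R$ against $\dd\lambda$.

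For the weak$^*$ convergence \eqref{weak.star.axis-supp} I would test against $\phi\in C(\mathbb{S}^d)$ and write $\int\phi\,\dd\tilde{\epsilon}_{t,s} = \int\big(\int\phi\,\dd\bal_s(\delta_{R\PT{p}},\Sigma_t)\big)\dd\lambda(R)$. Theorem \ref{ExcepThm} gives the pointwise (in $R$) convergence of the inner integral as $s\to(d-2)^+$, so it remains to justify interchanging the limit with $\int\dd\lambda$; this is where a uniform bound is needed, namely that $\|\bal_s(\delta_{R\PT{p}},\Sigma_t)\|$ stays bounded for $R$ in the compact set $\supp(\lambda)$ and $s$ near $d-2$, which follows from the explicit norm formulas, whence dominated convergence applies.

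The minimization and extremality I would deduce from the general machinery rather than re-proving a domination principle at $s=d-2$. The field $Q$ is rotationally invariant and, being an $R$-superposition of the increasing convex profiles $\xi\mapsto(R^2-2R\xi+1)^{-s/2}$, is itself increasing and convex in the altitude; hence Theorem \ref{ConnThm} (which covers $s=d-2$) gives $\supp(\mu_Q)=\Sigma_{t^\ast}$ for some $t^\ast$, and Theorem \ref{FuncOptThm} together with the identity $\mathcal{F}_{d-2}(\Sigma_t)=\tilde{\overline{\Phi}}_{d-2}(t)$ (the argument of Remark \ref{FuncSignedEqRel}, which applies verbatim once the signed equilibrium is known to exist) shows $\tilde{\overline{\Phi}}_{d-2}$ is minimized at $t^\ast$. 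To pin down $t^\ast$, I would analyze the sign of $\tilde{\overline{\eta}}_t$: since the integrand of $g$ is increasing in $u$, $g$ attains its minimum at $u=t$, and the pointwise inequality $(R-1)^2\le R^2-2Rt+1$ shows the constraint $g(t)\ge0$ is implied by $h\ge0$; thus $\tilde{\overline{\eta}}_t\ge0 \iff h\ge0 \iff \tilde{\overline{\Phi}}_{d-2}(t)\ge\int(R+1)^2(R^2-2Rt+1)^{-d/2}\,\dd\lambda(R)$, which defines $t_\lambda=\max\{t:\tilde{\overline{\eta}}_t\ge0\}$ and yields the stated equation for an interior minimizer (or $t_\lambda=1$).

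Finally I would identify $t^\ast=t_\lambda$ and conclude. Because $\mu_Q\ge0$ has constant weighted potential on $\Sigma_{t^\ast}$, uniqueness of the signed equilibrium gives $\mu_Q=\tilde{\overline{\eta}}_{t^\ast}\ge0$, so $t^\ast\le t_\lambda$; conversely, for $t<t_\lambda$ the explicit exterior weighted potential (from the potentials of $\overline{\nu}_t$ and $\tilde{\overline{\epsilon}}_t$ in Lemmas \ref{NuLem:s.EQ.d-2} and \ref{EpsLem:s.EQ.d-2}, integrated in $R$) dips below $\tilde{\overline{\Phi}}_{d-2}(t)$ just outside $\Sigma_t$, violating \eqref{geqineq}, so no $\Sigma_t$ with $t<t_\lambda$ can be the extremal support, giving $t^\ast\ge t_\lambda$. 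At $t=t_\lambda$ the boundary charge vanishes and the exterior potential meets $\tilde{\overline{\Phi}}_{d-2}(t_\lambda)$ tangentially from above, so both Gauss inequalities \eqref{geqineq}--\eqref{leqineq} hold and $\mu_Q=\tilde{\overline{\eta}}_{t_\lambda}$ with support $\Sigma_{t_\lambda}$. I expect the main obstacle to be precisely this last exterior-inequality verification at $s=d-2$: the ordinary domination principle is unavailable in this exponent range, so the sign of the exterior potential must be read off from the explicit (boundary-singular) balayage potentials, and the delicate behavior of its derivative as $\xi\to t^+$ must be controlled --- the same feature that makes $s=d-2$ the exceptional case.
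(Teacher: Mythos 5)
Your proposal is correct and, in its main components, coincides with the paper's proof: the signed-equilibrium identity via superposition of balayages and uniqueness (Lemma \ref{lem:uniqueness}), the densities $g$ and $h$ read off from Lemmas \ref{NuLem:s.EQ.d-2} and \ref{EpsLem:s.EQ.d-2}, the equivalence $\tilde{\overline{\eta}}_t\geq 0 \iff h\geq 0$ via the monotonicity of $g$ and the inequality $(R-1)^2\leq R^2-2Rt+1$, and the weak$^*$ convergence by Fubini (your explicit dominated-convergence justification with a uniform bound on $\|\bal_s(\delta_{R\PT{p}},\Sigma_t)\|$ is a welcome, slightly more careful version of the paper's interchange). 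The one place where you genuinely diverge is the identification $\mu_Q=\tilde{\overline{\eta}}_{t_\lambda}$: the paper (by reference to the proof of Theorem \ref{SignEqThm2}) constructs a decreasing sequence $t_1>t_2>\cdots\to t_\lambda$ by iterated balayage of the negative parts, whereas you invoke Theorem \ref{ConnThm} (valid for $s=d-2$, $d\geq 3$) to get that $S_Q$ is a cap, Theorem \ref{FuncOptThm} plus Remark \ref{FuncSignedEqRel} to get that $\tilde{\overline{\Phi}}_{d-2}$ is minimized at $S_Q$, and then rule out $t<t_\lambda$ by computing the one-sided derivative of the exterior weighted potential at $\xi=t^+$ (which equals $-\tfrac{d-2}{2(1+t)}$ times the $h$-bracket, so it is negative exactly when $t<t_\lambda$, contradicting \eqref{geqineq}). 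This is the same mechanism the paper itself uses for the single-charge case (Theorems \ref{MainThm} and \ref{ExcepThm}), so your route is arguably more uniform across the cases and avoids justifying convergence of the iterated-balayage scheme; what it costs is the explicit computation of the exterior potential of $\tilde{\overline{\eta}}_t$ at $s=d-2$, which however is immediate from \eqref{U.d-2.ubar.nu.t.notin} and \eqref{U.d-2.ubar.eps.t.notin} integrated against $\dd\lambda$. The only point I would press you to write out is the single-sign-change property of $\Delta(t)=\tilde{\overline{\Phi}}_{d-2}(t)-\int (R+1)^2 r^{-d}\,\dd\lambda(R)$, which you assert but which requires the quotient-rule identity $\tilde{\overline{\Phi}}_{d-2}^{\prime}(t)=-\bigl(\|\overline{\nu}_t\|^{\prime}/\|\overline{\nu}_t\|\bigr)\Delta(t)$ together with $W_{d-2}(\mathbb{S}^d)\,\|\tilde{\overline{\epsilon}}_t\|^{\prime}/\|\overline{\nu}_t\|^{\prime}=\int (R+1)^2 r^{-d}\,\dd\lambda(R)$ from Lemma \ref{lem:norms.s.EQ.d-2}; this is exactly the computation the paper records.
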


Next, we describe the results when $d=2$ and $s=0$. The external field in this case is 
\begin{equation} \label{log.axis.ext.field}
\tilde{\overline{Q}}(\PT{x}) = \tilde{\overline{Q}}_{\PT{a},q}(\PT{x}) = \int \log \frac{1}{\left| \PT{x} - R \PT{p} \right|} \dd \lambda(R), \qquad \PT{x} \in \mathbb{S}^2,
\end{equation}
for some finite positive measure $\lambda$ supported on a compact subset of $[1,\infty)$.
% 
% The total mass of the balayage measures $\overline{\nu}_{t,0}$ and $\overline{\epsilon}_{t,0}$ is preserved, so $\|\overline{\nu}_{t,0}\|=\|\overline{\epsilon}_{t,0}\|=1$. Thus, the function $\overline{\Phi}_{d-2}(t)$ reduces to $\overline{\Phi}_{d-2}(t) = 1 + q$. However, the Mhaskar-Saff functional $\mathcal{F}_0(\Sigma_t)$ from \eqref{Functional}, now defined for compact sets $K\subset\mathbb{S}^2$ with positive logarithmic capacity $\CAP_0(K) \DEF \exp\{-W_0(K)\}$, uses the logarithmic energy
% \begin{equation}
% W_0(\mathbb{S}^2) = \lim_{s\to0^+} \frac{\dd W_s(\mathbb{S}^2)}{\dd s} = \frac{1}{2} - \log 2 \, (<0)
% \end{equation}
% and is not equal to $\overline{\Phi}_{d-2}(t)$ anymore, cf. Remark \ref{rmk:F.s.EQ.Phi.s}.
% Since Theorems \ref{FuncOptThm} and \ref{ConnThm} can be extended to $s=0$ if $d=2$, the functional $\mathcal{F}_0$ is still minimized for $S_{\overline{Q}}\DEF \supp(\mu_{\overline{Q}})$ and $S_{\overline{Q}}$ will be a spherical cap $\Sigma_{t_0}$. 
% The following result can be obtained using direct computations.

We show a result, which generalizes Theorem \ref{ExcepThm.log}. %, to hold for axis-supported external fields \eqref{log.axis.ext.field}. 

\begin{thm} \label{ExcepThm.log.axis} Let $d=2$ and $s=0$. Let $\tilde{\overline{Q}}$ be as in \eqref{log.axis.ext.field} with
$\supp(\lambda)\subset[1,\infty)$. The signed logarithmic equilibrium $\tilde{\overline{\eta}}_{t,0}$ on the spherical cap $\Sigma_t$
associated with $\tilde{\overline{Q}}$ is given by
\begin{equation}
\tilde{\overline{\eta}}_{t,0} = \left( 1 + \left\| \lambda \right\| \right) \overline{\nu}_{t,0} - \tilde{\overline{\epsilon}}_{t,0},
\end{equation}
where $\overline{\nu}_{t,0} = \bal_0(\sigma_2, \Sigma_t)$, $\overline{\epsilon}_{t,0} = \bal_0(\delta_{R\PT{p}},\Sigma_t)$, and
\begin{equation}
\tilde{\overline{\epsilon}}_{t,0} \DEF \bal_{0}(\lambda,\Sigma_t) = \int \bal_0(\delta_{R\PT{p}},\Sigma_t) \, \dd\lambda(R).
\end{equation}
It can be written as
\begin{align} \label{etabar.log.axis}
\begin{split}
\dd \tilde{\overline{\eta}}_{t,0}(\PT{x}) &= \left[ 1 + \left\| \lambda \right\| - \int \frac{\left( R^2 - 1 \right)^2}{\left( R^2 - 2 R u + 1 \right)^{2}} \dd \lambda(R) \right] \dd \sigma_{2}\big|_{\Sigma_t}(\PT{x}) \\
&\phantom{=\pm}+ \frac{1-t}{2} \left[ 1 + \left\| \lambda \right\| - \int \frac{\left( R + 1 \right)^2}{R^2 - 2 R t + 1 } \dd \lambda(R) \right] \dd \beta_t(\PT{x}).
\end{split}
\end{align}
The weighted logarithmic potential of $\tilde{\overline{\eta}}_{t,0}$ satisfies
\begin{align*}
U_0^{\tilde{\overline{\eta}}_{t,0}}(\PT{z}) + \tilde{\overline{Q}}(\PT{z}) &= W_0(\Sigma_t) + \int \tilde{\overline{Q}} \dd \mu_{\Sigma_t,0} \FED \tilde{\overline{\mathcal{F}}}_0(\Sigma_t), \quad \PT{z} \in \Sigma_t, \\
U_0^{\tilde{\overline{\eta}}_{t,0}}(\PT{z}) + \tilde{\overline{Q}}(\PT{z}) &= \tilde{\overline{\mathcal{F}}}_0(\Sigma_t) + \frac{1}{2} \log \frac{1+t}{1+\xi} + \int \frac{1}{2} \log \frac{R^2 - 2 R t + 1}{R^2 - 2 R \xi + 1} \dd \lambda(R), \quad \PT{z} \in \mathbb{S}^2 \setminus \Sigma_t.
\end{align*}
% for $\PT{z} \in \mathbb{S}^2 \setminus \Sigma_t$.

The Mhaskar-Saff functional $\tilde{\overline{\mathcal{F}}}_0$ (explicitly given in \eqref{Mashkar-Saff.functional.log.case}) is minimized for $\Sigma_{t_\lambda}$, where either $t_\lambda\in(-1,1)$ is the unique solution of the equation
\begin{equation} \label{log.t.rel.axis}
1 + \left\| \lambda \right\| = \int \frac{\left( R + 1 \right)^2}{R^2 - 2 R t + 1} \dd \lambda(R),
\end{equation}
or $t_\lambda=1$ if such a solution does not exists. Moreover, $t_\lambda=\max\{ t: \tilde{\overline{\eta}}_t \geq 0 \}$, $\supp(\mu_{\tilde{\overline{Q}}})=\Sigma_{t_\lambda}$, and $\mu_{\tilde{\overline{Q}}}=\tilde{\overline{\eta}}_{t_\lambda,0}$.
\end{thm}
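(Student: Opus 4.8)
The plan is to reduce everything to the single-charge case already settled in Theorem~\ref{ExcepThm.log} by exploiting that balayage is linear in the swept measure. Since $\lambda=\int\delta_{R\PT{p}}\,\dd\lambda(R)$, we have $\tilde{\overline{\epsilon}}_{t,0}=\bal_0(\lambda,\Sigma_t)=\int\bal_0(\delta_{R\PT{p}},\Sigma_t)\,\dd\lambda(R)$, so each quantity attached to $\tilde{\overline{\eta}}_{t,0}$ is the $\dd\lambda(R)$-integral of its single-point counterpart, with $\overline{\nu}_{t,0}=\bal_0(\sigma_2,\Sigma_t)$ independent of $\lambda$. First I would read off from the $q=1$ instance of Theorem~\ref{ExcepThm.log} that $\bal_0(\delta_{R\PT{p}},\Sigma_t)$ has $\sigma_2|_{\Sigma_t}$-density $(R^2-1)^2/(R^2-2Ru+1)^2$ and boundary density $\tfrac{1-t}{2}(R+1)^2/(R^2-2Rt+1)$; integrating these against $\dd\lambda(R)$ and forming $(1+\|\lambda\|)\overline{\nu}_{t,0}-\tilde{\overline{\epsilon}}_{t,0}$ produces the density~\eqref{etabar.log.axis}. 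Because logarithmic balayage on $\mathbb{S}^2$ preserves mass, $\|\tilde{\overline{\epsilon}}_{t,0}\|=\|\lambda\|$ and the total charge is $(1+\|\lambda\|)-\|\lambda\|=1$, so $\tilde{\overline{\eta}}_{t,0}$ is an admissible candidate.

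Next I would establish the weighted potential. On $\Sigma_t$ logarithmic balayage preserves each $U_0^{\delta_{R\PT{p}}}$ up to an additive constant, so $U_0^{\tilde{\overline{\epsilon}}_{t,0}}$ differs from $\tilde{\overline{Q}}=\int U_0^{\delta_{R\PT{p}}}\,\dd\lambda(R)$ only by a constant; since $U_0^{\overline{\nu}_{t,0}}$ is itself constant on $\Sigma_t$ (it is the sweep of $\sigma_2$, whose potential equals the constant $W_0(\mathbb{S}^2)$), the weighted potential $U_0^{\tilde{\overline{\eta}}_{t,0}}+\tilde{\overline{Q}}$ is constant on $\Sigma_t$. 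To name that constant I would run the argument of Remark~\ref{FuncSignedEqRel}: integrate the constant identity against $\mu_{\Sigma_t,0}$ and use $U_0^{\mu_{\Sigma_t,0}}=W_0(\Sigma_t)$ on $\Sigma_t$ together with $\tilde{\overline{\eta}}_{t,0}(\mathbb{S}^2)=1$, which gives the value $W_0(\Sigma_t)+\int\tilde{\overline{Q}}\,\dd\mu_{\Sigma_t,0}=\tilde{\overline{\mathcal{F}}}_0(\Sigma_t)$. The exterior formula on $\mathbb{S}^2\setminus\Sigma_t$ then follows by integrating the single-charge exterior expression of Theorem~\ref{ExcepThm.log} against $\dd\lambda(R)$, the $\tfrac12\log\frac{1+t}{1+\xi}$ term arising from the ($\lambda$-independent) sweep of $\sigma_2$ and the remaining $\lambda$-integral from the sweeps of the $\delta_{R\PT{p}}$.

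It then remains to locate $t_\lambda$ and to verify the equilibrium conditions. Writing the exterior excess as $G(\xi):=\tfrac12\log\frac{1+t}{1+\xi}+\int\tfrac12\log\frac{R^2-2Rt+1}{R^2-2R\xi+1}\,\dd\lambda(R)$, a one-line computation gives $G'(\xi)=-\tfrac{1}{2(1+\xi)}+\int\frac{R}{R^2-2R\xi+1}\,\dd\lambda(R)$, and the identity $(R+1)^2=(R^2-2Rt+1)+2R(1+t)$ shows that the smooth-fit condition $G'(\xi)|_{\xi=t}=0$ is exactly~\eqref{log.t.rel.axis}. Both $-1/(2(1+\xi))$ and $R/(R^2-2R\xi+1)$ increase in $\xi$, so $G'$ increases and $G$ is convex; at the interior root $t_\lambda$ of~\eqref{log.t.rel.axis} one has $G(t_\lambda)=G'(t_\lambda^+)=0$, whence $G\ge0$ on $(t_\lambda,1]$ --- precisely the variational inequality~\eqref{geqineq} off the cap. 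For non-negativity I would observe that~\eqref{log.t.rel.axis} makes the boundary coefficient $h(t_\lambda)$ in~\eqref{etabar.log.axis} vanish, while the $\sigma_2$-density $g(u)=1+\|\lambda\|-\int(R^2-1)^2/(R^2-2Ru+1)^2\,\dd\lambda(R)$ decreases in $u$ and is non-negative at $u=t_\lambda$ because $(R-1)^2\le R^2-2Rt_\lambda+1$ yields $(R^2-1)^2/(R^2-2Rt_\lambda+1)^2\le(R+1)^2/(R^2-2Rt_\lambda+1)$, whose $\dd\lambda$-integral is $1+\|\lambda\|$ by~\eqref{log.t.rel.axis}. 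Hence $\tilde{\overline{\eta}}_{t_\lambda,0}\ge0$ is a probability measure whose weighted potential equals $\tilde{\overline{\mathcal{F}}}_0(\Sigma_{t_\lambda})$ on $\Sigma_{t_\lambda}$ and is $\ge\tilde{\overline{\mathcal{F}}}_0(\Sigma_{t_\lambda})$ elsewhere, so by Frostman's characterization it is $\mu_{\tilde{\overline{Q}}}$.

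Since the same boundary coefficient is positive for $t<t_\lambda$ and negative for $t>t_\lambda$, the identity $t_\lambda=\max\{t:\tilde{\overline{\eta}}_{t,0}\ge0\}$ is immediate, and when~\eqref{log.t.rel.axis} has no root in $(-1,1)$ one sets $t_\lambda=1$ and recovers Theorem~\ref{SignEq.axis.supp.log}. The main obstacle is the genuine minimization claim for $\tilde{\overline{\mathcal{F}}}_0(\Sigma_t)$: one must show it is strictly decreasing on $(-1,t_\lambda)$ and strictly increasing on $(t_\lambda,1)$, which I expect to accomplish by differentiating the explicit expression~\eqref{Mashkar-Saff.functional.log.case} of Lemma~\ref{lem:Mhaskar-Saff.log} (and checking that its derivative is a positive multiple of $-h(t)$, hence changes sign exactly at $t_\lambda$), or alternatively by invoking the extension of Theorem~\ref{FuncOptThm} to the logarithmic case. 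A secondary technical point threaded through every step above is the rigorous justification of the Fubini interchanges underlying the superposition, which is routine given that $\lambda$ has compact support in $[1,\infty)$.
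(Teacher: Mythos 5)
Your proposal is correct and follows the same skeleton as the paper's proof: superposition of the single-charge balayages from Lemmas \ref{NuLem:s.EQ.0} and \ref{EpsLem:s.EQ.0}, mass preservation giving total charge one, the explicit interior and exterior weighted potentials, and the identification of $t_\lambda$ through the vanishing of the boundary coefficient, which you correctly show (via $(R+1)^2=(R^2-2Rt+1)+2R(1+t)$) is equivalent to \eqref{log.t.rel.axis}. The one genuinely different move is your final identification of $\mu_{\tilde{\overline{Q}}}$: you verify the Gauss variational inequalities directly by showing the exterior excess $G(\xi)$ is convex with $G(t_\lambda)=G'(t_\lambda^+)=0$, hence nonnegative on $(t_\lambda,1]$, and then invoke Frostman-type uniqueness. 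The paper instead derives the conclusion from the equivalence $\tilde{\overline{\eta}}_{t,0}\geq 0\iff$ \eqref{prime.rel.1} together with the minimization of the explicit Mhaskar--Saff functional \eqref{Mashkar-Saff.functional.log.case} (whose derivative it computes and whose unique critical point it matches with \eqref{log.t.rel.axis}). Your route is more self-contained for the extremality claim, since it does not need the cap-support result or the $\mathcal{F}_0$-optimality machinery; the paper's route has the advantage of delivering the stated minimization of $\tilde{\overline{\mathcal{F}}}_0$ as a byproduct, which in your write-up remains the one step you defer (your plan of checking that $\tilde{\overline{\mathcal{F}}}_0^\prime(t)$ is a positive multiple of the negative of the boundary coefficient is exactly what the paper executes). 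One small imprecision: the exterior formula is not literally the $\dd\lambda(R)$-integral of the single-charge exterior expression, because the sweep of each $\delta_{R\PT{p}}$ contributes an extra $\tfrac12\log\frac{1+\xi}{1+t}$ per unit mass that cancels against part of the $(1+\|\lambda\|)$-multiple coming from $\overline{\nu}_{t,0}$; the final expression you state is nevertheless correct.
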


\begin{rmk}
In general, the density $\tilde{\overline{\eta}}_{t_\lambda,0}^\prime(u)$ with respect to $\sigma_2|_{\Sigma_{t_\lambda}}$ of the extremal measure $\mu_Q$ on $\mathbb{S}^2$ in Theorem \ref{ExcepThm.log.axis} does not vanish on the boundary of $\Sigma_{t_\lambda}$. In fact, if $t_\lambda\in(-1,1)$, then 
\begin{equation}
\lim_{u\to t_\lambda} {\tilde{\overline{\eta}}}^\prime_{t_\lambda,0}(u) = 1 + \left\| \lambda \right\| - \int \frac{\left( R^2 - 1 \right)^2 \dd \lambda(R)}{\left( R^2 - 2 R t_\lambda + 1 \right)^{2}} = \int \frac{2 R \left( 1 - t_\lambda \right) \dd \lambda(R)}{\left( R^2 - 2 R t_\lambda + 1 \right)^{2}} > 0.
\end{equation}
\end{rmk}

The remainder of this paper is structured as follows. In Section \ref{sec:2} we show the
uniqueness of the signed equilibrium and prove Theorem \ref{SignEq}
and Corollary \ref{Cor}. In Section \ref{sec:3} a suitable Kelvin transform of
points and measures is considered and explicit formulas for the
densities of the measures in \eqref{eta} are found in Lemmas
\ref{NuLem} and \ref{EpsilonLem}. The norms of these measures are
computed in Section \ref{sec:4}. The proofs of Theorems \ref{FuncOptThm},
\ref{ConnThm}, and \ref{MainThm} are given in Section \ref{sec:5}. 
The weighted $s$-potential of the signed $s$-equilibrium is given in 
Section \ref{sec:6}. Section \ref{sec:7} considers the special case $s=d-2$ and the proofs 
of Theorems \ref{ExcepThm} and \ref{ExcepThm.log} are provided. Finally, in Section \ref{sec:8} we prove the generalization of the results to axis-supported external fields.
% The
% generalization to axis-supported external fields given in Theorem
% \ref{SignEqThm2} is proved in Section 6.

\section{Signed equilibrium associated with an external field}
\label{sec:2}

First, we consider some preliminaries on the Kelvin transformation
(spherical inversion) of points and measures.
Inversion in a sphere is a basic technique in electrostatics 
(method of electrical images, cf. Jackson~\cite{J}) and in general 
in potential theory (cf. Kellog~\cite{kellog:1967} and Landkof~\cite{L}). 
Kelvin transformation (of a function) is linear, preserves harmonicity 
(in the classical case), and preserves positivity.
We shall make use of this method and of balayage to conveniently infer 
representations of the signed equilibrium associated with an external field 
from known results.

\subsection{The Kelvin transformation.}
\label{sec:kelvin.transf.}
Let us denote by $\kelvin_R$ the Kelvin transformation
(stereographic projection) with center $\PT{a}=(\PT{0},R)$ and
radius $\sqrt{R^2-1}$, that is for any point $\PT{x}\in
\mathbb{R}^{d+1}$ the image $\PT{x}^*\DEF\kelvin_R(\PT{x})$ lies on
a ray stemming from $\PT{a}$, and passing through $\PT{x}$ such that
\begin{equation}
\left| \PT{x} - \PT{a} \right| \cdot \left| \PT{x}^* - \PT{a} \right| = R^2 - 1. \label{KelTr}
\end{equation}
Thus, the transformation of the distance is given by the formula
\begin{equation}
\left| \PT{x}^{*} - \PT{y}^{*} \right| = \left( R^{2} - 1
\right)\frac{\left| \PT{x} - \PT{y} \right|}{\left| \PT{x} - \PT{a}
\right| \left| \PT{y} - \PT{a} \right|}, \qquad
\PT{x},\PT{y}\in\mathbb{S}^{d}. \label{KelTrDist}
\end{equation}

It is easy to see that $\kelvin_R (\mathbb{S}^d) =\mathbb{S}^d$,
where $\kelvin_R$ sends the spherical cap \linebreak
$A_R\DEF\{(\sqrt{1-u^2}\,\overline{\PT{x}},u):1/R \leq u \leq 1,
\overline{\PT{x}} \in \mathbb{S}^{d-1} \}$ to
$B_R\DEF\{(\sqrt{1-u^2}\,\overline{\PT{x}},u) : -1\leq u \leq 1/R,
\overline{\PT{x}} \in \mathbb{S}^{d-1} \}$ and vice versa, with the
points on the boundary being fixed. In particular, the North Pole
$\PT{p}=(\PT{0},1)$ goes to the South Pole $\PT{q}\DEF(\PT{0},-1)$.
The image of $\PT{x}=(\sqrt{1-u^2}\,\overline{\PT{x}},u)$ is
$\PT{x}^*=(\sqrt{1-({u^*})^2}\,\overline{\PT{x}},u^*)$, where the
relation between $u$ and $u^*$ is given by
\begin{equation}
1+u^* = \frac{\left( R + 1 \right)^2}{R^2-2Ru+1} \left( 1 - u \right). \label{u_rel}
\end{equation}
The last equation is derived from the similar
triangles proportion
\begin{equation*}
\left|\PT{x}^*-\PT{q}\right| \big/ \left|\PT{q}-\PT{a}\right| =
\left|\PT{x}-\PT{p}\right| \big/ \left|\PT{x}-\PT{a}\right|
\end{equation*}
and the formulas $\left|\PT{x}^*-\PT{q}\right|^2 = 2 \left( 1 + u^{*} \right)$, $\left|\PT{x}-\PT{p}\right|^2 = 2 \left( 1 - u \right)$, $\left|\PT{q}-\PT{a}\right| = R + 1$, and $\left|\PT{x}-\PT{a}\right|^2 = R^2 - 2 R u + 1$.
% \begin{equation*}
% \left|\PT{x}^*-\PT{q}\right|^2 = 2 \left( 1 + u^{*} \right), \left|\PT{x}-\PT{p}\right|^2 = 2 \left( 1 - u \right), \left|\PT{q}-\PT{a}\right| = R + 1, \left|\PT{x}-\PT{a}\right|^2 = R^2 - 2 R u + 1.
% \end{equation*}
% \begin{align*}
% \left|\PT{x}^*-\PT{q}\right|^2 &= 2 \left( 1 + u^{*} \right),
% \qquad \left|\PT{x}-\PT{p}\right|^2 = 2 \left( 1 - u \right), \\
% \left|\PT{q}-\PT{a}\right| &= R + 1, \qquad
% \left|\PT{x}-\PT{a}\right|^2 = R^2 - 2 R u + 1.
% \end{align*}
Finally, we point out that
% \begin{equation}
% \frac{\dd\sigma(\PT{x}^*)}{\left| \PT{x}^* - \PT{a} \right|^d}=\frac{\dd\sigma(\PT{x})}{\left| \PT{x} - \PT{a} \right|^d},
% \label{KelTrMeas}
% \end{equation}
\begin{equation}
\left| \PT{x}^* - \PT{a} \right|^{-d} \dd\sigma(\PT{x}^*) = \left| \PT{x} - \PT{a} \right|^{-d} \dd\sigma(\PT{x}),
\label{KelTrMeas}
\end{equation}
which can be easily seen from the relation $\left( \PT{x}^* - \PT{a} \right) \big/ \left| \PT{x}^* - \PT{a} \right| = \left( \PT{x} - \PT{a} \right) \big/ \left| \PT{x} - \PT{a} \right|$.
% \begin{equation*}
% \left( \PT{x}^* - \PT{a} \right) \big/ \left| \PT{x}^* - \PT{a} \right| = \left( \PT{x} - \PT{a} \right) \big/ \left| \PT{x} - \PT{a} \right|.
% \end{equation*}

Next, we recall that given a measure $\lambda$ with no point mass at
$\PT{a}$, its Kelvin transformation (associated with a fixed $s$)
$\lambda^*=\kelvinMEAS_{R,s} (\lambda)$ is a measure defined by
\begin{equation}
\dd\lambda^* (\PT{x}^*) \DEF
\frac{\left(R^2-1\right)^{s/2}}{\left|\PT{x}-\PT{a}\right|^s}
\dd\lambda(\PT{x}). \label{KelMeas}
\end{equation}
The $s$-potentials of the two measures are related as follows (see, for
example, \cite[Section 5, Equation (5.1)]{DS}
\begin{equation}
U_s^{\lambda^*}(\PT{x}^*) = \int \frac{\dd\lambda^*(\PT{y}^*)}{\left|\PT{x}^*-\PT{y}^*\right|^s} = \int \frac{\left|\PT{x}-\PT{a}\right|^s \dd\lambda(\PT{y})}{\left(R^2-1\right)^{s/2} \left|\PT{x}-\PT{y} \right|^s} = \frac{\left|\PT{x}-\PT{a}\right|^s} {\left(R^2-1\right)^{s/2}}
U_s^\lambda(\PT{x}). \label{KelPot}
\end{equation}
Note that the Kelvin transformation has the duality property $\kelvinMEAS_{R,s} (\lambda^* (\PT{x}^*))=\lambda(\PT{x})$.

\subsection{Signed equilibrium.}

We first establish the uniqueness of the signed equilibrium,
provided it exists.

\begin{lem} \label{lem:uniqueness}
Let $0\leq s<d$. If a signed $s$-equilibrium $\eta_{E,Q}$ exists, then it is unique.
\end{lem}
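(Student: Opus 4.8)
The plan is to prove uniqueness by the standard energy-principle argument for signed measures of zero total mass. Suppose two signed $s$-equilibria $\eta_1$ and $\eta_2$ on $E$ associated with $Q$ both exist; by definition each has total charge $1$ and satisfies the defining relation $U_s^{\eta_i}(\PT{x}) + Q(\PT{x}) = F_i$ for all $\PT{x}\in E$, where $F_i$ is a constant (possibly different for the two measures). Subtracting, the difference $\tau \DEF \eta_1 - \eta_2$ is a signed measure supported on $E$ with total charge $\tau(E) = 1 - 1 = 0$, and its potential is constant on $E$:
\begin{equation*}
U_s^\tau(\PT{x}) = U_s^{\eta_1}(\PT{x}) - U_s^{\eta_2}(\PT{x}) = F_1 - F_2 \FED c \qquad \text{for all $\PT{x}\in E$.}
\end{equation*}
The goal is to conclude $\tau = 0$.

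Next I would compute the energy of $\tau$ by integrating its potential against $\tau$ itself. Since $\tau$ is supported on $E$ and $U_s^\tau \equiv c$ on $E$, we obtain
\begin{equation*}
\mathcal{I}_s(\tau) = \int U_s^\tau(\PT{x}) \dd\tau(\PT{x}) = c \int \dd\tau(\PT{x}) = c\,\tau(E) = 0,
\end{equation*}
using the zero-mass condition $\tau(E)=0$. The crux is then the energy principle for the Riesz kernel: for $0\leq s<d$ the kernel $k_s$ is strictly positive definite on signed measures (of finite energy), meaning $\mathcal{I}_s(\tau)\geq 0$ with equality if and only if $\tau=0$. This is the classical result found in Landkof \cite[Chapter II]{L}, valid throughout the range $0\leq s < d$ (including the logarithmic case $s=0$, where positive-definiteness holds on the subspace of zero-mass measures, which is exactly the situation here). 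Applying it to our $\tau$ with $\mathcal{I}_s(\tau)=0$ forces $\tau=0$, i.e. $\eta_1=\eta_2$.

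The main obstacle — really the only subtlety — is ensuring that the energy $\mathcal{I}_s(\tau)$ is well-defined and finite, so that the energy principle applies and the interchange used in evaluating $\int U_s^\tau \dd\tau$ is legitimate. One must know that a signed equilibrium has finite energy; since $\eta_{E,Q}$ is assumed to exist and to have a constant weighted potential on $E$ (so that $U_s^{\eta_i}$ is bounded on the compact support), each $\eta_i$ has finite energy, and hence so does $\tau$. With finite energy secured, Fubini's theorem justifies the identity $\int U_s^\tau\dd\tau = \iint k_s\,\dd\tau\,\dd\tau$, and the positive-definiteness theorem closes the argument. In the logarithmic case $s=0$ one invokes the same principle in its standard form for zero-mass signed measures, so no separate treatment is needed.
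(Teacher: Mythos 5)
Your proof is correct and follows essentially the same route as the paper: subtract the two constant weighted potentials, integrate against the zero-mass difference $\eta_1-\eta_2$ to get vanishing energy, and invoke the strict positive-definiteness of the Riesz (resp. logarithmic) kernel on finite-energy signed measures of zero total mass, which the paper cites from Landkof for $0<s<d$ and from Simeonov/Riesz for $s=0$. Your added remarks on finiteness of the energy and the zero-mass caveat in the logarithmic case are the same points the paper handles by citation.
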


\begin{proof}
The lemma easily follows from the positivity of the $s$-energy of
signed measures. Indeed, suppose $\eta_1$ and $\eta_2$ are two
signed $s$-equilibria on $E$ associated with the same external field
$Q$. Then
\begin{equation*}
U_s^{\eta_1}(\PT{x})+Q(\PT{x})= F_1 , \quad U_s^{\eta_2}(\PT{x})+Q(\PT{x})= F_2 \qquad \text{for all $\PT{x} \in E$.}
\end{equation*}
Subtracting the two equations and integrating with respect to $\eta_1-\eta_2$ we obtain
\begin{equation*}
\mathcal{I}_s (\eta_1-\eta_2) = \int \left[ U_s^{\eta_1}(\PT{x})- U_s^{\eta_2}(\PT{x}) \right] \dd(\eta_1-\eta_2) (\PT{x}) = 0,
\end{equation*}
and from  \cite[Theorem~1.15]{L} we conclude that $\eta_1=\eta_2$ (see also \cite[Section 5]{G}).
When $d=2$ and $s=0$ instead of \cite[Theorem~1.15]{L} we could use
\cite[Theorem~4.1]{Sim} to prove the assertion of the Lemma.
When $d>2$ and $s=0$ we could use \cite[p.~6]{Riesz}. 
Note that $\eta_1-\eta_2$ is the difference of two signed measures with total charge $1$.
\end{proof}
% 
% \begin{rmk}
% When $d=2$ and $s=0$ instead of \cite[Theorem~1.15]{L} we could use
% \cite[Theorem~4.1]{Sim} to prove the assertion of the Lemma.
% \end{rmk}

We are now in a position to find the signed equilibrium for the
external field $Q_{\PT{a},q}$ defined by a point charge $q$ at
$\PT{a}$ (see \eqref{RieszField}). 

\begin{proof}[Proof of Theorem \ref{SignEq}] Let
\begin{equation*}
\epsilon_{\PT{a}} \DEF \frac{\left( R^2 - 1 \right)^{d-s}}{W_s(\mathbb{S}^d)
\left| \PT{x} - \PT{a} \right|^{2d-s}} \, \dd \sigma(\PT{x}), \qquad \sigma = \sigma_d.
\end{equation*}
We apply the Kelvin transformation \eqref{KelTr} to the $s$-potential
\begin{equation*}
U^{\epsilon_{\PT{a}}}_s(\PT{z}) = \int_{\mathbb{S}^d}\frac{\left( R^2 - 1 \right)^{d-s}}{W_s(\mathbb{S}^d)\left| \PT{z} - \PT{x} \right|^s
\left| \PT{x} - \PT{a} \right|^{2d-s}}\, \dd\sigma(\PT{x}).
\end{equation*}
From \eqref{KelTrDist} and \eqref{KelTrMeas} (recall that $\kelvin_R(\mathbb{S}^d) =\mathbb{S}^d$) we obtain
\begin{equation*}
U^{\epsilon_{\PT{a}}}_s(\PT{z}) = \left| \PT{z} - \PT{a} \right|^{-s}
\int_{\mathbb{S}^d} \frac{1}{W_s(\mathbb{S}^d) \left| \PT{z}^* - \PT{x}^* \right|^s} \,
\dd\sigma(\PT{x}^*) = \frac{1}{\left|\PT{z}-\PT{a}\right|^s},
\end{equation*}
where we used that $U^\sigma_s (\PT{z^*})=W_s(\mathbb{S}^d)$ for all
$\PT{z^*}\in \mathbb{S}^d$. Hence, $\epsilon_{\PT{a}}=\epsilon_1$
(see \eqref{bal}). For $\eta_{\PT{a}}$ defined in
\eqref{signedeqdens}, we therefore derive
\begin{equation*}
U^{\eta_{\PT{a}}}_s(\PT{z}) + Q_{\PT{a},q}(\PT{z}) = W_s(\mathbb{S}^d) + q U^\sigma_s(\PT{a}), \qquad \text{for all $\PT{z}\in \mathbb{S}^d$.}
\end{equation*}
In addition, one similarly finds
\begin{equation*}
\int_{\mathbb{S}^d} \frac{\left( R^2 - 1 \right)^{d-s}}{\left| \PT{x} - \PT{a} \right|^{2d-s}} \, \dd \sigma(\PT{x}) = \int_{\mathbb{S}^d} \frac{1}{\left| \PT{x}^* - \PT{a} \right|^s} \, \dd \sigma(\PT{x}^*) = U^\sigma_s(\PT{a}),
\end{equation*}
and consequently $\eta_{\PT{a}} (\mathbb{S}^d)=1$. Therefore,
$\eta_{\PT{a}}$ is the required signed $s$-equilibrium. 

Finally, to derive \eqref{SignEqPot(b)}, using \eqref{Measure.Decomposition} and
\eqref{omega.ratio}, we evaluate
\begin{eqnarray}
U_s^\sigma (\PT{a})&=&\int_{\mathbb{S}^d}
\frac{1}{|\PT{x}-\PT{a}|^s}\, \dd \sigma_d
(x)=\frac{\omega_{d-1}}{\omega_d} \int_{-1}^1
\frac{(1-u^2)^{d/2-1}}{(R^2-2Ru+1)^{s/2}}\, \dd u \nonumber\\
% &=&\frac{\Gamma (d)}{2^{d-1}[\Gamma (d/2)]^2}\int_0^1 \label{Pot(a)1}
% \frac{2^{d-1}(1-v)^{d/2-1}v^{d/2-1}}{[(R+1)^2-4Rv]^{s/2}}\, \dd v \\
&=& (R+1)^{-s}\, \Hypergeom{2}{1}{s/2,d/2}{d}{\frac{4R}{(R+1)^2}}.
\label{Pot(a)2}
\end{eqnarray}
In the last step we used the standard substitution $2 v = 1 + u$ and the integral representation of the hypergeometric function \cite[Eq.~15.3.1]{ABR}.
% In \eqref{Pot(a)1} we applied a change of variable $2v=1+u$ and in
% \eqref{Pot(a)2} we used the integral representation of the
% hypergeometric function \cite[Eq.~15.3.1]{ABR}.
\end{proof}

The proof of Corollary \ref{Cor} is an easy consequence of the
uniqueness of the extremal measure associated with an external
field.

\begin{proof}[Proof of Corollary \ref{Cor}]
We observe that the (strictly decreasing) density in
\eqref{signedeqdens} is at minimum on $\mathbb{S}^d$ when
$\PT{x}=\PT{p}$. So, non-negativity at the North Pole implies that
the signed equilibrium is positive everywhere else on
$\mathbb{S}^d$, in which case it coincides with the extremal
measure on $\mathbb{S}^d$. On the other hand, if
$\supp(\mu_{Q_{\PT{a},q}})=\mathbb{S}^d$, then the variational
inequalities \eqref{geqineq} and \eqref{leqineq} yield
$\mu_{Q_{\PT{a},q}}=\eta_{\PT{a}}$; and the density in
\eqref{signedeqdens} is again non-negative at $\PT{p}$. What remains
to show is that \eqref{eqsignedA} is equivalent to
\begin{equation*}
1 + \frac{q U_s^\sigma (\PT{a})}{W_s(\mathbb{S}^d)} -
\frac{q\left(R^2-1\right)^{d-s}}{W_s(\mathbb{S}^d)
\left|\PT{p}-\PT{a}\right|^{2d-s}} \geq 0,
\end{equation*}
which can be easily seen by using $| \PT{p} - \PT{a} | = R - 1$.
Finally, using the series expansion of \eqref{SignEqPot(b)} and
% \begin{equation*}
% \left(R+1\right)^d \big/ \left(R-1\right)^d = \left[ 1 - 4 R / \left(R+1\right)^2 \right]^{-d/2} = \sum_{k=0}^\infty
% \frac{\Pochhsymb{d/2}{k}}{k!} \left[ 4R \big/ \left(R+1\right)^2 \right]^k,
% \end{equation*}
\begin{equation*}
\frac{\left(R+1\right)^d}{\left(R-1\right)^d} = \left[ 1 -
\frac{4R}{\left(R+1\right)^2} \right]^{-d/2} = \sum_{k=0}^\infty
\frac{\Pochhsymb{d/2}{k}}{k!} \left[ \frac{4R}{\left(R+1\right)^2}
\right]^k,
\end{equation*}
we derive \eqref{eqsigned}.
\end{proof}

\section{The $s$-balayage measures $\nu_t$ and $\epsilon_t$}
\label{sec:3}

In this section we show that for $s$ in the range $d-2<s<d$, the
measures $\nu_t$ and $\epsilon_t$ are absolutely continuous with respect to the normalized area surface measure $\sigma_d$ (restricted to the spherical cap $\Sigma_t)$ and we find their densities.

\subsection{The balayage measures}
We now focus on the two balayage measures in \eqref{bal}.
The second one, $\nu_t$, has already been found in \cite[Section~3,
Equations~(3.19) and (4.6)]{DS}. It is an absolutely continuous measure on
$\Sigma_t$ (see \eqref{Sigma}), given by the following formula:
\begin{equation}
\dd\nu_{t}(\PT{x}) = \left( 1 + J_{t}(\PT{x}) \right)
\frac{\omega_{d-1}}{\omega_{d}} \left( 1 - u^{2} \right)^{d/2-1}
 \dd u \dd\sigma_{d-1}(\overline{\PT{x}}), \label{nu.r.1}
\end{equation}
where
\begin{equation*}
\begin{split}
J_{t}(\PT{x}) &\DEF \frac{1}{\gammafcn((d-s)/2)\gammafcn(1-(d-s)/2)}
\left( \frac{1-t}{1-u}\right)^{d/2} \left( \frac{t-u}{1-t}
\right)^{(s-d)/2} \\
&\phantom{=\times}\times \int_{0}^{1} v^{d/2-1} \left( 1 - v
\right)^{1+(d-s)/2-1} \left( 1 - \frac{1-t}{1-u} v \right)^{-1} \dd
v.
\end{split}
\end{equation*}

It is convenient to obtain a closed form for $J_{t}(\PT{x})$ in terms of
hypergeometric functions. By \cite[Eq.~15.3.1]{ABR}
\begin{equation*}
\begin{split}
J_{t}(\PT{x}) &\DEF \frac{\gammafcn(d/2)\gammafcn(1+(d-s)/2)}{\gammafcn((d-s)/2)\gammafcn(1-(d-s)/2)\gammafcn(1+d-s/2)}  \\
&\phantom{=\times}\times \left( \frac{1-t}{1-u}\right)^{d/2} \left(
\frac{t-u}{1-t} \right)^{(s-d)/2}
\Hypergeom{2}{1}{1,d/2}{1+d-s/2}{\frac{1-t}{1-u}}.
\end{split}
\end{equation*}
The application of \cite[Eq.~15.3.6]{ABR} yields an expansion near $u=t$,
\begin{equation*}
\begin{split}
J_{t}(\PT{x}) &= - 1 +
\frac{\gammafcn(d/2)}{\gammafcn(d-s/2)} \left( \frac{1-t}{1-u} \right)^{d/2} \left( \frac{t-u}{1-t} \right)^{(s-d)/2} \HypergeomReg{2}{1}{1,d/2}{1-(d-s)/2}{\frac{t-u}{1-u}}.
\end{split}
\end{equation*}
Substituting the last relation into \eqref{nu.r.1} and simplifying
we derive the following lemma.

\begin{lem} \label{NuLem}
Let $d-2<s<d$. The measure $\nu_t=\bal_s(\sigma,\Sigma_t)$ is given by
\begin{equation}
\dd\nu_{t}(\PT{x}) = \nu_{t}^{\prime}(u)
\frac{\omega_{d-1}}{\omega_{d}} \left( 1 - u^{2} \right)^{d/2-1} \dd
u \dd\sigma_{d-1}(\overline{\PT{x}}), \qquad \PT{x} \in \Sigma_t,  \label{NuR}
\end{equation}
where the density $\nu_{t}^{\prime}(u)$ is given by
\begin{equation}
\begin{split} \label{nu.dens}
\nu_{t}^{\prime}(u) &\DEF
\frac{\gammafcn(d/2)}{\gammafcn(d-s/2)} \left( \frac{1-t}{1-u} \right)^{d/2} \left( \frac{t-u}{1-t} \right)^{(s-d)/2} \HypergeomReg{2}{1}{1,d/2}{1-(d-s)/2}{\frac{t-u}{1-u}}.
\end{split}
\end{equation}
\end{lem}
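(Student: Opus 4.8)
The plan is to take as given the representation \eqref{nu.r.1} of $\nu_t=\bal_s(\sigma,\Sigma_t)$ recorded above from \cite[Eqs.~(3.19),~(4.6)]{DS}, in which the density of $\nu_t$ equals $\left(1+J_t(\PT{x})\right)$ times the base factor $\frac{\omega_{d-1}}{\omega_d}\left(1-u^2\right)^{d/2-1}$, with $J_t(\PT{x})$ presented as an Euler-type integral. The entire task then reduces to rewriting $J_t$ in closed hypergeometric form and verifying that the additive constant $1$ is absorbed, so that $1+J_t(\PT{x})=\nu_t'(u)$.

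First I would recognize the integral $\int_0^1 v^{d/2-1}(1-v)^{(d-s)/2}\left(1-\tfrac{1-t}{1-u}v\right)^{-1}\dd v$ defining $J_t$ as Euler's integral \cite[Eq.~15.3.1]{ABR} for the Gauss function with parameters $a=1$, $b=d/2$, $c=1+d-s/2$ and argument $z=(1-t)/(1-u)$; note that $u<t$ forces $0<z<1$, so the series converges and the integral applies. This yields the closed form $J_t(\PT{x})=C\left(\frac{1-t}{1-u}\right)^{d/2}\left(\frac{t-u}{1-t}\right)^{(s-d)/2}\Hypergeom{2}{1}{1,d/2}{1+d-s/2}{\tfrac{1-t}{1-u}}$ with an explicit Gamma-factor constant $C$. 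Since \eqref{nu.dens} is expanded about $u=t$ (that is, about $z=1$), the next step is to apply the connection formula \cite[Eq.~15.3.6]{ABR}, which re-expands $\Hypergeom{2}{1}{a,b}{c}{z}$ in powers of $1-z=(t-u)/(1-u)$. For our parameters $c-a-b=(d-s)/2$ and $a+b-c+1=1-(d-s)/2$, so the first of the two resulting terms is a constant multiple of $\Hypergeom{2}{1}{1,d/2}{1-(d-s)/2}{\tfrac{t-u}{1-u}}$, precisely the hypergeometric function (up to regularization) occurring in \eqref{nu.dens}. I would then collect the Gamma prefactors from $C$, from the transformation coefficient, and from the conversion of $\HyperF$ to its regularized form $\HyperTildeF$, expecting them to telescope to $\gammafcn(d/2)/\gammafcn(d-s/2)$, which reproduces the claimed leading factor of $\nu_t'(u)$.

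The crux, and the step I expect to be the main obstacle, is showing that the \emph{second} term of \cite[Eq.~15.3.6]{ABR} collapses to exactly the constant $-1$, so that the stray $+1$ in $1+J_t$ cancels. Here the second hypergeometric factor, $\Hypergeom{2}{1}{d-s/2,\,1+(d-s)/2}{1+(d-s)/2}{1-z}$, is degenerate: its second upper parameter coincides with the lower one, so it reduces to the binomial $z^{-(d-s/2)}=\left(\frac{1-t}{1-u}\right)^{-(d-s/2)}$. Two bookkeeping cancellations must then be carried out together: (i) combining this binomial with the factor $(1-z)^{c-a-b}=\left(\frac{t-u}{1-u}\right)^{(d-s)/2}$ and with the prefactor powers $\left(\frac{1-t}{1-u}\right)^{d/2}\left(\frac{t-u}{1-t}\right)^{(s-d)/2}$, one checks that the total exponents of $t-u$, of $1-t$, and of $1-u$ each vanish, leaving a pure constant; and (ii) the product of Gamma factors, simplified via $\gammafcn(1+x)=x\,\gammafcn(x)$ applied to $\gammafcn(1+(d-s)/2)$ and $\gammafcn(-(d-s)/2)$, equals $-1$.

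Substituting $J_t(\PT{x})=-1+\nu_t'(u)$ into \eqref{nu.r.1} then gives \eqref{NuR} with the density \eqref{nu.dens}. Throughout, the hypothesis $d-2<s<d$ is used exactly to guarantee $0<(d-s)/2<1$, which keeps every Gamma argument finite and nonzero (in particular $\gammafcn(-(d-s)/2)$ is well defined and $\gammafcn(1-(d-s)/2)$ is finite) and places the exponent $(d-s)/2$ in the range required for the validity of \cite[Eq.~15.3.6]{ABR}; this is the point at which the restriction on $s$ genuinely enters, as anticipated in the remark following Theorem~\ref{MainThm}.
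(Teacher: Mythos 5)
Your proposal is correct and follows exactly the route the paper takes: start from the representation \eqref{nu.r.1} of $\nu_t$ imported from \cite{DS}, convert the Euler-type integral defining $J_t$ into a closed ${}_2\mathrm{F}_1$ via \cite[Eq.~15.3.1]{ABR}, and then apply the connection formula \cite[Eq.~15.3.6]{ABR} to re-expand about $u=t$, with the degenerate second term collapsing to the constant $-1$ that absorbs the $+1$ in $1+J_t$. The Gamma-factor bookkeeping you outline (telescoping to $\gammafcn(d/2)/\gammafcn(d-s/2)$, and the ratio $\gammafcn(1+x)\gammafcn(-x)/[\gammafcn(x)\gammafcn(1-x)]=-1$ with $x=(d-s)/2$) checks out, and you have in fact supplied more detail on the cancellation than the paper records.
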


%\begin{rmk} \label{rmk:nu.s.EQ.d-2}
%In the exceptional case $s=d-2$ ($d\geq3$) the density
%$\nu_{t}^{\prime}(u)$ reduces to $\nu_t^\prime(u) = 1$.
%\end{rmk}

To determine the $s$-balayage $\epsilon_t$, we recall the formulas for the Kelvin
transformation of measures and the relation of the corresponding potentials (see
\eqref{KelMeas} and \eqref{KelPot}). Let $\lambda^*$ be the
extremal measure on the set $\Sigma_t^*:=\kelvin_R (\Sigma_t)$,
normalized so that its potential $U_s^{\lambda^{*}} (\PT{x}^*)=1$
for $\PT{x}^* \in \Sigma_t^*$. Then, using \eqref{KelTr} and
\eqref{KelPot} we derive just as in \cite[Section 3, Equation (3.7)]{DS} that
\begin{equation}
\epsilon_t (\PT{x}) = \left( R^2 - 1 \right)^{-s/2} \kelvinMEAS_{R,s}(\lambda^*(\PT{x}^*)). \label{e_r}
\end{equation}
Since the image $\Sigma_t^*$ of $\Sigma_t$ is also a spherical cap,
this time centered at the North Pole, we can utilize a formula similar to
\eqref{NuR} for its extremal measure. If
$\Sigma_t=\{\PT{x} : -1\leq u \leq t\}$, then $\Sigma_t^*=\{\PT{x} :
1\geq u^* \geq t^* \}$, where $u^*$ and $t^*$ are related to $u$ and
$t$ by \eqref{u_rel}. If we set $\nu_t^* \DEF
{\bal}(\sigma,\Sigma_t^* )$, then $\lambda^*=\nu_t^*
/W_{s}(\mathbb{S}^{d})$; hence we get
\begin{equation}
\dd\lambda^*(\PT{x}^*) = (\lambda^*)^{\prime}(u^*)
\frac{\omega_{d-1}}{\omega_{d}} \left[ 1 - (u^*)^{2} \right]^{d/2-1}
\dd u^* \dd\sigma_{d-1} (\overline{\PT{x}}^*), \label{lambda}
\end{equation}
where the density is given by
\begin{equation*}
\begin{split}
(\lambda^*)^{\prime}(u^*)
&\DEF \frac{1}{W_{s}(\mathbb{S}^{d})} \,
\frac{\gammafcn(d/2)}{\gammafcn(d-s/2)} \left( \frac{1+t^*}{1+u^*} \right)^{d/2}
\left( \frac{u^*-t^*}{1+t^*} \right)^{(s-d)/2} \\
&\phantom{=\times}\times
\HypergeomReg{2}{1}{1,d/2}{1-(d-s)/2}{\frac{u^*-t^*}{1+u^*}}.
\end{split}
\end{equation*}
(We remark that the last formula (up to a multiplicative constant) for the special case $d=2$ was first derived by Fabrikant et al \cite{FABR}.)
From \eqref{u_rel} we get
\begin{equation}
\frac{1+u^*}{1+t^*} = \frac{R^2 - 2 R t + 1}{R^2 - 2 R u + 1} \cdot
\frac{1-u}{1-t}, \label{change}
\end{equation}
from which it follows that
\begin{equation}
\left[ 1 - (u^*)^2 \right]^{d/2-1} \dd u^* = \left( \frac{R^2-1}{R^2
- 2 R u + 1} \right)^d \left( 1 - u^2 \right)^{d/2-1} \dd u.
\label{measconv}
\end{equation}
Substituting \eqref{change} and \eqref{measconv} in \eqref{lambda}
and using \eqref{e_r} and \eqref{KelMeas} we obtain the next lemma.

\begin{lem} Let $d-2<s<d$. The measure $\epsilon_t=\bal_s(\delta_{\PT{a}},\Sigma_t)$ is
given by
\begin{equation} \label{dd.eps.t}
\dd\epsilon_{t}(\PT{x}) = \epsilon_{t}^{\prime}(u)
\frac{\omega_{d-1}}{\omega_{d}} \left( 1 - u^2 \right)^{d/2-1} \dd u
\dd\sigma_{d-1}(\overline{\PT{x}}), \qquad \PT{x} \in \Sigma_t,
\end{equation}
and setting $r^2 \DEF R^2-2Rt+1$, the density is given by
\begin{equation}
\begin{split} \label{eps.dens}
&\epsilon_{t}^{\prime}(u) \DEF \frac{1}{W_{s}(\mathbb{S}^{d})} \,
\frac{\gammafcn(d/2)}{\gammafcn(d-s/2)} \frac{\left( R + 1
\right)^{d-s}}{r^{d}}
\left(  \frac{1-t}{1-u} \right)^{d/2} \\
&\phantom{=\times}\times \left( \frac{t-u}{1-t} \right)^{(s-d)/2}
\HypergeomReg{2}{1}{1,d/2}{1-(d-s)/2}{\frac{\left(R-1\right)^{2}}{r^{2}}
\, \frac{t-u}{1-u}}.
\end{split}
\end{equation}
\label{EpsilonLem}
\end{lem}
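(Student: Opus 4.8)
The plan is to obtain $\epsilon_t$ directly from the known extremal measure on the Kelvin image $\Sigma_t^* \DEF \kelvin_R(\Sigma_t)$, exploiting that $\Sigma_t^*$ is again a spherical cap (this time centered at the North Pole) whose equilibrium measure is governed by the same formula as in Lemma \ref{NuLem}. Relation \eqref{e_r} already expresses $\epsilon_t$ as $(R^2-1)^{-s/2}$ times the Kelvin transform of $\lambda^* = \nu_t^*/W_s(\mathbb{S}^d)$, and the density $(\lambda^*)^{\prime}(u^*)$ has been recorded in \eqref{lambda}. Thus no new potential-theoretic input is needed: the remaining task is a pure change of variables carrying the image variable $u^*$ back to $u$.

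First I would combine \eqref{KelMeas} with the distance identity \eqref{KelTr} (namely $\left|\PT{x}^*-\PT{a}\right| = (R^2-1)/\left|\PT{x}-\PT{a}\right|$) to write
\begin{equation*}
\dd\epsilon_t(\PT{x}) = \left(R^2-1\right)^{-s/2}\frac{\left(R^2-1\right)^{s/2}}{\left|\PT{x}^*-\PT{a}\right|^s}\,\dd\lambda^*(\PT{x}^*) = \frac{\left|\PT{x}-\PT{a}\right|^s}{\left(R^2-1\right)^{s}}\,\dd\lambda^*(\PT{x}^*),
\end{equation*}
with $\left|\PT{x}-\PT{a}\right|^2 = R^2-2Ru+1$. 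Next I would insert the density \eqref{lambda} of $\lambda^*$ and convert the surface element via \eqref{measconv}, which replaces $[1-(u^*)^2]^{d/2-1}\dd u^*$ by $\left((R^2-1)/(R^2-2Ru+1)\right)^{d}(1-u^2)^{d/2-1}\dd u$ (the $\overline{\PT{x}}$-component being fixed by $\kelvin_R$). Extracting the common surface factor $\tfrac{\omega_{d-1}}{\omega_d}(1-u^2)^{d/2-1}\dd u\,\dd\sigma_{d-1}(\overline{\PT{x}})$ as in \eqref{dd.eps.t}, the explicit prefactors already collapse to $\left(R^2-1\right)^{d-s}/\left(R^2-2Ru+1\right)^{d-s/2}$ multiplying $(\lambda^*)^{\prime}(u^*)$. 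It then remains to rewrite the three $u^*$-dependent pieces of \eqref{lambda} — the ratio $(1+t^*)/(1+u^*)$, the ratio $(u^*-t^*)/(1+t^*)$, and the hypergeometric argument $(u^*-t^*)/(1+u^*)$ — in the $u$-variable using \eqref{u_rel} and \eqref{change}.

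The one genuinely computational step, and the main obstacle, is evaluating $u^*-t^*$ in closed form. Writing $1+u^*$ and $1+t^*$ from \eqref{u_rel} and combining over a common denominator, the numerator simplifies (the terms quadratic in $R$ cancelling) to $(t-u)(R-1)^2$, giving
\begin{equation*}
u^* - t^* = \frac{\left(R^2-1\right)^2\left(t-u\right)}{\left(R^2-2Ru+1\right)\left(R^2-2Rt+1\right)}.
\end{equation*}
Feeding this into the two remaining ratios yields $(u^*-t^*)/(1+u^*) = \left((R-1)^2/r^2\right)(t-u)/(1-u)$, which is precisely the hypergeometric argument in \eqref{eps.dens}, together with $(u^*-t^*)/(1+t^*) = (R-1)^2(t-u)/[(R^2-2Ru+1)(1-t)]$. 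The final bookkeeping is to verify that all powers of the point-dependent quantity $R^2-2Ru+1$ cancel: the contributions $-(d-s/2)$ from the prefactor, $+d/2$ from $(1+t^*)/(1+u^*)$, and $+(d-s)/2$ from $(u^*-t^*)/(1+t^*)$ sum to zero, while the factors $\left(R^2-1\right)^{d-s}(R-1)^{s-d}$ collapse to $(R+1)^{d-s}$ and the powers of $r^2$ assemble into $r^{-d}$. What survives is exactly \eqref{eps.dens}; the regularized ${}_2F_1$ appears automatically, being inherited (up to its argument) unchanged from the image-cap density \eqref{lambda}.
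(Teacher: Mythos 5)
Your proposal is correct and follows essentially the same route as the paper, which likewise obtains $\epsilon_t$ by Kelvin-transforming the normalized equilibrium measure $\lambda^*=\nu_t^*/W_s(\mathbb{S}^d)$ of the image cap via \eqref{e_r}, \eqref{KelMeas}, \eqref{change}, and \eqref{measconv}; your explicit evaluation of $u^*-t^*$ and the exponent bookkeeping check out.
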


%\begin{rmk} \label{rmk:eps.s.EQ.d-2}
%In the exceptional case $s=d-2$ ($d\geq3$) the density
%$\epsilon_{t}^{\prime}(u)$ reduces to
%\begin{equation*}
%\epsilon_{t}^{\prime}(u) = \frac{1}{W_{s}(\mathbb{S}^{d})} \,
%\frac{\left( R^2 - 1 \right)^{d-s}}{\left| \PT{x} - \PT{a}
%\right|^{2d-s}}, \qquad \left| \PT{x} - \PT{a} \right|^2 = R^2 - 2 R
%u + 1.
%\end{equation*}
%\end{rmk}

\subsection{Positivity of the signed equilibrium of a spherical cap}
The following lemma establishes a condition for positivity of the
signed equilibrium 
\begin{equation*}
\dd\eta_{t}(\PT{x}) = \eta_{t}^{\prime}(u)
\frac{\omega_{d-1}}{\omega_{d}} \left( 1 - u^2 \right)^{d/2-1} \dd u
\dd\sigma_{d-1}(\overline{\PT{x}}).
\end{equation*}

\begin{lem} \label{lem4.1}
Let $d-2<s<d$. If for some $\gamma>0$ we have
$\eta_{t}^{\prime}(u)\geq 0$ for $u \in (t-\gamma,t)$, then
\begin{equation} \label{Cond}
\Phi_s(t) \geq q \left( R + 1 \right)^{d-s} \big/ r^d, \qquad r^2 = R^2 - 2 R t + 1,
\end{equation}
and, consequently, $\eta_{t}^{\prime}(u)>0$ for all $-1\leq u< t<1$.
\end{lem}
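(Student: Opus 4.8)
The plan is to reduce the statement to the sign of the expression in braces in the formula for $\eta_t^\prime(u)$ from Theorem~\ref{SignEqThm}, and then to exploit positivity and monotonicity of the regularized hypergeometric function. Since $d-2<s<d$, the prefactor
\[
\frac{1}{W_s(\mathbb{S}^d)}\frac{\gammafcn(d/2)}{\gammafcn(d-s/2)}\left(\frac{1-t}{1-u}\right)^{d/2}\left(\frac{t-u}{1-t}\right)^{(s-d)/2}
\]
is strictly positive for $-1\le u<t<1$, so $\sgn \eta_t^\prime(u)$ equals the sign of
\[
G(u)\DEF \Phi_s(t)\,H_1(u)-\frac{q\left(R+1\right)^{d-s}}{r^d}\,H_2(u),
\]
where $H_1(u)\DEF \HypergeomReg{2}{1}{1,d/2}{1-(d-s)/2}{\frac{t-u}{1-u}}$ and $H_2(u)\DEF \HypergeomReg{2}{1}{1,d/2}{1-(d-s)/2}{\frac{(R-1)^2}{r^2}\frac{t-u}{1-u}}$. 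Note that $\Phi_s(t)>0$ by its definition in \eqref{FuncPhi}.

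First I would establish \eqref{Cond} by letting $u\to t^-$. Then $\frac{t-u}{1-u}\to 0^+$, so both $H_1(u)$ and $H_2(u)$ tend to $\HypergeomReg{2}{1}{1,d/2}{1-(d-s)/2}{0}=1/\gammafcn(1-(d-s)/2)$, which is positive because $0<1-(d-s)/2<1$. The hypothesis gives $G(u)\ge 0$ on $(t-\gamma,t)$, so passing to the limit yields $\bigl(\Phi_s(t)-q(R+1)^{d-s}/r^d\bigr)/\gammafcn(1-(d-s)/2)\ge 0$, which is exactly \eqref{Cond}.

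Next I would upgrade \eqref{Cond} to strict positivity everywhere. The regularized hypergeometric $z\mapsto \HypergeomReg{2}{1}{1,d/2}{1-(d-s)/2}{z}$ has strictly positive Taylor coefficients (each $\gammafcn(n+1-(d-s)/2)>0$ since $1-(d-s)/2>0$), hence it is positive and strictly increasing on $[0,1)$. For $-1\le u<t$ the argument of $H_1$ lies in $(0,1)$, while the argument of $H_2$ is the same quantity scaled by $(R-1)^2/r^2$, which is strictly less than $1$ because $r^2-(R-1)^2=2R(1-t)>0$. Therefore $H_1(u)>H_2(u)>0$, and combining this with \eqref{Cond} and $H_1(u)>0$,
\[
G(u)\ge \frac{q\left(R+1\right)^{d-s}}{r^d}\bigl(H_1(u)-H_2(u)\bigr)>0,
\]
so that $\eta_t^\prime(u)>0$ for all $-1\le u<t$.

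The only delicate point is the comparison $H_1(u)>H_2(u)$: it is precisely the scaling factor $(R-1)^2/r^2<1$ (valid for $t<1$) that makes the second hypergeometric strictly smaller than the first and thereby converts the boundary inequality \eqref{Cond} into strict positivity throughout the cap. Everything else is a routine evaluation of the density and of the regularized series at the endpoint $z=0$.
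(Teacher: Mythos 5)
Your proof is correct and follows essentially the same route as the paper: extract the sign of the braced expression (the paper instead multiplies $\eta_t^\prime(u)$ by $(t-u)^{(d-s)/2}$ before letting $u\to t^-$, which is the same normalization), obtain \eqref{Cond} from the limit value $1/\gammafcn(1-(d-s)/2)>0$ of both regularized hypergeometric factors, and then use $(R-1)^2<r^2$ together with the positivity of the series coefficients to compare the two hypergeometric terms. The only difference is the cosmetic regrouping in the final chain of inequalities.
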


\begin{proof}
By equation \eqref{eta} in Theorem \ref{SignEqThm} (which easily
follows from the balayage properties), the definition of $\Phi_s(t)$
(cf. \eqref{FuncPhi}), and the Lemmas \ref{NuLem} and
\ref{EpsilonLem} we get
\begin{equation}
\begin{split}
\eta_{t}^{\prime}(u) &= \frac{1}{W_s(\mathbb{S}^d)}
\frac{\gammafcn(d/2)}{\gammafcn(d-s/2)}
\left( \frac{1-t}{1-u} \right)^{d/2} \left( \frac{t-u}{1-t} \right)^{(s-d)/2} \\
&\phantom{=\times}\times \Bigg\{ \Phi_s(t)
\HypergeomReg{2}{1}{1,d/2}{1-(d-s)/2}{\frac{t-u}{1-u}}  \\
&\phantom{=\times\pm}-  \frac{q\left( R + 1 \right)^{d-s}}{r^{d}}
\HypergeomReg{2}{1}{1,d/2}{1-(d-s)/2}{\frac{\left(R-1\right)^{2}}{r^{2}}
\, \frac{t-u}{1-u}}  \Bigg\}. \label{et.r.a}
\end{split}
\end{equation}
Using \eqref{et.r.a} and the non-negativity hypothesis for $\eta_t^\prime(u)$, we get
\begin{equation*}
\begin{split}
\lim_{u\to t^{-}} \left[ \left( t - u \right)^{(d-s)/2}
\eta_{t}^{\prime}(u) \right]
&= \frac{1}{W_s(\mathbb{S}^d)} \frac{\gammafcn(d/2)}{\gammafcn(d-s/2)\gammafcn(1-(d-s)/2)} \\
&\phantom{=\pm\times}\times \left( 1 - u \right)^{(d-s)/2} \Bigg\{
\Phi_s(t)  - \frac{ q \left( R + 1 \right)^{d-s}}{r^{d}} \Bigg\}
\geq 0.
\end{split}
\end{equation*}
In particular, the expression in braces is non-negative for $d-2<s<d$.

For $R\neq1$ we have $(R-1)^{2}<r^{2}$. Thus, the first
hypergeometric function in \eqref{et.r.a} is strictly larger then
the second one for all $-1\leq u<t$ and $d-2<s<d$. Hence, using
$\Phi_s(t) \geq q (R+1)^{d-s} / r^d$, we have
\begin{equation*}
\begin{split}
\eta_{t}^{\prime}(u) &> \frac{1}{W_s(\mathbb{S}^d)}
\frac{\gammafcn(d/2)}{\gammafcn(d-s/2)}
\left( \frac{1-t}{1-u} \right)^{d/2} \left( \frac{t-u}{1-t} \right)^{(s-d)/2} \\
&\phantom{=\times}\times
\HypergeomReg{2}{1}{1,d/2}{1-(d-s)/2}{\frac{t-u}{1-u}} \left\{
\Phi_s(t) - \frac{q\left( R + 1 \right)^{d-s}}{r^{d}} \right\} \geq 0,
\end{split}
\end{equation*}
which shows that $\eta_{t}^{\prime}(u)>0$ for all $-1\leq u<t$.
\end{proof}

\begin{rmk} \label{rmk:diff.Phi.rhs}
We note that in the limit $R\to1$ relation \eqref{Cond}
becomes the same as in \cite[Eq.~(5.9)]{DS}. It also follows
from the proof of Lemma \ref{lem4.1} that the sign of the difference
$\Phi_s(t) - q (R+1)^{d-s} / r^d$ is determined by the sign of
$\eta_t^\prime(u)$ near the boundary of the spherical cap
$\Sigma_t$, that is for $u$ near $t^-$, and vice versa.
\end{rmk}

\begin{rmk} \label{CondEqRmk}
Equality in relation \eqref{Cond} yields $\lim_{u\to
t^{-}}\eta_{t}^{\prime}(u)=0$. This follows from \eqref{et.r.a} and
the identity
\begin{equation*}
\begin{split}
&\HypergeomReg{2}{1}{1,d/2}{1-(d-s)/2}{\frac{t-u}{1-u}} -
\HypergeomReg{2}{1}{1,d/2}{1-(d-s)/2}{\frac{\left(R-1\right)^{2}}{r^{2}} \, \frac{t-u}{1-u}} \\
&\phantom{=}= \sum_{n=1}^{\infty} \frac{\Pochhsymb{d/2}{n}}{\gammafcn(n+1-(d-s)/2)} \left\{ 1 - \left[
\left( R - 1 \right) / r \right]^{2n} \right\} \left(
\frac{t-u}{1-u} \right)^{n}.
\end{split}
\end{equation*}
\end{rmk}

\section{The norms $\| \epsilon_t \|$ and $\| \nu_t \|$.}
\label{sec:4}

In this section we compute the norms of the measures in \eqref{eta}.

\begin{lem} \label{norm.eps.r}
Let $d-2<s<d$. Then
\begin{equation}
\begin{split} \label{NormEpsB}
\left\| \epsilon_{t} \right\|
&= \frac{2^{1-d} \gammafcn(d)}{\gammafcn(d-s/2)\gammafcn(s/2)} \frac{\left( R + 1 \right)^{d-s}}{W_s(\mathbb{S}^d)} \int_{-1}^{t} \frac{ \left( 1 + u \right)^{s/2-1} \left(
1 - u \right)^{d-s/2-1} }{ \left( R^2 - 2 R u + 1 \right)^{d/2} } \dd u.
\end{split}
\end{equation}
\end{lem}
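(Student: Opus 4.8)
The plan is to reduce the total mass to a one-dimensional integral via the explicit density from Lemma~\ref{EpsilonLem}, and then to evaluate that integral by inserting a \emph{convergent} integral representation of the hypergeometric factor and interchanging the order of integration. First I would use \eqref{dd.eps.t} together with the decomposition \eqref{Measure.Decomposition} and the fact that $\sigma_{d-1}$ is a probability measure on $\mathbb{S}^{d-1}$, to write
\[
\| \epsilon_t \| = \int_{\Sigma_t} \dd\epsilon_t = \frac{\omega_{d-1}}{\omega_d} \int_{-1}^{t} \epsilon_t^{\prime}(u)\, \left( 1-u^2 \right)^{d/2-1}\, \dd u .
\]
Thus the entire problem is the evaluation of this scalar integral, with $\epsilon_t^{\prime}$ given by \eqref{eps.dens}. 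Collecting the powers of $1-u$ and $1+u$, the integrand is an explicit power weight times the regularized function $\HypergeomReg{2}{1}{1,d/2}{1-(d-s)/2}{\frac{(R-1)^2}{r^2}\frac{t-u}{1-u}}$, with $r^2 = R^2 - 2Rt + 1$.

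The key step is to represent this hypergeometric factor by a convergent Euler integral. In the range $d-2<s<d$ one has $0 < (d-s)/2 < 1$, so Euler's integral for the regularized function itself diverges; instead I would revert to the pre-transformation Gauss form $\Hypergeom{2}{1}{1,d/2}{1+d-s/2}{z}$ from which \eqref{eps.dens} was obtained via \cite[Eq.~15.3.6]{ABR}, where by \eqref{u_rel} and \eqref{change} one has $z = \frac{R^2-2Ru+1}{r^2}\frac{1-t}{1-u}$ and $1-z = \frac{(R-1)^2}{r^2}\frac{t-u}{1-u}$. Since its lower parameter satisfies $1+d-s/2 > d/2 > 0$, the representation $\Hypergeom{2}{1}{1,d/2}{1+d-s/2}{z} = \frac{\gammafcn(1+d-s/2)}{\gammafcn(d/2)\gammafcn(1+(d-s)/2)} \int_0^1 v^{d/2-1} \left( 1-v \right)^{(d-s)/2} \left( 1-zv \right)^{-1}\dd v$ is valid. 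I would substitute this into the scalar integral and interchange the $u$- and $v$-integrations (justified by Tonelli, since every factor is nonnegative for $R>1$). The inner $u$-integral becomes elementary after the M\"obius substitution $w = (t-u)/(1-u)$, under which $R^2-2Ru+1 = r^2 \left( 1 - \frac{(R-1)^2}{r^2}w \right)/(1-w)$ while $1\pm u$ become affine in $w$, so that it reduces to a Beta integral. Resumming (recognizing the binomial series of $\left( R^2-2Ru+1 \right)^{-d/2}$) and restoring the altitude variable then collapses the double integral to the claimed single integral $\int_{-1}^t \left( 1+u \right)^{s/2-1}\left( 1-u \right)^{d-s/2-1}\left( R^2-2Ru+1 \right)^{-d/2}\dd u$.

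The main obstacle is precisely this hypergeometric bookkeeping: securing a representation that converges throughout $d-2<s<d$, justifying the interchange, and organizing the resummation so that the weight emerges exactly as $\left( 1+u \right)^{s/2-1}\left( 1-u \right)^{d-s/2-1}\left( R^2-2Ru+1 \right)^{-d/2}$. A secondary nuisance is tracking the Gamma-function constants: one must combine the prefactor $\gammafcn(d/2)/\gammafcn(d-s/2)$ of \eqref{eps.dens} with $\omega_{d-1}/\omega_d$ from \eqref{omega.ratio}, the reflection formula for $\gammafcn((d-s)/2)\gammafcn(1-(d-s)/2)$, and Legendre's duplication formula, in order to land on the stated constant $2^{1-d}\gammafcn(d)/[\gammafcn(d-s/2)\gammafcn(s/2)]$. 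As a structural consistency check (rather than an independent evaluation), the reciprocity of balayage gives $\| \epsilon_t \| = U_s^{\nu_t^*}(\PT{a})/W_s(\mathbb{S}^d)$, where $\nu_t^* = \bal_s(\sigma,\Sigma_t^*)$ on the reflected cap $\Sigma_t^* = \kelvin_R(\Sigma_t)$; changing variables back through the Kelvin map \eqref{KelTr} reproduces the very same scalar integral, confirming the self-consistency of the computation.
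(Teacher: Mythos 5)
Your strategy coincides with the paper's at the top level: both reduce $\left\| \epsilon_t \right\|$ via \eqref{dd.eps.t} and \eqref{Measure.Decomposition} to the scalar integral $\frac{\omega_{d-1}}{\omega_d}\int_{-1}^t \epsilon_t^{\prime}(u)\left(1-u^2\right)^{d/2-1}\dd u$ with $\epsilon_t^{\prime}$ taken from \eqref{eps.dens}, and both evaluate that integral by Euler-integral manipulations ending in the substitution $1+u=(1+t)v$. The difference is in the execution. The paper feeds the integral directly into its Appendix Lemma~\ref{IntLemB}, which handles the regularized factor $\HypergeomReg{2}{1}{1,d/2}{1-(d-s)/2}{\,\cdot\,}$ by expanding the Gauss series, integrating termwise (each coefficient being itself an Euler integral), and resumming the resulting binomial series into the Euler integral of an Appell $\HyperF_1$ function. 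You instead revert to the pre-transformation form $\Hypergeom{2}{1}{1,d/2}{1+d-s/2}{z}$, insert its Euler integral (convergent since $1+d-s/2>d/2>0$), and swap the two integrations by Tonelli; your identification $1-z=\frac{(R-1)^2}{r^2}\frac{t-u}{1-u}$ is correct. This is a legitimate reorganization of the same computation, but note that \cite[Eq.~15.3.6]{ABR} is a \emph{two-term} connection formula: the regularized function in \eqref{eps.dens} equals a constant multiple of $\Hypergeom{2}{1}{1,d/2}{1+d-s/2}{z}$ \emph{plus} an algebraic term proportional to $(1-z)^{(d-s)/2}z^{-(d-s/2)}$ (this is exactly how the ``$-1$'' arises in the paper's relation $\nu_t^{\prime}=1+J_t$). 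Consequently your integrand splits into the restriction to $\Sigma_t$ of the full-sphere balayage density $\left(R^2-1\right)^{d-s}\big/\bigl[W_s(\mathbb{S}^d)\left(R^2-2Ru+1\right)^{d-s/2}\bigr]$ plus a genuinely two-dimensional piece, and the advertised ``resummation'' must merge these two contributions into the single weight $\left(1+u\right)^{s/2-1}\left(1-u\right)^{d-s/2-1}\left(R^2-2Ru+1\right)^{-d/2}$; that merging is precisely the content of the Appell-function identity that Lemma~\ref{IntLemB} packages, and it is the one step of your outline that is asserted rather than carried out. Your closing consistency check $\left\|\epsilon_t\right\|=U_s^{\nu_t^*}(\PT{a})/W_s(\mathbb{S}^d)$ via the Kelvin map is correct and mirrors how the paper constructs $\epsilon_t$ in Lemma~\ref{EpsilonLem} in the first place.
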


\begin{proof}
From \eqref{dd.eps.t} and \eqref{eps.dens}
\begin{equation*}
\begin{split}
\left\| \epsilon_{t} \right\| &= \frac{\omega_{d-1}}{\omega_{d}}
\int_{-1}^{t} \epsilon_{t}^{\prime}(u) \left( 1 - u^2
\right)^{d/2-1} \dd u = \frac{\gammafcn(d/2)}{\gammafcn(d-s/2)}
\frac{\omega_{d-1}}{\omega_{d}}
\frac{\left( R + 1 \right)^{d-s}}{W_{s}(\mathbb{S}^{d}) r^{d}} \\
&\phantom{=\pm}\times \left( 1 - t \right)^{d-s/2} \int_{-1}^{t}
\left( 1 + u \right)^{d/2-1} \left( t - u \right)^{1-(d-s)/2-1}
\left( 1 - u \right)^{-1} \\
&\phantom{=\pm\times\times}\times
\HypergeomReg{2}{1}{1,d/2}{1-(d-s)/2}{\frac{\left(R-1\right)^{2}}{r^{2}}
\, \frac{t-u}{1-u}} \dd u.
\end{split}
\end{equation*}
We now apply Lemma \ref{IntLemB}.
\begin{equation*}
\begin{split}
\left\| \epsilon_{t} \right\| &= 2^{(d-s)/2-1} \frac{\gammafcn(d/2)}{\gammafcn(d-s/2)} \frac{\gammafcn(d/2)}{\gammafcn(s/2)} 
\frac{\omega_{d-1}}{\omega_{d}} \frac{\left( R + 1 \right)^{d-s}}{W_{s}(\mathbb{S}^{d}) r^{d}}
\left( 1 - t \right)^{d/2} \left( 1 + t \right)^{s/2} \\
&\phantom{=\pm}\times \left( 1 - x y \right)^{-d/2} \int_0^1
v^{s/2-1} \left( 1 - x v \right)^{d-s/2-1} \left( 1 - \frac{x \left(
1 - y \right)}{1 - x y} v \right)^{-d/2} \dd v,
\end{split}
\end{equation*}
where $x=(1+t)/2$ and $y=(R-1)^{2}/r^{2}$. Substituting
\begin{equation*}
1 - x y = \frac{\left( R + 1 \right)^{2}}{r^{2}} \, \frac{1-t}{2},
\qquad \frac{x \left( 1 - y \right)}{1 - x y} = \frac{4R}{\left( R + 1 \right)^{2}} \, \frac{1+t}{2}.
\end{equation*}
% and
% \begin{equation*}
%  \frac{2^{(d-s)/2} \left[\gammafcn(d/2)\right]^{2}}{\gammafcn(d-s/2)\gammafcn(s/2)}
% \frac{\omega_{d-1}}{\omega_{d}} \frac{1}{W_s(\mathbb{S}^d)} =
% \frac{2^{(s-d)/2} \gammafcn(d/2)}
% {\gammafcn((d-s)/2)\gammafcn(s/2)},
% \end{equation*}
and \eqref{omega.ratio} we get the Euler-type integral of an Appell function
\cite[Eq.~5.8(5)]{ERD}
\begin{equation*}
\begin{split}
&\left\| \epsilon_{t} \right\| = \frac{2^{-s/2} \gammafcn(d)}{\gammafcn(d-s/2)\gammafcn(s/2)} \frac{1}{W_s(\mathbb{S}^d)} 
\left( R + 1 \right)^{-s} \left( 1 + t \right)^{s/2} \\
&\phantom{=\pm}\times \int_{0}^{1} u^{s/2-1} \left( 1 - \frac{1+t}{2} u \right)^{d-s/2-1} \left( 1 - \frac{4R}{\left(R+1\right)^{2}} \frac{1+t}{2} u \right)^{-d/2} \dd u.
\end{split}
\end{equation*}
A change of variables $1+v=(1+t) u$ yields \eqref{NormEpsB}.
\end{proof}

% Next, we compute the norm $\| \nu_t \|$.

\begin{lem} \label{LemNuNorm}
Let $d-2<s<d$. Then
\begin{align}
\left\|\nu_{t}\right\|
&= \frac{2^{1-d} \gammafcn(d)}{\gammafcn(d-s/2)\gammafcn(s/2)} \int_{-1}^t \left( 1 + u \right)^{s/2-1} \left( 1 - u \right)^{d-s/2-1} \dd u  \label{NuNormB} \\
&= 1 - \mathrm{I}\left((1-t)/2;d-s/2,s/2\right), % \frac{\betafcn((1-t)/2;d-s/2,s/2)}{\betafcn(d-s/2,s/2)},
\label{NuNormA}
\end{align}
where $\mathrm{I}(x;a,b)$ denotes the regularized incomplete Beta function (cf. \eqref{regbetafnc}).
\end{lem}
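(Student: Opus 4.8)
The plan is to substitute the explicit density of $\nu_t$ from Lemma~\ref{NuLem} into $\left\|\nu_t\right\| = (\omega_{d-1}/\omega_d)\int_{-1}^t \nu_t^\prime(u)\,(1-u^2)^{d/2-1}\,\dd u$, reduce the resulting hypergeometric-weighted integral to the elementary form \eqref{NuNormB}, and then obtain \eqref{NuNormA} by a change of variables that exposes an incomplete Beta function. First I would insert \eqref{nu.dens}, write $(1-u^2)^{d/2-1}=(1-u)^{d/2-1}(1+u)^{d/2-1}$, and collect the algebraic factors. The powers of $(1-t)$, $(1-u)$ and $(t-u)$ combine to give
\begin{equation*}
\left\|\nu_t\right\| = \frac{\gammafcn(d/2)}{\gammafcn(d-s/2)}\frac{\omega_{d-1}}{\omega_d}(1-t)^{d-s/2}\int_{-1}^t (1+u)^{d/2-1}(t-u)^{(s-d)/2}(1-u)^{-1}\HypergeomReg{2}{1}{1,d/2}{1-(d-s)/2}{\frac{t-u}{1-u}}\,\dd u .
\end{equation*}
This is precisely the integral treated in the proof of Lemma~\ref{norm.eps.r}, except that the argument of the regularized ${}_2F_1$ is $\frac{t-u}{1-u}$ rather than $\frac{(R-1)^2}{r^2}\frac{t-u}{1-u}$; that is, it is the special case in which the parameter $y=(R-1)^2/r^2$ of that proof equals $1$ (and the overall factor $(R+1)^{d-s}/(W_s(\mathbb{S}^d)r^d)$ is absent).

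Accordingly I would apply the same Euler-type integral identity (Lemma~\ref{IntLemB}) used there, now specialized to $y=1$. In that degenerate case the two relevant quantities become $1-xy=(1-t)/2$ and $x(1-y)/(1-xy)=0$ with $x=(1+t)/2$, so the Appell $F_1$ factor collapses to $1$ and only a single Beta-type integral $\int_0^1 v^{s/2-1}(1-\tfrac{1+t}{2}v)^{d-s/2-1}\,\dd v$ survives. Collecting the Gamma factors and powers of $2$, and rewriting $\omega_{d-1}/\omega_d = 2^{1-d}\gammafcn(d)/\gammafcn(d/2)^2$ via \eqref{omega.ratio}, the $(1-t)$-powers cancel and the constant simplifies to $2^{-s/2}\gammafcn(d)/(\gammafcn(d-s/2)\gammafcn(s/2))$ times $(1+t)^{s/2}$. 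The substitution $1+u=(1+t)v$, which sends $1-\tfrac{1+t}{2}v$ to $(1-u)/2$, then turns this into \eqref{NuNormB}. The only real work is the bookkeeping of the $\gammafcn$ factors and the powers of $2$; there is no analytic difficulty, since for $d-2<s<d$ the exponent $(s-d)/2>-1$ keeps the integrand integrable at $u=t$ and the hypergeometric argument stays in $[0,(1+t)/2]\subset[0,1)$.

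Finally, for \eqref{NuNormA} I would substitute $v=(1-u)/2$ in \eqref{NuNormB}. Since $1+u=2(1-v)$ and $1-u=2v$, the integral becomes $2^{d-1}\int_{(1-t)/2}^1 v^{d-s/2-1}(1-v)^{s/2-1}\,\dd v$, whose prefactor $2^{d-1}$ cancels the $2^{1-d}$ in front. Writing this last integral as $\betafcn(d-s/2,s/2)-\betafcn((1-t)/2;d-s/2,s/2)$ and using $\betafcn(d-s/2,s/2)=\gammafcn(d-s/2)\gammafcn(s/2)/\gammafcn(d)$ together with the definition \eqref{regbetafnc} of $\mathrm{I}$ yields $\left\|\nu_t\right\|=1-\mathrm{I}((1-t)/2;d-s/2,s/2)$, as claimed. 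I expect the whole argument to be routine once Lemma~\ref{IntLemB} is in hand; the one point needing care is the specialization $y=1$, where one must verify that the $F_1$ factor genuinely degenerates to a single Beta integral rather than requiring the full Appell evaluation.
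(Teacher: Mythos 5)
Your proposal is correct and follows essentially the same route as the paper: the paper likewise computes $\|\nu_t\|$ by "proceeding as in the proof of Lemma \ref{norm.eps.r}" (i.e., applying Lemma \ref{IntLemB} to the density from Lemma \ref{NuLem}, which is the $y=1$ specialization where the Appell factor degenerates), arrives at the same intermediate formula $\frac{\gammafcn(d)}{\gammafcn(d-s/2)\gammafcn(s/2)}\left(\frac{1+t}{2}\right)^{s/2}\int_0^1 v^{s/2-1}(1-\frac{1+t}{2}v)^{d-s/2-1}\,\dd v$, and then uses the substitutions $1+u=(1+t)v$ and $2v=1-u$ to obtain \eqref{NuNormB} and \eqref{NuNormA}. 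Your bookkeeping of the Gamma factors and powers of $2$ checks out.
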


\begin{proof}
We proceed as in the proof of Lemma \ref{norm.eps.r}. In fact, the
densities $\epsilon_{t}^{\prime}$ and $\nu_{t}^{\prime}$ differ by a
multiplicative factor $\left( R + 1
\right)^{d-s}/[W_{s}(\mathbb{S}^{d}) r^{d}]$ and a factor
$(R-1)^{2}/r^{2}$ in the argument of the hypergeometric function.
From \eqref{NuR}, \eqref{nu.dens}, and Lemma \ref{IntLemB}
\begin{equation*}
\left\| \nu_{t} \right\| =
\frac{\gammafcn(d)}{\gammafcn(d-s/2)\gammafcn(s/2)} \left( \frac{1 +
t}{2} \right)^{s/2} \int_0^1 v^{s/2-1} \left( 1 - \frac{1+t}{2} v
\right)^{d-s/2-1} \dd v.
\end{equation*}
A change of variable $1+u=(1+t)v$ yields \eqref{NuNormB}.

A manipulation of the integral (extending the integral over the
complete interval $[-1,1]$ and using the standard substitution
$2v=1-u$) yields \eqref{NuNormA}.
\end{proof}

\section{The extremal support and measure: Proofs of Theorems
\ref{FuncOptThm}, \ref{ConnThm}, and \ref{MainThm}.}
\label{sec:5}

Our first proof deals with the minimization property of $S_Q$.
\vskip 1mm

\begin{proof}[Proof of Theorem \ref{FuncOptThm}] Let $K$ be any compact
subset of $\mathbb{S}^d$ with positive $s$-capacity. For the
considered range of the parameter $s$, we have that the potential of
the extremal measure $\mu_K=\mu_{K,s}$ satisfies the following
(in)equalities
\begin{equation}
U^{\mu_K}_s (\PT{x}) = W_s (K) \quad \text{q.e. on $K$,} \qquad U^{\mu_K}_s (\PT{x})
\leq W_s (K) \quad \text{on $\mathbb{S}^d$.} \label{EqCond}
\end{equation}
This follows trivially from the general theory (see \cite[Chapter
II]{L}) for \linebreak ${d-1\leq s<d}$, with the inequality holding
on the entire space $\mathbb{R}^{d+1}$. To derive \eqref{EqCond} for
the extended range, we observe that for $K=\mathbb{S}^d$ this is
obvious ($\mu_K=\sigma_d$). 

If $\mathbb{S}^d\setminus K$ is non-empty, there is a spherical cap $\Sigma$ that contains $K$. The $s$-potential of $\mu_\Sigma$ equals $W_s(\Sigma)$ everywhere on $\Sigma$, so the measure $\nu \DEF [ W_s(K) / W_s(\Sigma) ] \mu_K$ has a potential that equals $W_s(K)$ on $\Sigma$. Since $U_s^{\mu_K}(\PT{x})\leq W_s(K)$ on $\supp(\mu_K)$ (see \cite[p. 136(b)]{L}), we could derive the inequality in \eqref{EqCond} by comparing the potentials of $\mu_K$ and $\nu$ and applying the
restricted version of the Principle of Domination as given in
\cite[Lemma 5.1]{DS} (for $s=d-2$ we adapt the argument in Lemma 5.1 using \cite[Theorem~1.27]{L}). Since $U^{\mu_K}_s(\PT{x})\geq W_s(K)$ q.e. on $K$ (see \cite[p. 136(a)]{L}), we conclude the equality in \eqref{EqCond} as well.

Clearly, $\mathcal{F}_s (S_Q)=F_Q$ (see \eqref{geqineq} and
\eqref{leqineq}). We now show that for any compact set $K\subset
\mathbb{S}^d$ with positive $s$-capacity we have $\mathcal{F}_s
(K)\geq \mathcal{F}_s (S_Q)$. Indeed, let us integrate
\eqref{geqineq} with respect to $\mu_K$. Since $\mu_K$ has finite
energy, the inequality holds also $\mu_K$-a.e. and we conclude that
\begin{equation*}
\int U_s^{\mu_Q}(\PT{x}) \, \dd\mu_K (\PT{x}) + \int Q(\PT{x}) \, \dd\mu_K (\PT{x}) \geq  F_Q.
\end{equation*}
Using the inequality in \eqref{EqCond} we write
\begin{equation*}
\int U_s^{\mu_Q}(\PT{x}) \, \dd\mu_K (\PT{x}) = \int U_s^{\mu_K}(\PT{x}) \, \dd\mu_Q(\PT{x})\leq W_s(K),
\end{equation*}
which proves our claim.
\end{proof}

Next, we prove sufficient conditions on $Q$, that guarantee that the
extremal support is a spherical zone (cap).

\begin{proof}[Proof of Theorem \ref{ConnThm}] The convexity assumption on
$f(\xi)$ implies that $Q(\PT{z})$ is continuous and the existence
and uniqueness of the extremal measure $\mu_Q$ follows from
standard potential-theoretical arguments (see \cite{Z1}, \cite{Z2}).
The rotational invariance of the external field implies that the
extremal support is also rotationally invariant. Hence, there is
a compact set $A\subset [-1,1]$ and an integrable function $g:A\to
\mathbb{R}^+$, such that the extremal support is given by
\begin{equation*}
\supp(\mu_Q ) = \left\{ (\sqrt{1-u^2} \; \overline{\PT{x}},u)  : u
\in A, \overline{\PT{x}}\in \mathbb{S}^{d-1} \right\},
\end{equation*}
and the extremal measure is
\begin{equation*}
\dd\mu_Q(\PT{x}) = g(u) \dd u \dd\sigma_{d-1}(\overline{\PT{x}}), \qquad u \in A.
\end{equation*}

What we have to show is that $A$ is connected. For this purpose we
adapt the argument given in \cite{MS}. Suppose $A$ is not connected.
Then there is an interval $[\alpha,\beta] \subset (-1,1)$, such that
$[\alpha,\beta]\cap A=\{\alpha,\beta\}$. Let $A^-\DEF A\cap
[-1,\alpha]$ and $A^+\DEF A\cap [\beta,1]$. Letting
\begin{align*}
\PT{x} &= ( \sqrt{1-u^2} \, \overline{\PT{x}}, u ), \qquad u\in A^-\cup A^+, \, \overline{\PT{x}} \in \mathbb{S}^{d-1}, \\
\PT{z} &= ( \sqrt{1-\xi^2} \, \overline{\PT{z}}, \xi ), \qquad
\xi\in(\alpha,\beta), \, \overline{\PT{z}} \in \mathbb{S}^{d-1},
\end{align*}
we represent the weighted $s$-potential as follows:
\begin{align}
U_s^{\mu_Q} (\PT{z})+Q (\PT{z}) &= \int_A g(u) \left( \int_{\mathbb{S}^{d-1}} \frac{ \dd\sigma_{d-1}(\overline{\PT{x}})}{\left|\PT{z}-\PT{x}\right|^s} \right) \, \dd u  + Q(\PT{z}) \notag \\
&\FED \int_{A^-} g(u) \kappa (u,\xi) \, \dd u + \int_{A^+} g(u) \kappa (u,\xi) \, \dd u + f(\xi), \label{EqPot}
\end{align}
where the kernel $\kappa (u,\xi)$ has been evaluated in
\cite[Section 4]{DS} for the case $\xi>u$ ($u\in A^-$) to be
\begin{align}
\kappa (u,\xi)
&\DEF \int_{\mathbb{S}^{d-1}} \frac{\dd \sigma_{d-1}
(\overline{\PT{x}})}{\left| \PT{z} - \PT{x} \right|^s} \label{KappaKernel1''} \\
&=  \left( 1 - u \right)^{-s/2} \left( 1 + \xi \right)^{-s/2}
\Hypergeom{2}{1}{s/2,1-(d-s)/2}{d/2}{\frac{1+u}{1-u}
\frac{1-\xi}{1+\xi}} \label{KappaKernel1'} \\ &= \sum_{k=0}^\infty
\frac{\Pochhsymb{s/2}{k} \Pochhsymb{1-(d-s)/2}{k} \left( 1 + u
\right)^k}{\Pochhsymb{d/2}{k} k! \left( 1 - u \right)^{k+s/2}}
\frac{\left(1-\xi\right)^k}{\left(1+\xi\right)^{k+s/2}}.
\label{KappaKernel1}
\end{align}
By symmetry we derive that when $\xi<u$ ($u\in A^+$)
\begin{equation}
\kappa (u,\xi) = \sum_{k=0}^\infty \frac{\Pochhsymb{s/2}{k}
\Pochhsymb{1-(d-s)/2}{k} \left(1-u\right)^k}{\Pochhsymb{d/2}{k} k!
\left(1+u\right)^{k+s/2}}
\frac{\left(1+\xi\right)^k}{\left(1-\xi\right)^{k+s/2}}.
\label{KappaKernel2}
\end{equation}
It is easy to verify that the functions
% \begin{equation*}
% \frac{\left(1-\xi\right)^k}{\left(1+\xi\right)^{k+s/2}}, \quad
% \frac{\left(1+\xi\right)^k}{\left(1-\xi\right)^{k+s/2}}, \qquad k
% =0, 1, 2, \dots,
% \end{equation*}
\begin{equation*}
\left(1-\xi\right)^k / \left(1+\xi\right)^{k+s/2}, \quad 
\left(1+\xi\right)^k / \left(1-\xi\right)^{k+s/2}, \qquad k
= 0, 1, 2, \dots,
\end{equation*}
are strictly convex for $\xi\in (-1,1)$. Hence, from
\eqref{KappaKernel1} and \eqref{KappaKernel2} we derive that the
kernel $\kappa(u,\xi)$ is a convex function in $\xi$ on
$(\alpha,\beta)$ for any fixed $u\in A^-\cup A^+$. Therefore, using
the convexity of $f(\xi)$ we deduce that the weighted $s$-potential is
strictly convex on $[\alpha,\beta]$. This clearly contradicts the
inequalities \eqref{geqineq} and \eqref{leqineq}, which proves
\eqref{eqsupp}.

Now suppose that, in addition, $f(\xi)$ is also increasing. If
$t_1>-1$, for $u\in [t_1 , t_2]$ and $\xi\in (-1,t_1)$, the kernel
is calculated using \eqref{KappaKernel2}, in which case we easily
obtain that $\partial\kappa(u,\xi)/\partial \xi >0$. This
yields that the weighted $s$-potential is strictly increasing on
$[-1,t_1]$, which contradicts \eqref{geqineq} and \eqref{leqineq}
similarly.
\end{proof}

\begin{proof}[Proof of Theorem \ref{MainThm}] The external field is given by
\begin{equation*}
Q_{\PT{a},q}(\PT{z}) = q \big/ \left| \PT{a} - \PT{z} \right|^s = q
\left| R^2 - 2 R \xi + 1 \right|^{-s/2} \FED f(\xi),
\end{equation*}
where $\PT{z}=(\sqrt{1-\xi^2}\,\overline{\PT{z}},\xi )$, $\xi \in
[-1,1]$, $\overline{\PT{z}}\in \mathbb{S}^{d-1}$. We easily verify
that $f'(\xi)>0$ and $f''(\xi)>0$ for $\xi \in [-1,1]$. According to
Theorem \ref{ConnThm}, the extremal support associated with
$Q_{\PT{a},q}$ is a spherical cap. So, by Theorem \ref{FuncOptThm}
we have to minimize the $\mathcal{F}_s$-functional among all
spherical caps centered at the South Pole.

Recall that (see \eqref{FuncPhi} and Remark \ref{rmk:F.s.EQ.Phi.s})
\begin{equation*}
\mathcal{F}_s (\Sigma_t ) = \Phi_s(t) = W_s(\mathbb{S}^d) \left( 1 + q \left\| \epsilon_t \right\| \right) \big / \left\| \nu_t \right\|.
\end{equation*}
Applying the Quotient Rule and using \eqref{NormEpsB} and \eqref{NuNormB} and the Fundamental Theorem of Calculus, we get (note that $\| \nu_t \| > 0$ for $t>-1$ and $\| \nu_t \|^\prime > 0$ for $-1 < t < 1$)
\begin{align}
\frac{\dd \Phi_{s}}{\dd t} 
&= \frac{q \left\| \epsilon_t \right\|^\prime \left\| \nu_t \right\| - \left( 1 + q \left\| \epsilon_t \right\| \right) \left\| \nu_t \right\|^\prime}{\left\| \nu_t \right\|^2 / W_{s}(\mathbb{S}^d)} = - \frac{\left\| \nu_t \right\|^\prime}{\left\| \nu_t \right\|} \left[ \Phi_{s}(t) - q \, W_s(\mathbb{S}^d) \frac{\left\| \epsilon_t \right\|^\prime}{\left\| \nu_t \right\|^\prime} \right]  \notag \\ 
&= - \frac{\left\| \nu_t \right\|^\prime}{\left\| \nu_t \right\|} \left[ \Phi_{s}(t) - \frac{q \left( R + 1 \right)^{d-s}}{r^{d}} \right] \FED  - \frac{\left\| \nu_t \right\|^\prime}{\left\| \nu_t \right\|} \Delta(t), \label{nec.cond.A}
\end{align}
where $r = r(t) = \sqrt{R^2 - 2 R t + 1}$. 
Observe, that $\Delta(t) \to \infty$ as $t\to-1$. 
Hence, there is a largest $t_0\in(-1,1]$ such that $\Delta(t) > 0$ on $(-1,t_0)$. 
If $t_0=1$, then $\Phi_{s}(t)$ is strictly decreasing on $(-1,1)$ and attains its minimum at $t=1$. 
%
% In this case $\mu_{Q_{\PT{a},q}} = \overline{\eta}_{t}$ and $\supp(\mu_{Q_{\PT{a},q}}) = \mathbb{S}^d$. 
%
We note that $\Delta(1) \geq 0$ is equivalent to the condition in Corollary \ref{Cor}.
If $t_0<1$, then by continuity $\Delta(t_0) = 0$. Clearly, $\Phi_{s}^\prime(t) < 0$ on $(-1,t_0)$ and $\Phi_{s}^\prime(t_0) = 0$. 
Suppose, $\Phi_{s}^\prime(\tau)=0$ for some $\tau\in(-1,1)$. Then $\Delta(\tau) = 0$. 
Applying the product rule we get %to \eqref{nec.cond.A} we obtain
\begin{align*}
\frac{\dd^2 \Phi_{s}}{\dd t^2} (\tau) 
&= - \frac{\left\| \nu_t \right\|^\prime}{\left\| \nu_t \right\|} \left[ \Phi_{s}^\prime(t) - \frac{d\,q \left( R + 1 \right)^{d-s} R}{r^{d+2}} \right]\Bigg|_{t=\tau} = \frac{\left\| \nu_t \right\|^\prime}{\left\| \nu_t \right\|}  \frac{d\,q \left( R + 1 \right)^{d-s} R}{r^{d+2}} \Bigg|_{t=\tau} > 0.
\end{align*}
Hence, any zero of $\Phi_{s}^\prime$ is a minimum of $\Phi_{s}$. Since $\Phi_{s}$ is twice continuously differentiable on $(-1,1)$ (see Lemmas \ref{norm.eps.r} and \ref{LemNuNorm}), the later observation implies that $\Phi_{s}$ has only one local minimum in $(-1,1)$, namely $t_0$, which has to be also a global minimum. Observe, that $\Phi_{s}^\prime(t)<0$ for $t\in(-1,t_0)$ and $\Phi_{s}^\prime(t)>0$ for $t\in(t_0,1)$. 
%
% From \eqref{nec.cond.A} we conclude that $\Phi_{s}(t) - q \left( R + 1 \right)^{d-s} / r^d > 0$ on $(-1,t_0)$ and $\Phi_{s}(t) - q \left( R + 1 \right)^{d-s} / r^d < 0$ on $(t_0,1)$.
%
%
From \eqref{nec.cond.A} we conclude that $\Delta(t) > 0$ on $(-1,t_0)$ and $\Delta(t) < 0$ on $(t_0,1)$.
This shows that $\Phi_s(t)$ has precisely one global minimum in $(-1,1]$, which is either the unique solution $t_0 \in (-1,1)$ of the equation $\Delta(t)=0$ if it exists, or $t_0=1$. Moreover, $\Delta(t) \geq 0$ if and only if $t \leq t_0$. By Lemma \ref{lem4.1} and Remark \ref{rmk:diff.Phi.rhs} we have $t_0 = \max \{ t : \eta_t \geq 0 \}$. 
Clearly, $S_{Q_{\PT{a},q}}=\Sigma_{t_0}$, from the minimization property.
Since the signed equilibrium for $\Sigma_{t_0}$ is a positive measure,
by the uniqueness of the extremal measure we derive that
$\mu_{Q_{\PT{a},q}}=\eta_{t_0}$.
\end{proof}

\section{The weighted $s$-potential of $\eta_t$ on $\mathbb{S}^d\setminus\Sigma_t$: Alternative proof of Theorem \ref{MainThm}}
\label{sec:6}

% Let the cylindrical coordinates of $\PT{z}\in\mathbb{S}^d$ be given by $\PT{z} = ( \sqrt{1-\xi^2}\; \overline{\PT{z}}, \xi)$.
In this section we complete the proof of Theorem \ref{SignEqThm},
namely formula \eqref{eq:weighted.outside} on $\mathbb{S}^d\setminus
\Sigma_t$. The $s$-potential of $\eta_t$ is given by
% \begin{align*}
% U_s^{\eta_t}(\PT{z}) + Q(\PT{z})
% &= \int \frac{\dd \eta_t(\PT{x})}{\left| \PT{z} - \PT{x} \right|^s} + \frac{q}{\left| \PT{z} - \PT{a} \right|^s} \\
% &= \frac{\omega_{d-1}}{\omega_d} \int_{-1}^t \kappa(u,\xi)
% \eta_t^\prime(u)  \left( 1 - u^2 \right)^{d/2-1} \dd u  +
% \frac{q}{\left( R^2 - 2 R \xi + 1 \right)^{s/2}},
% \end{align*}
\begin{equation*}
U_s^{\eta_t}(\PT{z}) = \int \frac{\dd \eta_t(\PT{x})}{\left| \PT{z} - \PT{x} \right|^s} = \frac{\omega_{d-1}}{\omega_d} \int_{-1}^t \kappa(u,\xi) \,
\eta_t^\prime(u)  \left( 1 - u^2 \right)^{d/2-1} \dd u ,
\end{equation*}
where $\PT{z} = ( \sqrt{1-\xi^2}\; \overline{\PT{z}}, \xi)$,
$\xi>t$, and the kernel $\kappa(u,\xi)$ is given in \eqref{KappaKernel1'}.
% \begin{equation*}
% \kappa(u,\xi) = \left( 1 - u \right)^{-s/2} \left( 1 + \xi
% \right)^{-s/2} \Hypergeom{2}{1}{s/2,1-(d-s)/2}{d/2}{\frac{1+u}{1-u}
% \frac{1-\xi}{1+\xi}}.
% \end{equation*}
%
The densities $\epsilon_t^\prime$ and $\nu_t^\prime$ of the balayage measures $\epsilon_t$ and $\nu_t$ in \eqref{eta}
% The densities $\epsilon_t^\prime$ and $\nu_t^\prime$ appearing in the
% representation of  the measure $\eta_t$ given in Theorem
% \ref{SignEqThm} 
have in common that they can be written as (cf.
Lemmas \ref{NuLem} and \ref{EpsilonLem})
\begin{equation}
\gamma_t^\prime(u) = C \left( \frac{1-t}{1-u} \right)^{d/2} \left(
\frac{t-u}{1-t} \right)^{(s-d)/2}
\sum_{n=0}^\infty \frac{\Pochhsymb{d/2}{n}}{\gammafcn(n+1-(d-s)/2)} \left( c_t^2 \frac{t-u}{1-u} \right)^n % \frac{\gammafcn(d/2)}{\gammafcn(d-s/2)}
\end{equation}
with appropriately chosen constants $C$ and $c_t$. Hence, it is
sufficient to study the $s$-potential of $\dd \gamma_t =
\gamma_t^\prime \dd\sigma_d|_{\Sigma_t}$.
% In fact, $C=c_t=1$ in case of $\nu_t^\prime$ and $C = ( 1 / W_s(\mathbb{S}^d) ) ( R + 1 )^{d-s} / r^{d}$, $c_t^2 = ( R - 1 )^2 / r^2$ in case of $\epsilon_t^\prime$ (cf. \eqref{et.r.a}, recall that $r^2 = R^2 - 2 R t + 1$ and $\Pochhsymb{a}{n} = \gammafcn(n+a)/\gammafcn(a)$).

Using the series representation \eqref{KappaKernel1} of $\kappa(u,\xi)$ and integrating
term-wise we get
\begin{align*}
% &\frac{\omega_{d-1}}{\omega_d} \int_{-1}^t \kappa(u,\xi) \eta_t^\prime(u) \left( 1 - u^2 \right)^{d/2-1} \dd u
U_s^{\gamma_t}(\PT{z}) &= \frac{C \, \omega_{d-1} / \omega_d}{\left(
1 + \xi \right)^{s/2}} \sum_{m=0}^\infty \sum_{n=0}^\infty
\frac{\Pochhsymb{s/2}{m}\Pochhsymb{1-(d-s)/2}{m}\Pochhsymb{d/2}{n}}{m!
\Pochhsymb{d/2}{m} \gammafcn(n+1-(d-s)/2)} \left[
\frac{1-\xi}{1+\xi} \right]^m c_t^{2n} \, \mathcal{H}_{m,n}(t;u),
\end{align*}
where $\mathcal{H}_{m,n}(t;u)$ is the integral
\begin{align*}
\mathcal{H}_{m,n}(t;u)
&= \int_{-1}^t \left( \frac{1-t}{1-u} \right)^{d/2} \left( \frac{t-u}{1-t} \right)^{(s-d)/2} \left( \frac{1 + u}{1 - u} \right)^m \left( \frac{t-u}{1-u} \right)^n \frac{\left( 1 - u^2 \right)^{d/2-1}}{ \left( 1 - u \right)^{s/2}} \dd u \\
&= \left( 1 - t \right)^{d-s/2} \int_{-1}^t \frac{\left( t - u
\right)^{n-(d-s)/2} \left( 1 + u \right)^{m+d/2-1}}{\left( 1 - u
\right)^{m+n+1+s/2}} \dd u.
\end{align*}
% A change of variable $(1+t)v=1+u$ leads to an integral
% representation of a Gaussian hypergeometric function,
% \begin{equation*}
% \begin{split}
% &\mathcal{H}_{m,n}(t;u) = 2^{-m-n-1-s/2} \left( 1 - t \right)^{d-s/2} \left( 1 + t \right)^{m+n+s/2} \\
% &\phantom{=\pm\times}\times \int_0^1 v^{m+d/2-1} \left( 1 - v
% \right)^{n+1-(d-s)/2-1} \left( 1 - \frac{1+t}{2} v
% \right)^{-m-n-1-s/2} \dd v,
% \end{split}
% \end{equation*}
% with one upper parameter equal to the lower parameter (cf.
% \cite[Eq.~15.3.1]{ABR} and \cite[Eq.~15.1.8]{ABR}). Thus, after
% simplification
% \begin{equation*}
% \begin{split}
% \mathcal{H}_{m,n}(t;u) &= 2^{-n-1+(d-s)/2} \frac{\gammafcn(m+d/2)\gammafcn(n+1-(d-s)/2)}{\gammafcn(m+n+1+s/2)} \\
% &\phantom{=\pm}\times \left( 1 - t \right)^{-m+(d-s)/2} \left( 1 + t
% \right)^{m+n+s/2}.
% \end{split}
% \end{equation*}
% \begin{equation*}
% \begin{split}
% \mathcal{H}_{m,n}(t;u) &= 2^{-n-1+(d-s)/2} \frac{\gammafcn(d/2)\gammafcn(1-(d-s)/2)}{\gammafcn(1+s/2)} \frac{\Pochhsymb{d/2}{m}\Pochhsymb{1-(d-s)/2}{n}}{\Pochhsymb{1+s/2}{m+n}} \\
% &\phantom{=\pm}\times \left( 1 - t \right)^{-m+(d-s)/2} \left( 1 + t \right)^{m+n+s/2}.
% \end{split}
% \end{equation*}
By \cite[Eq. 2.2.6(9)]{PBMI} 
\begin{equation*}
\mathcal{H}_{m,n}(t;u) = \frac{\gammafcn(m+d/2)\gammafcn(n+1-(d-s)/2)}{\gammafcn(m+n+1+s/2)} \frac{\left( 1 - t \right)^{d-s/2} \left( 1 + t \right)^{m+n+s/2}}{\left( 1 - t \right)^{m+d/2} \left( 1 + t \right)^{n+1-(d-s)/2}}.
\end{equation*}
Putting everything together, we arrive at
\begin{align*}
&U_s^{\gamma_t}(\PT{z}) = 2^{d-s-1} C  \frac{\omega_{d-1}}{\omega_d}
\frac{\gammafcn(d/2)}{\gammafcn(1+s/2)}
\left( \frac{1 - t}{2} \right)^{(d-s)/2} \left( \frac{1+t}{1+\xi} \right)^{s/2} \\
&\phantom{=\pm}\times \sum_{m=0}^\infty \sum_{n=0}^\infty
\frac{\Pochhsymb{s/2}{m}\Pochhsymb{1}{n}\Pochhsymb{1-(d-s)/2}{m}
\Pochhsymb{d/2}{n}}{ \Pochhsymb{1+s/2}{m+n}m!n!} \left(
\frac{1-\xi}{1+\xi} \frac{1 + t}{1 - t} \right)^m \left( c_t^2
\frac{1+t}{2} \right)^n.
\end{align*}
The double sum in the last expression is, in fact,  the series
expansion of the generalized $\HyperF_3$-hypergeometric function
(cf. \cite[Eq. 7.2.4(3)]{PBMIII}) 
\begin{equation*}
\Hypergeom{{}}{3}{a,a^\prime,b,b^\prime}{c}{w,z}  \DEF
\sum_{m=0}^\infty \sum_{n=0}^\infty
\frac{\Pochhsymb{a}{m}\Pochhsymb{a^\prime}{n}\Pochhsymb{b}{m}\Pochhsymb{b^\prime}{n}}{\Pochhsymb{c}{m+n}
m! n!} w^m z^n, \qquad |w|,|z|<1.
\end{equation*}
Moreover, the $\HyperF_3$-function in question is of the form
\cite[Eq. 7.2.4(76)]{PBMIII}
\begin{equation*}
\Hypergeom{{}}{3}{a,c-a,b,c-b}{c}{w,z} = \left( 1 - z
\right)^{a+b-c} \Hypergeom{2}{1}{a,b}{c}{w + z - w z}.
\end{equation*}
Let $r$ be the distance between the  point charge $q$ and any point on
the boundary circle of the spherical cap $\Sigma_t$ (that is
$r^2=R^2-2R t+1$) and $\rho$ be the distance between the point
charge $q$ and $\PT{z}$ on $\mathbb{S}^{d}\setminus\Sigma_t$ (that
is $\rho^2=|\PT{z}-\PT{a}|^2=R^2-2R\xi+1$). For
$C=\gammafcn(d/2)/\gammafcn(d-s/2)$, $c_t=1$ and using
\eqref{omega.ratio}, we have
\begin{equation}
U_s^{\nu_t}(\PT{z}) = W_s(\mathbb{S}^d) A_{s,d} \left(
\frac{1+t}{1+\xi} \right)^{s/2}
\Hypergeom{2}{1}{s/2,1-(d-s)/2}{1+s/2}{\frac{1+t}{1+\xi}}.
\end{equation}
For $C = ( 1 / W_s(\mathbb{S}^d) ) \gammafcn(d/2) / \gammafcn(d-s/2)
( R + 1 )^{d-s} / r^{d}$ and $c_t^2 = ( R - 1 )^2 / r^2$, we get
\begin{equation}
U_s^{\epsilon_t}(\PT{z}) = A_{s,d} \frac{1}{r^{s}} \left(
\frac{1+t}{1+\xi} \right)^{s/2}
\Hypergeom{2}{1}{s/2,1-(d-s)/2}{1+s/2}{\frac{\rho^2}{r^2}\frac{1+t}{1+\xi}}.
\end{equation}
The normalization constant $A_{s,d}$ is given by
\begin{equation*}
A_{s,d} \DEF
\frac{\gammafcn(d/2)}{\gammafcn((d-s)/2)\gammafcn(1+s/2)} = 1 \Big/
\Hypergeom{2}{1}{s/2,1-(d-s)/2}{1+s/2}{1}.
\end{equation*}
(The above last relation holds by \cite[Eq.~15.1.20]{ABR}.) The
relations
\begin{equation*}
\frac{1+t}{1+\xi} = \frac{\left( R + 1 \right)^2 - r^2}{\left( R + 1
\right)^2 - \rho^2},  \qquad \frac{\xi-t}{1+\xi} =
\frac{r^2-\rho^2}{\left( R + 1 \right)^2 - \rho^2}
\end{equation*}
allow to express all formulas in terms of distances to the point
charge $q$ exerting the external field.  Note that the
hypergeometric functions above represent incomplete beta functions
(see \eqref{betafnc}). When using the regularized incomplete beta
function $\mathrm{I}(x;a,b)$ (see
\eqref{regbetafnc}), the $s$-potentials can be also written as
\begin{equation}
U_s^{\nu_t}(\PT{z}) = W_s(\mathbb{S}^d)
\mathrm{I}(\frac{1+t}{1+\xi};  \frac{s}{2}, \frac{d-s}{2}), \qquad
U_s^{\epsilon_t}(\PT{z}) = \frac{1}{\rho^s}
\mathrm{I}(\frac{\rho^2}{r^2} \frac{1+t}{1+\xi}; \frac{s}{2},
\frac{d-s}{2}),
\end{equation}
which are valid for $\PT{z}\in\mathbb{S}^d \setminus \Sigma_t$.
Hence, we obtain
\begin{align*}
U_s^{\eta_t}(\PT{z}) &= \frac{\Phi_s(t)}{W_s(\mathbb{S}^d)}
U_s^{\nu_t}(\PT{z}) - q U_s^{\epsilon_t}(\PT{z}) \\ &= \Phi_s(t) \,
\mathrm{I}\left(\frac{1+t}{1+\xi}; \frac{s}{2}, \frac{d-s}{2}\right)
- q \frac{1}{\rho^s} \mathrm{I}\left(\frac{\rho^2}{r^2}
\frac{1+t}{1+\xi}; \frac{s}{2}, \frac{d-s}{2}\right),
\end{align*}
By means of the functional equation $\mathrm{I}(x;a,b) = 1 -
\mathrm{I}(1-x;b,a)$, it follows that the weighted $s$-potential of
$\eta_t$ for any $-1<t<1$ at $\PT{z}$ in
$\mathbb{S}^d\setminus\Sigma_t$ is given by
% \begin{equation*}
% \begin{split}
% &U_s^{\eta_t}(\PT{z}) + Q(\PT{z}) =
% \Phi_s(t)\mathrm{I}\left(\frac{1+t}{1+\xi}; \frac{s}{2},
% \frac{d-s}{2}\right) + \frac{q}{\rho^s}
% \mathrm{I}\left(\frac{\left(R+1\right)^2}{r^2} \frac{\xi-t}{1+\xi};
% \frac{d-s}{2}, \frac{s}{2}\right).
% \end{split}
% \end{equation*}
% 
% Applying again the aforementioned functional equation we get
\begin{equation*}
\begin{split} %\label{w.pot.1.gen}
&U_s^{\eta_t}(\PT{z}) + Q(\PT{z}) = \Phi_s(t) + \Bigg\{
\frac{q}{\rho^s}  \mathrm{I}(\frac{\left(R+1\right)^2}{r^2}
\frac{\xi-t}{1+\xi}; \frac{d-s}{2}, \frac{s}{2}) - \Phi_s(t)
\mathrm{I}(\frac{\xi-t}{1+\xi}; \frac{d-s}{2}, \frac{s}{2}) \Bigg\},
\end{split}
\end{equation*}
which proves \eqref{eq:weighted.outside}. 

Next, we provide an alternative proof of Theorem \ref{MainThm}.
Using the (series) expansion
\begin{equation*}
\mathrm{I}(z; a, b) = \frac{\gammafcn(a+b)}{\gammafcn(b)} z^a \left(
1 - z \right)^b \HypergeomReg{2}{1}{1,a+b}{a+1}{z},
% = \frac{\gammafcn(a+b)}{\gammafcn(b)} z^a \left( 1 - z \right)^b \sum_{n=0}^\infty \frac{\Pochhsymb{a+b}{n}}{\gammafcn(n+1+a)} \frac{z^n}{n!},
\end{equation*}
we obtain for $\xi>t>-1$ the relation %\eqref{WeightedPot2}
\begin{equation*}
\begin{split}
&U_s^{\eta_t}(\PT{z}) + Q(\PT{z}) = \Phi_s(t) +
\frac{\gammafcn(d/2)}{\gammafcn(s/2)} \left( \frac{\xi-t}{1+\xi} \right)^{(d-s)/2} \left( \frac{1+t}{1+\xi} \right)^{s/2} \\
&\phantom{=}\times \sum_{n=0}^\infty
\frac{\Pochhsymb{d/2}{n}}{\gammafcn(n+1+(d-s)/2)} \left(
\frac{\xi-t}{1+\xi} \right)^n \left\{ \frac{q \left( R + 1
\right)^{d-s}}{r^d} \left[ \frac{R^2 + 2 R + 1}{R^2 - 2 R t + 1}
\right]^n - \Phi_s(t) \right\}.
\end{split}
\end{equation*}
%
%
%
%
% and letting $t$ be the solution of the equilibrium condition $\Phi_s(t) = q (R+1)^{d-s} / r^d$,
% we get after some simplifications
% \begin{equation*}
% \begin{split}
% &U_s^{\eta_t}(\PT{z}) + Q(\PT{z}) = \frac{q\left( R + 1 \right)^{d-s}}{r^{d}} + \frac{\gammafcn(d/2)}{\gammafcn(s/2)} \frac{q\left( R + 1 \right)^{d-s}}{r^{d}} \left( \frac{\xi-t}{1+\xi} \right)^{(d-s)/2} \left( \frac{1+t}{1+\xi} \right)^{s/2} \\
% &\phantom{=}\times \sum_{n=1}^\infty \frac{\Pochhsymb{d/2}{n}}{\gammafcn(n+1+(d-s)/2)} \left( \frac{\xi-t}{1+\xi} \right)^n \left\{ \left[ \frac{R^2 + 2 R + 1}{R^2 - 2 R t + 1} \right]^n - 1 \right\}, \quad \xi>t(>-1).
% \end{split}
% \end{equation*}
If $q (R+1)^{d-s}/r^d \geq \Phi_s(t)$, then the above infinite
series is a positive function for $1\geq \xi > t$. An immediate
consequence in such a case is the inequality
\begin{equation}
U_s^{\eta_t}(\PT{z}) + Q(\PT{z}) > \Phi_s(t), \qquad \PT{z} \in \mathbb{S}^d \setminus \Sigma_t.
\label{WeightedPotIneq}
\end{equation}
In particular, the last relation holds when $t=t_0$ is a solution of
$q (R+1)^{d-s}/r^d = \Phi_s(t)$. But then from Lemma \ref{lem4.1} we
have that the signed equilibrium is a positive measure. Since it
satisfies the Gauss variational (in)equalities \eqref{geqineq} and
\eqref{leqineq}, this is the extremal measure associated with
$Q$. Easily, we derive that $t_0=\max \{ t\ :\ \eta_t \geq 0 \}$.

% \vskip 3mm

\begin{rmk} \label{rmk:derivative.weighted.pot}
An interesting observation is that for $t=t_0$ we could factor
$(\xi-t)/(1+\xi)$ (to get $[(\xi-t)/(1+\xi)]^{1+(d-s)/2}$) and using
product rule, it follows that
\begin{equation}
\frac{\partial}{\partial \xi} \left\{ U_s^{\eta_t}(\PT{z}) +
Q(\PT{z}) \right\} \Big|_{\xi\to t^+} = 0.
\end{equation}
It can be also shown that for $q (R+1)^{d-s}/r^d \neq \Phi_s(t)$ one has
\begin{equation*}
\begin{split}
&\frac{\partial}{\partial \xi} \left\{ U_s^{\eta_t}(\PT{z}) + Q(\PT{z}) \right\} = \frac{\gammafcn(d/2)}{\gammafcn((d-s)/2)\gammafcn(s/2)} \left\{ \frac{q \left( R + 1 \right)^{d-s}}{r^d} - \Phi_s(t) \right\} \\
&\phantom{=\pm}\times \left( 1 + t \right)^{(s-d)/2} \left( \xi - t
\right)^{(d-s)/2-1} + \mathcal{O}( \left( \xi - t \right)^{(d-s)/2})
\qquad \text{as $\xi\to t^+$.}
\end{split}
\end{equation*}
Thus, the partial derivative with respect to $\xi$ of the weighted
$s$-potential of the signed equilibrium $\eta_t$ is singular at the
boundary of $\Sigma_t$ when approaching it from the ``outside'' if
$t$ is not a solution of the equilibrium condition. The sign of this
partial derivative is determined by the difference in curly braces, see Figure \ref{fig1}.
\end{rmk}

\section{The exceptional case $s=d-2$: Proof of Theorems \ref{ExcepThm} and \ref{ExcepThm.log}}
\label{sec:7}

The proof of Theorem \ref{ExcepThm} will be split into several Lemmas.
We first find the $s$-balayage of a point charge
$\PT{y}=(\sqrt{1-v^2} \, \overline{\PT{y}},v)\in \mathbb{S}^d\setminus
\Sigma_t$ onto $\Sigma_t$. Set
\begin{equation*}
\epsilon_{\PT{y}} = \epsilon_{\PT{y},t,d-2} \DEF \bal_{d-2}(\delta_{\PT{y}},\Sigma_t).
\end{equation*}
To determine $\epsilon_{\PT{y}}$ we proceed as in \cite[Section
3]{DS} (see also \cite[Chapter IV]{L}). We apply an inversion
(stereographical projection) with center $\PT{y}$ and radius
$\sqrt{2}$. The image of $\mathbb{S}^d$ is a hyperplane passing
through the origin. The image of $\Sigma_t$ is a hyperdisc of radius
$\tau=\sqrt{1-t^2}/(v-t)$. The $(d-2)$-extremal measure on this
$d$-dimensional hyperdisc is the normalized (unit) uniform surface
measure on its boundary $\dd \lambda^*(\PT{x}^*) = \tau^{d-1} \dd \sigma_{d-1} ((\PT{x}^*-\PT{b}^*) / \tau)$, where $\PT{b}^*$ is the center of this hyperdisc. The potential of $\lambda^*$ is found to be
\begin{equation*}
U_{d-2}^{\lambda^*}(\PT{x}^*) = \tau \int_{\mathbb{S}^{d-1}} \frac{\dd \sigma_{d-1} ((\PT{x}^*-\PT{b}^*) / \tau)}{\left| \left( \PT{z}^* - \PT{b}^* \right) / \tau - \left( \PT{x}^* - \PT{b}^* \right) / \tau \right|^{d-2}} = \tau \, W_{d-2}(\mathbb{S}^{d-1}) = \tau.
\end{equation*}
 Using the Kelvin transformation of this measure as given in Section 2.1 (cf.
\eqref{KelMeas} and \eqref{KelPot} with $R^2-1=2$), we compute that
\begin{equation}
\dd \epsilon_{\PT{y}} (\PT{x}) = 2 \left( v - t \right) \left( 1 - t^2 \right)^{d/2-1} \frac{\dd \sigma_{d-1}(\overline{\PT{x}})}{\left| \PT{x} - \PT{y} \right|^d}, \qquad \PT{x}\in \partial \Sigma_t. \label{BalY}
\end{equation}

In \cite[Section 3, Eq.~(3.12)]{DS} the corresponding point
charge balayage was calculated for $d-2<s<d$,
\begin{equation}
\dd \epsilon_{\PT{y},s} (\PT{x}) = \frac{2\sin(\pi(d-s)/2)}{\pi} \left( \frac{v-t}{t-u} \right)^{(d-s)/2} \left( 1 - u^2 \right)^{d/2-1} \frac{\dd
u \; \dd \sigma_{d-1}(\overline{\PT{x}})}{\left| \PT{x} - \PT{y} \right|^d}, \quad \PT{x} \in  \Sigma_t. \label{BalYs}
\end{equation}

The following lemma establishes the relationship between
$\epsilon_{\PT{y},s}$ and $\epsilon_{\PT{y}}$.

\begin{lem} \label{WeakConv}
Let $d \geq 3$.
Let $\displaystyle{ \dd \gamma_s \DEF \frac{\sin(\pi(d-s)/2)}{\pi(t-u)^{(d-s)/2}} \dd u}$, $-1\leq u\leq t$. Then
$\|\gamma_s\|\to 1$ and $\gamma_s \stackrel{*}{\to} \delta_t$, as
$s\to (d-2)^+$. Consequently, $\epsilon_{\PT{y},s} \stackrel{*}{\to} \epsilon_{\PT{y}}$, as $s\to (d-2)^+$.
\end{lem}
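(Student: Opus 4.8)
The plan is to reduce everything to the one-dimensional measures $\gamma_s$, prove the two claims about them first, and then transfer to $\epsilon_{\PT{y},s}$ by a factorization. Throughout set $\alpha \DEF (d-s)/2$, so that $\alpha\in(0,1)$ and $\alpha\to 1^-$ as $s\to(d-2)^+$. First I would compute the total mass exactly: the substitution $w=t-u$ gives
\[
\|\gamma_s\| = \frac{\sin(\pi\alpha)}{\pi}\int_0^{1+t} w^{-\alpha}\,\dd w = \frac{\sin(\pi\alpha)}{\pi}\,\frac{(1+t)^{1-\alpha}}{1-\alpha}.
\]
Writing $\sin(\pi\alpha)=\sin(\pi(1-\alpha))$ and using $\sin x / x\to 1$, the factor $\sin(\pi\alpha)/[\pi(1-\alpha)]\to 1$, while $(1+t)^{1-\alpha}\to 1$ since $1+t>0$; hence $\|\gamma_s\|\to 1$.

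For the weak$^*$ convergence $\gamma_s\stackrel{*}{\to}\delta_t$, I would show that the mass escapes every fixed left-neighborhood of $t$. The same computation yields, for any fixed $\delta\in(0,1+t)$,
\[
\gamma_s([-1,t-\delta]) = \frac{\sin(\pi\alpha)}{\pi(1-\alpha)}\bigl[(1+t)^{1-\alpha} - \delta^{1-\alpha}\bigr]\longrightarrow 1\cdot(1-1)=0.
\]
Since $\gamma_s\geq 0$, $\|\gamma_s\|\to 1$, and all the mass concentrates at $t$, the standard splitting $\int\phi\,\dd\gamma_s-\phi(t)=\int(\phi-\phi(t))\,\dd\gamma_s+\phi(t)(\|\gamma_s\|-1)$, combined with the mass escape above and the uniform continuity of $\phi\in C([-1,t])$ (the first integral being bounded by $\eps\|\gamma_s\|+2\|\phi\|_\infty\gamma_s([-1,t-\delta])$), gives $\int\phi\,\dd\gamma_s\to\phi(t)$, that is $\gamma_s\stackrel{*}{\to}\delta_t$.

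For the final assertion I would use the factorization read off from \eqref{BalYs}:
\[
\dd\epsilon_{\PT{y},s}(\PT{x}) = W_s(u,\overline{\PT{x}})\,\dd\gamma_s(u)\,\dd\sigma_{d-1}(\overline{\PT{x}}),\qquad W_s(u,\overline{\PT{x}}) \DEF 2(v-t)^{(d-s)/2}\frac{(1-u^2)^{d/2-1}}{|\PT{x}-\PT{y}|^d}.
\]
Because $\PT{y}\in\mathbb{S}^d\setminus\Sigma_t$ has altitude $v>t$, the last coordinate alone forces $|\PT{x}-\PT{y}|\geq v-t>0$ for $\PT{x}\in\overline{\Sigma_t}$, so $W_s$ is continuous and bounded on $[-1,t]\times\mathbb{S}^{d-1}$ (here $d\geq 3$ keeps $(1-u^2)^{d/2-1}$ bounded near $u=-1$); moreover its only $s$-dependence is the constant prefactor $(v-t)^{(d-s)/2}\to v-t$, whence $W_s\to W_{d-2}$ uniformly. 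From $\gamma_s\stackrel{*}{\to}\delta_t$ one gets $\gamma_s\otimes\sigma_{d-1}\stackrel{*}{\to}\delta_t\otimes\sigma_{d-1}$, with bounded total mass. I would then invoke the elementary fact that if $\mu_s\stackrel{*}{\to}\mu$ with $\sup_s\|\mu_s\|<\infty$ and $W_s\to W$ uniformly with $W$ continuous, then $W_s\mu_s\stackrel{*}{\to}W\mu$ (split $\int\phi W_s\,\dd\mu_s-\int\phi W\,\dd\mu$ into the error $\int\phi(W_s-W)\,\dd\mu_s$, controlled by $\|W_s-W\|_\infty\|\mu_s\|$, and the term $\int\phi W\,\dd\mu_s-\int\phi W\,\dd\mu\to 0$). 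This yields $\epsilon_{\PT{y},s}\stackrel{*}{\to}W_{d-2}\cdot(\delta_t\otimes\sigma_{d-1})$, and evaluating $W_{d-2}$ at $u=t$ (i.e. on $\partial\Sigma_t$) reproduces exactly the boundary measure $\epsilon_{\PT{y}}$ of \eqref{BalY}.

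The computations are essentially routine; the only points requiring care are recognizing the factorization $\epsilon_{\PT{y},s}=W_s\cdot(\gamma_s\otimes\sigma_{d-1})$, so that the delta-concentration of $\gamma_s$ transports the interior density of \eqref{BalYs} onto the boundary measure \eqref{BalY}, and checking that the reweighting factor $W_s$ converges \emph{uniformly}, which hinges on $\PT{y}\notin\Sigma_t$ keeping the kernel $|\PT{x}-\PT{y}|^{-d}$ bounded on the closed cap.
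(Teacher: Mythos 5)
Your proof is correct and follows essentially the same route as the paper: the exact evaluation of $\|\gamma_s\|$, the $\varepsilon$--$\delta$ splitting showing the mass of $\gamma_s$ concentrates at $u=t$, and the transfer to $\epsilon_{\PT{y},s}$ via the factorization of \eqref{BalYs} over $\dd\gamma_s(u)\,\dd\sigma_{d-1}(\overline{\PT{x}})$ with a bounded continuous weight (the paper simply integrates out $\overline{\PT{x}}$ first and applies $\gamma_s\stackrel{*}{\to}\delta_t$ to the resulting continuous function of $u$, which is your ``uniform weights'' lemma in disguise). No gaps.
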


\begin{proof} We compute %the norm of the measure $\gamma_s$:
\begin{equation*}
\left\| \gamma_s \right\| = \int_{-1}^t \frac{\sin(\pi(d-s)/2)}{\pi(t-u)^{(d-s)/2}} \dd u=
\frac{\sin(\pi\left( 1 - (d-s)/2 \right))}{\pi(1-(d-s)/2)}(1+t)^{1-(d-s)/2} .
\end{equation*}
Clearly, $\| \gamma_s \|\leq 2$ and $\|\gamma_s\|\to 1$ as $s\to (d-2)^+$. Let $f$ be a
continuous function on $[-1,t]$. Then what we have to prove is that
\begin{equation*}
\lim_{s\to (d-2)^+} \int_{-1}^t \frac{\sin(\pi(d-s)/2)}{\pi(t-u)^{(d-s)/2}} f(u) \dd u = f(t).
\end{equation*}
By $\|\gamma_s\|\to 1$ as $s\to (d-2)^+$, this is equivalent to
\begin{equation} \label{int.difference}
\lim_{s\to (d-2)^+} \int_{-1}^t \frac{\sin(\pi(d-s)/2)}{\pi(t-u)^{(d-s)/2}} \left[ f(u) - f(t) \right] \dd u = 0.
\end{equation}

Let $\epsilon>0$. From the continuity of $f$ it follows that there exists a $\delta>0$ such that $| f(u) - f(t) | < \epsilon / 4$ whenever $| u - t | < \delta$. For $s$ sufficiently close to $(d-2)^+$ we estimate that
\begin{equation*}
\left| \int_{-1}^{t-\delta} \left[ f(u) - f(t) \right] \frac{\sin(\pi(d-s)/2)}{\pi(t-u)^{(d-s)/2}} \dd u \right| 
\leq 2\left\| f \right\|_{[-1,t]} \frac{\sin \pi (d-s)/2 }{\pi \delta} < \epsilon / 2,
\end{equation*}
and
\begin{equation*}
\left| \int_{t-\delta}^t \left[ f(u) - f(t) \right] \frac{\sin(\pi(d-s)/2)}{\pi(t-u)^{(d-s)/2}} \dd u \right| 
< \frac{\epsilon}{4} \left\| \gamma_s \right\| \leq \epsilon / 2.
\end{equation*}
Therefore, 
\begin{equation*}
\left| \int_{-1}^t \left[ f(u) - f(t) \right] \frac{\sin(\pi(d-s)/2)}{\pi(t-u)^{(d-s)/2}} \dd u \right| < \epsilon,
\end{equation*}
which proves \eqref{int.difference}.

Suppose now that $f(\PT{x})$, where $\PT{x}=(\sqrt{1-u^2}\,\overline{\PT{x}},u)$, is a continuous function on $\mathbb{S}^d$. Then as $s\to(d-2)^+$ we have 
\begin{align*}
&\lim \int_{\Sigma_t} f \dd \epsilon_{\PT{y},s} = \lim \int_{-1}^t \left( \int_{\mathbb{S}^{d-1}} f(\PT{x}) \frac{\dd \sigma_{d-1}(\overline{\PT{x}})}{\left| \PT{x} - \PT{y} \right|^d} \right) 2 \left( v - t \right)^{(d-s)/2} \left( 1 - u^2 \right)^{d/2-1} \dd \gamma_s(u) \\
&\phantom{=}= 2 \left( v - t \right) \left( 1 - t^2 \right)^{d/2-1} \left( \int_{\mathbb{S}^{d-1}} f(\PT{x}) \frac{\dd \sigma_{d-1}(\overline{\PT{x}})}{\left| \PT{x} - \PT{y} \right|^d} \right)\Bigg|_{u=t} = \int_{\Sigma_t} f \dd \epsilon_{\PT{y}},
\end{align*}
which completes the proof of the lemma.
\end{proof}

Next, we determine the balayage measures in \eqref{barBal}. We shall use that $\beta_t$, which is the unit charge uniformly distributed on the boundary of $\Sigma_t$, has $(d-2)$-potential
\begin{equation} \label{boundary.s.pot}
U_{d-2}^{\beta_t}(\PT{z}) = \int_{\mathbb{S}^{d-1}, u=t} \frac{\dd\sigma_{d-1}(\overline{\PT{x}})}{\left| \PT{z} - \PT{x} \right|^{d-2}} = 
\begin{cases}
\left( 1 - t \right)^{1-d/2} \left( 1 + \xi \right)^{1-d/2} & \text{if $\xi\geq t$,} \\
\left( 1 + t \right)^{1-d/2} \left( 1 - \xi \right)^{1-d/2} & \text{if $\xi< t$,}
\end{cases} 
\end{equation}
where $\PT{z}=(\sqrt{1-\xi^2}\,\overline{\PT{z}},u)\in\mathbb{S}^d$. This follows from \eqref{KappaKernel1} and \eqref{KappaKernel2}.

\begin{lem} \label{NuLem:s.EQ.d-2} Let $d \geq 3$. The measure $\overline{\nu}_t=\bal_{d-2}(\sigma_d,\Sigma_t)$ is given by
\begin{equation}
\dd \overline{\nu}_{t}(\PT{x}) = \dd \sigma_{d} \big|_{\Sigma_t}(\PT{x}) + W_{d-2}(\mathbb{S}^{d}) \frac{1-t}{2} \left( 1 - t^2 \right)^{d/2-1} \dd \delta_t(u) \dd\sigma_{d-1}(\overline{\PT{x}}). \label{NuR:s.EQ.d-2}
\end{equation}
% where the constant $q_{\overline{\nu}_{t}}$ is given by
% \begin{equation}
% q_{\overline{\nu}_{t}} = W_{d-2}(\mathbb{S}^{d}) \frac{1-t}{2} \left( 1 - t^2 \right)^{d/2-1}.
% \end{equation}
The $(d-2)$-potential of $\overline{\nu}_{t}$ is given by
\begin{align}
U_{d-2}^{\overline{\nu}_t}(\PT{z}) &= W_{d-2}(\mathbb{S}^d), \qquad \PT{z} \in \Sigma_t,  \label{U.d-2.ubar.nu.t.in} \\
U_{d-2}^{\overline{\nu}_t}(\PT{z}) &= W_{d-2}(\mathbb{S}^d) \left( 1 + t \right)^{d/2-1} \left( 1 + \xi \right)^{1-d/2} < W_{d-2}(\mathbb{S}^d), \qquad \PT{z} \in \mathbb{S}^d \setminus \Sigma_t. \label{U.d-2.ubar.nu.t.notin}
\end{align}
\end{lem}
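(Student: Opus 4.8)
The plan is to identify $\overline{\nu}_t$ through the properties that characterize balayage—its $(d-2)$-potential agrees with $U_{d-2}^{\sigma_d}\equiv W_{d-2}(\mathbb{S}^d)$ quasi-everywhere on $\Sigma_t$ and does not exceed it on $\mathbb{S}^d$—combined with the structural information already available from the point-charge sweep \eqref{BalY}. First I would split $\sigma_d=\sigma_d\big|_{\Sigma_t}+\sigma_d\big|_{\mathbb{S}^d\setminus\Sigma_t}$ and use additivity of balayage. Since $\sigma_d\big|_{\Sigma_t}$ is supported on $\Sigma_t$ and has finite $(d-2)$-energy, $\bal_{d-2}(\sigma_d\big|_{\Sigma_t},\Sigma_t)=\sigma_d\big|_{\Sigma_t}$. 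For the complementary part I would invoke the representation of the balayage of a measure as the superposition of the balayages of its Dirac components, giving $\bal_{d-2}(\sigma_d\big|_{\mathbb{S}^d\setminus\Sigma_t},\Sigma_t)=\int_{\mathbb{S}^d\setminus\Sigma_t}\epsilon_{\PT{y}}\,\dd\sigma_d(\PT{y})$. By \eqref{BalY} every $\epsilon_{\PT{y}}$ is carried by $\partial\Sigma_t$, and rotational invariance about the polar axis forces the integral to be a constant multiple of the uniform boundary measure $\beta_t$. This yields the structural form $\overline{\nu}_t=\sigma_d\big|_{\Sigma_t}+c_t\,\beta_t$ with an as-yet-undetermined constant $c_t\ge 0$.

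The computational heart is that for $s=d-2$ the kernel $\kappa(u,\xi)$ in \eqref{KappaKernel1} and \eqref{KappaKernel2} collapses to a single term: since $1-(d-s)/2=0$, the Pochhammer symbol $\Pochhsymb{1-(d-s)/2}{k}$ vanishes for every $k\ge1$, leaving $\kappa(u,\xi)=(1-u)^{1-d/2}(1+\xi)^{1-d/2}$ when $\xi>u$ and $\kappa(u,\xi)=(1+u)^{1-d/2}(1-\xi)^{1-d/2}$ when $\xi<u$. Using \eqref{Measure.Decomposition}, the potential of $\sigma_d\big|_{\Sigma_t}$ becomes an elementary integral: the factor $(1\mp u)^{1-d/2}(1-u^2)^{d/2-1}$ simplifies to $(1\pm u)^{d/2-1}$, which integrates to a power of $1\mp t$. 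Carrying this out and inserting the constants from \eqref{omega.ratio} and \eqref{SphereEnergy} (so that $(\omega_{d-1}/\omega_d)(2/d)=\tfrac12 W_{d-2}(\mathbb{S}^d)$), I expect, writing $\PT{z}=(\sqrt{1-\xi^2}\,\overline{\PT{z}},\xi)$,
\begin{equation*}
U_{d-2}^{\sigma_d|_{\Sigma_t}}(\PT{z})=W_{d-2}(\mathbb{S}^d)\Big[1-\tfrac12(1-t)^{d/2}(1-\xi)^{1-d/2}\Big],\qquad \xi\le t,
\end{equation*}
and $U_{d-2}^{\sigma_d|_{\Sigma_t}}(\PT{z})=\tfrac12 W_{d-2}(\mathbb{S}^d)(1+t)^{d/2}(1+\xi)^{1-d/2}$ for $\xi\ge t$, while $U_{d-2}^{\beta_t}$ is read off directly from \eqref{boundary.s.pot}.

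With both potentials in hand the rest is algebra. Because $\overline{\nu}_t$ is a balayage it preserves the potential on $\Sigma_t$, so $U_{d-2}^{\overline{\nu}_t}(\PT{z})=W_{d-2}(\mathbb{S}^d)$ there. Writing $U_{d-2}^{\overline{\nu}_t}=U_{d-2}^{\sigma_d|_{\Sigma_t}}+c_t\,U_{d-2}^{\beta_t}$ on $\Sigma_t$, the common factor $(1-\xi)^{1-d/2}$ cancels and solving for $c_t$ yields exactly $c_t=W_{d-2}(\mathbb{S}^d)\tfrac{1-t}{2}(1-t^2)^{d/2-1}$, the boundary coefficient in \eqref{NuR:s.EQ.d-2}; this simultaneously proves \eqref{U.d-2.ubar.nu.t.in}. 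Substituting this $c_t$ into the outside expressions ($\xi\ge t$) and factoring $(1+t)^{d/2-1}$ out of $(1+t)^{d/2}+(1-t)(1+t)^{d/2-1}=2(1+t)^{d/2-1}$ collapses the sum to $W_{d-2}(\mathbb{S}^d)(1+t)^{d/2-1}(1+\xi)^{1-d/2}$, which is \eqref{U.d-2.ubar.nu.t.notin}; since $\xi>t$ and $d/2-1>0$ this is strictly below $W_{d-2}(\mathbb{S}^d)$, confirming the diminishing property.

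The step I expect to require the most care is the structural decomposition itself, rather than the calculus. One must confirm that at the exceptional value $s=d-2$ the superposition identity for balayage and the identity $\bal_{d-2}(\sigma_d|_{\Sigma_t},\Sigma_t)=\sigma_d|_{\Sigma_t}$ remain valid (here balayage is no longer mass-preserving and a genuine charge on $\partial\Sigma_t$ appears), and that integrating the boundary-supported measures $\epsilon_{\PT{y}}$ against the rotationally invariant $\sigma_d|_{\mathbb{S}^d\setminus\Sigma_t}$ truly returns a uniform multiple of $\beta_t$. Once the form $\sigma_d|_{\Sigma_t}+c_t\beta_t$ is secured, the potential-preservation property of balayage pins down $c_t$ as above, and the two displayed potential identities follow from the elementary integrals made possible by the collapse of $\kappa$.
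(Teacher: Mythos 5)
Your proposal is correct and follows essentially the same route as the paper: split $\sigma_d=\sigma_d|_{\Sigma_t}+\sigma_d|_{\mathbb{S}^d\setminus\Sigma_t}$, represent the balayage of the exterior part as a superposition of the boundary-supported point balayages $\epsilon_{\PT{y}}$ from \eqref{BalY}, and exploit the collapse of the kernel $\kappa(u,\xi)$ to a single term at $s=d-2$. The only (harmless) difference is that the paper computes the boundary coefficient $q_{\overline{\nu}_t}$ by directly evaluating $\int_t^1(1-v)^{d/2-1}\,\dd v$ via \eqref{int.kappa}, whereas you pin down $c_t$ by imposing potential preservation on $\Sigma_t$; both yield $W_{d-2}(\mathbb{S}^d)\tfrac{1-t}{2}(1-t^2)^{d/2-1}$ and the same exterior potential.
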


\begin{rmk}
It is interesting that the $(d-2)$-potential of $\overline{\nu}_t$ can be expressed using the potential of $\beta_t$ (cf. \eqref{boundary.s.pot})
\begin{equation}
U_{d-2}^{\overline{\nu}_t}(\PT{z}) = W_{d-2}(\mathbb{S}^d) \left( 1 - t^2 \right)^{d/2-1} U_{d-2}^{\beta_t}(\PT{z}), \qquad \PT{z} \in \mathbb{S}^d \setminus \Sigma_t.
\end{equation}
\end{rmk}

% From \eqref{boundary.s.pot} it is easy to see that $( 1 - t^2 )^{d/2-1} U_{d-2}^{\beta_t}(\PT{z}) < 1$ for $\PT{z} \in \mathbb{S}^d \setminus \Sigma_t$.
% 
\begin{rmk}
In the proof of Lemma \ref{NuLem:s.EQ.d-2} and Lemma \ref{EpsLem:s.EQ.d-2} below we shall obtain the balayage measures constructively. Alternatively, one could get this from the potential (in)equalities \eqref{U.d-2.ubar.nu.t.in}, \eqref{U.d-2.ubar.nu.t.notin} and \eqref{U.d-2.ubar.eps.t.tin}, \eqref{U.d-2.ubar.eps.t.notin}.
\end{rmk}

\begin{proof}[Proof of Lemma \ref{NuLem:s.EQ.d-2}]
It is well-known that
\begin{equation} \label{bal.s.EQ.d-2}
\bal_{d-2}(\sigma_d,\Sigma_t) = \sigma_d\big|_{\Sigma_t} + \bal_{d-2}(\sigma_d\big|_{\mathbb{S}^d\setminus\Sigma_t},\Sigma_t).
\end{equation}
By the principle of superposition we have for $\PT{x}\in\partial\Sigma_t$
\begin{align*}
&\bal_{d-2}(\sigma_d\big|_{\mathbb{S}^d\setminus\Sigma_t},\Sigma_t) 
= \int_{\mathbb{S}^d\setminus\Sigma_t} \epsilon_{\PT{y}}(\PT{x}) \dd \sigma_d(\PT{y}) \\
&\phantom{=}= \frac{\omega_{d-1}}{\omega_d} \int_t^1 \left( \int_{\mathbb{S}^{d-1}} \epsilon_{\PT{y}}(\PT{x}) \dd \sigma_{d-1}(\overline{\PT{y}}) \right) \left( 1 - v^2 \right)^{d/2-1} \dd v \\
&\phantom{=}= 2 \frac{\omega_{d-1}}{\omega_d} \left( 1 - t^2 \right)^{d/2-1} \left(\int_t^1 \left( 1 - v^2 \right)^{d/2-1} \left( v - t \right) \int_{\mathbb{S}^{d-1}} \frac{\dd \sigma_{d-1}(\overline{\PT{y}})}{\left|\PT{x}-\PT{y}\right|^d} \dd v \right) \sigma_{d-1}(\overline{\PT{x}}).
\end{align*}
The inner integral can be computed using \eqref{KappaKernel2} with $s=d$
\begin{equation} \label{int.kappa}
\int_{\mathbb{S}^{d-1}} \frac{\dd \sigma_{d-1}(\overline{\PT{y}})}{\left|\PT{x}-\PT{y}\right|^d} = \frac{1}{2\left( v - t \right) \left( 1 + v \right)^{d/2-1}\left( 1 - t \right)^{d/2-1}}.
\end{equation}
Hence,
\begin{align*}
&\bal_{d-2}(\sigma_d\big|_{\mathbb{S}^d\setminus\Sigma_t},\Sigma_t) 
= \frac{\omega_{d-1}}{\omega_d} \left( 1 + t \right)^{d/2-1} \left(\int_t^1 \left( 1 - v \right)^{d/2-1} \dd v \right) \sigma_{d-1}(\overline{\PT{x}}) \\
&\phantom{equals}= \frac{2}{d} \frac{\omega_{d-1}}{\omega_d} \left( 1 + t \right)^{d/2-1} \left( 1 - t \right)^{d/2} \sigma_{d-1}(\overline{\PT{x}}) \FED q_{\overline{\nu}_{t}} \sigma_{d-1}(\overline{\PT{x}}), \qquad \PT{x}\in\partial\Sigma_t.
\end{align*}
Using $W_{d-2}(\mathbb{S}^d) = (4 / d) (\omega_{d-1} / \omega_d )$ and \eqref{bal.s.EQ.d-2} we derive \eqref{NuR:s.EQ.d-2}.

Relation \eqref{U.d-2.ubar.nu.t.in} holds because of the balayage properties. 
% To derive \eqref{U.d-2.ubar.nu.t.notin} we compute (see \eqref{boundary.s.pot})
Using \eqref{boundary.s.pot} we have 
\begin{align*}
U_{d-2}^{\overline{\nu}_t}(\PT{z}) 
&= \int_{\Sigma_t} \frac{\dd \sigma_d(\PT{x})}{\left| \PT{z} - \PT{x}\right|^{d-2}} + q_{\overline{\nu}_t} U_{d-2}^{\beta_t}(\PT{z}) \\
&= \frac{\omega_{d-1}}{\omega_d} \int_{-1}^t \left( 1 - u^2 \right)^{d/2-1} \int_{\mathbb{S}^{d-1}} \frac{\dd \sigma_{d-1}(\overline{\PT{x}})}{\left| \PT{z} - \PT{x} \right|^{d-2}} \, \dd u +  \frac{q_{\overline{\nu}_t}}{ \left( 1 - t \right)^{d/2-1} \left( 1 + \xi \right)^{d/2-1}} \\
&= \frac{\omega_{d-1}}{\omega_d} \int_{-1}^t \frac{\left( 1 - u^2 \right)^{d/2-1}}{\left( 1 - u \right)^{d/2-1} \left( 1 + \xi \right)^{d/2-1}}  \, \dd u + \frac{q_{\overline{\nu}_t}}{ \left( 1 - t \right)^{d/2-1} \left( 1 + \xi \right)^{d/2-1}} \\
&= W_{d-2}(\mathbb{S}^d) \frac{1+t}{2} \frac{\left( 1 + t \right)^{d/2-1}}{\left( 1 + \xi \right)^{d/2-1}} + W_{d-2}(\mathbb{S}^d) \frac{1-t}{2} \frac{\left( 1 + t \right)^{d/2-1}}{\left( 1 + \xi \right)^{d/2-1}}, %\\
% &= W_{d-2}(\mathbb{S}^d) \frac{\left( 1 + t \right)^{d/2-1}}{\left( 1 + \xi \right)^{d/2-1}}, \qquad \PT{z} \in \mathbb{S}^d \setminus \Sigma_t.
\end{align*}
from which follows \eqref{U.d-2.ubar.nu.t.notin}
\end{proof}

\begin{lem} \label{EpsLem:s.EQ.d-2} Let $d\geq3$. The measure $\overline{\epsilon}_t=\bal_{d-2}(\delta_{\PT{a}},\Sigma_t)$ is given by
\begin{equation}
\dd \overline{\epsilon}_{t}(\PT{x}) = \overline{\epsilon}_{t}^{\prime}(u) \dd \sigma_{d} \big|_{\Sigma_t}(\PT{x}) + q_{\overline{\epsilon}_{t}} \dd \delta_t(u) \dd\sigma_{d-1}(\overline{\PT{x}}), \label{EpsR:s.EQ.d-2}
\end{equation}
where the density $\overline{\epsilon}_{t}^{\prime}(u)$ and the constant $q_{\overline{\epsilon}_{t}}$ are given by
\begin{equation}
\overline{\epsilon}_{t}^{\prime}(u) \DEF \frac{\left( R^2 - 1 \right)^2 / W_{d-2}(\mathbb{S}^d)}{\left( R^2 - 2 R u + 1 \right)^{d/2+1}}, \quad q_{\overline{\epsilon}_{t}} = \frac{1-t}{2} \frac{\left( R + 1 \right)^2}{r^d} \left( 1 - t^2 \right)^{d/2-1}.
\end{equation}
The $(d-2)$-potential of $\overline{\epsilon}_{t}$ is given by
% \begin{align}
% U_{d-2}^{\overline{\epsilon}_t}(\PT{z}) &= \frac{1}{\left| \PT{z} - \PT{a} \right|^{d-2}} = U_{d-2}^{\delta_{\PT{a}}}(\PT{z}), \qquad \PT{z} \in \Sigma_t, \label{U.d-2.ubar.eps.t.tin} \\
% U_{d-2}^{\overline{\epsilon}_t}(\PT{z}) &= \frac{1}{r^{d-2}} \frac{\left( 1 + t \right)^{d/2-1}}{\left( 1 + \xi \right)^{d/2-1}} < U_{d-2}^{\delta_{\PT{a}}}(\PT{z}), \qquad \PT{z} \in \mathbb{S}^d \setminus \Sigma_t. \label{U.d-2.ubar.eps.t.notin}
% \end{align}
\begin{align}
U_{d-2}^{\overline{\epsilon}_t}(\PT{z}) &= \left| \PT{z} - \PT{a} \right|^{2-d} = U_{d-2}^{\delta_{\PT{a}}}(\PT{z}), \qquad \PT{z} \in \Sigma_t, \label{U.d-2.ubar.eps.t.tin} \\
U_{d-2}^{\overline{\epsilon}_t}(\PT{z}) &= r^{2-d} \left( 1 + t \right)^{d/2-1} \left( 1 + \xi \right)^{1-d/2} < U_{d-2}^{\delta_{\PT{a}}}(\PT{z}), \qquad \PT{z} \in \mathbb{S}^d \setminus \Sigma_t. \label{U.d-2.ubar.eps.t.notin}
\end{align}
\end{lem}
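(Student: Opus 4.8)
The plan is to mirror the structure of the proof of Lemma \ref{NuLem:s.EQ.d-2}, reducing the balayage of the exterior point charge $\delta_{\PT{a}}$ to an iterated balayage through the sphere. First I would note that the computation in the proof of Theorem \ref{SignEq} is valid for the whole range $0<s<d$, so at $s=d-2$ it identifies
\[
\bal_{d-2}(\delta_{\PT{a}},\mathbb{S}^d) = \frac{\left( R^2 - 1 \right)^2}{W_{d-2}(\mathbb{S}^d)\left|\PT{x}-\PT{a}\right|^{d+2}}\,\dd\sigma_d = \overline{\epsilon}_{t}^{\prime}(u)\,\dd\sigma_d ,
\]
since $\left|\PT{x}-\PT{a}\right|^{d+2}=\left( R^2 - 2 R u + 1 \right)^{d/2+1}$. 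Thus the absolutely continuous part of $\overline{\epsilon}_t$ is already in hand. Using the transitivity of balayage together with the splitting of \eqref{bal.s.EQ.d-2}, I would write $\overline{\epsilon}_t = \overline{\epsilon}_{t}^{\prime}\,\sigma_d\big|_{\Sigma_t} + \bal_{d-2}\!\big(\overline{\epsilon}_{t}^{\prime}\,\sigma_d\big|_{\mathbb{S}^d\setminus\Sigma_t},\Sigma_t\big)$, which accounts for the first term of \eqref{EpsR:s.EQ.d-2}.

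The second term is the sweep of the complementary restriction onto $\Sigma_t$. By the principle of superposition it equals $\int_{\mathbb{S}^d\setminus\Sigma_t}\epsilon_{\PT{y}}\,\overline{\epsilon}_{t}^{\prime}(v)\,\dd\sigma_d(\PT{y})$, where $\epsilon_{\PT{y}}=\bal_{d-2}(\delta_{\PT{y}},\Sigma_t)$ is the exterior point-charge balayage of \eqref{BalY}. The crucial feature of the case $s=d-2$ is that each $\epsilon_{\PT{y}}$ is supported entirely on the boundary circle $\partial\Sigma_t$; hence the whole exterior sweep collapses onto $\partial\Sigma_t$ and produces a term of the form $q_{\overline{\epsilon}_{t}}\,\dd\delta_t(u)\,\dd\sigma_{d-1}(\overline{\PT{x}})$. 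Carrying out the inner angular integration by \eqref{int.kappa} (exactly as in Lemma \ref{NuLem:s.EQ.d-2}) and evaluating the remaining radial integral $\int_t^1 \left( 1 - v \right)^{d/2-1}\left( R^2 - 2 R v + 1 \right)^{-d/2-1}\dd v$ by means of the antiderivative $\left( 1 - v \right)^{d/2}\left( R^2 - 2 R v + 1 \right)^{-d/2}$, together with $W_{d-2}(\mathbb{S}^d)=(4/d)(\omega_{d-1}/\omega_d)$, would yield the stated $q_{\overline{\epsilon}_{t}}=\tfrac{1-t}{2}\left( R + 1 \right)^2 r^{-d}\left( 1 - t^2 \right)^{d/2-1}$.

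For the potentials, the value on $\Sigma_t$ in \eqref{U.d-2.ubar.eps.t.tin} is immediate from the defining property of balayage, which preserves the $(d-2)$-potential of $\delta_{\PT{a}}$ there. For $\PT{z}\in\mathbb{S}^d\setminus\Sigma_t$ I would compute $U_{d-2}^{\overline{\epsilon}_t}$ directly as the density contribution plus $q_{\overline{\epsilon}_{t}}\,U_{d-2}^{\beta_t}(\PT{z})$, using \eqref{boundary.s.pot} for the latter. Again $s=d-2$ simplifies matters: the upper parameter $1-(d-s)/2$ in \eqref{KappaKernel1} vanishes, so $\kappa(u,\xi)=\left( 1 - u \right)^{1-d/2}\left( 1 + \xi \right)^{1-d/2}$ for $\xi>u$, and the density integral reduces to $\int_{-1}^t \left( 1 + u \right)^{d/2-1}\left( R^2 - 2 R u + 1 \right)^{-d/2-1}\dd u$, evaluated by the antiderivative $\left( 1 + u \right)^{d/2}\left( R^2 - 2 R u + 1 \right)^{-d/2}$. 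Adding the two contributions and invoking the telescoping identity $\left( R - 1 \right)^2\left( 1 + t \right)+\left( R + 1 \right)^2\left( 1 - t \right)=2r^2$ collapses everything to $r^{2-d}\left( 1 + t \right)^{d/2-1}\left( 1 + \xi \right)^{1-d/2}$, which is \eqref{U.d-2.ubar.eps.t.notin}; the strict inequality is the diminishing property of balayage off $\Sigma_t$.

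The main obstacle is the boundary-concentration phenomenon: one must justify that at the endpoint $s=d-2$ the exterior balayage places \emph{all} of its mass on $\partial\Sigma_t$, and then extract the exact coefficient $q_{\overline{\epsilon}_{t}}$ from a superposition integral whose integrand is singular at $\PT{x}=\PT{y}$. Justifying the Fubini interchange in the superposition step and the convergence of the singular angular integral \eqref{int.kappa} are the delicate points; everything downstream amounts to the bookkeeping of the two elementary antiderivatives above.
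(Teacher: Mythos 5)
Your proposal follows the paper's proof essentially step for step: identify $\bal_{d-2}(\delta_{\PT{a}},\mathbb{S}^d)$ from Theorem \ref{SignEq} at $s=d-2$, sweep its restriction to $\mathbb{S}^d\setminus\Sigma_t$ onto $\Sigma_t$ by superposition of the boundary-supported measures $\epsilon_{\PT{y}}$ via \eqref{int.kappa}, and verify the exterior potential by combining the surface and boundary contributions through the identity $(R-1)^2(1+t)+(R+1)^2(1-t)=2r^2$. The only cosmetic difference is that you evaluate the two radial integrals by explicit antiderivatives where the paper uses the substitutions $w=(R-1)^2/(1-v)+2R$ and $w=(R+1)^2/(1+u)-2R$; both yield the stated $q_{\overline{\epsilon}_t}$ and \eqref{U.d-2.ubar.eps.t.notin}.
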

% The left-hand side in \eqref{U.d-2.ubar.eps.t.notin} is strictly monotonically increasing as $\xi\to t^+$. The limit value is $1/r^{d-2}$ which is less than $U_{d-2}^{\delta_{\PT{a}}}(\PT{z})$ for $\PT{z} \in \mathbb{S}^d \setminus \Sigma_t$. 
%\frac{1}{r^{d-2}} \left( 1 - t^2 \right)^{d/2-1} U_{d-2}^{\beta_t}(\PT{z})

\begin{proof}
As in the proof of Theorem \ref{SignEq} we evaluate 
\begin{equation}
\overline{\epsilon}_{\PT{a}} \DEF \bal_{d-2}(\delta_{\PT{a}}, \mathbb{S}^d), \qquad \dd \overline{\epsilon}_{\PT{a}}(\PT{x}) = \overline{\epsilon}_{t}^{\prime}(u) \dd \sigma_{d}(\PT{x}).
\end{equation}
Using balayage in steps and \eqref{bal.s.EQ.d-2} we get
\begin{equation}
\bal_{d-2}(\delta_{\PT{a}}, \Sigma_t) = \overline{\epsilon}_{\PT{a}}\big|_{\Sigma_t} + \bal_{d-2}(\overline{\epsilon}_{\PT{a}}\big|_{\mathbb{S}^d\setminus\Sigma_t}, \Sigma_t).
\end{equation}
By the principle of superposition we have for $\PT{x}\in\partial\Sigma_t$
\begin{align*}
&\bal_{d-2}(\overline{\epsilon}_{\PT{a}}\big|_{\mathbb{S}^d\setminus\Sigma_t}, \Sigma_t) 
= \int_{\mathbb{S}^d\setminus\Sigma_t} \overline{\epsilon}_{t}^{\prime}(v) \epsilon_{\PT{y}}(\PT{x}) \dd \sigma_d(\PT{y}) \\
&\phantom{=}= \frac{\omega_{d-1}}{\omega_d} \int_t^1 \left( \int_{\mathbb{S}^{d-1}} \overline{\epsilon}_{t}^{\prime}(v) \epsilon_{\PT{y}}(\PT{x}) \dd \sigma_{d-1}(\overline{\PT{y}}) \right) \left( 1 - v^2 \right)^{d/2-1} \dd v \\
&\phantom{=}= 2 \frac{\omega_{d-1}}{\omega_d} \left( 1 - t^2 \right)^{d/2-1} \left(\int_t^1 \left( 1 - v^2 \right)^{d/2-1} \overline{\epsilon}_{t}^{\prime}(v) \left( v - t \right) \int_{\mathbb{S}^{d-1}} \frac{\dd \sigma_{d-1}(\overline{\PT{y}})}{\left|\PT{x}-\PT{y}\right|^d} \dd v \right) \sigma_{d-1}(\overline{\PT{x}}).
\end{align*}
Applying \eqref{int.kappa} yields
\begin{align}
&\bal_{d-2}(\overline{\epsilon}_{\PT{a}}\big|_{\mathbb{S}^d\setminus\Sigma_t}, \Sigma_t) \notag \\
&\phantom{=}= \frac{\omega_{d-1}}{\omega_d} \frac{\left( R^2 - 1 \right)^2}{W_{d-2}(\mathbb{S}^d)} \left( 1 + t \right)^{d/2-1}  \left( \int_t^1 \frac{\left( 1 - v \right)^{d/2-1}}{\left( R^2 - 2 R v + 1 \right)^{d/2+1}} \dd v \right) \sigma_{d-1}(\overline{\PT{x}}) \notag \\
&\phantom{=}= \frac{2}{d} \frac{\omega_{d-1}}{\omega_d} \frac{\left( R^2 - 1 \right)^2}{W_{d-2}(\mathbb{S}^d)} \left( 1 + t \right)^{d/2-1} \frac{\left( 1 - t \right)^{d/2} }{\left( R^2 - 2 R t + 1 \right)^{d/2}} \sigma_{d-1}(\overline{\PT{x}}) = q_{\overline{\epsilon}_{t}} \sigma_{d-1}(\overline{\PT{x}}), \notag 
\end{align}
where we used the change of variable $w = (R-1)^2 / (1 - v) + 2 R$ to compute the integral in the parenthesis.

Similar computations with the substitution $w = (R+1)^2 / (1 + u) - 2 R$ (see also \eqref{boundary.s.pot}) lead to \eqref{U.d-2.ubar.eps.t.notin}. That is, for $\PT{z} \in \mathbb{S}^d\setminus\Sigma_t$ one has
\begin{align*}
&U_{d-2}^{\overline{\epsilon}_t}(\PT{z}) 
= \int_{\Sigma_t} \frac{\overline{\epsilon}_{t}^{\prime}(u) \dd \sigma_d(\PT{x})}{\left| \PT{z} - \PT{x}\right|^{d-2}} + q_{\overline{\epsilon}_t} U_{d-2}^{\beta_t}(\PT{z}) \\
&\phantom{}= \frac{\omega_{d-1}}{\omega_d} \frac{\left( R^2 - 1 \right)^2}{W_{d-2}(\mathbb{S}^d)} \int_{-1}^t \frac{\left( 1 - u^2 \right)^{d/2-1}}{\left( R^2 - 2 R u + 1 \right)^{d/2+1}}\int_{\mathbb{S}^{d-1}} \frac{\dd \sigma_{d-1}(\overline{\PT{x}})}{\left| \PT{z} - \PT{x} \right|^{d-2}} \, \dd u + q_{\overline{\epsilon}_t} U_{d-2}^{\beta_t}(\PT{z})  \\
&\phantom{}= \frac{\omega_{d-1}}{\omega_d} \frac{\left( R^2 - 1 \right)^2}{W_{d-2}(\mathbb{S}^d)} \int_{-1}^t \frac{\left( 1 + u \right)^{d/2-1}}{\left( R^2 - 2 R u + 1 \right)^{d/2+1} \left( 1 + \xi \right)^{d/2-1}}  \, \dd u +  q_{\overline{\epsilon}_t} U_{d-2}^{\beta_t}(\PT{z}) \\ %\frac{q_{\overline{\epsilon}_t}}{ \left( 1 - t \right)^{d/2-1} \left( 1 + \xi \right)^{d/2-1}} \\
&\phantom{}= \frac{2}{d} \frac{\omega_{d-1}}{\omega_d} \frac{\left( R - 1 \right)^2}{W_{d-2}(\mathbb{S}^d) r^d} \frac{\left( 1 + t \right)^{d/2}}{\left( 1 + \xi \right)^{d/2-1}} + \frac{1-t}{2} \frac{\left( R + 1 \right)^2}{r^d} \frac{\left( 1 + t \right)^{d/2-1}}{\left( 1 + \xi \right)^{d/2-1}} \\
&\phantom{}= \frac{1}{r^{d-2}} \frac{\left( 1 + t \right)^{d/2-1}}{\left( 1 + \xi \right)^{d/2-1}} \left[ \frac{\left( R - 1 \right)^2}{R^2-2R t+1} \frac{1+t}{2}  + \frac{1-t}{2} \frac{\left( R + 1 \right)^2}{R^2-2R t+1} \right] = \frac{1}{r^{d-2}} \frac{\left( 1 + t \right)^{d/2-1}}{\left( 1 + \xi \right)^{d/2-1}}.
\end{align*}

As in the proof of Lemma \ref{NuLem:s.EQ.d-2} the balayage properties imply Equation \eqref{U.d-2.ubar.eps.t.tin}.
\end{proof}

The weak$^*$ convergence in \eqref{weak.star.conv} is shown next.

\begin{lem}
Let $t\in(-1,1)$ be fixed. Then 
\begin{equation}
\nu_{t,s} \stackrel{*}{\longrightarrow} \overline{\nu}_t , \qquad
\quad \epsilon_{t,s} \stackrel{*}{\longrightarrow}
\overline{\epsilon}_t, \qquad \text{as $s\to(d-2)^+$}.
\end{equation}
\end{lem}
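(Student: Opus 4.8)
The plan is to exploit the superposition structure of balayage together with the single-point result of Lemma \ref{WeakConv}. Recall the balayage-in-steps identity \eqref{bal.s.EQ.d-2}, valid for every $d-2\le s<d$,
\[
\bal_s(\mu,\Sigma_t) = \mu\big|_{\Sigma_t} + \bal_s\big(\mu\big|_{\mathbb{S}^d\setminus\Sigma_t},\Sigma_t\big),
\]
and the superposition principle $\bal_s(\mu,\Sigma_t)=\int \bal_s(\delta_{\PT{y}},\Sigma_t)\,\dd\mu(\PT{y})$. Writing $\epsilon_{\PT{a},s}\DEF\bal_s(\delta_{\PT{a}},\mathbb{S}^d)$, which by the proof of Theorem \ref{SignEq} is absolutely continuous with density $(R^2-1)^{d-s}/[W_s(\mathbb{S}^d)\,|\PT{x}-\PT{a}|^{2d-s}]$ with respect to $\sigma$, these facts give, for $d-2<s<d$,
\[
\nu_{t,s} = \sigma\big|_{\Sigma_t} + \int_{\mathbb{S}^d\setminus\Sigma_t}\epsilon_{\PT{y},s}\,\dd\sigma(\PT{y}),\qquad \epsilon_{t,s} = \epsilon_{\PT{a},s}\big|_{\Sigma_t} + \int_{\mathbb{S}^d\setminus\Sigma_t}\epsilon_{\PT{y},s}\,\dd\epsilon_{\PT{a},s}(\PT{y}),
\]
with $\epsilon_{\PT{y},s}$ the point-mass balayage of \eqref{BalYs}.

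Testing against $f\in C(\mathbb{S}^d)$, I would treat the restriction and the superposition terms separately. The restriction terms converge by dominated convergence: the density of $\epsilon_{\PT{a},s}$ converges pointwise on $\Sigma_t$ to $\overline{\epsilon}_t^\prime(u)$ (the density in Lemma \ref{EpsLem:s.EQ.d-2}) and is uniformly bounded for $s$ near $d-2$ since $|\PT{y}-\PT{a}|\ge R-1>0$, while $\sigma|_{\Sigma_t}$ does not depend on $s$. For the superposition terms, Lemma \ref{WeakConv} yields, for each fixed $\PT{y}\in\mathbb{S}^d\setminus\Sigma_t$, the pointwise limit $\int f\,\dd\epsilon_{\PT{y},s}\to\int f\,\dd\epsilon_{\PT{y}}$ as $s\to(d-2)^+$; it remains to pass this limit through the outer integral.

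The main obstacle is to produce a $\PT{y}$-integrable majorant for $|\int f\,\dd\epsilon_{\PT{y},s}|\le\|f\|_{\infty}\,\|\epsilon_{\PT{y},s}\|$ that is uniform in $s$ near $d-2$, the difficulty being that $\epsilon_{\PT{y},s}$ concentrates on $\partial\Sigma_t$ and its exponent degenerates precisely as $s\to(d-2)^+$ and $\PT{y}\to\partial\Sigma_t$. Using \eqref{BalYs}, the closed form \eqref{int.kappa} extended to arbitrary $u\le t<v$ (namely $\int_{\mathbb{S}^{d-1}}|\PT{x}-\PT{y}|^{-d}\,\dd\sigma_{d-1}(\overline{\PT{x}})=[2(v-u)(1+v)^{d/2-1}(1-u)^{d/2-1}]^{-1}$), and the crude bound $v-u\ge v-t$, I would obtain
\[
\|\epsilon_{\PT{y},s}\| \le \frac{\sin(\pi(d-s)/2)}{\pi\,(1-(d-s)/2)}\,\frac{2^{d/2-1}}{(1+v)^{d/2-1}}\left(\frac{1+t}{v-t}\right)^{1-(d-s)/2}.
\]
Since $\sin(\pi(d-s)/2)/[\pi(1-(d-s)/2)]\to1$ and, once $s$ is close enough to $d-2$, $1-(d-s)/2\le 1/2$, the right-hand side is dominated by $C\max\{1,(v-t)^{-1/2}\}$. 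Multiplied by the Jacobian $(1-v^2)^{d/2-1}$ of $\dd\sigma(\PT{y})$ this is integrable on $(t,1]$, and because $\epsilon_{\PT{a},s}$ has uniformly bounded density it is integrable against $\dd\epsilon_{\PT{a},s}(\PT{y})$ as well.

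Dominated convergence then gives
\[
\int f\,\dd\nu_{t,s}\to\int_{\Sigma_t}f\,\dd\sigma+\int_{\mathbb{S}^d\setminus\Sigma_t}\Big(\int f\,\dd\epsilon_{\PT{y}}\Big)\,\dd\sigma(\PT{y}),
\]
and likewise for $\epsilon_{t,s}$ with $\sigma$ replaced by $\overline{\epsilon}_{\PT{a}}=\bal_{d-2}(\delta_{\PT{a}},\mathbb{S}^d)$ (for which one also uses $\epsilon_{\PT{a},s}\to\overline{\epsilon}_{\PT{a}}$ in density, absorbed into the same majorant). Finally I would identify these limits with $\overline{\nu}_t$ and $\overline{\epsilon}_t$ by invoking the very superposition computations already carried out in the proofs of Lemmas \ref{NuLem:s.EQ.d-2} and \ref{EpsLem:s.EQ.d-2}, where $\int_{\mathbb{S}^d\setminus\Sigma_t}\epsilon_{\PT{y}}\,\dd\sigma(\PT{y})$ and $\int_{\mathbb{S}^d\setminus\Sigma_t}\epsilon_{\PT{y}}\,\dd\overline{\epsilon}_{\PT{a}}(\PT{y})$ were shown to equal the uniform boundary components $q_{\overline{\nu}_t}\,\sigma_{d-1}$ and $q_{\overline{\epsilon}_t}\,\sigma_{d-1}$ of $\overline{\nu}_t$ and $\overline{\epsilon}_t$. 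Since $f$ was arbitrary, this establishes the asserted weak$^*$ convergence.
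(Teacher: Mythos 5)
Your proof is correct and follows essentially the same route as the paper's: the superposition representation $\bal_s(\mu,\Sigma_t)=\mu\big|_{\Sigma_t}+\int_{\mathbb{S}^d\setminus\Sigma_t}\epsilon_{\PT{y},s}\,\dd\mu(\PT{y})$ combined with the pointwise weak$^*$ convergence $\epsilon_{\PT{y},s}\stackrel{*}{\to}\epsilon_{\PT{y}}$ from Lemma \ref{WeakConv}. The only difference is that you explicitly supply the $s$-uniform integrable majorant for $\|\epsilon_{\PT{y},s}\|$ (and the two-step balayage through $\epsilon_{\PT{a},s}$ for the second measure) needed to justify passing the limit through the outer integral, details the paper compresses into ``follows easily.''
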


\begin{proof}
The result follows easily from the weak$^*$ convergence $\epsilon_{\PT{y},s}\stackrel{*}{\longrightarrow}\epsilon_{\PT{y}}$ as $s\to(d-2)^+$ and the following representation valid for any measure $\mu$ on $\mathbb{S}^d$:
\begin{equation}
\bal_s(\mu, \Sigma_t)(\PT{x}) = \mu\big|_{\Sigma_t}(\PT{x}) + \int_{\mathbb{S}^d\setminus\Sigma_t} \epsilon_{\PT{y},s}(\PT{x}) \dd \mu(\PT{y}).
\end{equation}
\end{proof}

The norms $\|\overline{\nu}_t\|$ and $\|\overline{\epsilon}_t\|$ can be obtained from Lemmas \ref{norm.eps.r} and \ref{LemNuNorm} by taking the limit $s\to(d-2)^+$ (which is justified by the weak$^*$  convergence shown in Lemma \ref{WeakConv}).

\begin{lem} \label{lem:norms.s.EQ.d-2}
Let $d\geq3$. Then
% \begin{equation}
% \left\| \overline{\epsilon}_t \right\| = \frac{2^{1-d}\gammafcn(d)}{\gammafcn(d/2+1)\gammafcn(d/2-1)} \frac{\left(R+1\right)^2}{W_{d-2}(\mathbb{S}^d)} \int_{-1}^t \frac{\left( 1 + u \right)^{d/2-2}\left( 1 - u \right)^{d/2}}{\left( R^2 - 2 R u + 1 \right)^{d/2}} \dd u
% \end{equation}
% \begin{equation}
% \left\| \overline{\epsilon}_t \right\| = \frac{d-2}{4} \left(R+1\right)^2 \int_{-1}^t \frac{\left( 1 + u \right)^{d/2-2}\left( 1 - u \right)^{d/2}}{\left( R^2 - 2 R u + 1 \right)^{d/2}} \dd u \label{NormEpsB.2}
% \end{equation}
% and
% % \begin{equation}
% % \left\| \overline{\nu}_t \right\| = \frac{2^{1-d} \gammafcn(d)}{\gammafcn(d/2+1)\gammafcn(d/2-1)} \int_{-1}^t \left( 1 + u \right)^{d/2-2} \left( 1 - u \right)^{d/2} \dd u.
% % \end{equation}
% \begin{equation}
% \left\| \overline{\nu}_t \right\| = \frac{d-2}{4} W_{d-2}(\mathbb{S}^d) \int_{-1}^t \left( 1 + u \right)^{d/2-2} \left( 1 - u \right)^{d/2} \dd u. \label{NuNormB.2}
% \end{equation}
\begin{align}
\left\| \overline{\epsilon}_t \right\| &= \frac{d-2}{4} \left(R+1\right)^2 \int_{-1}^t \frac{\left( 1 + u \right)^{d/2-2}\left( 1 - u \right)^{d/2}}{\left( R^2 - 2 R u + 1 \right)^{d/2}} \dd u, \label{NormEpsB.2} \\
\left\| \overline{\nu}_t \right\| &= \frac{d-2}{4} W_{d-2}(\mathbb{S}^d) \int_{-1}^t \left( 1 + u \right)^{d/2-2} \left( 1 - u \right)^{d/2} \dd u. \label{NuNormB.2}
\end{align}

\end{lem}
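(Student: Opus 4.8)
The plan is to pass to the limit $s\to(d-2)^+$ in the closed-form expressions \eqref{NormEpsB} and \eqref{NuNormB} for $\|\epsilon_{t,s}\|$ and $\|\nu_{t,s}\|$ furnished by Lemmas \ref{norm.eps.r} and \ref{LemNuNorm}. Two separate facts must be reconciled: that these limits exist and equal the right-hand sides of \eqref{NormEpsB.2} and \eqref{NuNormB.2}, and that they genuinely coincide with the norms $\|\overline{\epsilon}_t\|$ and $\|\overline{\nu}_t\|$ of the balayage measures constructed at $s=d-2$ in Lemmas \ref{EpsLem:s.EQ.d-2} and \ref{NuLem:s.EQ.d-2}. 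First I would record the latter: since $\epsilon_{t,s}=\bal_s(\delta_{\PT{a}},\Sigma_t)$ and $\nu_{t,s}=\bal_s(\sigma,\Sigma_t)$ are positive measures all carried by the \emph{fixed} compact cap $\Sigma_t$, their total masses are obtained by integrating the constant function $1$, which is continuous on $\Sigma_t$. Hence the weak$^*$ convergences $\epsilon_{t,s}\stackrel{*}{\longrightarrow}\overline{\epsilon}_t$ and $\nu_{t,s}\stackrel{*}{\longrightarrow}\overline{\nu}_t$ from \eqref{weak.star.conv} (established via Lemma \ref{WeakConv}) immediately give $\|\epsilon_{t,s}\|\to\|\overline{\epsilon}_t\|$ and $\|\nu_{t,s}\|\to\|\overline{\nu}_t\|$ as $s\to(d-2)^+$, so it suffices to evaluate the limit of the explicit formulas.

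Next I would argue that, for $d\geq3$, the right-hand sides of \eqref{NormEpsB} and \eqref{NuNormB} are continuous in $s$ at $s=d-2$, so that the limit is taken simply by setting $s=d-2$. The Gamma factors $\gammafcn(d)$, $\gammafcn(d-s/2)$, $\gammafcn(s/2)$ remain finite and nonzero there; in particular $\gammafcn(s/2)\to\gammafcn((d-2)/2)$ is finite because $(d-2)/2\geq1/2>0$. The $u$-integrals converge by dominated convergence, the integrand $(1+u)^{s/2-1}(1-u)^{d-s/2-1}$ being dominated near $s=d-2$ by an integrable function on $[-1,t]$. It is precisely here that the hypothesis $d\geq3$ enters, guaranteeing that the limiting factor $(1+u)^{d/2-2}$ is integrable at the endpoint $u=-1$; for $d=2$ this integral diverges, which is exactly why that case is excluded.

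It then remains to simplify the constants at $s=d-2$. Using $\gammafcn\bigl((d+2)/2\bigr)=(d/2)\gammafcn(d/2)$ and $\gammafcn\bigl((d-2)/2\bigr)=\gammafcn(d/2)/(d/2-1)$ in the common prefactor $2^{1-d}\gammafcn(d)/[\gammafcn(d-s/2)\gammafcn(s/2)]$, together with the value of $W_{d-2}(\mathbb{S}^d)$ from \eqref{SphereEnergy}, a short computation collapses this prefactor to $\tfrac{d-2}{4}W_{d-2}(\mathbb{S}^d)$, yielding \eqref{NuNormB.2}. Since the prefactor of $\|\epsilon_{t,s}\|$ differs from that of $\|\nu_{t,s}\|$ only by the factor $(R+1)^{d-s}/W_s(\mathbb{S}^d)$ (as already noted in the proof of Lemma \ref{LemNuNorm}), at $s=d-2$ the $\epsilon$-prefactor becomes $\tfrac{d-2}{4}W_{d-2}(\mathbb{S}^d)\cdot(R+1)^2/W_{d-2}(\mathbb{S}^d)=\tfrac{d-2}{4}(R+1)^2$, while the integrand becomes $(1+u)^{d/2-2}(1-u)^{d/2}/(R^2-2Ru+1)^{d/2}$, giving \eqref{NormEpsB.2}. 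The only genuine obstacle is justifying the passage of the limit inside the $u$-integral, which hinges entirely on integrability at the South-Pole endpoint and hence on $d\geq3$; once that is secured, the rest is routine Gamma-function bookkeeping.
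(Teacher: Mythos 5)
Your proposal is correct and follows essentially the same route as the paper, which justifies the lemma in a single sentence by passing to the limit $s\to(d-2)^+$ in \eqref{NormEpsB} and \eqref{NuNormB} and invoking the weak$^*$ convergence of Lemma \ref{WeakConv}; your elaboration (convergence of total masses from weak$^*$ convergence on the fixed compact cap, dominated convergence in the $u$-integral using $d\geq3$, and the Gamma-function simplification of the prefactors to $\tfrac{d-2}{4}W_{d-2}(\mathbb{S}^d)$ and $\tfrac{d-2}{4}(R+1)^2$) is accurate and fills in exactly the details the paper leaves implicit.
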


\begin{proof}[Completion of the proof of Theorem \ref{ExcepThm}]
% It remains to show that the function $\overline{\Phi}_{d-2}(t)$ has precisely one global minimum on $(-1,1]$. 
Proceeding as in the proof of Theorem \ref{MainThm}, but using now 
% \begin{equation*}
% \overline{\Phi}_{d-2}(t) = W_{d-2}(\mathbb{S}^d) \left( 1 + q \left\| \overline{\epsilon}_t \right\| \right) \big/ \left\| \overline{\nu}_t \right\|
% \end{equation*}
($r = r(t) = \sqrt{R^2 - 2 R t + 1}$) %(with $\| \overline{\nu}_t \| > 0$ for $t>-1$ and $\| \overline{\nu}_t \|^\prime > 0$ for $|t|<1$)
% \begin{equation*}
% \frac{\dd \overline{\Phi}_{d-2}}{\dd t} = - \frac{\left\| \overline{\nu}_t \right\|^\prime}{\left\| \overline{\nu}_t \right\|} \left[ \overline{\Phi}_{d-2}(t) - \frac{q \left( R + 1 \right)^{2}}{r^{d}} \right] \FED  - \frac{\left\| \overline{\nu}_t \right\|^\prime}{\left\| \overline{\nu}_t \right\|} \Delta(t), %\label{nec.cond}
% \end{equation*}
\begin{equation*}
\overline{\Phi}_{d-2}^\prime(t) = - \left\| \overline{\nu}_t \right\|^\prime \big/ \left\| \overline{\nu}_t \right\|  \left[ \overline{\Phi}_{d-2}(t) - q \left( R + 1 \right)^{2} / r^{d} \right] \FED  - \left\| \overline{\nu}_t \right\|^\prime / \left\| \overline{\nu}_t \right\|  \Delta(t), %\label{nec.cond}
\end{equation*}
it follows that the global minimum of $\overline{\Phi}_{d-2}$ is either the unique solution $t_0\in(-1,1)$ of the equation $\Delta(t)=0$, or $t_0=1$.
In particular, $\Delta(t) \geq 0$ if and only if $t \leq t_0$.

The explicite form \eqref{etabar} follows from Lemmas \ref{NuLem:s.EQ.d-2} and \ref{EpsLem:s.EQ.d-2}. 
% By Lemmas \ref{NuLem:s.EQ.d-2} and \ref{EpsLem:s.EQ.d-2} we can write with $\rho \DEF \sqrt{R^2 - 2 R u + 1}$
% \begin{align*}
% \dd \overline{\eta}_t(\PT{x}) &= \frac{1}{W_{d-2}(\mathbb{S}^d)} \left[ \overline{\Phi}_{d-2}(t) - \frac{q \left( R + 1 \right)^2}{r^d} \frac{r^d}{\rho^d} \frac{\left( R - 1 \right)^2}{\rho^2}  \right] \dd \sigma_{d}\big|_{\Sigma_t}(\PT{x}) \\
% &\phantom{=}+ \frac{1-t}{2} \left( 1 - t^2 \right)^{d/2-1} \left[ \overline{\Phi}_{d-2}(t) - \frac{q \left( R + 1 \right)^2}{r^d} \right] \dd \beta_t(\PT{x}), \qquad \PT{x} \in \Sigma_t.
% \end{align*}
If $\overline{\eta}_t\geq0$ then $\Delta(t) \geq 0$, so $t\leq t_0$. On the other hand, it is easy to see that if $t=t_0$, then $\overline{\eta}_{t_0}$ given in \eqref{etabarzero} is $\geq 0$ because of $(R-1)^2 < R^2 - 2 R t_0 + 1 < R^2 - 2 R u + 1$.
% \begin{equation}
% \dd \overline{\eta}_{t_0}(\PT{x}) = \frac{\overline{\Phi}_{d-2}(t_0)}{W_{d-2}(\mathbb{S}^d)} \left[ 1 - \frac{r^d}{\rho^d} \frac{\left( R - 1 \right)^2}{\rho^2}  \right] \dd \sigma_{d}\big|_{\Sigma_{t_0}}(\PT{x}) \geq 0, 
% \end{equation}
% because $r\leq\rho$ and $(R-1)^2 < R^2 - 2 R u + 1$. 
Therefore, we have that $t_0 = \max\{ t : \overline{\eta}_t \geq 0 \}$, $\mu_{\overline{Q}_{\PT{a},q}}=\overline{\eta}_{t_0}$, and $\supp(\mu_{\overline{Q}_{\PT{a},q}}) = \Sigma_{t_0}$.
\end{proof}

The proof of Theorem \ref{ExcepThm.log} is also split into several lemmas.

We must check that Theorem \ref{ConnThm} also holds in the case $d=2$ and $s=0$. Then we can make use of the fact that the support $S_{\overline{Q}_{\PT{a},q}}$ of the extremal measure on $\mathbb{S}^2$ associated with the external logarithmic field $\overline{Q}_{\PT{a},q}$ is a spherical cap.

\begin{proof}[Adaptation of the proof of Theorem \ref{ConnThm} for $d=2$ and $s=0$]
Theorem \ref{ConnThm} can be extended to hold for $d=2$ and $s=0$. Instead of the kernel $\kappa(u,\xi)$ given in \eqref{KappaKernel1''} one has to consider
\begin{align}
\kappa_0(u,\xi) &\DEF \int_{\mathbb{S}^1} \log \frac{1}{\left| \PT{z} - \PT{x} \right|} \dd \sigma_1(\overline{\PT{x}}) 
= - \frac{1}{2} \frac{1}{\pi} \int_{-1}^1 \frac{\log\left( 2 - 2 u \xi - 2 \sqrt{1 - u^2} \sqrt{1 - \xi^2} \, \tau \right)}{\sqrt{1-\tau^2}} \dd \tau \notag \\
% &\phantom{equals}= - \frac{1}{2} \left[ \log \left( 2 - 2 u \xi + 2 \sqrt{\left( 1 - u \xi \right)^2 - \left( 1 - u^2 \right) \left( 1 - \xi^2 \right)} \right) - \log 2 \right] \notag \\
&\phantom{equals}= - \frac{1}{2} \log \left( 1 - u \xi + \left| \xi - u \right| \right) = \begin{cases} - \frac{1}{2} \log \left( 1 + \xi \right) - \frac{1}{2} \log \left( 1 - u \right) & \xi \geq u, \\  - \frac{1}{2} \log \left( 1 - \xi \right) - \frac{1}{2} \log \left( 1 + u \right) & \xi \leq u. \end{cases} \label{log.int.aux}
\end{align}
This follows from the Funk-Hecke formula and \cite[Lemma~1.15]{ST}.
It is easy to verify that the kernel $\kappa_0(u,\xi)$ is strictly convex for $\xi\in(-1,1)$ for any fixed $u\in(-1,1)$. 
Hence, we may use the arguments of the proof of Theorem \ref{ConnThm} appropriately adapted for $d=2$ and $s=0$.
\end{proof}

It should be emphasized that in the logarithmic case balayage preserves mass. Thus, the logarithmic potentials of a measure and its logarithmic balayage onto a compact set $K$ differ by a constant on $K$. 

\begin{lem} \label{NuLem:s.EQ.0} Let $d=2$ and $s=0$. The measure $\overline{\nu}_{t,0}=\bal_{0}(\sigma_2,\Sigma_t)$ is given by
\begin{equation}
\dd \overline{\nu}_{t,0}(\PT{x}) = \dd \sigma_{2} \big|_{\Sigma_t}(\PT{x}) + \frac{1-t}{2} \dd \delta_t(u) \dd\sigma_{1}(\overline{\PT{x}}) \label{NuR:s.EQ.0}
\end{equation}
and $\| \overline{\nu}_{t,0}(\PT{x}) \|=1$. 
The logarithmic potential of $\overline{\nu}_{t,0}$ is given by
\begin{align*}
U_{0}^{\overline{\nu}_{t,0}}(\PT{z}) &= \frac{1+t}{4} - \frac{\log 2}{2} - \frac{1}{2} \log \left( 1 + t \right), \qquad \PT{z} \in \Sigma_t,  \\
U_{0}^{\overline{\nu}_{t,0}}(\PT{z}) &= \frac{1+t}{4} - \frac{\log 2}{2} - \frac{1}{2} \log \left( 1 + \xi \right), \qquad \PT{z} \in \mathbb{S}^d \setminus \Sigma_t.
\end{align*}
% \begin{align}
% U_{0}^{\overline{\nu}_{t,0}}(\PT{z}) &= W_{0}(\Sigma_t), \qquad \PT{z} \in \Sigma_t,  \label{U.0.ubar.nu.t.in} \\
% U_{0}^{\overline{\nu}_{t,0}}(\PT{z}) &= W_{0}(\Sigma_t) + \frac{1}{2} \log \frac{1 + t}{1 + \xi} < W_{0}(\Sigma_t), \qquad \PT{z} \in \mathbb{S}^d \setminus \Sigma_t. \label{U.0.ubar.nu.t.notin}
% \end{align}
The measure $\overline{\nu}_{t,0}$ is the logarithmic extremal measure on $\Sigma_t$ and
\begin{equation} \label{W.0.Sigma.t}
W_0(\Sigma_t) = \frac{1+t}{4} - \frac{\log 2}{2} - \frac{1}{2} \log \left( 1 + t \right).
\end{equation}
\end{lem}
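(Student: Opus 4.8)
The plan is to \emph{verify} the asserted measure directly and then identify it with the balayage by uniqueness, rather than constructing the balayage in steps as in the higher-dimensional Lemma \ref{NuLem:s.EQ.d-2}. Let $\overline{\nu}_{t,0}$ denote the measure defined by the right-hand side of \eqref{NuR:s.EQ.0}, i.e. $\sigma_2|_{\Sigma_t}$ together with the boundary charge $\tfrac{1-t}{2}\beta_t$, where $\beta_t$ is the unit uniform measure on $\partial\Sigma_t$. For $d=2$ the decomposition \eqref{Measure.Decomposition} gives $\dd\sigma_2=\tfrac12\,\dd u\,\dd\sigma_1(\overline{\PT{x}})$, so $\sigma_2(\Sigma_t)=\tfrac12\int_{-1}^t\dd u=\tfrac{1+t}{2}$, and adding the boundary mass $\tfrac{1-t}{2}$ shows $\overline{\nu}_{t,0}$ is a positive measure of total mass $1$; this already gives $\|\overline{\nu}_{t,0}\|=1$.

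Next I would compute the logarithmic potential by splitting $U_0^{\overline{\nu}_{t,0}}=U_0^{\sigma_2|_{\Sigma_t}}+\tfrac{1-t}{2}\,U_0^{\beta_t}$, both pieces being governed by the circular kernel $\kappa_0(u,\xi)$ of \eqref{log.int.aux}. Indeed $U_0^{\beta_t}(\PT{z})=\kappa_0(t,\xi)$, while $U_0^{\sigma_2|_{\Sigma_t}}(\PT{z})=\tfrac12\int_{-1}^t\kappa_0(u,\xi)\,\dd u$. For $\PT{z}\in\mathbb{S}^2\setminus\Sigma_t$ one has $\xi>t\ge u$ throughout, so only the branch $\kappa_0(u,\xi)=-\tfrac12\log(1+\xi)-\tfrac12\log(1-u)$ occurs; for $\PT{z}\in\Sigma_t$ one has $\xi\le t$ and the $u$-integral must be split at $u=\xi$ to use both branches of $\kappa_0$. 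Evaluating the elementary $\int\log$ integrals and adding the boundary term, the $\log(1+\xi)$ and $\log(1-\xi)$ contributions cancel against the boundary term in each regime, leaving precisely the two claimed expressions; in particular the interior value collapses to the $\xi$-independent constant $\tfrac{1+t}{4}-\tfrac{\log 2}{2}-\tfrac12\log(1+t)$.

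Finally I would read off the equilibrium and balayage statements from this constancy. Since $\overline{\nu}_{t,0}$ is a positive unit measure supported on $\Sigma_t$ with finite energy (the curve $\partial\Sigma_t$ has bounded logarithmic potential for $t\in(-1,1)$) whose potential is constant on $\Sigma_t$, it is the signed equilibrium for $Q\equiv0$; by the uniqueness in Lemma \ref{lem:uniqueness} it is therefore the logarithmic extremal measure $\mu_{\Sigma_t,0}$, and its constant potential value equals the energy $W_0(\Sigma_t)$, which is exactly \eqref{W.0.Sigma.t}. The same uniqueness identifies $\overline{\nu}_{t,0}$ with $\bal_0(\sigma_2,\Sigma_t)$: as $\sigma_2$ is the logarithmic equilibrium measure of $\mathbb{S}^2$, its potential equals the constant $W_0(\mathbb{S}^2)=\tfrac12-\log 2$, and since logarithmic balayage preserves total mass and preserves the potential up to an additive constant on $\Sigma_t$, the measure $\bal_0(\sigma_2,\Sigma_t)$ is likewise a positive unit measure on $\Sigma_t$ with constant potential there, forcing $\bal_0(\sigma_2,\Sigma_t)=\mu_{\Sigma_t,0}=\overline{\nu}_{t,0}$.

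The only delicate computation is the interior potential, where $\kappa_0(u,\xi)$ changes branch at $u=\xi$ and the split integral must be handled carefully. The key conceptual point, which keeps the argument short, is that constancy of the potential on $\Sigma_t$ together with mass preservation pins down $\overline{\nu}_{t,0}$ as the equilibrium and balayage measure purely through uniqueness, so that no separate derivation of the boundary concentration (obtained by inversion to a disc in the case $d\ge3$ of Lemma \ref{NuLem:s.EQ.d-2}) is needed here.
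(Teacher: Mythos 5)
Your proposal is correct and follows essentially the same route as the paper: verify via the kernel $\kappa_0$ from \eqref{log.int.aux} that the candidate measure has unit mass and the stated potential (constant on $\Sigma_t$, strictly smaller off $\Sigma_t$), then identify it with $\mu_{\Sigma_t,0}$ and with $\bal_0(\sigma_2,\Sigma_t)$ by uniqueness and mass preservation. The only cosmetic difference is that the paper evaluates the interior potential as $U_0^{\sigma_2}-U_0^{\sigma_2|_{\mathbb{S}^2\setminus\Sigma_t}}$ plus the boundary term, so that only one branch of $\kappa_0$ is needed, whereas you split the $u$-integral at $u=\xi$; both computations yield the same constant $\tfrac{1+t}{4}-\tfrac{\log 2}{2}-\tfrac{1}{2}\log(1+t)$.
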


\begin{proof}
% From the Funk-Hecke formula and \cite[Lemma~1.15]{ST} we get that
% \begin{align}
% &\int_{\mathbb{S}^1} \log \frac{1}{\left| \PT{z} - \PT{x} \right|} \dd \sigma_1(\overline{\PT{x}}) 
% = - \frac{1}{2} \frac{1}{\pi} \int_{-1}^1 \frac{\log\left( 2 - 2 u \xi - 2 \sqrt{1 - u^2} \sqrt{1 - \xi^2} \, \tau \right)}{\sqrt{1-\tau^2}} \dd \tau \notag \\
% % &\phantom{equals}= - \frac{1}{2} \left[ \log \left( 2 - 2 u \xi + 2 \sqrt{\left( 1 - u \xi \right)^2 - \left( 1 - u^2 \right) \left( 1 - \xi^2 \right)} \right) - \log 2 \right] \notag \\
% &\phantom{equals}= - \frac{1}{2} \log \left( 1 - u \xi + \left| \xi - u \right| \right) = \begin{cases} - \frac{1}{2} \log \left( 1 + \xi \right) - \frac{1}{2} \log \left( 1 - u \right) & \xi \geq u, \\  - \frac{1}{2} \log \left( 1 - \xi \right) - \frac{1}{2} \log \left( 1 + u \right) & \xi \leq u. \end{cases} \label{log.int.aux}
% \end{align}
Using relation \eqref{log.int.aux} we show that the measure in \eqref{NuR:s.EQ.0} satisfies the balayage properties. 
Let $\PT{z} \in \Sigma_t$, that is $\xi \leq t$. Then
\begin{align*}
U_0^{\overline{\nu}_{t,0}}(\PT{z}) 
&= U_0^{\sigma_2}(\PT{z}) - U_0^{\sigma_2|_{\mathbb{S}^2\setminus\Sigma_t}}(\PT{z}) + \frac{1-t}{2} U_0^{\sigma_1|_{u=t}}(\PT{z}) \\
&= W_0(\mathbb{S}^2) - \frac{\omega_1}{\omega_2} \int_{t}^1 \left( \int_{\mathbb{S}^1} \log \frac{1}{\left| \PT{z} - \PT{x} \right|} \dd \sigma_1(\overline{\PT{x}}) \right) \dd u + \frac{1-t}{2} \int_{\mathbb{S}^1} \log\frac{1}{\left| \PT{z} - \PT{x} \right|} \Big|_{u=t} \dd \sigma_1(\overline{\PT{x}}) \\
&= \frac{1}{2} - \log 2 + \frac{1}{4} \int_{-1}^t \left[ \log \left( 1 - \xi \right) + \log \left( 1 + u \right) \right] \dd u - \frac{1-t}{4} \left[\log \left( 1 - \xi \right) + \log \left( 1 + t \right) \right] \\
&= \frac{1+t}{4} - \frac{\log 2}{2} - \frac{1}{2} \log \left( 1 + t \right) = W_0(\Sigma_t).
\end{align*}
For $\PT{z} \in \mathbb{S}^2\setminus\Sigma_t$, that is $\xi\geq t \geq u$, we have after a similar computation
\begin{align*}
U_0^{\overline{\nu}_{t,0}}(\PT{z}) 
&= \int_{\Sigma_t} \log\frac{1}{\left| \PT{z} - \PT{x} \right|} \dd \sigma_2(\PT{x}) + \frac{1-t}{2} \int_{\mathbb{S}^1} \log\frac{1}{\left| \PT{z} - \PT{x} \right|} \Big|_{u=t} \dd \sigma_1(\overline{\PT{x}}) \\
&= \frac{\omega_1}{\omega_2} \int_{-1}^t \left( \int_{\mathbb{S}^1} \log \frac{1}{\left| \PT{z} - \PT{x} \right|} \dd \sigma_1(\overline{\PT{x}}) \right) \dd u + \frac{1-t}{2} \int_{\mathbb{S}^1} \log\frac{1}{\left| \PT{z} - \PT{x} \right|} \Big|_{u=t} \dd \sigma_1(\overline{\PT{x}}) \\
% &= \frac{1}{2} \int_{-1}^t \left[ - \frac{1}{2} \log \left( 1 + \xi \right) - \frac{1}{2} \log \left( 1 - u \right) \right] \dd u - \frac{1-t}{4} \left[\log \left( 1 + \xi \right) + \log \left( 1 - t \right) \right] \\
&= \frac{1+t}{4} - \frac{\log 2}{2} - \frac{1}{2} \log \left( 1 + \xi \right) = W_0(\Sigma_t) + \frac{1}{2} \log \frac{1+t}{1+\xi} < W_0(\Sigma_t).
\end{align*}
Since 
\begin{equation*}
\left\| \nu_{t,0} \right\| = \int_{\Sigma_t} \dd \sigma_2 + \frac{1-t}{2} \int_{\mathbb{S}^1} \dd \sigma_1 = \frac{\omega_1}{\omega_2} \int_{-1}^t \dd u + \frac{1-t}{2} \int_{\mathbb{S}^1} \dd \sigma_1 = \frac{1+t}{2} + \frac{1-t}{2} = 1,
\end{equation*}
$\nu_{t,0}$ is a probability measure on $\Sigma_t$ which is constant there. By uniqueness of the logarithmic extremal measure $\mu_{\Sigma_t}$ on $\Sigma_t$ one has $\mu_{\Sigma_t} = \nu_{t,0}$.
\end{proof}

\begin{lem} \label{lem:Mhaskar-Saff.log}
Let $d=2$ and $s=0$. Then the Mhaskar-Saff functional $\mathcal{F}_0$ for spherical caps $\Sigma_t$ is given by
\begin{equation} \label{F.0.Sigma.t}
\begin{split}
\mathcal{F}_0(\Sigma_t) &= \left( 1 + q \right) \frac{1+t}{4} + q \frac{\left( R - 1 \right)^2 \log \left( R^2 - 2 R t + 1 \right)}{8R} - \frac{1}{2} \log\left( 1 + t \right) \\
&\phantom{=\pm}- \frac{\log 2}{2} - q \frac{\left( R + 1 \right)^2 \log \left( R + 1 \right)^2}{8R}.
\end{split}
\end{equation}
It has precisely one global minimum $t_0\in(-1,1]$. This minimum is given by
\begin{equation} \label{log.rel.aux.A}
t_0 = \min\left\{ 1, \left( R^2 - 2 R q + 1 \right) / \left[ 2 R \left( 1 + q \right) \right] \right\}.
\end{equation}
\end{lem}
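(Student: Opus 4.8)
The plan is to compute $\mathcal{F}_0(\Sigma_t)$ directly from its definition and then minimize it over $t\in(-1,1]$. By definition $\mathcal{F}_0(\Sigma_t) = W_0(\Sigma_t) + q\int \log(1/|\PT{x}-\PT{a}|)\,\dd\mu_{\Sigma_t,0}(\PT{x})$, and Lemma \ref{NuLem:s.EQ.0} supplies both the value of $W_0(\Sigma_t)$ in \eqref{W.0.Sigma.t} and the identification $\mu_{\Sigma_t,0} = \overline{\nu}_{t,0}$. So the only real work for the formula is to evaluate the weighted integral against $\overline{\nu}_{t,0}$. I would split $\overline{\nu}_{t,0}$ according to \eqref{NuR:s.EQ.0} into its absolutely continuous part $\sigma_2|_{\Sigma_t}$ and the boundary mass $\tfrac{1-t}{2}\,\delta_t\otimes\sigma_1$. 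Since $|\PT{x}-\PT{a}|^2 = R^2 - 2Ru + 1$ depends only on the altitude $u$ and $\log(1/|\PT{x}-\PT{a}|) = -\tfrac12\log(R^2-2Ru+1)$, the boundary term is immediate (there $u=t$ is constant and $\sigma_1$ is a unit measure), contributing $-\tfrac{1-t}{4}\log(R^2-2Rt+1)$.

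For the surface part I would use the decomposition \eqref{Measure.Decomposition} with $d=2$ (so $(1-u^2)^{d/2-1}\equiv1$ and $\omega_1/\omega_2 = 1/2$), reducing the integral to $-\tfrac14\int_{-1}^t \log(R^2-2Ru+1)\,\dd u$. The substitution $w = R^2-2Ru+1$ together with the antiderivative $\int\log w\,\dd w = w\log w - w$ evaluates this between the limits $w=(R+1)^2$ at $u=-1$ and $w=r^2=R^2-2Rt+1$ at $u=t$. Adding the two contributions and using the elementary identities $(R+1)^2 - r^2 = 2R(1+t)$ and $\tfrac{r^2}{8R} - \tfrac{1-t}{4} = \tfrac{(R-1)^2}{8R}$ collapses the $\log r^2$ terms to the stated coefficient $\tfrac{(R-1)^2}{8R}$; combining with \eqref{W.0.Sigma.t} then yields exactly \eqref{F.0.Sigma.t}.

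For the minimization I would differentiate \eqref{F.0.Sigma.t}; with $r^2 = R^2-2Rt+1$ one finds $4\,\dd\mathcal{F}_0/\dd t = (1+q) - q(R-1)^2/r^2 - 2/(1+t)$. Putting this over the common denominator $r^2(1+t)$ and substituting $r^2 = (R+1)^2 - 2R(1+t)$, the numerator becomes a quadratic in $w := 1+t$, namely $-2(1+q)R\,w^2 + (R^2+6R+1+4qR)\,w - 2(R+1)^2$, whose roots have product $(R+1)^2/[(1+q)R]$. I would check directly that $w=2$ (that is $t=1$) is a root; the second root is then $w_0 = (R+1)^2/[2R(1+q)]$, i.e. $t_0 = w_0 - 1 = (R^2-2Rq+1)/[2R(1+q)]$. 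Consequently $\dd\mathcal{F}_0/\dd t$ factors as $-2(1+q)R\,(t-t_0)(t-1)\big/[r^2(1+t)]$, which on $(-1,1)$ has the sign of $t-t_0$; this gives a unique global minimum at $t_0$ when $t_0\in(-1,1)$ and forces the minimizer to $t=1$ when $w_0\geq2$. Since $(R+1)^2>0$ guarantees $t_0>-1$ for every $R>1$, the minimizer equals $\min\{1,t_0\}$, which is \eqref{log.rel.aux.A}.

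The main obstacle is purely computational: carrying out $\int_{-1}^t\log(R^2-2Ru+1)\,\dd u$ cleanly and then recognizing the factorization of the derivative's numerator. The pleasant feature — and the key to the short minimization argument — is that $t=1$ is automatically a root of that quadratic, so the second root delivers $t_0$ without solving a genuine quadratic equation; this also explains why the critical point condition, $1+q = q(R+1)^2/r^2$ (equivalently \eqref{log.t.rel}), reappears here even though $\overline{\Phi}_{d-2}(t)\equiv 1+q$ is constant in the logarithmic case, so that the uniform minimization argument of Theorem \ref{MainThm} cannot be invoked verbatim and this direct computation is required.
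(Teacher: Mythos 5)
Your proposal is correct and follows essentially the same route as the paper: the formula is obtained by splitting $\mu_{\Sigma_t,0}=\overline{\nu}_{t,0}$ into its surface and boundary parts and evaluating $\int_{-1}^t\log(R^2-2Ru+1)\,\dd u$, and the minimization proceeds by differentiating \eqref{F.0.Sigma.t} and factoring the derivative as a positive multiple of $(1-t)\bigl[1+t-(R+1)^2/(2R(1+q))\bigr]$ — your quadratic-in-$w$ with the root $w=2$ is the same factorization the paper reaches by pulling out $(1-t)/4$ directly. (Only a cosmetic slip: your final factored expression for $\dd\mathcal{F}_0/\dd t$ is missing an overall factor $1/4$, which does not affect the sign analysis.)
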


\begin{proof} 
By Lemma \ref{NuLem:s.EQ.0} and $| \PT{x} - \PT{a} |^2 = R^2 - 2 R u + 1$ we obtain (with $\mu_{\Sigma_t,0} = \overline{\nu}_{t,0}$)
\begin{align*}
\int \overline{Q}_{\PT{a},q} \dd \mu_{\Sigma_t,0} 
% &= \int \overline{Q}_{\PT{a},q} \dd \overline{\nu}_{t,0} 
&= q \int_{\Sigma_t} \log\frac{1}{\left| \PT{x} - \PT{a} \right|} \dd \sigma_2(\PT{x}) + q \frac{1-t}{2} \int_{\mathbb{S}^1} \log\frac{1}{\left| \PT{x} - \PT{a} \right|} \Big|_{u=t} \dd \sigma_1(\overline{\PT{x}}) \\
&= - \frac{q}{2} \frac{\omega_1}{\omega_2} \int_{-1}^t \log \left( R^2 - 2 R u + 1 \right) \dd u - \frac{q}{2} \frac{1-t}{2} \log \left( R^2 - 2 R t + 1 \right) \\
&= q \frac{1+t}{4} - q \frac{\left( R + 1 \right)^2 \log \left( R + 1 \right)^2}{8R} + q \frac{\left( R - 1 \right)^2 \log \left( R^2 - 2 R t + 1 \right)}{8R}.
\end{align*}
Substitution of the last expression and $W_0(\Sigma_t)$ from \eqref{W.0.Sigma.t} into
\begin{equation*}
\mathcal{F}_0(t) \DEF \mathcal{F}_0(\Sigma_t) = W_0(\Sigma_t) + \int \overline{Q}_{\PT{a},q} \dd \mu_{\Sigma_t,0},
\end{equation*}
yields \eqref{F.0.Sigma.t}. Observe, that $\mathcal{F}_0(t)\to\infty$ as
$t\to-1$. Furthermore,
\begin{align*}
\mathcal{F}_0^\prime(t) &= \frac{1+q}{4} - q \frac{\left( R - 1 \right)^2}{4 \left( R^2 - 2 R t + 1 \right)} - \frac{1}{2\left( 1 + t \right)} = \frac{1-t}{4} \left( \frac{2 q R}{R^2 - 2 R t + 1} - \frac{1}{1+t} \right) \\
&= \frac{R(1+q)(1-t)}{2(1+t)(R^2-2Rt+1)} \left[ 1+t-\frac{(R+1)^2}{2R(1+q)}\right].
\end{align*}
If $-1<t<1$, then the sign of $\mathcal{F}_0^\prime(t)$ is given by the sign of the
linear function in the brackets, which is negative at $t=-1$. If $(
R + 1 )^2 \geq 4R(1+q)$, then $\mathcal{F}_0^\prime(t)<0$ everywhere
on $(-1,1)$, and $\mathcal{F}_0(\Sigma_t)$ is strictly monotonically
decreasing on $(-1,1)$ and has a global minimum at $t=1$. Otherwise,
if $( R + 1 )^2 < 4R(1+q)$, then $\mathcal{F}_0^\prime(t)$ has
exactly one zero $t_0 \DEF (R^2-2Rq+1)/[2R(1+q)]$ on $(-1,1)$, and is
negative on $(-1,t_0)$ and positive on $(t_0,1)$. Clearly,
$\mathcal{F}_0 (t)$ achieves global minimum on $(-1,1]$ at $t_0$,
with value 
\begin{equation*}
\mathcal{F}_0(\Sigma_{t_0}) = \frac{\left( R + 1 \right)^2}{8 R} + q
\frac{\left( R - 1 \right)^2}{8 R} \log \frac{q}{1+q} - \frac{1}{2}
\log \frac{\left( R + 1 \right)^2}{R \left( 1 + q \right)} - q
\log\left( R + 1 \right).
\end{equation*}
This completes the proof.
%
% For the sake of completeness we also consider the second zero of $\mathcal{F}_0^\prime(t)$ at $t=1$. If $\mathcal{F}_0^\prime(1)=0$, then 
% \begin{equation*}
% \mathcal{F}_0^{\prime\prime}(1) = \frac{1}{8} - \frac{q R \left( R - 1 \right)^2}{2 \left( R - 1 \right)^4} = \frac{\left( R - 1 \right)^2 - 4 q R}{8 \left( R - 1 \right)^4}.
% \end{equation*}
% Thus, if $\mathcal{F}_0^{\prime\prime}(1)<0$, then $\mathcal{F}_0(t)$ has a local maximum at $t=1$. Since $(R-1)^2<4qR$, there is also a unique minimum in $(-1,1)$ by our arguments from above. If $\mathcal{F}_0^{\prime\prime}(1)>0$, then $\mathcal{F}_0(t)$ has a local minimum at $t=1$. Since $(R-1)^2>4qR$, there are no other extrema in $(-1,1)$. Finally, if $\mathcal{F}_0^{\prime\prime}(1)=0$, then $q$ and $R$ are such that $(R-1)^2=4qR$. In this case $\mathcal{F}_0^{\prime\prime\prime}(1)<0$ and therefore $\mathcal{F}_0(t)$ has a horizontal turning point at $t=1$, which is also the global minimum on the interval $(-1,1]$.
\end{proof}

\begin{lem} \label{EpsLem:s.EQ.0} Let $d=2$ and $s=0$. The measure $\overline{\epsilon}_{t,0}=\bal_{0}(\delta_{\PT{a}},\Sigma_t)$ is given by
\begin{equation}
\dd \overline{\epsilon}_{t,0}(\PT{x}) = \frac{\left( R^2 - 1 \right)^2}{\left( R^2 - 2 R u + 1 \right)^{2}} \dd \sigma_{2} \big|_{\Sigma_t}(\PT{x}) + \frac{1-t}{2} \frac{\left( R + 1 \right)^2}{R^2 - 2 R t + 1} \dd \delta_t(u) \dd \sigma_{1}(\overline{\PT{x}}) \label{EpsR:s.EQ.0} 
\end{equation}
and $\|\overline{\epsilon}_{t,0}\|=1$. 
The logarithmic potential of $\overline{\epsilon}_{t,0}$ is given by
\begin{align*}
U_{0}^{\overline{\epsilon}_{t,0}}(\PT{z}) &= U_0^{\delta_{\PT{a}}}(\PT{z}) + \frac{1}{2} \log \frac{R^2 - 2 R t + 1}{2\left( 1 + t \right)} + \frac{\left( R + 1 \right)^2}{8R} \log \frac{\left( R + 1 \right)^2}{R^2 - 2 R t + 1}, \quad \PT{z} \in \Sigma_t, \\
U_{d-2}^{\overline{\epsilon}_{t,0}}(\PT{z}) &= U_0^{\delta_{\PT{a}}}(\PT{z}) + \frac{1}{2} \log \frac{R^2 - 2 R \xi + 1}{2\left( 1 + \xi \right)} + \frac{\left( R + 1 \right)^2}{8R} \log \frac{\left( R + 1 \right)^2}{R^2 - 2 R t + 1}, \quad \PT{z} \in \mathbb{S}^d \setminus \Sigma_t.
\end{align*}
\end{lem}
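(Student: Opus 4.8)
The plan is to verify directly that the measure $\overline{\epsilon}_{t,0}$ written out in \eqref{EpsR:s.EQ.0} is the logarithmic balayage $\bal_0(\delta_{\PT{a}},\Sigma_t)$, following the pattern of Lemma \ref{NuLem:s.EQ.0}. The ansatz is motivated by the limit $s\to(d-2)^+=0^+$ of the $s$-balayage of Lemma \ref{EpsLem:s.EQ.d-2}: the absolutely continuous density $(R^2-1)^2/(R^2-2Ru+1)^2$ persists, while the mass swept in from $\mathbb{S}^2\setminus\Sigma_t$ collapses onto the boundary circle $\{u=t\}$, producing the $\beta_t$-component with coefficient $q_{\overline{\epsilon}_t}=\tfrac{1-t}{2}(R+1)^2/r^2$, where $r^2=R^2-2Rt+1$. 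First I would recall that balayage preserves mass in the logarithmic case and check $\|\overline{\epsilon}_{t,0}\|=1$: under the substitution $w=R^2-2Ru+1$ the bulk mass is $\tfrac{\omega_{1}}{\omega_{2}}\int_{-1}^{t}(R^2-1)^2(R^2-2Ru+1)^{-2}\,\dd u=1-q_{\overline{\epsilon}_t}$, which, added to the boundary mass $q_{\overline{\epsilon}_t}$, gives $1$.

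The core of the argument is the computation of $U_0^{\overline{\epsilon}_{t,0}}(\PT{z})$. I would split it as the sum of the potential of the absolutely continuous part and $q_{\overline{\epsilon}_t}\,U_0^{\beta_t}(\PT{z})$, where the boundary potential $U_0^{\beta_t}(\PT{z})=\kappa_0(t,\xi)=-\tfrac12\log(1-t\xi+|\xi-t|)$ is read off from \eqref{log.int.aux}. For the absolutely continuous part I would insert the closed form \eqref{log.int.aux} of the kernel $\kappa_0(u,\xi)$ into $\tfrac{\omega_1}{\omega_2}\int_{-1}^{t}(R^2-1)^2(R^2-2Ru+1)^{-2}\kappa_0(u,\xi)\,\dd u$, splitting the $u$-integral at $u=\xi$ according to the two branches of $\kappa_0$ when $\PT{z}\in\Sigma_t$, and using the single branch $\xi\ge u$ throughout when $\PT{z}\in\mathbb{S}^2\setminus\Sigma_t$. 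Evaluating the resulting integrals of $\log(1\pm u)$ against $(R^2-2Ru+1)^{-2}$ and adding the boundary contribution should reproduce the two displayed formulas for $U_0^{\overline{\epsilon}_{t,0}}$ on $\Sigma_t$ and on $\mathbb{S}^2\setminus\Sigma_t$.

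Finally, I would read the balayage characterization off these formulas. Since $U_0^{\delta_{\PT{a}}}(\PT{z})=-\tfrac12\log(R^2-2R\xi+1)$, the two expressions show that $U_0^{\overline{\epsilon}_{t,0}}=U_0^{\delta_{\PT{a}}}+c$ on $\Sigma_t$ for a constant $c$ independent of $\PT{z}$, and that on $\mathbb{S}^2\setminus\Sigma_t$ the difference equals $\tfrac12\log\!\big[(R^2-2R\xi+1)(1+t)\big/\big((1+\xi)(R^2-2Rt+1)\big)\big]$, which is strictly negative for $\xi>t$ because $(R^2-2R\xi+1)/(1+\xi)$ is strictly decreasing in $\xi$ (its derivative is $-(R+1)^2/(1+\xi)^2$). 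Together with $\|\overline{\epsilon}_{t,0}\|=1$ and $\supp(\overline{\epsilon}_{t,0})\subset\Sigma_t$, these are precisely the defining properties of the logarithmic balayage, so the uniqueness of Lemma \ref{lem:uniqueness} (the difference of two such measures has total charge $0$ and constant potential on $\Sigma_t$, hence vanishing energy, hence is $0$) identifies $\overline{\epsilon}_{t,0}=\bal_0(\delta_{\PT{a}},\Sigma_t)$.

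The main obstacle will be the logarithmic bulk integrals: unlike the mass computation, their integrands carry a factor $\log(1\pm u)$, so they must be handled by integration by parts combined with the rationalizing substitutions $w=(R-1)^2/(1-u)+2R$ and $w=(R+1)^2/(1+u)-2R$ already used in Lemma \ref{EpsLem:s.EQ.d-2}. Keeping track of the several resulting logarithmic terms and watching them collapse into the compact closed forms stated in the lemma is where the bookkeeping is delicate.
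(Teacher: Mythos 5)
Your proposal is correct and follows essentially the same route as the paper: verify the ansatz \eqref{EpsR:s.EQ.0} by computing its logarithmic potential against the kernel $\kappa_0$ of \eqref{log.int.aux} (splitting the $u$-integral at $\xi$ inside $\Sigma_t$), check that the potential differs from $U_0^{\delta_{\PT{a}}}$ by a constant on $\Sigma_t$ and is strictly smaller outside, and confirm $\|\overline{\epsilon}_{t,0}\|=1$. The only cosmetic differences are that you make the uniqueness step explicit via the energy argument of Lemma \ref{lem:uniqueness} and verify the outside inequality by a monotonicity computation, whereas the paper states the equivalent inequality directly and evaluates the bulk integrals by computer algebra rather than by hand.
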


\begin{proof}
Let $\PT{z}\in\Sigma_t$. We write
\begin{align*}
\begin{split}
U_0^{\overline{\epsilon}_{t,0}}(\PT{z}) 
&= \frac{\omega_1}{\omega_2} \left( \int_{-1}^\xi + \int_\xi^t \right) \frac{\left( R^2 - 1 \right)^2}{\left( R^2 - 2 R u + 1 \right)^2} \left( \int_{\mathbb{S}^1} \log\frac{1}{\left| \PT{z} - \PT{x} \right|} \dd \sigma_1(\overline{\PT{x}}) \right) \dd u \\
&\phantom{=}+ \frac{1-t}{2} \frac{\left( R + 1 \right)^2}{R^2 - 2 R t + 1} \int_{\mathbb{S}^1} \log\frac{1}{\left| \PT{z} - \PT{x} \right|} \Big|_{u=t} \dd \sigma_1(\overline{\PT{x}}).
\end{split} 
% \\
% \begin{split}
% &= \frac{1}{2} \int_{-1}^\xi \frac{\left( R^2 - 1 \right)^2}{\left( R^2 - 2 R u + 1 \right)^2} \left[ - \frac{1}{2} \log\left( 1 - u \right) - \frac{1}{2} \log\left( 1 + \xi \right) \right] \dd u \\
% &\phantom{=}+ \frac{1}{2} \int_\xi^t \frac{\left( R^2 - 1 \right)^2}{\left( R^2 - 2 R u + 1 \right)^2} \left[ - \frac{1}{2} \log\left( 1 + u \right) - \frac{1}{2} \log\left( 1 - \xi \right) \right] \dd u \\
% &\phantom{=}+ \frac{1-t}{2} \frac{\left( R + 1 \right)^2}{R^2 - 2 R t + 1} \left[ - \frac{1}{2} \log\left( 1 + t \right) - \frac{1}{2} \log\left( 1 - \xi \right) \right].
% \end{split}
\end{align*}
Using relation \eqref{log.int.aux} and Mathematica we arrive at
\begin{equation}
U_0^{\overline{\epsilon}_{t,0}}(\PT{z}) = - \frac{1}{2} \log\left( R^2 - 2 R \xi + 1 \right) + C(R;t), 
\end{equation}
where
\begin{align*}
C(R,t) 
&\DEF 
% - \frac{\log2}{2} - \frac{1}{2} \log\left( 1 + t \right) + \frac{\left( R + 1 \right)^2 \log\left( R + 1 \right)^2}{8R} - \frac{\left( R - 1 \right)^2 \log\left( R^2 - 2 R t + 1 \right)}{8R} \\
% &= 
\frac{1}{2} \log \frac{R^2 - 2 R t + 1}{2\left( 1 + t \right)} + \frac{\left( R + 1 \right)^2}{8R} \log \frac{\left( R + 1 \right)^2}{R^2 - 2 R t + 1}.
\end{align*}

Let $\PT{z}\in\mathbb{S}^2\setminus\Sigma_t$. Then
\begin{align*}
\begin{split}
U_0^{\overline{\epsilon}_{t,0}}(\PT{z}) 
&= \frac{\omega_1}{\omega_2} \int_{-1}^t \frac{\left( R^2 - 1 \right)^2}{\left( R^2 - 2 R u + 1 \right)^2} \left( \int_{\mathbb{S}^1} \log\frac{1}{\left| \PT{z} - \PT{x} \right|} \dd \sigma_1(\overline{\PT{x}}) \right) \dd u \\
&\phantom{=}+ \frac{1-t}{2} \frac{\left( R + 1 \right)^2}{R^2 - 2 R t + 1} \int_{\mathbb{S}^1} \log\frac{1}{\left| \PT{z} - \PT{x} \right|} \Big|_{u=t} \dd \sigma_1(\overline{\PT{x}})
\end{split} 
% \\
% \begin{split}
% &= \frac{1}{2} \int_{-1}^t \frac{\left( R^2 - 1 \right)^2}{\left( R^2 - 2 R u + 1 \right)^2} \left[ - \frac{1}{2} \log\left( 1 - u \right) - \frac{1}{2} \log\left( 1 + \xi \right) \right] \dd u \\
% &\phantom{=}+ \frac{1-t}{2} \frac{\left( R + 1 \right)^2}{R^2 - 2 R t + 1} \left[ - \frac{1}{2} \log\left( 1 - t \right) - \frac{1}{2} \log\left( 1 + \xi \right) \right].
% \end{split}
\end{align*}
Using relation \eqref{log.int.aux} and evaluating the integral one gets after some simplifications
\begin{equation}
U_0^{\overline{\epsilon}_{t,0}}(\PT{z}) = - \frac{1}{2} \log \left[ 2 \left( 1 + \xi \right) \right] + \frac{\left( R + 1 \right)^2}{8R} \log \frac{\left( R + 1 \right)^2}{R^2 - 2 R t + 1},
\end{equation}
which yields the representation outside of $\Sigma_t$. Since 
\begin{equation*}
\frac{R^2 - 2 R \xi + 1}{R^2 - 2 R t + 1} \, \frac{1+t}{1+\xi} < 1 \quad \text{for $\xi > t$,}
\end{equation*}
it follows for $\PT{z}\in\mathbb{S}^2\setminus\Sigma_t$ that
\begin{equation*}
U_0^{\overline{\epsilon}_{t,0}}(\PT{z}) = U_0^{\delta_{\PT{a}}}(\PT{z}) + C(R,t) + \frac{1}{2} \log \left[ \frac{R^2 - 2 R \xi + 1}{R^2 - 2 R t + 1} \frac{1+t}{1+\xi} \right] < U_0^{\delta_{\PT{a}}}(\PT{z}) + C(R,t).
\end{equation*}
Hence, $\overline{\epsilon}_{t,0}$ has the properties of a logarithmic balayage measure.
Finally,
\begin{align*}
\left\| \overline{\epsilon}_{t,0} \right\| &= \int_{\Sigma_t} \frac{\left( R^2 - 1 \right)^2}{\left( R^2 - 2 R u + 1 \right)^2} \dd \sigma_2(\PT{x}) + \frac{1-t}{2} \frac{\left( R + 1 \right)^2}{R^2 - 2 R t + 1} \int_{\mathbb{S}^1} \dd \sigma_1(\overline{\PT{x}}) \\
&= \frac{\omega_1}{\omega_2} \int_{-1}^t \frac{\left( R^2 - 1 \right)^2}{\left( R^2 - 2 R u + 1 \right)^2} \dd u + \frac{1-t}{2} \frac{\left( R + 1 \right)^2}{R^2 - 2 R t + 1} \\
&= \frac{1+t}{2} \frac{\left( R - 1 \right)^2}{R^2 - 2 R t + 1} + \frac{1-t}{2} \frac{\left( R + 1 \right)^2}{R^2 - 2 R t + 1} = 1.
\end{align*}
This completes the proof.
\end{proof}

\begin{proof}[Proof of Theorem \ref{ExcepThm.log}]
Lemmas \ref{NuLem:s.EQ.0} and \ref{EpsLem:s.EQ.0} imply that $\overline{\eta}_{t,0} = ( 1 + q ) \overline{\nu}_{t,0} - q \overline{\epsilon}_{t,0}$ is, indeed, the logarithmic signed equilibrium on $\Sigma_t$ associated with $\overline{Q}_{\PT{a},q}$ as can be seen from its weighted logarithmic potential given in the Theorem. Using $r=\sqrt{R^2 - 2 R t + 1}$ and $\rho=\sqrt{R^2 - 2 R u + 1}$, we can write
\begin{align*} %\label{etabar.log}
\begin{split}
\dd \overline{\eta}_{t,0}(\PT{x}) &= \left[ 1 + q - \frac{q \left( R^2 - 1 \right)^2}{\rho^{4}} \right] \dd \sigma_{2}\big|_{\Sigma_t}(\PT{x}) + \frac{1-t}{2} \left[ 1 + q - \frac{q \left( R + 1 \right)^2}{r^2} \right] \dd \beta_t(\PT{x}),
\end{split}
\end{align*}
where $\PT{x} \in \Sigma_t$.
If $\overline{\eta}_{t,0}\geq0$, then $1 + q - q ( R + 1 )^2 / (R^2 - 2 R t + 1)\geq0$, so $t\leq t_0$. On the other hand, it is easy to see that
if $t=t_0$, then $\overline{\eta}_{t_0,0}$ given in \eqref{etabarzero.log} is $\geq 0$ 
% \begin{equation} %\label{etabarzero.log}
% \dd \overline{\eta}_{t_0,0}(\PT{x}) = \left[ 1 + q - \frac{q \left( R^2 - 1 \right)^2}{\left( R^2 - 2 R u + 1 \right)^{2}} \right] \dd \sigma_{2}\big|_{\Sigma_{t_0}}(\PT{x}) \geq 0, 
% \end{equation}
because $\rho\leq\rho$ and $( R - 1 )^2 < R^2 - 2 R u + 1$. Therefore, we have that $t_0 = \max\{ t : \overline{\eta}_{t_0,0} \geq 0 \}$, $\mu_{\overline{Q}_{\PT{a},q}}=\overline{\eta}_{t_0,0}$, and $\supp(\mu_{\overline{Q}_{\PT{a},q}})=\Sigma_{t_0}$.
\end{proof}

\section{Axis-supported Riesz external fields}
\label{sec:8}

In this section we shall prove Theorems \ref{SignEq.axis.supp}, \ref{SignEq.axis.supp.log}, \ref{SignEqThm2}, \ref{SignEqThm2.s.EQ.d-2}, and \ref{ExcepThm.log.axis}.

\begin{proof}[Proof of Theorem \ref{SignEq.axis.supp}]
Direct calculation shows that
\begin{align*}
U_s^{\tilde{\eta}_{\lambda}}(\PT{z}) 
&= \frac{\mathcal{F}_s(\mathbb{S}^d)}{W_s(\mathbb{S}^d)} U_s^{\sigma_d}(\PT{z}) - \int \left( \int_{\mathbb{S}^d} \frac{\left( R^2 - 1 \right)^{d-s} \dd \sigma_d(\PT{x})}{\left| \PT{z} - \PT{x} \right|^{s}\left| \PT{x} - \PT{a} \right|^{2d-s}} \right) \dd \lambda(R) \\
&= \frac{\mathcal{F}_s(\mathbb{S}^d)}{W_s(\mathbb{S}^d)} W_s(\mathbb{S}^d) - \int \frac{\dd \lambda(R)}{\left| \PT{z} - R \PT{p} \right|^{s}} = \mathcal{F}_s(\mathbb{S}^d) - Q(\PT{z}),
\end{align*}
where we used the Kelvin transformation for points (cf. proof of Theorem \ref{SignEq}). % to transform the integral.

The second part follows from the uniqueness of the $s$-extremal measure on $\mathbb{S}^d$ associated with $Q$ and the fact that the density is minimal at the North Pole.
\end{proof}

\begin{proof}[Proof of Theorem \ref{SignEq.axis.supp.log}]
The logarithmic potential of $\tilde{\eta}_{\lambda,0}$ is given by
\begin{equation*}
U_0^{\tilde{\eta}_{\lambda,0}}(\PT{z}) = \left( 1 + \left\| \lambda \right\| \right) W_0(\mathbb{S}^d) - \int \left( \int_{\mathbb{S}^d} \frac{\left( R^2 - 1 \right)^d}{\left| \PT{x} - R \PT{p} \right|^{2d}} \log \frac{1}{\left| \PT{z} - \PT{x} \right|} \dd \sigma(\PT{x}) \right) \dd \lambda(R).
\end{equation*}
A Kelvin transformation with center $\PT{a}=R\PT{p}$ and radius $\sqrt{R^2-1}$ (cf. Section \ref{sec:kelvin.transf.}) yields
\begin{align*}
&\int_{\mathbb{S}^d} \frac{\left( R^2 - 1 \right)^d}{\left| \PT{x} - R \PT{p} \right|^{2d}} \log \frac{1}{\left| \PT{z} - \PT{x} \right|} \dd \sigma(\PT{x}) = \int_{\mathbb{S}^d} \frac{\left( R^2 - 1 \right)^d}{\left| \PT{x} - \PT{a} \right|^{d}} \log \frac{1}{\left| \PT{z} - \PT{x} \right|} \, \frac{\dd \sigma(\PT{x})}{\left| \PT{x} - \PT{a} \right|^{d}} \\
&\phantom{equals}= \int_{(\mathbb{S}^d)^*} \left| \PT{x}^* - \PT{a} \right|^d \log \frac{\left| \PT{z}^* - \PT{a} \right|^d \left| \PT{x}^* - \PT{a} \right|^d}{\left( R^2 - 1 \right) \left| \PT{z}^* - \PT{x}^* \right|^d} \, \frac{\dd \sigma(\PT{x}^*)}{\left| \PT{x}^* - \PT{a} \right|^{d}} \\
&\phantom{equals}= \log \frac{\left| \PT{z}^* - \PT{a} \right|}{R^2 - 1} - \int_{(\mathbb{S}^d)^*} \log \frac{1}{\left| \PT{x}^* - \PT{a} \right|} \dd \sigma(\PT{x}^*) + \int_{(\mathbb{S}^d)^*} \log \frac{1}{\left| \PT{z}^* - \PT{x} \right|} \dd \sigma(\PT{x}^*) \\
&\phantom{equals}= \log \frac{1}{\left| \PT{z} - \PT{a} \right|} - \int_{\mathbb{S}^d} \log \frac{1}{\left| \PT{y} - \PT{a} \right|} \dd \sigma(\PT{y}) + U_0^{\sigma}(\PT{z}).
\end{align*}
Hence
\begin{equation*}
\begin{split}
U_0^{\tilde{\eta}_{\lambda,0}}(\PT{z}) &= \left( 1 + \left\| \lambda \right\| \right) W_0(\mathbb{S}^d) - \int \log \frac{1}{\left| \PT{z} - \PT{a} \right|} \dd \lambda(R) \\
&\phantom{=}+ \int_{\mathbb{S}^d} \left( \int \log \frac{1}{\left| \PT{y} - \PT{a} \right|} \dd \lambda(R) \right) \dd \sigma(\PT{y}) - \left\| \lambda \right\| W_0(\mathbb{S}^d),
\end{split}
\end{equation*}
from which follows the first part of the theorem.

The second part follows from the uniqueness of the logarithmic extremal measure on $\mathbb{S}^d$ associated with $Q$ (Lemma \ref{lem:uniqueness} and in particular \cite{Riesz}) and the fact that the density is minimal at the North Pole.
\end{proof}

\begin{proof}[Proof of Theorem \ref{SignEqThm2}]
By construction $\tilde{\eta}_t$ is of total charge one. 
It is easy to verify that the signed measure \eqref{etatilde} has a constant weighted $s$-potential on $\Sigma_t$. Indeed,
\begin{equation*}
U_s^{\tilde{\epsilon}_t}(\PT{x}) = \int U_s^{\bal_s(\delta_{R\PT{p}},\Sigma_t)}(\PT{x}) \, \dd \lambda (R) = \int \frac{\dd \lambda (R)}{\left| \PT{x} - R \PT{p} \right|^s} = Q(\PT{x}), \qquad \PT{x}\in\Sigma_t.
\end{equation*}
Together with $U_s^{\nu_t}(\PT{z})=W_s(\mathbb{S}^d)$ on $\Sigma_t$ we have $U_s^{\tilde{\eta}_t}(\PT{z}) = \tilde{\Phi}_s(t)$ on $\Sigma_t$.
Moreover, by Remark \ref{FuncSignedEqRel}, we also have that $\mathcal{F}_s(\Sigma_t) = \tilde{\Phi}_s(t)$.

By definition of $\nu_t$, $\tilde{\epsilon}_t$, and $\bal_s(\delta_{R \PT{p}},\Sigma_t) = \eps_{t,R}$ (with additional indication of the dependence on the parameter $R$) we can write
\begin{equation*}
\tilde{\eta}_t = \frac{\tilde{\Phi}_s(t)}{W_s(\mathbb{S}^d)} \frac{1}{\left\| \lambda \right\|} \int \nu_t \dd \lambda(R) - \int \epsilon_{t,R} \dd \lambda(R) = \int \left[ \frac{\tilde{\Phi}_s(t)}{W_s(\mathbb{S}^d)} \frac{1}{\left\| \lambda \right\|} \nu_t - \epsilon_{t,R} \right] \dd \lambda(R).
\end{equation*}
Thus, the signed equilibrium is 
\begin{equation*}
\dd \tilde{\eta}_t(\PT{x}) = \left[ \int \tilde{\eta}_{t}^{\prime\prime}(u,R) \, \dd \lambda(R) \right] \frac{\omega_{d-1}}{\omega_d} (1-u^2)^{d/2-1}\, \dd u \dd \sigma_{d-1} (\overline{\PT{x}}), \qquad \PT{x} \in \Sigma_t,
\end{equation*}
where, when using Lemmas \ref{NuLem} and \ref{EpsilonLem}, we have
\begin{equation}
\begin{split} \label{AxisProof1}
\tilde{\eta}_{t}^{\prime\prime}(u,R) &= \frac{1}{W_s(\mathbb{S}^d)} \frac{1}{\left\| \lambda \right\|} \frac{\gammafcn(d/2)}{\gammafcn(d-s/2)}
\left( \frac{1-t}{1-u} \right)^{d/2} \left( \frac{t-u}{1-t} \right)^{(s-d)/2} \\
&\phantom{=\times}\times \Bigg\{ \tilde{\Phi}_s(t)
\HypergeomReg{2}{1}{1,d/2}{1-(d-s)/2}{\frac{t-u}{1-u}}  \\
&\phantom{=\times\pm}- \frac{\left\| \lambda \right\| \left( R + 1 \right)^{d-s}}{r^{d}}
\HypergeomReg{2}{1}{1,d/2}{1-(d-s)/2}{\frac{\left(R-1\right)^{2}}{r^{2}}
\, \frac{t-u}{1-u}}  \Bigg\}. 
\end{split}
\end{equation}

We claim that the density (the integral in square brackets) is either
positive for all $u\in[-1,t]$, or is positive on some interval
$[-1,t_c)$ and negative on $(t_c,t]$. It suffices to consider the
function $h(u)$ obtained by integrating the expression in braces
in \eqref{AxisProof1} against $\dd\lambda(R)$. Using the series
expansion of the hypergeometric functions we get
\begin{equation}
\begin{split}
h(u) &= \sum_{k=0}^\infty \frac{\Pochhsymb{d/2}{k}}{\gammafcn(k+1-(d-s)/2)} \left( \frac{t-u}{1-u} \right)^k \\
&\phantom{=\pm}\times \left\{ \int \left[ \tilde{\Phi}_s(t) - \frac{\left\| \lambda \right\| \left( R + 1 \right)^{d-s}}{r^{d}} \left( \frac{R-1}{r} \right)^{2k} \right] \dd \lambda(R) \right\}.
\end{split}
\end{equation}
The coefficients in braces form an increasing sequence with
positive limit as $k\to \infty$. Hence, either all coefficients are
positive, or the first $n$ are negative and then all others are
positive. 
So, with a substitution $x=(t-u)/(1-u)$ ($x\in A_t\DEF[0,(1+t)/2]$) we obtain
\begin{equation*}
g(x) = \sum_{k=0}^\infty \frac{a_k}{k!} x^k, \qquad \text{$a_k<0$ for $k < n$ and $a_k\geq 0$ for $k\geq n$.}
\end{equation*}
We have that $g^{(n)}(x) > 0$ on $A_t$, so $g^{(n-1)}(x)$ is strictly
increasing on $A_t$. Since $g^{(n-1)}(0) = a_{n-1} < 0$, there is a $\gamma_{n-1}$ in $A_t$
such that $g^{(n-1)}(x)$ is negative on $[0,\gamma_{n-1})$ and positive
on $(\gamma_{n-1},(1+t)/2]$. Indeed, if such a $\gamma_{n-1}$
does not exist, we get a contradiction, because $g^{(n-1)}(x)$ will be negative on $A_t$,
which would imply that $g^{(n-2)}(x)$ is decreasing and negative on $A_t$, and so on. 
This argument yields $g(x)<0$ on $A_t$, which is impossible because the total charge of
$\tilde{\eta}_t$ is one.

By iteration one can show a sequence $\gamma_0>\gamma_1>\cdots>\gamma_{n-1}$ such that $g^{(m)}(x)$ is negative on $[0,\gamma_m)$ and positive on $(\gamma_m,(1+t)/2]$ for every $m=0,1,\dots,n-1$. This establishes our claim ($t_c=\gamma_0$).

We now can complete the proof of the theorem as follows. If $\tilde{\eta}_1$ is
not a positive measure, then there is a $t_1$ such that the density
of $\tilde{\eta}_1$ is positive on $[-1,t_1)$ and negative on
$(t_1,1]$. Then the signed equilibrium for $\Sigma_{t_1}$ is given by
\begin{equation*}
\tilde{\eta}_{t_1} = \tilde{\eta}_1^+ - \bal_s(\tilde{\eta}_1^-,\Sigma_{t_1}) - \left( \left\| \tilde{\eta}_1^- \right\| - \left\| \bal_s(\tilde{\eta}_1^-,\Sigma_{t_1}) \right\| \right) \nu_{t_1} \big/ \left\| \nu_{t_1} \right\|.
\end{equation*}
If it is still not a positive measure, then there exists a $t_2$ such
that $\tilde{\eta}_{t_1}$ has positive density on $[-1,t_2)$ and
negative one on $(t_2,t_1]$. Continuing the argument we derive a
decreasing sequence $\{ t_k \}$ with the property that
$\tilde{\eta}_{t_k}$ is positive on $[-1,t_{k+1})$ and negative on
$(t_{k+1},t_k]$. The limit of this sequence is the number $t_\lambda$
defined in Theorem \ref{SignEqThm2}. 
Thus, $t_\lambda = \max\{ t : \tilde{\eta}_t \geq 0 \}$, $\mu_Q=\tilde{\eta}_{t_\lambda}$, and $\supp(\mu_Q)=\Sigma_{t_\lambda}$.

The Mhaskar-Saff functional $\mathcal{F}_s$ is minimized for $\Sigma_{t_\lambda}$. 
Since $\mathcal{F}_s(\Sigma_t) = \tilde{\Phi}_s(t)$ (cf. Remark \ref{FuncSignedEqRel} and beginning of this proof), we will show similar as in the proof of Theorem \ref{MainThm} above that $t_\lambda$ is, in fact, the unique solution in $(-1,1]$ of the relation
\begin{equation} \label{tilde.Phi.rel.aux}
\Delta(t) \DEF \tilde{\Phi}_s(t) - \int \frac{\left( R + 1 \right)^{d-s}}{\left( R^2 - 2 R t + 1 \right)^{d/2}} \dd \lambda (R) = 0,
\end{equation}
or $t_\lambda = 1$ when such a solution does not exist. 

Using Quotient Rule and $\| \tilde{\eps}_t \|^\prime = \dd \| \tilde{\eps}_t \| / \dd t = \int \| \eps_{t,R} \|^\prime \dd \lambda(R)$, we obtain
% \begin{align*}
% \frac{\dd \tilde{\Phi}_s(t)}{\dd t} = - \frac{\left\| \nu_t \right\|^\prime}{\left\| \nu_t \right\| } \left[ \tilde{\Phi}_s(t) - \int \frac{\left( R + 1 \right)^{d-s}}{\left( R^2 - 2 R t + 1 \right)^{d/2}} \dd \lambda (R) \right].
% \end{align*}
\begin{align*}
\tilde{\Phi}_s^\prime(t) = - \left\| \nu_t \right\|^\prime \big/ \left\| \nu_t \right\| \, \Delta(t).
\end{align*}
Observe that $\Delta(t)\to\infty$ as $t\to-1^+$. Hence, by the above relation, $\tilde{\Phi}_s(t)$ is strictly monotonically decreasing on $(-1,t^\prime)$ for some maximal $t^\prime\in(-1,1]$ (cf. \eqref{NuNormB}). If $t^\prime=1$, then $t_\lambda=1$. Otherwise, $t^\prime<1$ and $\tilde{\Phi}_s(t^\prime)=0$ meaning that $t^\prime$ is a solution of \eqref{tilde.Phi.rel.aux}.
Arguing as in the proof of Theorem \ref{MainThm} we have that every solution $t_0\in(-1,1)$ of \eqref{tilde.Phi.rel.aux} is actually a local minimum of $\tilde{\Phi}_s(t)$ because of $\tilde{\Phi}_s^{\prime\prime}(t_0)>0$. We conclude that $\tilde{\Phi}_s(t)$ can have at most one minimum in $(-1,1)$. Consequently $t_\lambda=t^\prime$. We also infer that $\Delta>0$ on $(-1,t_\lambda)$ and $\Delta<0$ on $(t_\lambda,1]$.
This completes the proof.
\end{proof}

\begin{proof}[Proof of Theorem \ref{SignEqThm2.s.EQ.d-2}]
By definition \eqref{barBal} and Lemmas \ref{NuLem:s.EQ.d-2} and \ref{EpsLem:s.EQ.d-2} one can easily see that
\begin{equation*}
U_{d-2}^{\tilde{\overline{\epsilon}}_t}(\PT{z}) = \int U_{d-2}^{\bal_{d-2}(\delta_{R\PT{p}},\Sigma_t)}(\PT{z}) \, \dd \lambda (R) = \int \frac{\dd \lambda (R)}{\left| \PT{z} - R \PT{p} \right|^{d-2}} = Q(\PT{z}), \qquad \PT{z}\in\Sigma_t,
\end{equation*}
and $U_{d-2}^{\tilde{\overline{\nu}}_t}(\PT{z}) = W_{d-2}(\mathbb{S}^d)$ on $\Sigma_t$; hence, the weighted $(d-2)$-potential of the signed measure $\tilde{\overline{\eta}}_t$ is constant on $\Sigma_t$, that is
\begin{equation*}
U_{d-2}^{\tilde{\overline{\eta}}_t}(\PT{z}) + Q(\PT{z}) = \tilde{\overline{\Phi}}_{d-2}(t) = \mathcal{F}_{d-2}(\Sigma_t), \qquad \text{on $\Sigma_t$.}
\end{equation*}
The last relation follows from Remark \ref{FuncSignedEqRel}.
Moreover (with additional indication of the dependence on the parameter $R$), 
\begin{equation*}
\tilde{\overline{\eta}}_t = \int \left[ \frac{\tilde{\overline{\Phi}}_{d-2}(t)}{W_{d-2}(\mathbb{S}^d)} \frac{1}{\left\| \lambda \right\|} \overline{\nu}_t - \overline{\epsilon}_{t,R} \right] \dd \lambda(R).
\end{equation*}
Thus, the signed equilibrium is 
\begin{equation} \label{tilde.bar.eta}
\dd \tilde{\overline{\eta}}_t(\PT{x}) = \left[ \int \tilde{\overline{\eta}}_{t}^{\prime\prime}(u,R) \, \dd \lambda(R) \right] \dd \sigma_d \big|_{\Sigma_t}(\PT{x}) + \left[ \int \tilde{\overline{\eta}}_{t}^{\prime\prime\prime}(u,R) \, \dd \lambda(R) \right] \dd \beta_t(\PT{x}),
\end{equation}
where, when using Lemmas \ref{NuLem:s.EQ.d-2} and \ref{EpsLem:s.EQ.d-2}, we have for $-1 \leq u \leq t$ 
\begin{align}
\tilde{\overline{\eta}}_{t}^{\prime\prime}(u,R) &= \frac{1}{W_{d-2}(\mathbb{S}^d)} \frac{1}{\left\| \lambda \right\|} \left[ \tilde{\overline{\Phi}}_{d-2}(t) - \frac{\left\| \lambda \right\| \left( R^2 - 1 \right)^2}{\left( R^2 - 2 R u + 1 \right)^{d/2+1}} \right], \label{eta.prime.prime} \\
\tilde{\overline{\eta}}_{t}^{\prime\prime\prime}(u,R) &= \frac{1}{\left\| \lambda \right\|} \frac{1-t}{2} \left[ \tilde{\overline{\Phi}}_{d-2}(t) - \frac{\left\| \lambda \right\| \left( R + 1 \right)^2}{\left( R^2 - 2 R t + 1 \right)^{d/2}} \right] \left( 1 - t^2 \right)^{d/2-1}. \label{eta.prime.prime.prime}
\end{align}
It can be shown that the density with respect to $\sigma_d|_{\Sigma_t}$,
\begin{equation*}
g(u) \DEF \int \tilde{\overline{\eta}}_{t}^{\prime\prime}(u,R) \, \dd \lambda(R) = \frac{1}{W_{d-2}(\mathbb{S}^d)} \left[ \tilde{\overline{\Phi}}_{d-2}(t) - \int \frac{\left( R^2 - 1 \right)^2 \dd \lambda(R)}{\left( R^2 - 2 R u + 1 \right)^{d/2+1}} \right],
\end{equation*}
is either positive for all $u\in[-1,t]$, or is positive on some interval $[-1,t_c)$ and negative on $(t_c,t]$. This follows easily from the fact that $g(u)$ is a strictly monotonically decreasing continuous function on $[-1,t]$. 
Now, we turn to the density with respect to $\beta_t$, that is
\begin{equation*}
h(u) \DEF \int \tilde{\overline{\eta}}_{t}^{\prime\prime\prime}(u,R) \, \dd \lambda(R) = \frac{1-t}{2} \left[ \tilde{\overline{\Phi}}_{d-2}(t) - \int \frac{\left( R + 1 \right)^2 \dd \lambda(R)}{\left( R^2 - 2 R t + 1 \right)^{d/2}} \right] \left( 1 - t^2 \right)^{d/2-1}.
\end{equation*}
Observe that non-negativity of the above square bracketed expression implies $g(u)\geq0$ and therefore $\tilde{\overline{\eta}}_t \geq 0$. On the other hand, if $\tilde{\overline{\eta}}_t \geq 0$, then 
\begin{equation} \label{non-neg.rel.s.EQ.d-2}
\tilde{\overline{\Phi}}_{d-2}(t) \geq \int \frac{\left( R + 1 \right)^2 \dd \lambda(R)}{\left( R^2 - 2 R t + 1 \right)^{d/2}}.
\end{equation}
Hence, the last relation holds if and only if $\tilde{\overline{\eta}}_t \geq 0$. 
Note that $\tilde{\overline{\Phi}}_{d-2}(t)\to\infty$ as $t\to-1^+$ and $\tilde{\overline{\Phi}}_{d-2}(1)=\mathcal{F}_{d-2}(\mathbb{S}^d)$. Set $t_\lambda \DEF \{ t : \tilde{\overline{\eta}}_t \geq 0 \}$. Arguing as in the proof of Theorem \ref{SignEqThm2} it can be shown that for $t=t_\lambda$ equality holds in \eqref{non-neg.rel.s.EQ.d-2} and $t_\lambda$ is the unique minimum of $\tilde{\overline{\Phi}}_{d-2}(t)$ on $(-1,1)$ if it exists, or $t_\lambda=1$. In particular, 
\begin{align*}
\frac{\dd \tilde{\overline{\Phi}}_{d-2}(t)}{\dd t} = - \frac{\left\{ \left\| \nu_t \right\| \right\}^\prime}{\left\| \nu_t \right\| } \left[ \tilde{\overline{\Phi}}_{d-2}(t) - \int \frac{\left( R + 1 \right)^{2}}{\left( R^2 - 2 R t + 1 \right)^{d/2}} \dd \lambda (R) \right].
\end{align*}

It remains to show the weak$^*$ convergence in \eqref{weak.star.axis-supp}. This follows from \eqref{weak.star.conv}, since for any function $f$ continuous on $\mathbb{S}^d$ we have
\begin{align*}
\int_{\Sigma_t} f \dd \tilde{\epsilon}_{t,s} 
&= \int_{\Sigma_t} f \dd \left( \int \bal_s(\delta_{R \PT{p}},\Sigma_t) \dd \lambda(R) \right) = \int_{\Sigma_t} f \dd \left( \int \epsilon_{t,s} \dd \lambda(R) \right) \\
&= \int_{\Sigma_t} f \left( \int \dd \epsilon_{t,s} \dd \lambda(R) \right) = \int \left( \int_{\Sigma_t} f \dd \epsilon_{t,s} \right) \dd \lambda(R) \\
&\to \int \left( \int_{\Sigma_t} f \dd \overline{\epsilon}_{t} \right) \dd \lambda(R) = \int_{\Sigma_t} f \dd \tilde{\overline{\epsilon}}_t \qquad \text{as $s\to(d-2)^+$.}
\end{align*}
This completes the proof.
\end{proof}

\begin{proof}[Proof of Theorem \ref{ExcepThm.log.axis}]
First observe that
\begin{equation*}
\left\| \tilde{\overline{\epsilon}}_{t,0} \right\| = \int \left\| \overline{\epsilon}_{t,0} \right\| \dd \lambda(R) = \int \dd \lambda(R) = \left\| \lambda \right\|,
\end{equation*}
which follows from principle of superposition and preservation of mass when using logarithmic balayage. Hence, $\| \tilde{\overline{\eta}}_{t,0} \| = 1$ by construction. The representation \eqref{etabar.log.axis} can be easily obtained using Lemmas \ref{NuLem:s.EQ.0} and \ref{EpsLem:s.EQ.0}. In particular, it follows from Lemma \ref{EpsLem:s.EQ.0} that for $\PT{z}\in\Sigma_t$ there holds
\begin{align*}
U_0^{\tilde{\overline{\epsilon}}_{t,0}}(\PT{z}) 
&= \int U_0^{\overline{\epsilon}_{t,0}}(\PT{z}) \dd \lambda(R) \\
&= \tilde{\overline{Q}}(\PT{z}) + \int \left[ \frac{1}{2} \log \frac{R^2 - 2 R t + 1}{2\left( 1 + t \right)} + \frac{\left( R + 1 \right)^2}{8R} \log \frac{\left( R + 1 \right)^2}{R^2 - 2 R t + 1} \right] \dd \lambda(R).
\end{align*}
By Lemma \ref{NuLem:s.EQ.0}, $U_0^{\overline{\nu}_{t,0}}(\PT{z}) = W_0(\Sigma_t)$ on $\Sigma_t$. 
Hence, for $\PT{z}\in\Sigma_t$ 
\begin{align*}
&U_0^{\tilde{\overline{\eta}}_{t,0}}(\PT{z}) + \tilde{\overline{Q}}(\PT{z}) = \left( 1 + \left\| \lambda \right\| \right) W_0(\Sigma_t) - U_0^{\tilde{\overline{\epsilon}}_{t,0}}(\PT{z}) + \tilde{\overline{Q}}(\PT{z}) = W_0(\Sigma_t) \\
&\phantom{=}+ \int \left[ W_0(\Sigma_t) - \frac{1}{2} \log \frac{R^2 - 2 R t + 1}{2\left( 1 + t \right)} - \frac{\left( R + 1 \right)^2}{8R} \log \frac{\left( R + 1 \right)^2}{R^2 - 2 R t + 1} \right] \dd \lambda(R).
\end{align*}
After substituting $W_0(\Sigma_t)$ with \eqref{W.0.Sigma.t}, the above integral becomes the right-hand side of (cf. proof of Lemma \ref{lem:Mhaskar-Saff.log})
\begin{align*}
&\int \tilde{\overline{Q}} \, \dd \mu_{\Sigma_t} 
= \int \left( \int_{\Sigma_t} \log\frac{1}{\left| \PT{x} - \PT{a} \right|} \dd \mu_{\Sigma_t}(\PT{x}) \right) \dd \lambda(R) \\
&\phantom{=}= \left\| \lambda \right\| \frac{1+t}{4} + \int \left[ \frac{\left( R - 1 \right)^2 \log \left( R^2 - 2 R t + 1 \right)-\left( R + 1 \right)^2 \log \left( R + 1 \right)^2}{8R}\right] \dd \lambda(R).
\end{align*}
Thus,
\begin{equation*}
U_0^{\tilde{\overline{\eta}}_{t,0}}(\PT{z}) + \tilde{\overline{Q}}(\PT{z}) = W_0(\Sigma_t) + \int \tilde{\overline{Q}} \dd \mu_{\Sigma_t} \FED \tilde{\overline{\mathcal{F}}}_0(\Sigma_t) \DEF \tilde{\overline{\mathcal{F}}}_0(t), \qquad \PT{z} \in \Sigma_t.
\end{equation*}
A similar computation shows that for $\PT{z} \in \mathbb{S}^2 \setminus \Sigma_t$
\begin{equation*}
U_0^{\tilde{\overline{\eta}}_{t,0}}(\PT{z}) + \tilde{\overline{Q}}(\PT{z}) = \tilde{\overline{\mathcal{F}}}_0(\Sigma_t) + \frac{1}{2} \log \frac{1+t}{1+\xi} + \int \frac{1}{2} \log \frac{R^2 - 2 R t + 1}{R^2 - 2 R \xi + 1} \dd \lambda(R).
\end{equation*}

The Mhaskar-Saff functional $\tilde{\overline{\mathcal{F}}}_0$ for spherical caps $\Sigma_t$ can be represented as
\begin{equation}
\begin{split} \label{Mashkar-Saff.functional.log.case}
&\tilde{\overline{\mathcal{F}}}_0(t) = \left( 1 + \left\| \lambda \right\| \right) \frac{1 + t}{4} - \frac{\log 2}{2} - \frac{1}{2} \log\left( 1 + t \right) \\
&\phantom{=\pm}+ \int \left[ \frac{\left( R - 1 \right)^2 \log \left( R^2 - 2 R t + 1 \right)-\left( R + 1 \right)^2 \log \left( R + 1 \right)^2}{8R}\right] \dd \lambda(R),
\end{split}
\end{equation}
which yields
\begin{equation}
\frac{\dd}{\dd t} \tilde{\overline{\mathcal{F}}}_0(t) = \frac{1 + \left\| \lambda \right\|}{4} - \frac{1}{2\left( 1 + t \right)} - \frac{1}{4} \int \frac{\left( R - 1 \right)^2}{R^2 - 2 R t + 1} \dd \lambda(R).
\end{equation}
It follows that $\tilde{\overline{\mathcal{F}}}_0^\prime(t) \to - \infty$ as $t \to -1^+$ and $\tilde{\overline{\mathcal{F}}}_0^\prime(1) = 0$. If $t<1$, the equation $\tilde{\overline{\mathcal{F}}}_0^\prime(t) = 0$ is equivalent with each of the following two relations
\begin{align*}
\frac{1-t}{1+t} = \int \frac{2 R \left( 1 - t \right)}{R^2 - 2 R t + 1} \dd \lambda(R), \qquad \frac{2}{1+t} = 1 + \left\| \lambda \right\| - \int \frac{\left( R - 1 \right)^2}{R^2 - 2 R t + 1} \dd \lambda(R),
\end{align*}
which combined give
\begin{equation} \label{prime.rel.}
1 + \left\| \lambda \right\| = \int \frac{\left( R + 1 \right)^2}{R^2 - 2 R t + 1} \dd \lambda(R).
\end{equation}
The above right-hand side is a strictly monotonically increasing function in $t$. Hence, above relation has a unique solution $t_\lambda$ in $(-1,1)$ if such a solution exists (which is a minimum of $\tilde{\overline{\mathcal{F}}}_0(t)$ by the properties of $\tilde{\overline{\mathcal{F}}}_0^\prime(t)$). If there is no such solution, then $\tilde{\overline{\mathcal{F}}}_0^\prime(t)<0$ on $(-1,1)$ and $\tilde{\overline{\mathcal{F}}}_0(t)$ is strictly monotonically decreasing on $(-1,1)$. We conclude, that the Mhaskar-Saff functional $\tilde{\overline{\mathcal{F}}}_0$ is minimized for $\Sigma_{t_\lambda}$, where either $t_\lambda\in(-1,1]$ is the unique solution of equation \eqref{prime.rel.}, or $t_\lambda=1$ if such a solution does not exists.
It follows from the representation \eqref{etabar.log.axis} that $\tilde{\overline{\eta}}_t \geq 0$ if and only if 
\begin{equation} \label{prime.rel.1}
1 + \left\| \lambda \right\| \geq \int \frac{\left( R + 1 \right)^2}{R^2 - 2 R t + 1} \dd \lambda(R).
\end{equation}
Hence, $\tilde{\overline{\eta}}_t \geq 0$ if and only if $t \leq t_\lambda$, that is $t_\lambda = \max\{ t : \tilde{\overline{\eta}}_t \geq 0 \}$,  $\supp(\mu_{\tilde{\overline{Q}}})=\Sigma_{t_\lambda}$, and $\mu_{\tilde{\overline{Q}}}=\tilde{\overline{\eta}}_{t_\lambda,0}$.
\end{proof}

\appendix

\section{}

\begin{lem} \label{IntLemB}
Let $-1\leq a < b < c \leq 1$ and $|y|<1$. Then
\begin{equation}
\begin{split} \label{IntA}
&\int_{a}^{b} \left( u - a \right)^{\beta-1} \left( b - u \right)^{\gamma-1}
\left( c - u \right)^{-\alpha} \HypergeomReg{2}{1}{\alpha,\beta}{\gamma}{y \frac{b-u}{c-u}} \dd u \\
&\phantom{=}= \frac{\gammafcn(\beta)}{\gammafcn(\beta+\gamma-\alpha)\gammafcn(\alpha)}
\left( b - a \right)^{\beta+\gamma-1} \left( c - a \right)^{-\gamma} \left( c - b \right)^{\gamma-\alpha} \left( 1 - x y \right)^{-\beta} \\
&\phantom{=\pm}\times \int_{0}^{1} v^{\beta+\gamma-\alpha-1} \left(
1 - v \right)^{\alpha-1} \left( 1 - x v \right)^{\beta-\gamma}
\left( 1 - \frac{x \left( 1 - y \right)}{1 - x y} v \right)^{-\beta}
\dd v
\end{split}
\end{equation}
for all $\alpha,\beta,\gamma>0$ with $\beta+\gamma>\alpha$. Here
$x\DEF (b-a)/(c-a)$.
\end{lem}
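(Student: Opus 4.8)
The plan is to reduce both sides to Euler-type integral representations of Appell's $\HyperF_1$ function and then match them through a transformation formula. First I would normalize the interval by the substitution $u = a + (b-a)w$, $w\in[0,1]$, under which $u-a = (b-a)w$, $b-u=(b-a)(1-w)$, and $c-u = (c-a)(1-xw)$ with $x\DEF(b-a)/(c-a)$; consequently the argument of the regularized hypergeometric function becomes $\frac{b-u}{c-u}\,y = \frac{xy(1-w)}{1-xw}$. Pulling out the powers of $(b-a)$ and $(c-a)$, the left-hand side equals $(b-a)^{\beta+\gamma-1}(c-a)^{-\alpha}\,\mathcal{J}$, where $\mathcal{J} = \int_0^1 w^{\beta-1}(1-w)^{\gamma-1}(1-xw)^{-\alpha}\HypergeomReg{2}{1}{\alpha,\beta}{\gamma}{xy(1-w)/(1-xw)}\,\dd w$. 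Since $0<x<1$ and $0\le w\le1$ force $\frac{xy(1-w)}{1-xw}\in[0,x\abs{y}]\subset[0,1)$, every series in sight converges and the manipulations below are legitimate.

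Next I would expand the regularized ${}_2F_1$ as the power series in \eqref{HypergeomSeries} and integrate term by term (justified because the series converges uniformly in $w$ on $[0,1]$). Each resulting integral $\int_0^1 w^{\beta-1}(1-w)^{\gamma+n-1}(1-xw)^{-\alpha-n}\,\dd w$ is evaluated by Euler's integral for the Gauss function \cite[Eq.~15.3.1]{ABR} as $\betafcn(\beta,\gamma+n)\,\Hypergeom{2}{1}{\alpha+n,\beta}{\beta+\gamma+n}{x}$. The factor $\gammafcn(\gamma+n)$ produced by the Beta function cancels precisely the $\gammafcn(n+\gamma)$ in the denominator of the regularized series (this is the reason the regularized form is convenient here). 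Expanding the remaining ${}_2F_1$ in its own series and using $\Pochhsymb{\alpha}{n}\Pochhsymb{\alpha+n}{m}=\Pochhsymb{\alpha}{n+m}$ together with $\Pochhsymb{\beta+\gamma}{n}\Pochhsymb{\beta+\gamma+n}{m}=\Pochhsymb{\beta+\gamma}{n+m}$, the double sum collapses to the defining series of Appell's function, giving $\mathcal{J} = [\gammafcn(\beta)/\gammafcn(\beta+\gamma)]\,\HyperF_1(\alpha;\beta,\beta;\beta+\gamma;xy,x)$.

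On the other side, I would recognize the integral on the right-hand side as the Euler representation of an Appell function \cite[Eq.~5.8(5)]{ERD}, namely it equals $[\gammafcn(\beta+\gamma-\alpha)\gammafcn(\alpha)/\gammafcn(\beta+\gamma)]\,\HyperF_1(\beta+\gamma-\alpha;\gamma-\beta,\beta;\beta+\gamma;x,x(1-y)/(1-xy))$. With both sides expressed through $\HyperF_1$ and the explicit Gamma factors cancelling, the lemma reduces to the single transformation identity $\HyperF_1(\alpha;\beta,\beta;\beta+\gamma;xy,x) = (1-xy)^{-\beta}(1-x)^{\gamma-\alpha}\HyperF_1(\beta+\gamma-\alpha;\gamma-\beta,\beta;\beta+\gamma;x,x(1-y)/(1-xy))$. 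I would prove this by composing two of the standard Euler transformations of $\HyperF_1$ (see \cite{ERD}): first the transformation $a\mapsto c-a$, carrying the left member to $(1-xy)^{-\beta}(1-x)^{-\beta}\HyperF_1(\beta+\gamma-\alpha;\beta,\beta;\beta+\gamma;\frac{xy}{xy-1},\frac{x}{x-1})$; then, after using the symmetry $\HyperF_1(a;b_1,b_2;c;X,Z)=\HyperF_1(a;b_2,b_1;c;Z,X)$, the transformation $b_1\mapsto c-b_1-b_2=\gamma-\beta$, which produces the factor $(1-x)^{\beta+\gamma-\alpha}$ and sends the two arguments to exactly $x$ and $x(1-y)/(1-xy)$. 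Collecting the prefactors yields $(1-xy)^{-\beta}(1-x)^{\gamma-\alpha}$, and substituting $1-x=(c-b)/(c-a)$ reproduces the powers $(c-a)^{-\gamma}(c-b)^{\gamma-\alpha}$ of the statement.

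The routine parts are the term-by-term integration and the Pochhammer bookkeeping. The main obstacle is the last step: selecting the correct pair of $\HyperF_1$ transformations and tracking both the fractional-linear changes of the two arguments and the accompanying algebraic prefactors so that they land exactly on the target. A wrong ordering, or a failure to first exploit the symmetry of $\HyperF_1$, produces the argument $xy$ in place of $x$ and does not close. I would also confirm that the hypothesis $\beta+\gamma>\alpha$ (with $\alpha,\beta,\gamma>0$) guarantees convergence of all the Beta integrals and of the Appell series at the points $(xy,x)$ and $(x,x(1-y)/(1-xy))$.
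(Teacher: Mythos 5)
Your proof is correct, but it follows a genuinely different route from the paper's. The paper substitutes $(b-u)/(c-u)=xv$, expands the regularized $\Hypergeom{2}{1}{\alpha,\beta}{\gamma}{\cdot}$ termwise, evaluates each term by Euler's integral \cite[Eq.~15.3.1]{ABR}, then applies that same integral representation a second time in the reverse direction so that, after swapping sum and integral, the inner sum is a plain binomial series $\bigl(1-xy\tfrac{1-v}{1-xv}\bigr)^{-\beta}$; the Appell function never enters except as a name for the final integral. You instead use the linear substitution $u=a+(b-a)w$, collapse the resulting double series into the explicit Appell series $\HyperF_1(\alpha;\beta,\beta;\beta+\gamma;xy,x)$, identify the right-hand side as the Euler integral of $\HyperF_1(\beta+\gamma-\alpha;\gamma-\beta,\beta;\beta+\gamma;x,\tfrac{x(1-y)}{1-xy})$, and close the gap with a composition of two fractional-linear transformations of $\HyperF_1$. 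I checked the key steps: the termwise integration and Pochhammer bookkeeping do give $\gammafcn(\beta)\HyperF_1(\alpha;\beta,\beta;\beta+\gamma;xy,x)/\gammafcn(\beta+\gamma)$; the identity $(c-a)^{-\gamma}(c-b)^{\gamma-\alpha}=(c-a)^{-\alpha}(1-x)^{\gamma-\alpha}$ makes the prefactors match; and your two-step transformation (the $a\mapsto c-a$ map, the symmetry swap, then the $b_1\mapsto c-b_1-b_2$ map) does send the arguments to $x$ and $x(1-y)/(1-xy)$ with total prefactor $(1-xy)^{-\beta}(1-x)^{\gamma-\alpha}$, as required; the identity also passes the sanity checks $y=0$ and $y=1$, where it degenerates to Euler's transformation of $\Hypergeom{2}{1}{a,b}{c}{x}$. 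What the paper's route buys is self-containment: it needs only \cite[Eq.~15.3.1]{ABR} and a geometric/binomial summation. What your route buys is conceptual clarity — the lemma is exposed as a disguised $\HyperF_1$ transformation — at the cost of importing the transformation formulas from \cite[Section~5.11]{ERD} and tracking their composition carefully, which, as you note, is the step where a wrong ordering would fail to close.
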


The last integral is the Euler type integral representation of an
Appell $\HyperF_{1}$ function (\cite[5.8(5)]{ERD}, see also
\cite[7.2.4(42)]{PBMIII}).

\begin{proof}
A change of variable $(b-u)/(c-u)=x v$ yields
\begin{equation}
\begin{split} \label{A0}
&\left( b - a \right)^{\beta+\gamma-1} \left( c - a \right)^{-\gamma} \left( c - b \right)^{\gamma-\alpha} \\
&\phantom{times}\times \int_{0}^{1} v^{\gamma-1} \left( 1 - v
\right)^{\beta-1} \left( 1 - x v\right)^{\alpha-\beta-\gamma}
\HypergeomReg{2}{1}{\alpha,\beta}{\gamma}{x y v} \dd v,
\end{split}
\end{equation}
where $0<x<1$. Let $I$ denote the integral above. We substitute the
series expansion of the hypergeometric function and integrate
termwise.
\begin{equation} \label{A1}
I = \sum_{n=0}^{\infty}
\frac{\Pochhsymb{\alpha}{n}\Pochhsymb{\beta}{n}}{\gammafcn(n+\gamma)
n!} x^{n} y^{n} \int_{0}^{1} \frac{v^{n+\gamma-1} \left( 1 - v
\right)^{\beta-1}}{ \left( 1 - x v\right)^{\gamma+\beta-\alpha}} \dd
v,
\end{equation}
where the integral $K_n$ represents the hypergeometric function
(\cite[Eq.~15.3.1]{ABR})
\begin{equation*}
K_n =
\frac{\gammafcn(\beta)\gammafcn(n+\gamma)}{\gammafcn(n+\beta+\gamma)}
\Hypergeom{2}{1}{\gamma+\beta-\alpha,n+\gamma}{n+\beta+\gamma}{x}.
\end{equation*}
Since $\re[n+\beta+\gamma]>\re[\gamma+\beta-\alpha]>0$ by
assumption, another application of \cite[Eq.~15.3.1]{ABR} gives
\begin{equation*}
K_{n} =
\frac{\gammafcn(\beta)\gammafcn(n+\gamma)}{\gammafcn(\beta+\gamma-\alpha)\gammafcn(n+\alpha)}
\int_{0}^{1} v^{\beta+\gamma-\alpha-1} \left( 1 - v
\right)^{n+\alpha-1} \left( 1 - x v \right)^{-n-\gamma} \dd v.
\end{equation*}
Substituting the last formula into \eqref{A1} and reversing the
order of integration and summation, which is justified by uniform
convergences of the series for $0\leq v\leq 1$,  yields
\begin{equation*}
I =
\frac{\gammafcn(\beta)}{\gammafcn(\beta+\gamma-\alpha)\gammafcn(\alpha)}
\int_{0}^{1} \frac{v^{\beta+\gamma-\alpha-1} \left( 1 - v
\right)^{\alpha-1} }{ \left( 1 - x v \right)^{\gamma} }
\sum_{n=0}^{\infty} \frac{\Pochhsymb{\beta}{n}}{n!} \left( x y
\frac{1 - v}{1 - x v} \right)^{n} \dd v.
\end{equation*}
The infinite series equals $( 1 - x v )^\beta ( 1 - x v - x y + x y
v)^{-\beta}$. Thus, we get
\begin{equation*}
\begin{split}
I &= \frac{\gammafcn(\beta)}{\gammafcn(\beta+\gamma-\alpha)\gammafcn(\alpha)} \left( 1 - x y \right)^{-\beta} \\
&\phantom{=\pm}\times \int_{0}^{1} v^{\beta+\gamma-\alpha-1} \left(
1 - v \right)^{\alpha-1} \left( 1 - x v \right)^{\beta-\gamma}
\left( 1 - \frac{x \left( 1 - y \right)}{1 - x y} v \right)^{-\beta}
\dd v.
\end{split}
\end{equation*}

We needed that $\re[\gamma+\beta-\alpha]>0$ and $\re[\alpha]>0$ as
well as $x<1$ and $x(1-y)/(1-x y)<1$. The last formula for $I$ and
\eqref{A0} give \eqref{IntA}.
\end{proof}

{\it Acknowledgement.} The authors are grateful to Robert Scherrer
for pointing out connections with elementary electromagnetic theory.

An important part of this research was done at the Mathematisches Forschungsinstitut Oberwolfach during the first author's stay 
within the Oberwolfach Leibniz Fellows Programme (OWLF) from September 28 -- December 18, 2008. The first author is in particular grateful that the OWLF and the Leibniz-Gemeinschaft supporting this program made it possible to invite his collaborators to visit MFO. We would like
to thank the MFO for the excellent working conditions.

\end{document}